\newtheorem{theorem}{Theorem}[section]
\newenvironment{customthm}[1]
  {\innercustomthm}
  {\endinnercustomthm}
\theoremstyle{definition}
\newtheorem{definition}{Definition}[section]
\newtheorem{exmp}{Example}[section]
\newtheorem{example}{Example}[section]
\theoremstyle{plain}
\newtheorem{lemma}{Lemma}[section]
\newcommand{\res}[2]{#1\rvert_{#2}}
\newcommand{\resmap}[2]{\text{res}^{#1}_{#2}}
\newcommand{\support}[1]{\left\lvert #1 \right\rvert}
\newcommand{\supp}[1]{{\support{#1}}}
\newcommand{\cf}[1]{\text{CF}(#1)}
\newcommand{\lcb}[1]{\text{LC}_{#1}^{\leftarrow}}
\newcommand{\lcf}[1]{\text{LC}_{#1}^{\rightarrow}}
\newcommand{\nats}{\mathbb{N}}
\newcommand{\zn}[1]{\mathbb{Z}_{#1}}
\newcommand{\pmodel}{\mathcal{S}}
\newcommand{\mcvx}{\mathcal{M}}
\newcommand{\ints}{\mathbb{Z}}
\newcommand{\ghz}{\ket{\text{GHZ}}}
\newcommand{\pauli}[2]{\sigma_{\MakeLowercase{#1}}^{#2}}
\newcommand{\sheaf}[1]{\mathcal{#1}}
\newcommand{\dist}{\mathcal{D}}
\newcommand{\im}[1]{\text{im}\, #1}
\DeclarePairedDelimiter\ket{\lvert}{\rangle}
\DeclarePairedDelimiterX\braket[2]{\langle}{\rangle}{#1 \delimsize\vert #2}
\newcommand{\weylScenario}{(\mathcal{O}, \mathcal{M}_{\mathcal{O}}, \mathbb{Z}_d)}
\newcommand{\cWeylOps}{\mathcal{O}}
\newcommand{\wPhaseGrp}{\mathbb{Z}_d}
\newcommand{\sExtNoArgs}{\eta}
\newcommand{\sExt}[1]{\eta(#1)}
\newcommand{\sExtFail}[2]{\Delta \eta\, (#1, #2)}
\newcommand{\sExtFailNoArgs}{\Delta \eta}
\title{Comparing two cohomological obstructions for contextuality, and a generalised construction of quantum advantage with shallow circuits}
\author{Sivert Aasn\ae ss}
\begin{document}

%this baselineskip gives sufficient line spacing for an examiner to easily
%markup the thesis with comments
\baselineskip=18pt plus1pt

%set the number of sectioning levels that get number and appear in the contents
\setcounter{secnumdepth}{3}
\setcounter{tocdepth}{3}

\maketitle                  % create a title page from the preamble info
\begin{abstract}
    \emph{Contextuality} is a fundamental non-classical feature
    of quantum mechanics. Abramsky et al.\ showed that contextuality
    in a range of examples is detected by a cohomological invariant
    based on \v{C}ech cohomology. However, the approach
    does not give a complete cohomological characterisation
    of contextuality. Bravyi, Gosset, and K\"{o}nig (BGK)
    gave the first unconditional proof that a restricted
    class of quantum circuits is more powerful than its classical analogue.
    The result, for the class of circuits of bounded depth and fan-in
    (shallow circuits), exploits a particular family of examples
    of contextuality.
    
    A different cohomological approach to contextuality was introduced
    by Okay et al.\ Their approach exploits the particular
    algebraic structure of the Pauli operators and their qudit generalisations
    known as Weyl operators. We give an abstract account of the algebraic
    structure of the Weyl operators, that Okay et al.\ exploit to define
    their cohomological invariant. We then generalise
    their approach to any example of contextuality with this structure.
    We prove at this general level that the approach
    does not give a more complete characterisation of contextuality
    than the \v{C}ech cohomology approach.
    
    BGK's quantum circuit and computational problem
    is derived from a family of non-local games related to the well
    known GHZ non-local game. We present a generalised version of their construction.
    A systematic way of taking examples of contextuality and producing
    unconditional quantum advantage results with shallow circuits.
\end{abstract}          % include the abstract

\begin{romanpages}          % start roman page numbering
\tableofcontents            % generate and include a table of contents
\listoffigures              % generate and include a list of figures
\end{romanpages}            % end roman page numbering

\chapter{Introduction}

Quantum contextuality \cite{specker_e_1960,kochen_problem_1975}, and
in particular nonlocality \cite{bell_einstein_1964},
has been highly influential in shaping our understanding 
of the distinction between quantum and classical physics.
Contextuality is a feature of the empirical data created by
measurement experiments. This is a key difference
between contextuality and certain other features
of quantum mechanics, for example, entanglement and superposition,
which are internal to the theory itself.
Because contextuality is an empirical phenomenon it
says something about any physical theory that is consistent with the 
predictions of quantum mechanics.
This is part of why contextuality was seen as so profound,
and in the era of quantum computing, it makes contextuality relevant for proving
quantum advantage. Because it is an empirical phenomenon it makes
sense to talk about classical models creating contextuality,
but something like entanglement and superposition doesn't have any
classical analogue.

No unconditional proof of quantum advantage is known for a general 
computational model. This appears to be well beyond the limits of current
techniques. A recent breakthrough by Bravyi, Gosset, and K\"{o}nig (BGK)
gave the first unconditional quantum advantage result
for a restricted class of circuits \cite{bravyi_quantum_2018}. 
A \emph{shallow circuit} is a family of circuits of bounded depth and fan-in.
BGK explicitly defines a shallow quantum family $\{Q_n\}_{n \in \nats}$
and a family of computational problems $\{\text{GHZ-2D(n)}\}_{n \in \nats}$
that are solved perfectly by the quantum circuit,
but not with high accuracy by any classical shallow circuit.

\begin{figure}
    \centering
    \begin{subfigure}{0.45\textwidth}
        \includegraphics[width=1\textwidth]{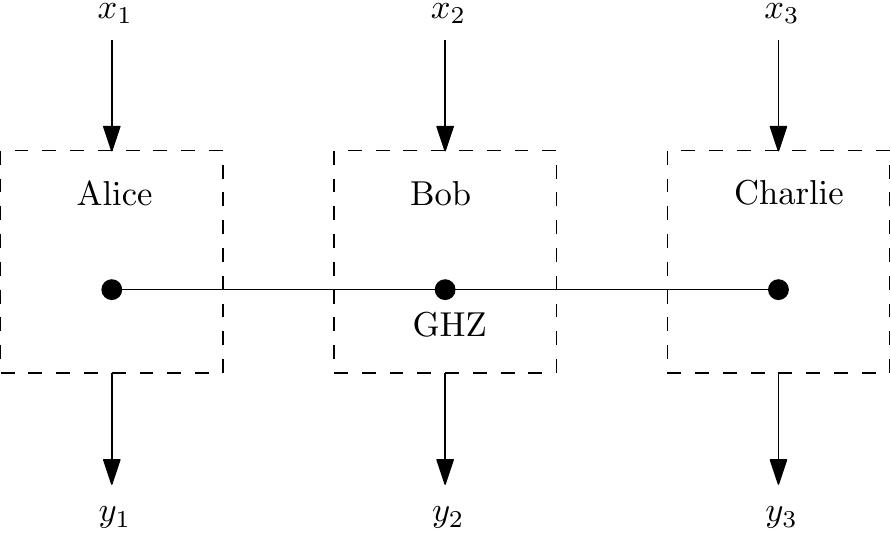}
        \caption{}
    \end{subfigure}
    \hfill
    \begin{subfigure}{0.45\textwidth}
        \includegraphics[width=\textwidth]{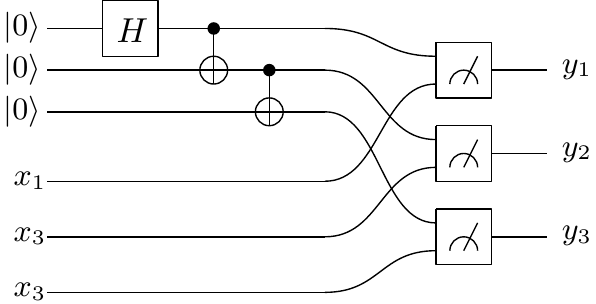}
        \caption{}
    \end{subfigure}
    \caption{Circuit version of the the \emph{GHZ game}. Inputs $x_1, x_2, x_3 \in \{0,1\}$
        are selected uniformly at random. A Haramard gate
        followed by two controlled not gates initialises the state 
        $\ket{GHZ} := \ket{000} + \ket{111}$.
        Each qubit is measured with measurement settings $0 \mapsto X, 1 \mapsto Y$.
        Outcomes $y_1, y_2, y_3 \in \{0,1\}$ are returned according
        to $1 \mapsto 0, -1 \mapsto 1$. The circuit wins if $x_1 \oplus x_2 \oplus x_3 = 1 \oplus y_1 \oplus y_2 \oplus y_3$.
        }
    \label{fig:ghz_and_circuit}
\end{figure}
It is well known that certain examples of contextuality can be recast
as cooperative games called \emph{nonlocal games}. An example
is Greenberger-Horn-Zeilling (GHZ) game 
\cite{greenberger_bells_1990, cleve_consequences_2010}. It was
observed by BGK that quantum strategies for nonlocal games
can be recast as circuits (Figure \ref{fig:ghz_and_circuit}). 
The computational problems
$\text{GHZ-2D}(n)$ that BGK considered can be seen as ``distributed'' versions of the 
GHZ game played on an $n \times n$ grid. 
This raises the question if every nonlocal game
can be turned into a family of ``distributed'' games, that
gives rise to an unconditional quantum advantage result with shallow
circuits. We show that this is the case. We describe this result in more detail
in Section \ref{section:circuits-intro}

\emph{Cohomology} can be a powerful technique for detecting structure
in data. It could therefore be a useful tool
for studying the empirical data associated with contextuality.
Two prominent cohomological approaches to contextuality
is the \emph{\v{C}ech cohomology approach} introduced
by Abramsky, Mansfield, and Barbosa \cite{abramsky_cohomology_2012}
and the \emph{topological approach} of Okay, Roberts, Bartlett, and 
Raussendorf \cite{okay_topological_2017}.
The \v{C}ech cohomology approach was further developed
by, for example, Abramsky et al.\ \cite{abramsky_contextual_2017},
and by Caru \cite{caru_towards_2018, caru_cohomology_2017,caru_logical_2019}.
The insight that contextuality has a topological structure
\cite{mansfield_contextuality_2020}
has lead to a range of results,
for example, the homotopical approach
of Okay and Raussendorf \cite{okay_homotopical_2020},
the connection with resource theory made
by Okay, Tyhurst, and Raussendorf \cite{okay_cohomological_2018},
the classifying space for contextuality \cite{okay_classifying_2021},
and more recent work by Okay, Kharoof, and Ipek 
has uncovered the simplicial structure \cite{okay_simplicial_2022}.

The \v{C}ech cohomology approach is based
on the sheaf theoretic framework
of Abramsky and Brandenberger, which
describes contextuality as a feature
of abstract families of empirical data, known
as \emph{empirical models} \cite{abramsky_sheaf-theoretic_2011}.
The \v{C}ech cohomology approach is very general. However,
this generality comes at the cost of completeness.
The issue of completeness was a main point of interest in the later
work of Abramsky et al.\ and Caru.
The topological approach lacks some of the generality of the \v{C}ech
approach. But the additional structure that the
approach requires gives the potential for a more refined approach.
In particular, we are interested in the possibility that the structure
used by the topological approach can help alleviate the
issue of incompleteness in the \v{C}ech cohomology approach.

In Section \ref{section:cohomology-intro} we give
an account of this structure used by the topological approach
within the sheaf theoretic
framework. We show that Okay et al.'s invariant can be generalised
to any empirical model equipped with this structure.
We then show that, in fact, at this level of generality
the two approaches are equivalent with resepect to
the question of completeness.

\section{Comparing two obstruction for contextuality} \label{section:cohomology-intro}
\begin{figure}
    \centering
    \begin{subfigure}{0.45\textwidth}
        \includegraphics[width=0.9\textwidth]{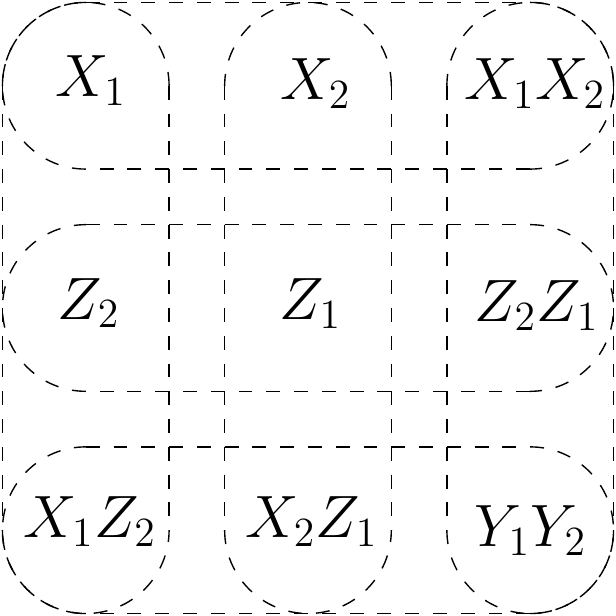}
        \caption{Mermin's square}
        \label{fig:m-square}
    \end{subfigure}
    \hfill
    \begin{subfigure}{0.45\textwidth}
        \includegraphics[width=0.9\textwidth]{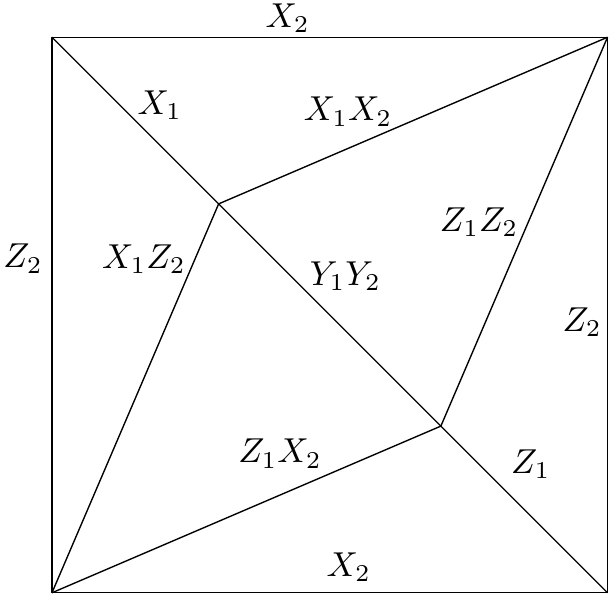}
        \caption{Classifying space for Mermin's square}
        \label{fig:m-square-space}
    \end{subfigure}
    \caption{
        (a) The set of quantum measurement operators known as \emph{Mermin's square}, 
        and (b) the associated classifying
        space of the topological approach. In (a) Each row and column represents a context of commuting operators. In (b) each operator labels a loop attached to a single point, and each context a surface.
    }
    \label{fig:topological_realisations}
\end{figure}

In the sheaf-theoretic framework contextuality
is seen as the failure of a locally compatible family of data
to be given a globally consistent description. In sheaf theory cohomology
is a powerful tool for studying the transition from local to global. 
It is therefore natural to consider the application of cohomological
methods to contextuality. Abramsky, Mansfield, and Barbosa \cite{abramsky_cohomology_2012} shows
that in a range of examples contextuality can be detected by the non-vanishing
of a cohomological invariant based on \emph{\v{C}ech cohomology}.

The sheaf-theoretic framework distinguishes between possibilitstic and probabilistic 
empirical models. A possibilistic model only keeps track of
which outcomes are possible, and not their particular probabilities.
The \v{C}ech cohomology invariant can be defined for any possibilistic empirical model.
However, it is generally not a complete invariant for contextuality.
There are so called ``false negatives'', contextual empirical models where the cohomological
invariant vanishes. False negatives can occur because
empirical models lack the required algebraic structure to directly
define the obstruction. An empirical model is a presheaf of sets,
while \v{C}ech cohomology requires a presheaf of abelian groups.
Abramsky et al.\ therefore considers the \v{C}ech cohomology
of the free abelian presheaf associated with an empirical model.

The \v{C}ech cohomology invariant lead to further work on developing
a complete cohomological invariant for contextuality.
It was shown by Abramsky, Barbosa, Kishida, Lal, and 
Mansfield \cite{abramsky_contextuality_2015} that \v{C}ech cohomology
is complete for a large class of examples captured by generalised AvN arguments.
Several other invariants have been proposed,
for example Roumen \cite{roumen_cohomology_2017}
and Caru \cite{caru_towards_2018}.

The Pauli operators and their qudit generalisations known as \emph{Weyl operators}
have a special role in quantum computing. They are used in for example
error correcting codes and 
measurement based quantum computing \cite{nielsen_quantum_2010}.
It is well known that the Pauli operators is a rich source of
examples of contextuality, this is also the case
when $d > 2$, see for example De Silva for examples \cite{de_silva_logical_2017}.
In dimension $d \geq 2$
the Weyl operators form a group $P_{n,d}$, 
called the generalised $n$-qudit Pauli group.
$P_{n,d}$ is closed under the \emph{phase} action of $\mathbb{Z}_d$.
\begin{align}
    \Omega : \mathbb{Z}_d \times P_{n,d} \to P_{n,d} :: (q, O) \mapsto \omega^q O
\end{align}
where $\omega := e^{2 \pi i / d}$.

The \emph{topological approach} of Okay et al. \cite{okay_topological_2017}
studies sets of Weyl operators that are closed under certain operations.
A set of operators $\mathcal{O} \subset P_{n,d}$ is \emph{closed}
if it satisfies the following conditions:
\begin{enumerate}
    \item $\mathcal{O}$ contains the identity: $I \in \mathcal{O}$.
    \item $\mathcal{O}$ is closed under the phase action: $\Omega(\mathbb{Z}_d, \mathcal{O}) \subset \mathcal{O}$.
    \item $\mathcal{O}$ is closed under commuting products: If $O_1, O_2 \in \mathcal{O}$ and $O_1O_2 = O_2O_1$ then $O_1O_2 \in \mathcal{O}$.
\end{enumerate}
Okay et al.\ show that questions
about contextuality for closed sets of Weyl operators can be given
a topological characterisation (Figure \ref{fig:topological_realisations}). 
The result generalises an earlier
characterisation for Pauli operators by Arkhipov \cite{arkhipov_extending_2012}.

The topological approach uses ideas from group cohomology.
Recall that a group extension of a group $K$ by a group $G$
is a short exact sequence of groups
\begin{equation}
    \begin{tikzcd}
        G \arrow[r, "i"] & H \arrow[r, "j"] & K
    \end{tikzcd} 
\end{equation}
generalising the direct product of groups $G \times K$. A left splitting, right splitting,
or trivialisation are homomorphisms $l, r, h$ respectively making
the following diagram commute:
\begin{equation}
    \begin{tikzcd}
        G \arrow[loop left, "\text{id}_G"] \arrow[r, "i"] \arrow[dr, "\text{in}_G"] 
        & H \arrow[d, "h"] \arrow[r, "j"] \arrow[l, "l" above, bend right] & K \arrow[loop right, "\text{id}_K"]
        \arrow[l, "r" above, bend right] \\
        & G \times K \arrow[ur, "\pi_2"] &
    \end{tikzcd}
\end{equation}
Group cohomology is an elegant solution to the problem
of classifying group extensions for fixed $G$ and $K$ \cite{brown_cohomology_2012}.

A closed set of Weyl operators is not a group because it is not
closed under inverses and only under commuting products.
However, Okay et al. shows that for any such set one can define
a classifying space similar to that of group cohomology.
Using this space they show that both state dependent and state
independent proofs of contextuality can be given a topological characterisation.
They show that state dependent and state independent contextuality
can be detected by the non-vanishing of a cohomology class.

\paragraph{Results}
We first give a more abstract account of the algebraic structure used by 
Okay et al.'s approach.

A \emph{bundle over a commutative partial monoid}
is a generalisation of group extensions to commutative partial monoids.
A closed set of Weyl operators comes with the structure of a bundle
over a commutative partial monoid. Proofs of contextuality
for a closed set of Weyl operators correspond to extending
\emph{local} left splittings, defined on a sub-bundle,
to \emph{global} left splittings defined on the whole bundle.
For closed sets of Weyl operators the problem
of extending a local left splitting globally can therefore
be used as a test for contextuality.

We prove a version of the splitting lemma for commutative partial
monoids, and we generalise group cohomology to partial commutative monoids.
The splitting lemma shows that for the problems of extending
either a local left splitting, right splitting, or trivialisation
of a sub-bundle to the whole bundle are equivalent.
Furthermore, the problem of extending a local splitting to a global
splitting can be given a cohomological characterisation.

We then generalise the cohomological obstruction
to any empirical model with the structure of a bundle
over a commutative partial monoid.
For such an empirical model the problem
of extending a local splitting is a test for contextuality.
We can therefore use the cohomological obstruction
for extending a local splitting.
There can be global splittings that don't correspond
to valid outcome assignments.
This raises the possibility
of false negatives.

We finally show that any false negative of the \v{C}ech approach
induces a global splitting that is not 
consistent with the model.

In summary, our results are:
\begin{itemize}
    \item Closed sets of Weyl operators come with the structure of a bundle
        over a commutative partial monoid. Local (resp. global) outcome assignments
        induce local (resp. global) left splittings of the bundle.
    \item The topological obstruction can be generalised to a class of empirical models
        equipped with the structure of a bundle over a commutative partial monoid.
    \item The vanishing of the \v{C}ech cohomology obstruction
        implies the vanishing of the generalised topological obstruction.
\end{itemize}

\section{A general construction of quantum advantage with  shallow circuits}
\label{section:circuits-intro}
Bravyi, Gosset, and K\"{o}nig's initial result was quickly
improved in several ways.
For example, it was shown to be noise robust \cite{bravyi_quantum_2020},
and it was extended to the more powerful classical circuit class AC0 \cite{watts_exponential_2019}, 
of circuits of bounded depth and 
unbounded fan-in AND, OR, and NOT gates.
It has also inspired several results for
\emph{interactive circuits},
that is circuits with more than one round 
of input and output \cite{grier_interactive_2020}.
\begin{figure}
    \centering
    \begin{subfigure}{0.48\textwidth}
        \centering
        \includegraphics[width=1\textwidth]{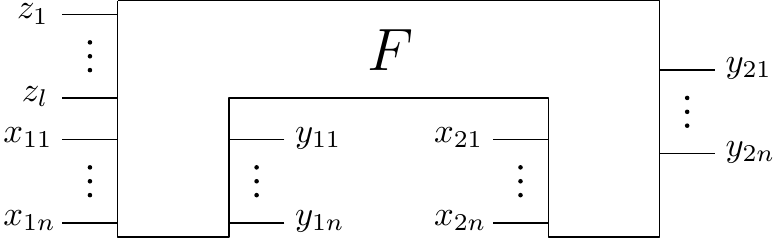}
        \caption{} \label{fig:generator-two-round}
    \end{subfigure}
    \hfill
    \begin{subfigure}{0.48\textwidth}
        \centering
        \includegraphics[width=1\textwidth]{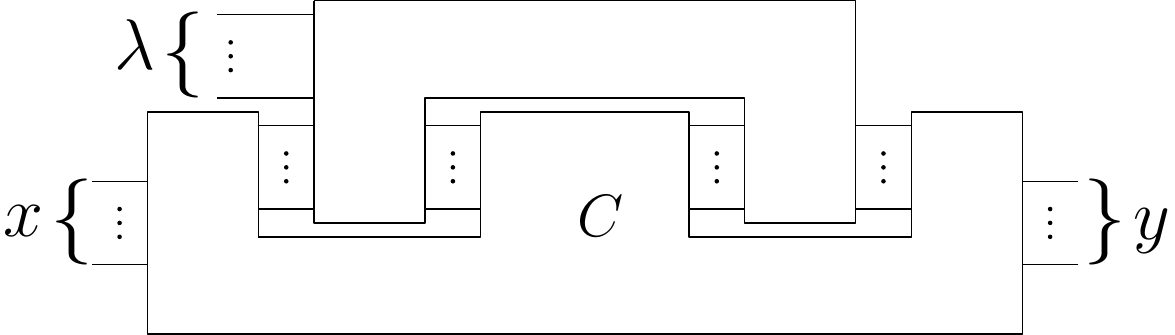}
        \caption{} \label{fig:two-round-evaluation}
    \end{subfigure}
    \caption{An interactive circuit (a) is a circuit with several rounds of inputs and outputs.
        The circuit is evaluated by composing with a classical circuit as in (b).
        }
    \label{fig:interactive_circuit_illustration}
\end{figure}

AC0 is currently at the edge of unconditional circuit separations
for classical circuits. It, therefore, seems unlikely that the techniques
used by BGK can be extended to prove much stronger complexity
theoretic results. However, in the lack of stronger results,
we should try to learn as much as possible.

BGK's result extends an earlier result by 
Barrett et al.\ \cite{barrett_modeling_2007}.
An interesting point is that after BGK's result was published
it was observed that Barrett et al.'s construction
solves an open problem about quantum advantage in distributed
computing \cite{gall_quantum_2019}.

Nonlocality is a particular type of contextuality that arise in scenarios where 
compatible measurements are performed
at distinct locations called measurement sites.
We observe that nonlocality can be recast in terms of circuits.
A quantum realisation gives rise to a circuit (Figure \ref{fig:contextuality-circuit})
that prepares an entangled state and then implements local measurements.
The circuit takes a classical
input $x_i$ and returns a classical output $y_i$ for each measurement
site $i$.
There is no path through the circuit $Q_\text{NC}$
from input $x_i$ to a different output $y_{j}$, where $i \neq j$.
$Q_\text{NC}$ is \emph{contextual} if it is not equivalent to any classical
circuit with the same inputs and outputs, such that there is no path
from an input to a different output (Figure \ref{fig:classical-circ}).

A \emph{nonlocal game} is usually thought of as being played
by a set of spatially separated players against Verifier.
We can equivalently think of a nonlocal game
as a computational problem where some quantum
circuit of the form $Q_\text{NC}$ achieves advantage over 
any classical circuit of the form $C_\text{NC}$.
In a nonlocal game $\Phi$ we randomly select an input
and an accepting condition $(x_1, \dots, x_n, A)$.
We then evaluate the circuit on inputs $x_1, \dots, x_n$. The circuit
wins if $(y_1, \dots, y_n) \in A$. The \emph{success probability}
is the likelihood of the accepting condition being satisfied.
A nonlocal game $\Phi$ is violated by a quantum strategy $Q_\text{NC}$
if there exists a bound $\gamma$ such that
\begin{align}
    p_S(C_\text{NC}, \Phi) \leq \gamma < p_S(Q_\text{NC}, \Phi)
\end{align}
where $p_S$ denotes success probability and $C_\text{NC}$ is any classical circuit
of the same form.

\begin{figure}
    \centering
    \begin{subfigure}{0.45\textwidth}
        \centering
        \includegraphics{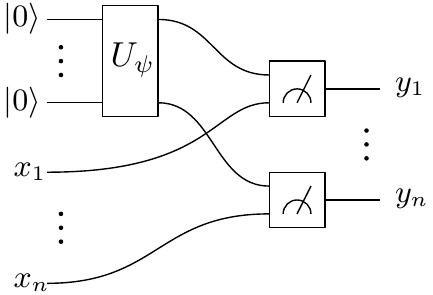}
        \caption{}
        \label{fig:contextuality-circuit}
    \end{subfigure}
    \begin{subfigure}{0.45\textwidth}
        \centering
        \includegraphics{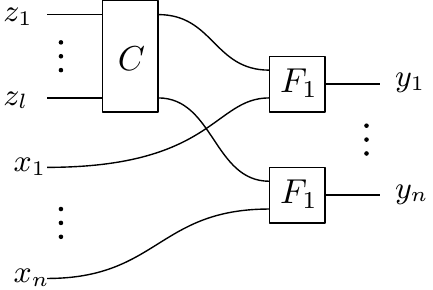}
        \caption{}
        \label{fig:classical-circ}
    \end{subfigure}
    \caption{
        The quantum circuit (a) is \emph{contextual} if it is not equivalent to any
        classical circuit (b) of the same shape, with the ability to sample
        an arbitrary random seed $z_1, \dots, z_l$.
    }
\end{figure}

Bravyi, Gosset, and K\"{o}nig introduced a family of nonlocal games
$\{\text{2D-GHZ}(n)\}_{n \in \nats}$ and a shallow quantum circuit
$\{Q_n\}_{n \in \nats}$ (Figure \ref{fig:bgk_problem_and_circuit}). 
$\text{2D-GHZ}(n)$ is a version of the GHZ-game
played on an $n \times n$ grid. The circuit $Q_n$ prepares
$n^2$ qubits in the graph state of the $n \times n$ grid and applies
classically controlled Pauli $X,Y$ or $Z$ measurements to each qubit.
It can be shown that the graph state can be prepreaed by a single Hadamard gate
on each qudit, and four layers of controlled $Z$ gates. The circuit $\{Q_n\}$
is therefore shallow. 

The inputs and accepting condition is chosen by Verifier
in the nonlocal game $\text{2D-GHZ}(n)$ is related to the inputs and accepting
condition in the GHZ-game. At the beginning of each round
Verifier randomly selects inputs $x_A,x_B,x_C$ for the GHZ-game,
nodes $v_A,v_B,v_C \in \text{Grid}(n,n)$,
and paths $u_{AB}: v_A \to v_B, u_{BC} : v_B \to v_C, u_{CA} : v_C \to v_A$.
Players $v_A,v_B,v_C$ are then given inputs $x_A,x_B,x_C$
and the remaining players are given inputs that encode
that paths $u_{AB}, u_{BC}, u_{CA}$.
An output for the players $y_1, \dots y_{n^2}$ is accepted if it satisfies
a constraint
\begin{align}
    x_A \oplus x_B \oplus x_C = 1 \oplus (y_A \oplus k_A(y)) \oplus (y_B \oplus k_B(y)) 
    \oplus (y_C \oplus k_C(y))
\end{align}
where $y_A,y_B,y_C$ are the outputs of $v_A,v_B,v_C$ and $k_A(y),k_B(y),k_C(y)$
are ``correction factors'' that only depend on the outcomes
of players along the paths close to each respective node.

BGK shows that for each $n \in \nats$ the game $\text{2D-GHZ}(n)$
is solved perfectly by the quantum circuit $Q_n$,
and that it is not solved with high accuracy by any 
classical shallow circuit $\{C_n\}_{n \in \nats}$.
\begin{figure}
    \centering
    \begin{subfigure}{0.45\textwidth}
        \centering
        \includegraphics[width=1\textwidth]{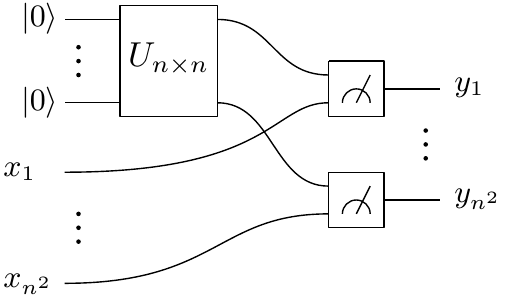}
        \caption{}
    \end{subfigure}
    \hfill
    \begin{subfigure}{0.45\textwidth}
        \centering
        \includegraphics{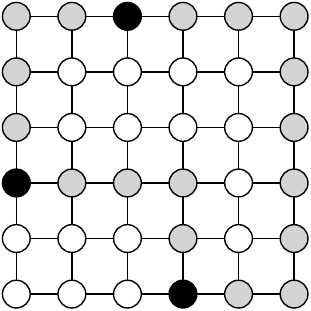}
        \caption{}
    \end{subfigure}
    \caption{The 2D-GHZ game. The quantum circuit strategy (a)
        prepares $n^2$ qubits in the cluster state of the $n \times n$
        grid. It takes inputs $x_1, \dots, x_{n^2} \in \{1,2,3\}$,
        performs controlled Pauli measurements according
        to $1 \mapsto X, 2 \mapsto Y, 3 \mapsto Z$,
        and returns outcomes $y_1, \dots, y_{n^2} \in \{0,1\}$ according
        to $1 \mapsto 0, -1 \mapsto 1$.
        The input is randomly sampled as follows.
        First select nodes $A,B,C \in n \times n$ and paths
        $u_{AB}, u_{BC}, u_{CA} \subset n \times n$ in a ``sufficiently uniform''
        way.
        The circuit wins if the output $y$ satisfies
        $x_A \oplus x_B \oplus x_C = 1 \oplus (y_A \oplus k_A(y)) \oplus (y_B \oplus k_B(y)) \oplus (y_C \oplus k_C(y))$,
        where $k_A(y), k_B(y), k_C(y) \in \{0,1\}$ depends only on the value
        of $y$ close to $A,B,C$ respectively.
    }
    \label{fig:bgk_problem_and_circuit}
\end{figure}

\begin{customthm}{\cite{bravyi_quantum_2018}}
    The shallow quantum circuit $\{Q_n\}_{n \in \nats}$ solves
    the 2D-GHZ game perfectly for all $n$.
    However, the 2D-GHZ game is not solved with high accuracy by
    any classical shallow circuit $\{C_n\}_{n \in \nats}$.
    \begin{align}
        p_S(Q_n, \text{2D-GHZ}(n)) &= 1\\
        p_S(C_n, \text{2D-GHZ}(n)) &\leq 3/4 + \epsilon_n
    \end{align}
    where $p_S$ denotes \emph{success probability} and $\epsilon_n \in O(1/n)$.
\end{customthm}

The key to BGK's result is that single-qubit measurements on an entangled
state with only local entanglement can create
entanglement between qubits that are far away.
Depth and fan-in constrain the nonlocal correlations
that a classical circuit can produce,
but it also constrains the entangled states
and the measurements that a quantum circuit can use.
Observe that in the circuit $C_\text{NC}$ there can only
be a path from input $x_i$ to output $y_i$, while in a circuit
of depth $D$ and maximal fan-in $K$ there can be a path from at most
$K^D$ inputs to any given output. As Verifier makes different
choices of players $v_A,v_B,v_C$ in the 2D-GHZ game this
forces the depth and fan-in of a classical circuit to be large.
On the classical side it can be shown that when the measurement
along the paths $u_{AB}, u_{BC}, u_{CA}$ are made, the effect
is to create an entangled $\ket{GHZ}$ state at qubits
$v_A,v_B,v_C$,
up to a local Pauli factors given by
$k_A, k_B, k_C$. Furthermore, these corrections can be made
classically post measurement.

In summary, the technique relies upon two key properties of the GHZ game:
The use of the GHZ state and Pauli measurements.
The choice of state is important
because it can be realised by local measurements on a graph state
in different ways, and the measurements are important because it allows
for the corrections $k_A,k_B,k_C$ to be performed post-measurement.

\paragraph{Results}
We first present a quantum protocol that uses teleportation
to both distribute an entangled state on a graph and perform measurements
on the distributed qudits (Figure \ref{fig:teleportation_in_introduction}).
For any multi-qudit state $\psi$ with qudit $i$
and graph $G$ with nodes $V$
we consider a scenario where a number of agents $I \times V$, 
one for each qudit of $\psi$
and node of $G$, share entanglement. Each qudit of $\psi$ is held by some
node on the graph, and each pair of nodes $(i,v), (i,w)$ such that
$v,w$ are adjacent in $G$ share a two-qudit entangled state.
By choosing a path through the graph for each qudit we can then distribute each
qudit of $\psi$ to an arbitrary node on the graph, up to a random single-qudit phase
for each qudit.
An important observation is that this can be done in a constant number of rounds
of quantum measurements.
\begin{figure}
    \centering
    \begin{subfigure}{0.48\textwidth}
    \centering
        \includegraphics[width=1\textwidth]{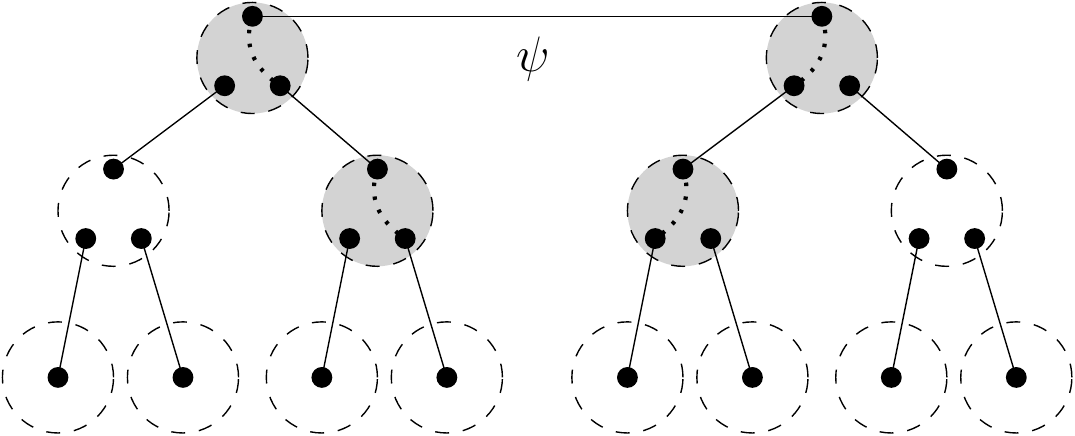}
        \caption{
        }
        \label{fig:measurement-setting}
    \end{subfigure}
    \hfill
    \begin{subfigure}{0.48\textwidth}
    \centering
        \includegraphics[width=1\textwidth]{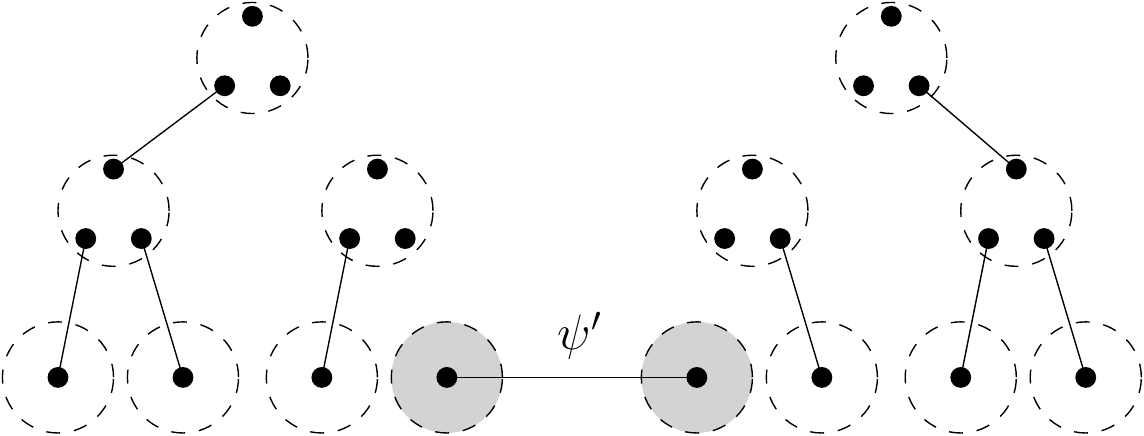}
        \caption{}
        \label{fig:step-two}
    \end{subfigure}
    \caption{Given a nonlocal game with players $I$ and state $\psi$, and a graph
    $G$ with nodes $V$ we consider the scenario (a) consisting
    of players $I \times V$, here each dotted circle indicate a player
    and $\bullet$ a qudit in either a maximally entangled state
    or the state $\psi$.
    (b) By performing local measurement on the qudits held by each
    player we can distribute each qudit, up to a random factor
    on each qudit, to any player on the graph.
    }
    \label{fig:teleportation_in_introduction}
\end{figure}

We then consider the family of protocols arising from a fixed
state and a family of graphs. Using this construction
we show that any nonlocal game gives rise to a family
of distributed games. We then show that for certain families
of graphs distributed games gives rise to unconditional
quantum advantage results with shallow circuits.

We present two versions of this construction.
The first is completely general, but the distributed games
have two rounds (Figure \ref{fig:teleportation_adn_circuit_two_rounds}). 
It is a result about interactive circuits 
(Figure \ref{fig:interactive_circuit_illustration}).
The second result is less general, but for circuits in the usual sense
having only a single round of inputs and outputs (Figure \ref{fig:teleportation_and_circuit}).
In the second result we consider nonlocal games with quantum
strategies given by measurements of single-qudit Weyl operators.
Note that the states are still completely general.

The outline of the two results is as follows.
Suppose that $(Q_\text{NC}, \Phi)$ is any nonlocal game
with classical bound $\gamma$. For any family of graphs
$\{G_n\}_{n \in \nats}$ we define a family of two-round cooperative
games $\{\Phi_n\}_n$ and two-round interactive quantum circuits $\{Q_n\}_n$,
such that for each $n \in \nats$ the quantum circuit
$Q_n$ violates the bound $\gamma$.
For certain families of graphs we show that the quantum circuit
is shallow
and that a classical shallow circuit $\{C_n\}_n$
violates the bound $\gamma$ only up to a small factor $\epsilon_n$. 
Where $\text{lim}_{n \to \infty} \epsilon_n = 0$.
The rate of convergence is a property of the graphs.

\begin{customthm}{I (Informal)}
    For any nonlocal game $\Phi$ and quantum strategy $Q$
    we define a family of two-round interactive games $\{\Phi_n\}_{n \in \nats}$
    and a shallow two-round quantum circuit $\{Q_n\}_{n \in \nats}$
    such that for any classical two-round interactive shallow circuit 
    $\{C_n\}_{n \in \nats}$
    \begin{align}
        p_S(Q_n, \Phi_n) &= p_S(Q, \Phi)\\
        p_S(C_n, \Phi_n) &\leq \gamma + \epsilon_n
    \end{align}
    for some small $\epsilon_n$.
\end{customthm}

Next, we show that if the quantum strategy uses only single-qudit Weyl
measurements (Figure \ref{fig:weyl-circ})
\begin{figure}[h]
    \centering
    \includegraphics[scale=1.2]{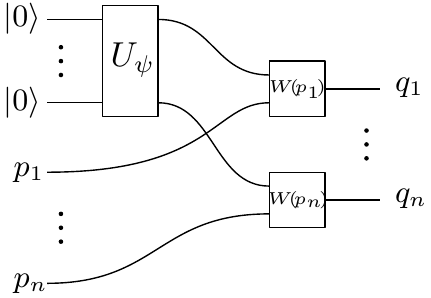}
    \caption{
        A \emph{Weyl measurement strategy} is a special quantum strategy using measurements
        in the basis of Weyl operators. For dimension $d \geq 2$ and $n$-qudit state $\psi$
        the Weyl measurement strategy takes inputs
        $p_1, \dots, p_n \in \mathbb{Z}_d^2$ and return outcomes
        $q_1, \dots, q_n \in \mathbb{Z}_d$ of performing the single-qudit
        Weyl measurement $W(p_i)$ on qudit $i$.
    }
    \label{fig:weyl-circ}
\end{figure}
then the number of input-output rounds can be reduced to one.
\begin{customthm}{II (Informal)}
    For any nonlocal game $\Phi$ and Weyl measurement strategy $Q$
    we define a family of nonlocal games $\{\Phi_n\}_{n \in \nats}$
    and a shallow quantum circuit $\{Q_n\}_{n \in \nats}$
    such that for any classical shallow circuit $\{C_n\}_{n \in \nats}$
    \begin{align}
        p_S(Q_n, \Phi_n) &= p_S(Q, \Phi)\\
        p_S(C_n, \Phi_n) &\leq \gamma + \epsilon_n
    \end{align}
    for some small $\epsilon_n$.
\end{customthm}
where $\epsilon_n$ is a different bound.

\begin{figure}
    \centering
    \begin{subfigure}{0.48\textwidth}
    \centering
        \includegraphics[width=1\textwidth]{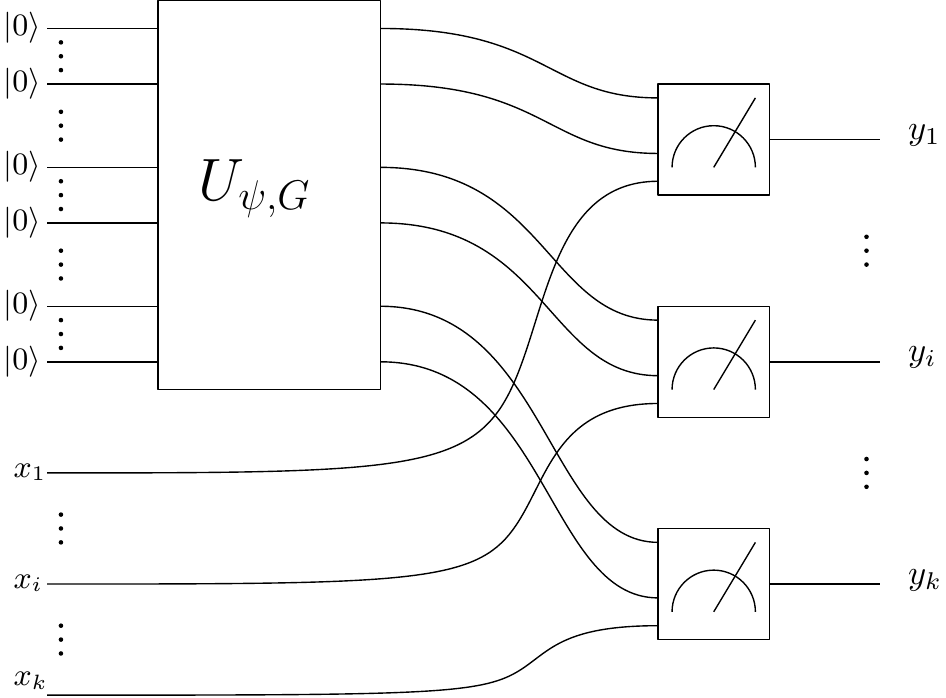}
        \caption{}
        \label{fig:step-two}
    \end{subfigure}
    \hfill
    \begin{subfigure}{0.48\textwidth}
    \centering
        \includegraphics[width=1\textwidth]{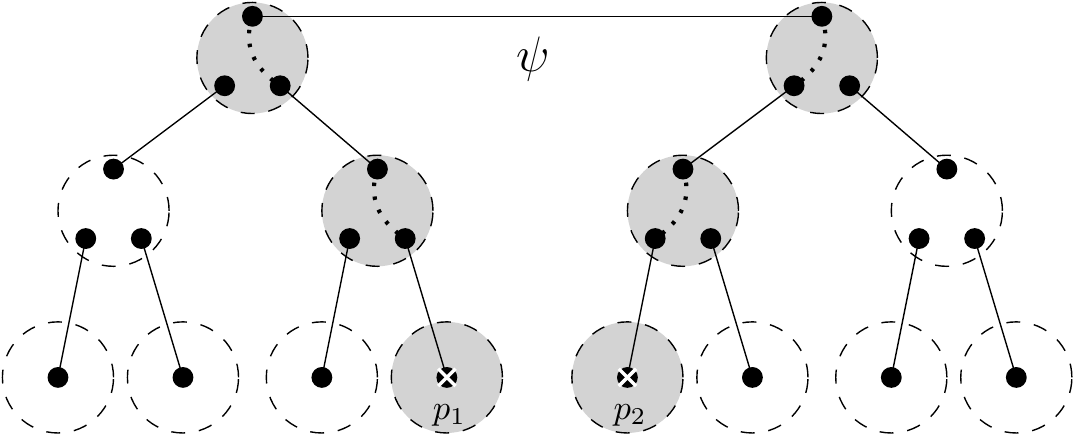}
        \caption{}
        \label{fig:measurement-setting}
    \end{subfigure}
    \caption{Let $\psi$ be an $I$-qudit state and $G = (V, E, r)$ a rooted graph.
    The quantum circuit strategy (a) prepares a single instance of $\psi$
    and a maximally entangled pair of qudits $\ket{\phi} := \frac{\ket{00} + \ket{11}}{\sqrt{2}}$
    for each $i \in I$ and edge $e \in E$.
    The circuit has an input for each $(i,v) \in I \times V$ which controls a measurement
    on a subset of qudits. This subset includes one of the two qudits of the state
    $\ket{\phi}$ associated with each edge adjacent to $v$, and when $v=r$ also includes
    qudit $i$ of $\psi$.
    The possible measurement settings are either a Weyl operator measurement on a single qudit,
    or a Bell basis measurement on a pair of qudits.
    (b) In the nonlocal game Verifier selects inputs $p_1, \dots, p_n$ and an accepting
    condition $A$ according to the nonlocal game $\Phi$. Verifier then
    randomly selects a rooted path $(v_{i1}, \dots, v_{il_i})$
    for each $i \in I$ and sends each $(i,v_{il_i})$
    the input corresponding to a Bell basis measurement,
    and $(i,v_{il_i})$ the Weyl measurement setting $p_i$.
    Verifier accepts the outputs $p_{i1}', \dots, p_{i(l_i-1)}', q_i$
    if $(q_1 - [p_1,p_1'], \dots, q_n - [p_n, p_n']) \in A$,
    where $p_i' := p_{i1}' + \dots + p_{il_i}'$.
    }
    \label{fig:teleportation_and_circuit}
\end{figure}

\begin{figure}
    \centering
    \begin{subfigure}{\textwidth}
        \centering
        \includegraphics[width=\textwidth]{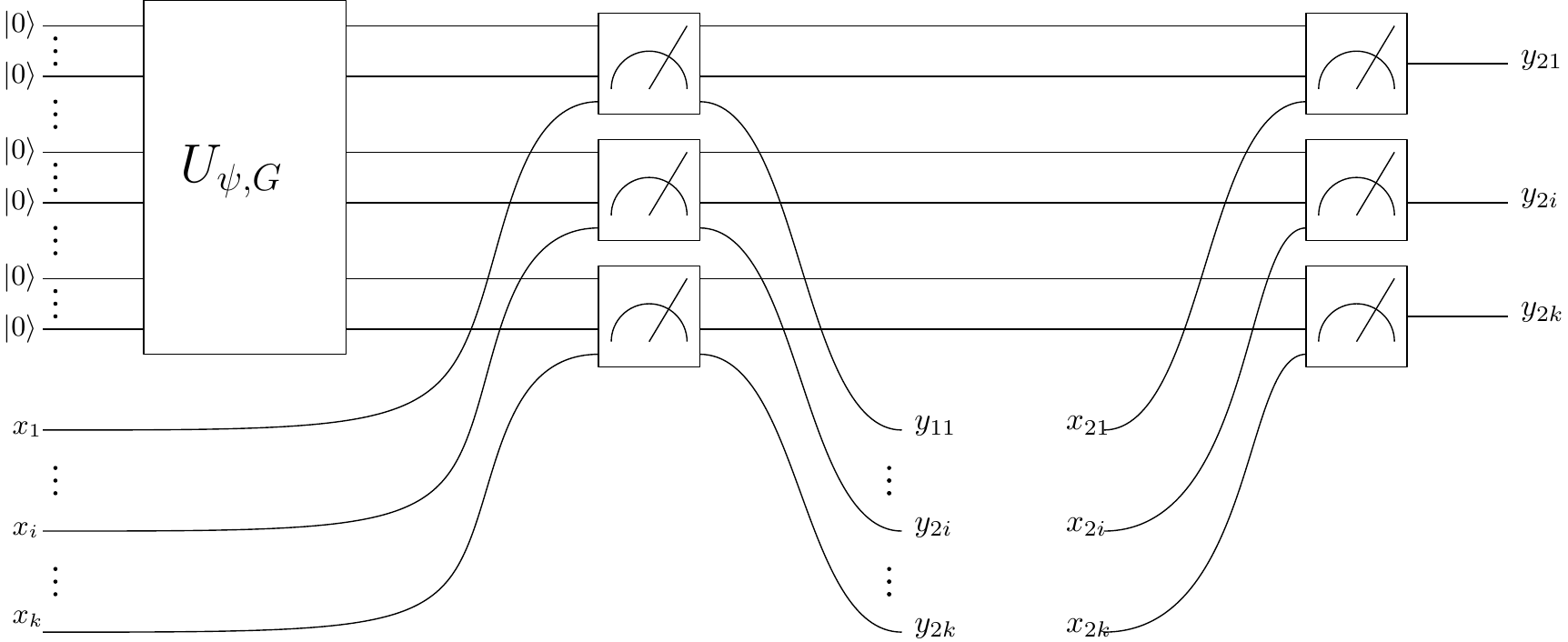}
        \caption{}
        \label{fig:my_label}
    \end{subfigure}
    \begin{subfigure}{\textwidth}
    \centering
    \begin{subfigure}{0.48\textwidth}
    \centering
        \includegraphics[width=1\textwidth]{resource-state-step-1_2.pdf}
        \caption{
        }
    \end{subfigure}
    \hfill
    \begin{subfigure}{0.48\textwidth}
    \centering
        \includegraphics[width=1\textwidth]{resource-state-step-2_2.pdf}
        \caption{}
    \end{subfigure}
\end{subfigure}
    \caption{
    Let $\psi$ be an $I$-qudit state and $G = (V, E, r)$ a rooted graph.
    The quantum circuit strategy (a) prepares a single instance of $\psi$
    and a maximally entangled pair of qudits $\ket{\phi} := \frac{\ket{00} + \ket{11}}{\sqrt{2}}$
    for each $i \in I$ and edge $e \in E$.
    The circuit first takes an input for $(i,v) \in I \times V$ which controls
    a non-destructive measurement on a subset of qudits,
    it then takes another round of inputs for each $(i,v) \in I \times V$
    which control destructive measurements on each subset of qudits.
    For each $(i,v)$ the subset of qudits which is measured
    includes one of the two maximally entangled qudits
    associated with each edge adjacent to $v$, and when $v=r$ also includes
    qudit $i$ of $\psi$.
    In the first round the measurement settings are either nothing
    or a Bell basis measurement on a pair of qudits.
    In the second round the possible measurements are either nothing
    or a conjugated measurement $W(p) M W(p)^\dagger$ on a single qudit,
    where $M$ is one of the measurement settings.
    (b) In the nonlocal game Verifier selects inputs $x_1, \dots, x_n$ and an accepting
    condition $A$ according to the nonlocal game $\Phi$. Verifier then
    randomly selects a rooted path $(v_{i1}, \dots, v_{il_i})$
    for each $i \in I$ and first sends each $(i,v_{il_i})$
    the input corresponding to a Bell basis measurement.
    If the outcomes of this are $p_{i1}, \dots, p_{i(l_i-1)}$
    Verifier sends $(i,v_{il_i})$ the input for the conjugated measurement
    $W(p_i) M_{x_i} W(p_i)^\dagger$,
    where $p_i' := p_{i1}' + \dots + p_{il_i}'$.
    Verifier accepts the output $y_1, \dots, y_n$ if $(y_1, \dots, y_n) \in A$.
    }
    \label{fig:teleportation_adn_circuit_two_rounds}
\end{figure}

\section{Structure of this text}
In Chapter 2 we present some technical background material on the 
sheaf-theoretic framework.
We then present the results on cohomology and circuits in Chapters 3 and 4 
respectively, and we conclude with some remarks in
Chapter 5.
\chapter{The sheaf-theoretic framework}

An early influential paper on contextually is
John Bell's famous paper on the Einstein-Podolsky-Rosen
(EPR) paradox \cite{bell_einstein_1964}. The ``paradox'' of EPR purportedly
showed that quantum mechanics should not be seen as a complete description of physical
reality \cite{einstein_can_1935}. Bell's insight could be understood to be that
the incompleteness highlighted by EPR is not simply a feature of quantum mechanics,
but of any physical theory that is consistent with the empirical predictions
of quantum mechanics. Other influential papers by
Kochen and Specker \cite{kochen_problem_1975}, Mermin \cite{mermin_extreme_1990},
and Greenberger-Horne-Zeillinger \cite{greenberger_bells_1990}, to mention a few.

This early work on contextuality focused on particular examples. Our
interest in contextuality stems from the wish to prove general connections between
contextuality and quantum advantage. It is therefore necessary to work with
a more abstract definition of contextuality. Our approach
uses the \emph{sheaf theoretic} framework of Abramsky and Brandenberger \cite{abramsky_sheaf-theoretic_2011}.
The sheaf theoretic approach is among several general definitions of contextuality.
For example, Robert Spekken's ontological models
framework \cite{spekkens_contextuality_2005},
Cabello, Severini, and Winter's graph theoretic approach 
\cite{cabello_graph-theoretic_2014}, and
the \emph{contextuality by default} approach of Dzhafarov, Kujala, and Cervantes
\cite{dzhafarov_contextuality-by-default_2015}.
Further work on the contextuality by default approach was
carried out by Dzhafarov, Kujala, and 
Cervantes \cite{dzhafarov_contextuality-by-default_2015}
and connections with psychology were investigated by Dzhafarov
and Kujala \cite{dzhafarov_contextcontent_2016},
to mention some. A graph theoretic approach that
refines that of Cabelo, Severini and Winter's is the approach
of Acín, Fritz, Leverrier, and Sainz \cite{acin_combinatorial_2015}.

The sheaf theoretic framework has proved useful for linking contextuality
to constraint satisfaction and database theory \cite{abramsky_logical_2012, abramsky_relational_2013}.

In this chapter, we give an introduction to contextuality
using the sheaf-theoretic framework, and we introduce several technical notions
that will be used in the following chapters.

\paragraph{Overview}
The two basic concepts in the sheaf theoretic framework are \emph{measurement scenarios}
and \emph{empirical models}. We introduce measurement scenarios
in Section \ref{section:measurement-scenarios} and empirical models
in Section \ref{section:empirical-models}.
In Section \ref{section:simulations} we define \emph{simulations},
a class of structure preserving transformations
between empirical models.
In Section \ref{section:cech-cohomology} we introduce
the \v{C}ech cohomology obstruction for contextuality.
In Section \ref{section:bell-inequalities} we define
non-local games.
In Section \ref{section:resource-inequalities} we
introduce the contextual fraction, and give an example
of a resource inequality.

\section{Measurement scenarios} \label{section:measurement-scenarios}
\begin{figure}
    \centering
    \includegraphics[width=0.5\textwidth]{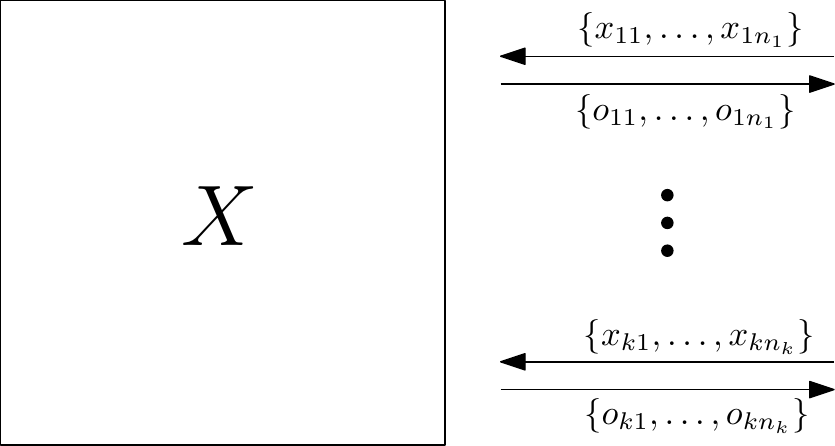}
    \caption{Let $(X, \mathcal{M}, O)$ be a measurement scenario.
        In an experimental run with $k$ rounds
        a sequence of contexts$\{x_{11}, \dots, x_{1n_1}\},
        \dots, \{x_{k1}, \dots, x_{kn_k}\} \in \mathcal{M}$
        satisfying Eq.\ (\ref{eq:contexts}) are performed,
        giving outcomes $\{o_{11}, \dots, o_{1n_i}\}, \{o_{k1}, \dots, o_{kn_k}\}$.}
    \label{fig:system-agent-diagram}
\end{figure}

In the sheaf theoretic approach of Abramsky and 
Brandenberger \cite{abramsky_sheaf-theoretic_2011} a
measurement scenario represents the abstract type of an experiment.
In this type of experiment some, but not necessarily all, 
combinations of measurements
can be performed together, either sequentially or in parallel (Figure \ref{fig:system-agent-diagram}). 
We will first give the general definition and then consider two 
types of scenarios: quantum scenarios (Section \ref{section:quantum-scenarios})
and multipartite scenarios (Section \ref{section:multipartite-scenarios}).

A measurement scenario is specified by a set of measurements, a family of subsets
called the \emph{measurement cover} specifying which measurements are compatible,
and a set of outcomes for each measurement.
\begin{definition}
A \emph{measurement scenario} is a tuple $(X, \mathcal{M}, \{O_x\}_{x \in X})$
where
\begin{itemize}
    \item $X$ is a set of \emph{measurements}.
    \item $\mathcal{M} \subset \mathcal{P}(X)$ is a family
    of subsets of measurements, called the \emph{measurement cover},
    such that:
        \begin{enumerate}
            \item $\mathcal{M}$ covers $X$: $\bigcup_{C \in \mathcal{M}} C = X$.
            \item $\mathcal{M}$ is downwards closed: If $C \in \mathcal{M}$
            and $C' \subset C$ then $C' \in \mathcal{M}$.
        \end{enumerate}
    \item $O_x$ is a set of \emph{outcomes}.
\end{itemize}
The elements of the measurement cover are called \emph{contexts}.
\end{definition}
\newcommand{\mscenario}{(X, \mathcal{M}, \{O_x\}_{x \in X})}

Let $(X, \mathcal{M}, O)$ be a measurement scenario.
Each context $C \in \mathcal{M}$ represents a set of compatible measurements
that can be performed either sequentially in any order, or in parallel.
We make the restriction that a measurement can only be performed once.
A sequence of contexts $C_1, \dots, C_n \in \mathcal{M}$ is valid if 
it has no repeated measurements and its union is a context:
\begin{align}
    \label{eq:contexts}
    \bigcup_i C_i \in \mathcal{M} \quad \text{ and }
    \quad
    C_i \cap C_j = \emptyset \quad \text{ for all } i \neq j
\end{align}

A joint outcome $s \in \prod_{x \in X'} O_x$ to a subset of measurements
is sometimes called a \emph{local section}.
This assignment is called the \emph{event sheaf}.
\begin{definition}
    Let $S = (X, \mathcal{M}, O)$ be a measurement scenario.
    The \emph{event sheaf}, denoted by $\mathcal{E}_S$,
    assigns to each $U \subset X$
    the set of \emph{local sections}
    $\mathcal{E}(U) := \prod_{x \in U} O_x$,
    and for each $V \subset U$
    restrictions $s \in \mathcal{E}(U)$
    to a local section $\res{s}{V}$
    by the usual functional restriction.
\end{definition}

Recall that a \emph{presheaf} on a topological space
$X$ is a contravariant function $F:X^\text{op} \to \text{Set}$.
Here $X$ is seen as a category with objects given by the open sets,
and morphisms inclusion. 
For each inclusion $U \subset V$ the map $F(U \subset V):F(V) \to F(U)$
is called the restriction map.
A \emph{sheaf} is a presheaf satisfying the following the \emph{sheaf condition}.
A compatible family for the open cover $\mathcal{U}$
is a family$\{f_U \in F(U)\}_{U \in \mathcal{U}}$ whose restrictions
on overlaps are compatible:
\begin{align}
    F(U \cap V \subset U)(f_U) = F(U \cap V \subset V)(f_V)
\end{align}
for all $U,V \in \mathcal{U}$.
The \emph{sheaf condition} states that any compatible family
arises as the family of restrictions
\begin{align}
    f_U = F(U \subset X)(f)
\end{align}
of some \emph{global section} $f \in F(U)$.

\subsection{Quantum scenarios} \label{section:quantum-scenarios}

The first example of a measurement scenario that we work
with arise from sets of \emph{projective measurements}.
A projective measurement is a family of projectors 
$M = \{M_o\}_{o \in O}$,
where $O$ labels the outcomes,
such that $\sum_{o \in O} M_o = I$.
Two measurements 
$M = \{M_o\}_{o \in O}$,
and 
$N = \{N_p\}_{p \in P}$ commute if their
projective elements commute:
\begin{align}
    M_oN_p = N_pM_o \quad \text{for all $o \in O$, $p \in P$}
\end{align}

A set of pairwise commuting projective measurements is said to be compatible.

\begin{example}
    Let $\pmb{M}$ be a set of projective measurements.
    $(\pmb{M}, \mathcal{M}, O)$
    is the measurement scenario with measurement
    cover the maximal subsets of pairwise commuting measurements,
    and outcomes $O$ given by the outcomes of each measurement.
\end{example}

The \emph{$n$-Pauli group}, denoted by $P_n$, is the group of $n$-qubit unitary operators
generated by the single-qubit Pauli operators
\begin{align*}
\hspace{-2.0 cm}
    I := \begin{bmatrix}
    1 & 0 \\
    0 & 1 \\
    \end{bmatrix}\quad
    \pauli{X}{} := \begin{bmatrix}
        0 & 1 \\
        1 & 0 \\
    \end{bmatrix} \quad
    \pauli{Y}{} := \begin{bmatrix}
        0 & -i \\
        i & 0 \\
    \end{bmatrix}
    \quad
    \pauli{Z}{} := \begin{bmatrix}
        1 & 0 \\
        0 & -1 \\
    \end{bmatrix}
\end{align*}
We denote the application of the
Pauli operator $\pauli{P}{} = \pauli{X}{}, \pauli{Y}{}, \pauli{Z}{}$
to qubit $i$ as
\begin{align}
    \pauli{P}{i} = I \otimes \dots \otimes \pauli{P}{} \otimes \dots \otimes I
\end{align}
recall that the single-qubit Pauli operators satisfy the commutativity relation
\begin{align}
    \pauli{X}{}\pauli{Y}{} = -\pauli{Y}{}\pauli{X}{}\\
    \pauli{X}{}\pauli{Z}{} = -\pauli{Z}{}\pauli{X}{}\\
    \pauli{Y}{}\pauli{Z}{} = -\pauli{Z}{}\pauli{Y}{}
\end{align}
It therefore follows that two $n$-qubit Pauli operators
commute if and only if they anti-commute at an even number of qubits.
An $n$-qubit Pauli operator specifies a projective measurement 
with outcomes $\{0,1\}$.
Two $n$-qubit Pauli operators commute, and therefore their projective measurements there also commute,
if and only if they anti-commute at an even number of qubits.
\begin{example}
    A quantum scenario $(\mathcal{O}, \mathcal{M}, \mathbb{Z})$
    is given by the set of two-qubit Pauli operators
    \begin{align}
        \begin{tabular}{c  c  c  c  }
        $\pauli{X}{1}$ & $\pauli{X}{2}$ & $\pauli{X}{1}\pauli{X}{2}$ \\ &&&\\
        $\pauli{Z}{2}$ & $\pauli{Z}{1}$ & $\pauli{Z}{1}\pauli{Z}{2}$ \\ &&&\\
        $\pauli{X}{1}\pauli{Z}{2}$ & $\pauli{Z}{1}\pauli{X}{2}$
        & $\pauli{Y}{1}\pauli{Y}{2} $\\\ &&& \\
        \end{tabular}
    \end{align}
    where each row and column make up a maximal context.
\end{example}

\subsection{Multipartite scenarios} \label{section:multipartite-scenarios}

The second type of measurement scenario that we consider represents
scenarios where measurements can be performed independently
at a number of locations (Figure \ref{fig:abstract-scenario}).
This is sometimes called a \emph{non-locality scenario}.
We prefer the terminology ``multipartite'' because it avoids
the implication that the locations are necessarily spatially separated.

A multipartite scenario is specified by a set of \emph{measurement sites} $I$,
for each measurement site $i$ a set of \emph{measurement settings} $X_i$,
and for each measurement setting $x \in X_i$ a set of 
\emph{measurement outcomes} $Y_{i,x}$.
Two measurements are compatible if and only if they belong to a different
measurement site.
First a comment about notation.
Recall that $\coprod_{i \in I} X_i$
is defined as
\begin{align}
    \coprod_{i \in I} X_i := \{(i, x) \mid i \in I, x \in X_i\}
\end{align}

\begin{definition}
    A \emph{multipartite scenario} $(I, X, Y)$
    is the measurement scenario $(\coprod_{i \in I} X_i, \mathcal{M}, Y)$,
    where the measurement cover $\mathcal{M}$
    is defined by
    \begin{align}
        \mathcal{M} := \{C \subset \coprod_{i \in I} X_i \mid (i,x), (i,x') \in C \Rightarrow x = x'\}    
    \end{align}
    and $Y_{i,x}$ is the set of outcomes for each $(i,x)$.
\end{definition}
\begin{figure}
    \centering
    \includegraphics[width=0.3\textwidth]{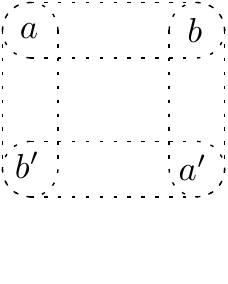}
    \caption{
        Consider a multipartite scenario with two measurement sites
        with measurement settings $\{a,a'\}$, $\{b,b'\}$ respectively.
        The maximal contexts are then $\{a,b\}, \{a,b'\}, \{b',a'\}, \{b,a'\}$.
    }\label{fig:abstract-scenario}
\end{figure}

\section{Empirical models} \label{section:empirical-models}

While a measurement scenario describes an experimental setup
Abramsky and Brandenberger introduced
the concept of an empirical model to
capture the empirical data generated in an experiment.
They introduced two types of empirical models,
capturing different types of data.
In Section \ref{section:probabilistic-models}
we define \emph{probabilistic} empirical models,
and \emph{probabilistic contextuality}.
In Section \ref{section:possibilistic-models}
we define \emph{possibilistic} empirical models,
and \emph{possibilistic contextuality}.

\subsection{Probabilistic empirical models} \label{section:probabilistic-models}

For any set $X$ write $\mathcal{D}(X)$ for the set of probability distributions 
over $X$.
It is sometimes convenient to write a probability distribution $d \in \mathcal{D}(X)$
as a \emph{formal sum} $d = \sum_{x \in X} d(x) \cdot x$ over the elements of $X$.
Note that $\mathcal{D}$ is a functor with action on functions $f:X \to Y$
given by the \emph{pushforward}
\begin{align}
    f_* : \mathcal{D}(X) \to \mathcal{D}(Y) &:: \sum_{x \in X} d(x) \cdot x \mapsto \sum_{x \in X} d(x) \cdot f(x)
\end{align}

Let $S = (X, \mathcal{M}, O)$ be a measurement scenario.
Consider the assignment $U \mapsto \mathcal{D}(\mathcal{E}_S(U)$ of the 
set of probability distributions over the local sections at a set of
measurements $U \subset X$.
$\mathcal{D} \circ \mathcal{E}_S : X^\text{op} \to \text{Set}$
is a presheaf, but not in general a sheaf.
For each $U \subset V$ and $d \in \mathcal{D}(\mathcal{E}_S(V)$
the restriction map is the marginal distribution
$\res{d}{U} \in \mathcal{D}(\mathcal{E}_S(U))$
\begin{align}
    \res{d}{U} := \sum_{s \in \mathcal{E}_S(V)} d_s \cdot \res{s}{U}
\end{align}
where $\res{s}{U}$ is the restriction of the section $s$ to $U$.

\begin{definition}
    Let $(X, \mathcal{M}, O)$ be a measurement scenario.
    A probabilistic empirical model
    is a family of probability distributions
    $e = \{e_C \in \dist(\mathcal{E}(C))\}_{C \in \mathcal{M}}$
    such that
    \begin{equation} \label{eq:local-compatibility}
        C \subset C' \Rightarrow \res{e_{C'}}{C} = e_{C}
    \end{equation}
\end{definition}

An experimental run for a scenario $S = (X, \mathcal{M}, O)$
is a sequence $(C_1, s_1), \dots, (C_n, s_n)$
where $C_1, \dots, C_n \in \mathcal{M}$ is a valid
sequence of contexts and $s_1, \dots, s_n$ are local
sections for the respective contexts.
If $e$ is an empirical model then the probability
of the run is
\begin{align}
    e(C_1, s_1, \dots, C_n,s_n) := e(C_1 \cup \dots \cup C_n)(s_1 \cup \dots \cup s_n)
\end{align}

Local compatibility is motivated by the ``no-disturbance'' principle in quantum mechanics. For multipartite scenarios, this is more commonly
called ``no-signalling''.
Let $\pmb{M}$ be a set of projective measurements and $\psi$ a state.
The measurement postulate of quantum mechanics defines an empirical model
$e$ for the scenario $(\pmb{M}, \mathcal{M}, O)$
given by
\begin{align}
    e(C)(s) := \lVert \big[ \prod_{M \in C} M_{s(M)} \big] \psi \rVert^2
\end{align}
The no-disturbance principle is the observation that the probability
distribution given by a context $C \subset \pmb{M}$ is independent
of which other compatible measurements it is performed in.
Marginalising from the maximal contexts, therefore, gives the correct behavior
for quantum measurements.

For a multipartite scenario, the measurement settings are not themselves
quantum measurements. We, therefore, have to choose
some interpretations of them as quantum measurements.
To ensure that the measurements are compatible
we do this on independent subsystems.
\begin{definition}
    Let $S = (I, X, Y)$ be a multipartite scenario.
    A \emph{quantum realised} empirical model
    $e_{\psi,\pi}$ is given by an $I$-qudit state
    $\psi$, and a single-qudit measurement $\pi(i,x)$
    for each $i \in I$, $x \in X_i$
    with outcomes $Y_{i,x}$.
    $e_{\psi,\pi}$ is defined by
    \begin{equation}
        e(C) = \sum_{s \in \mathcal{E}_S(C)} \lVert \big[ \bigotimes_{(i,x) \in C} 
            \pi(i,x)_{s(i,x)} \big] \psi \rVert ^2 \cdot s
    \end{equation}
\end{definition}

The following example of an abstract empirical model is taken from \cite{abramsky_sheaf-theoretic_2011}.
\begin{example}
    An empirical model for the two-partite scenario with measurement sites
    $A,B$ and measurement settings $\{a,a'\}, \{b,b'\}$ respectively
    and outcome $\{0,1\}$ is given by the table
    \begin{equation}
        \begin{tabular}{c c |c c c c c}
             A & B & (0,0) & (0,1) & (1, 0) & (1,1)\\
             \hline
             a & b & 1/2 & 0 & 0 & 1/2\\
             a & b' & 3/8 & 1/8 & 1/8 & 3/8\\
             a' & b & 3/8 & 1/8 & 1/8 & 3/8\\
             a' & b' & 1/8 & 3/8 & 3/8 & 1/8
        \end{tabular}
        \label{table:prob-table}
    \end{equation}
    The entries of the table give a probability
    to the outcomes of each maximal context. If the measurement $a$
    is performed on its own, then the probability of $0$
    is $1/2$, which can be seen by marginalising from either
    context $(a,b)$ or $(a,b')$.
    \begin{align}
        P(a = 0) &= P((a,b) = (0,1), (0,0)) = 1/2 + 0\\
        P(a = 0) &= P((a,b') = (0,1), (0,0)) = 3/8 + 1/8 = 1/2
    \end{align}
\end{example}

\paragraph{Contextuality}
Although $\mathcal{D} \circ \mathcal{E}_S : X^\text{op} \to \text{Set}$
is a presheaf, it is not necessarily a sheaf.
There can be compatible families that do not arise as a family
of restrictions from a global section.
Probabilistic contextuality is defined
as the failure of an empirical model to be explained as a family
of restrictions.
\begin{definition}
    Let $S = (X, \mathcal{M}, O)$ be a measurement
    scenario and $e$ an empirical model.
    $e$ is \emph{contextual} if there is no $d \in \mathcal{D}(\mathcal{E}_S(X))$
    such that for all maximal contexts $C \in \mathcal{M}_*$
    \begin{equation}
        e_C = \res{d}{C}
    \end{equation}
\end{definition}

An example of contextuality can therefore be thought of as a family
of locally compatible data that cannot be ``glued together''
to a consistent global picture of the data. We now give some examples.
\begin{example} \label{example:contextuality-1}
    Consider the multipartite scenario $(\{A,B\}, \{0,1\}, \{0,1\})$
    with measurement sites $A,B$ and two measurement settings
    each with two outcomes.
    The empirical model given by the following
    probability table is contextual.
    \begin{equation}
        \begin{tabular}{c c |c c c c c}
             A & B & (0,0) & (0,1) & (1, 0) & (1,1)\\
             \hline
             a & b & 1/2 & 0 & 0 & 1/2\\
             a & b' & 1/2 & 0 & 0 & 1/2\\
             a' & b & 1/2 & 0 & 0 & 1/2\\
             a' & b' & 0 & 1/2 & 1/2 & 0
        \end{tabular}
        \label{table:ppr-box}
    \end{equation}
\end{example}
\begin{proof}
    Suppose that there exists a probability
    distribution $d$ over the global sections
    whose restriction to each maximal context gives the table.
    From the probability table we have that
    each of the following events occur with certainty:
    \begin{align}
        d(a = b) = 1\\
        d(a = b') = 1\\
        d(a' = b) = 1\\
        d(a' = b') = 0\\
    \end{align}
    However, this is not possible because the constraints are mutually exclusive.
\end{proof}

\begin{exmp}[The GHZ model \cite{greenberger_bells_1990}] \label{example:contextuality-ghz}
    Consider the multipartite scenario
    $(\{0,1,2\}, \{ \mathbb{Z}_2 \}, \{\mathbb{Z}_2\})$
    with three measurement sites, two measurement settings at each measurement site,
    and two outcomes for each measurement setting.
    Let $\ket{\text{GHZ}} := \frac{1}{\sqrt{2}} (\ket{000} + \ket{111}$
    and $\pi$ the mapping of measurement setting $0$ to a Pauli $X$-basis measurement,
    and $1$ to a Pauli $Y$-basis measurement
    \begin{equation}
        \pi ::= (i,0) \mapsto X, (i,1) \mapsto Y
    \end{equation}
    The quantum realised empirical model 
    $e_\text{GHZ}$ given by measurements $\pi$ on the state
    $\ket{GHZ}$ is contextual.
\end{exmp}
\begin{proof}
    Write $X$ for the total set of measurements,
    suppose that there exists a probability distribution
    $d$ on the set of global sections $g:X \to \mathbb{Z}_2$
    that gives the empirical model $e_\text{GHZ}$.
    $\ghz$ is a $+1$-eigenstate of $\pauli{X}{1}\pauli{X}{2}\pauli{X}{3}$
    while it is a $-1$-eigenstate of 
        $\pauli{X}{1}\pauli{Y}{2}\pauli{Y}{3}$, 
        $\pauli{Y}{1}\pauli{X}{2}\pauli{Y}{3}$, and
        $\pauli{Y}{1}\pauli{Y}{2}\pauli{X}{3}$. 
    With the identification $\{-1,1\} \cong \zn{2}$ this means that 
    any global section $g:X \to \mathbb{Z}_2$
    satisfies 
\begin{align}
    \pauli{X}{1} \oplus \pauli{X}{2} \oplus \pauli{X}{3} 
        &= 0 \\
    \pauli{X}{1} \oplus \pauli{Y}{2} \oplus \pauli{Y}{3} 
        &= 1 \\ \pauli{Y}{1} \oplus \pauli{X}{2} \oplus \pauli{Y}{3} 
        &= 1 \\
    \pauli{Y}{1} \oplus \pauli{Y}{2} \oplus \pauli{X}{3} 
        &= 1
\end{align}
However, summing them together results in $0 = 1$. There is therefore
no global section $g$, and in particular no distribution $d$.
\end{proof}

Another famous example is the so-called CHSH model \cite{clauser_proposed_1969}. This illustrates an important
technique for proving contextuality. It uses an argument
involving an inequality satisfied by all non-contextual models.
\begin{example}[The CHSH model]
    Consider the multipartite scenario $(\mathbb{Z}_2, \{\mathbb{Z}_2\}, \{\mathbb{Z}_2\})$
    with two measurement sites, two measurement settings at each measurement site,
    and two outcomes for each measurement setting.
    The CHSH model, $e_\text{CHSH}$, is the empirical model
    realised by the state $\Phi := \frac{1}{\sqrt{2}}(\ket{00} + \ket{11})$
    and
    \begin{equation}
        \pi ::= \begin{cases}
            (0,0) \mapsto Z,\\
            (0,1) \mapsto X,\\
            (1,0) \mapsto A,\\
            (1,1) \mapsto B
        \end{cases}
    \end{equation}
    where
    \begin{align}
        \ket{a_0} &:= \cos{\frac{\pi}{8}}\ket{0} + \sin{\frac{\pi}{8}}\ket{1},
        \quad
        \ket{a_1} := -\sin{\frac{\pi}{8}}\ket{0} + \cos{\frac{\pi}{8}}\ket{1}\\
        \ket{b_0} &:= \cos{\frac{\pi}{8}}\ket{0} - \sin{\frac{\pi}{8}}\ket{1},
        \quad
        \ket{b_1} := \sin{\frac{\pi}{8}}\ket{0} + \cos{\frac{\pi}{8}}\ket{1}
    \end{align}
\end{example}
\begin{lemma}
    The CHSH model is contextual.
    For any non-contextual model the sum $\sum_{x,y} x_1 \oplus x_2 = y_1 \land y_2 \leq 0.75$.
    For $e_\text{CHSH}$ the sum is $\cos^2{\frac{\pi}{8}} \approx 0.85 > 0.75$.
\end{lemma}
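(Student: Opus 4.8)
The plan is to prove the CHSH lemma in two pieces: first the non-contextual bound (a version of the CHSH inequality), then the explicit quantum calculation showing the bound is violated.

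For the non-contextual bound, I would argue as follows. Suppose $e_{\text{CHSH}}$ were non-contextual, so there is a distribution $d \in \mathcal{D}(\mathcal{E}_S(X))$ over global sections $g : X \to \mathbb{Z}_2$ restricting to each $e_C$ on the four maximal contexts $\{0\}\times\{0,1\} \times \dots$, i.e.\ the contexts $(x_1,x_2)$ ranging over the four pairs of measurement settings. The quantity $\sum_{x} \big[ x_1 \oplus x_2 = y_1 \land y_2 \big]$ in the statement should be read as the average, over the four input pairs $x = (x_1,x_2)$ chosen uniformly, of the probability that the event ``$x_1 \oplus x_2 = y_1 \land y_2$'' holds under $e_{(x_1,x_2)}$. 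Rewriting each event as a condition on a global section $g$ and using that $d$ averages these, the claim $\leq 0.75$ reduces to: for every individual global section $g$, at most three of the four events hold. Fixing the four bits $g(0,0), g(0,1), g(1,0), g(1,1) \in \mathbb{Z}_2$, the four events are the logical constraints ``$g(0,0) \oplus g(1,0) = g(0,0) \land g(1,0)$'' and its three variants over the settings; a short case check over the $2^4 = 16$ assignments shows no assignment satisfies all four simultaneously (the constraint at the ``$(1,1)$'' context is incompatible with the other three, exactly as in the PR-box proof of Example~\ref{example:contextuality-1}). Averaging, any non-contextual model scores at most $3/4$. This also immediately gives contextuality of $e_{\text{CHSH}}$ once the quantum value exceeds $3/4$.

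For the quantum value, I would compute directly from the measurement postulate. With $\Phi = \tfrac{1}{\sqrt 2}(\ket{00}+\ket{11})$ and the four single-qubit bases $Z, X$ on site $0$ and the rotated bases $\{\ket{a_0},\ket{a_1}\}, \{\ket{b_0},\ket{b_1}\}$ on site $1$, the joint outcome probabilities $e(C)(s) = \lVert (\pi(0,x_1)_{s_1} \otimes \pi(1,x_2)_{s_2})\Phi\rVert^2$ are standard: for $\Phi$ one has $\lVert (\ket{u}\!\bra{u} \otimes \ket{v}\!\bra{v})\Phi\rVert^2 = \tfrac12 |\braket{u^*}{v}|^2$, so each correlation is governed by $\cos^2$ of the angle between the two measurement directions. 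For the three ``$x_1 \oplus x_2 = 0$'' contexts one wants anti-correlation-type agreement and for the one ``$x_1 \oplus x_2 = 1$'' context the opposite, and the angles $\pi/8$ are chosen precisely so that each of the four contributes $\cos^2\frac{\pi}{8}$ to the averaged score; summing and dividing by four leaves $\cos^2\frac{\pi}{8} \approx 0.85$. Since $0.85 > 0.75$, the inequality from the first part is violated, so $e_{\text{CHSH}}$ is contextual.

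The main obstacle is bookkeeping rather than conceptual: one must be careful about the precise meaning of the sum in the statement (average over inputs of an event probability), about the identification $\{+1,-1\} \cong \mathbb{Z}_2$ so that the Pauli/rotated-basis outcomes line up with the bit $y_1 \land y_2$, and about which context plays the role of the ``frustrated'' constraint so that the $2^4$ case-check comes out right. Once the dictionary between the logical events and the global-section bits is fixed, both halves are routine: the classical bound is a finite case check plus averaging, and the quantum value is the textbook CHSH computation with the given angles.
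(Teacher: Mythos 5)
The paper states this lemma without a proof, so there is nothing to compare against directly; your proposal supplies the standard argument (deterministic case check for the classical bound, direct Born-rule computation for the quantum value) and is essentially correct. One point of care: the winning condition as literally printed, $x_1 \oplus x_2 = y_1 \land y_2$, does \emph{not} yield $\cos^2\frac{\pi}{8}$ per context with the stated measurements (e.g.\ in context $(0,0)$ it gives $1 - \tfrac{1}{2}\cos^2\frac{\pi}{8} \approx 0.57$); the intended condition is the usual CHSH one, $y_1 \oplus y_2 = x_1 \land x_2$, i.e.\ three contexts demand agreement and the $(1,1)$ context demands disagreement. Your quantum computation implicitly uses this correct reading (you describe exactly the three-agree/one-disagree structure), but your classical case check writes the four constraints in a garbled form ($g(0,0)\oplus g(1,0) = g(0,0)\land g(1,0)$ has outputs on both sides). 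With the correct reading the case check is the familiar parity argument: summing the three agreement constraints and the one disagreement constraint over $\mathbb{Z}_2$ gives $0 = 1$, so no global section satisfies all four and any convex combination scores at most $3/4$, while each quantum context contributes $\cos^2\frac{\pi}{8}$ via $\lVert(\ket{u}\!\bra{u}\otimes\ket{v}\!\bra{v})\Phi\rVert^2 = \tfrac12\lvert\braket{u^*}{v}\rvert^2$. Once you fix that dictionary, the proof is complete.
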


\subsection{Possibilistic empirical models} \label{section:possibilistic-models}
\begin{figure}
    \centering
    \begin{subfigure}{0.45\textwidth}
        \centering
        \includegraphics[width=0.5\textwidth]{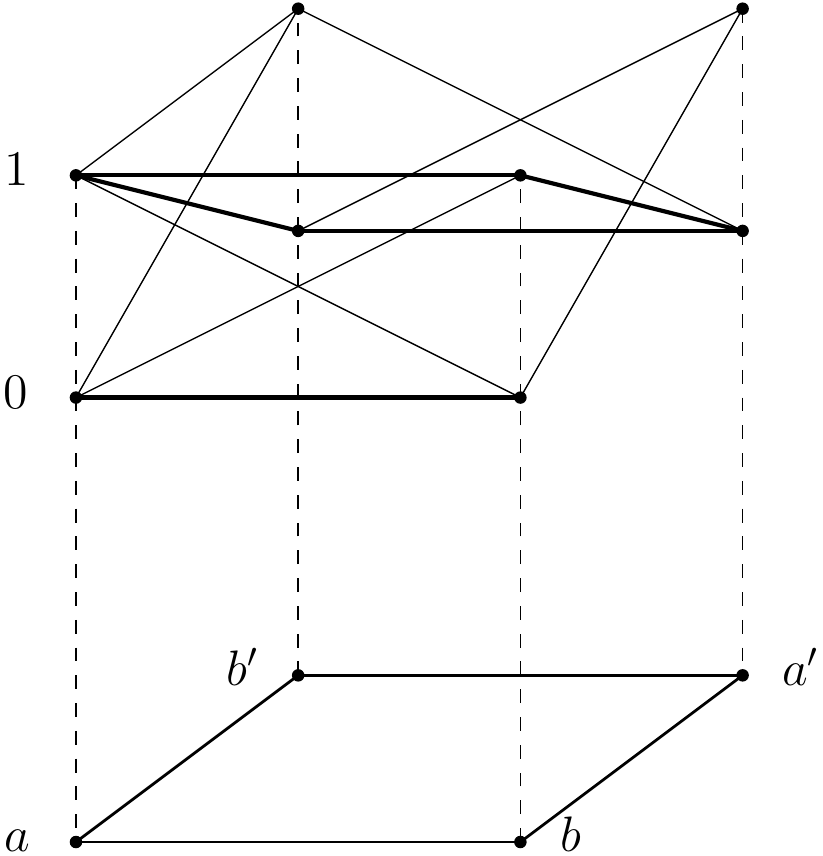}
        \caption{Hardy's model}
    \end{subfigure}
    \begin{subfigure}{0.45\textwidth}
        \centering 
        \includegraphics[width=0.5\textwidth]{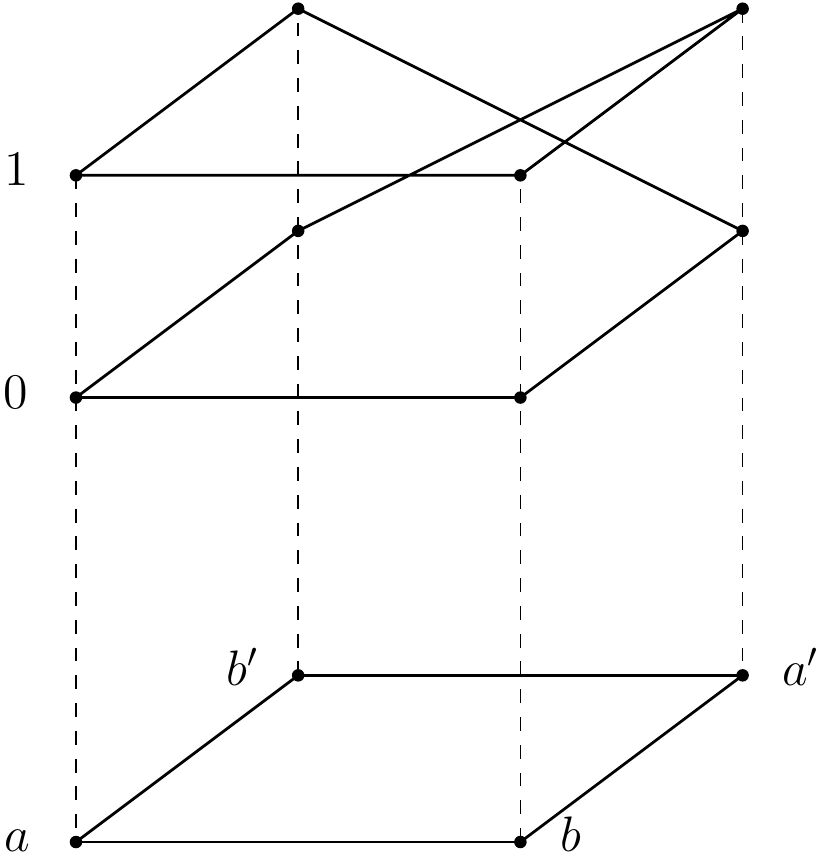}
        \caption{PR-box}
    \end{subfigure}
    \caption{Bundle diagrams of Hardy's model and the PR-box. The points above
    each measurement represents the two outcomes $0,1$ and the line
    segments the possible joint value assigmnets allowed by each
    model. A global section corresponds to a cycle visiting each
    measurement exactly once, for example the section $(a,a',b,b') = (1,1,0,0)$
    in (a). However, (a) is logically contextual at the local section
    $(a,b) = (0,0)$ as can be seen. It can be seen that
    (b) is strongly contextual because no cycle visiting each measurement
    once is possible.
    .}
    \label{fig:bundle-diagram-caru}
\end{figure}
Let $d \in \mathcal{D}(X)$ be a probability distribution over a set $X$.
The \emph{support} of $d$ is the subset $\supp{d} := \{x \in X \mid d(x) > 0\}$.

Observe that examples \ref{example:contextuality-1}
and \ref{example:contextuality-ghz} we do not refer to particular probabilities.
The proofs show that there is no global section $s \in \mathcal{E}_S(X)$
that is consistent with the \emph{support} of the models.
Given any probabilistic empirical model $e$ we can ``forget''
about the probabilities and only consider the family
of supports $\{\supp{e_C} \subset \mathcal{E}_S(C)\}_{C \in \mathcal{M}_*}$.
This is an example of a \emph{probabilistic} empirical model.
Probabilistic empirical models can be seen as presheafs in the following way.
\begin{definition}
    A \emph{possibilistic empirical model} $\pmodel:(X, \mcvx, O)$ is a
    subpresheaf of $\mathcal{E}_S$ such that 
    \begin{enumerate}
        \item Every compatible family for the measurement cover $\mcvx$ induces
            a global section.
        \item  $\pmodel$ is \emph{flasque beneath the cover}: 
            If $C,C' \in \mcvx$ and $C \subset C'$
            then every $s \in \mathcal{S}(C)$
            is the restriction of some $s' \in \mathcal{S}(C')$.
\end{enumerate}
\end{definition}

Note that although any probabilistic empirical model
gives rise to a probabilistic empirical model, the converse is not necessarily
true. Not all possibilistic empirical models is the support of a probabilistic
model.

Possibilistic empirical models can be represented as \emph{boolean tables}.
An example is given by the \emph{Hardy model}
\begin{equation}
    \begin{tabular}{c c |c c c c c}
         A & B & (0,0) & (0,1) & (1, 0) & (1,1)\\
         \hline
         a & b & 1 & 1 & 1 & 1\\
         a & b' & 0 & 1 & 1 & 1\\
         a' & b & 0 & 1 & 1 & 1\\
         a' & b' & 1 & 1 & 1 & 0
    \end{tabular}
    \label{table:possibilistic-hardy}
\end{equation}
and the \emph{Popescu-Rohrlich (PR) box}.
\begin{equation}
    \begin{tabular}{c c |c c c c c}
         A & B & (0,0) & (0,1) & (1, 0) & (1,1)\\
         \hline
         a & b & 1 & 0 & 0 & 1\\
         a & b' & 1 & 0 & 0 & 1\\
         a' & b & 1 & 0 & 0 & 1\\
         a' & b' & 0 & 1 & 1 & 0
    \end{tabular}
    \label{table:pr-box}
\end{equation}

In the possibilistic setting there are two natural forms of contextuality
that we can consider. Logical and strong.
\begin{definition}
    Let $\mathcal{S}$ be a possibilistic empirical model for a measurement scenario
    $(X, \mathcal{M}, O)$.
    We say that $\mathcal{S}$ is
    \begin{itemize}
        \item logically contextual at $s \in \pmodel(C)$
            if there is no global section $g \in \mathcal{S}(X)$ such that $\res{g}{C} = s$.
        \item logically contextual if $\mathcal{S}$ is logically contextual at \emph{some}
            local section. Otherwise, it is \emph{non-contextual}.
        \item strongly contextual if $\mathcal{S}$ has no global section: $\mathcal{S}(X) = \emptyset$.
    \end{itemize}
\end{definition}

We consider three types of contextuality forming a hierarchy:
\begin{equation}
    \text{Probabilistic Contextuality}
    <
    \text{Logical Contextuality}
    <
    \text{Strong Contextuality}
\end{equation}

Possibilistic empirical models can be represented
by bundle diagrams.
When an empirical model is represented as a bundle diagram
logical and strong contextuality have particularly elegant
interpretations. Logical contextuality is the failure
of a single line to extend to a path, and strong
contextuality is the property of every line extending to a path 
(Figure \ref{fig:bundle-diagram-caru}).
\begin{lemma}
    The Hardy model (\ref{table:possibilistic-hardy}) is logically
    contextual, but not strongly contextual.
    The PR-box is strongly contextual.
\end{lemma}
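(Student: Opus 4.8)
The plan is to verify the two assertions about the Hardy model and the PR-box separately, working directly from the boolean tables in (\ref{table:possibilistic-hardy}) and (\ref{table:pr-box}) and the definitions of logical and strong contextuality. Throughout I write $A$-settings as $a, a'$ and $B$-settings as $b, b'$, and I use the bundle-diagram picture (Figure \ref{fig:bundle-diagram-caru}): a global section is an assignment $(a, a', b, b') \in \{0,1\}^4$ all of whose restrictions to maximal contexts are allowed by the table.

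For the Hardy model I would first exhibit a global section to rule out strong contextuality: the assignment $(a, a', b, b') = (1, 1, 0, 0)$ restricts to $(a,b) = (1,0)$, $(a,b') = (1,0)$, $(a',b) = (1,0)$, $(a',b') = (1,0)$, each of which has entry $1$ in the table, so $\pmodel(X) \neq \emptyset$. Next, to show logical contextuality, I would focus on the local section $s = (a,b) = (0,0)$, which is allowed. Any global section $g$ with $\res{g}{\{a,b\}} = s$ must have $a = 0$ and $b = 0$; then $a = 0$ together with the row for $(a,b')$ forces $b' = 1$ (since $(0,0)$ has entry $0$ there), and $b = 0$ together with the row for $(a',b)$ forces $a' = 1$. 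But then $(a',b') = (1,1)$, which has entry $0$ in the last row — contradiction. Hence no global section restricts to $s$, so the Hardy model is logically contextual but not strongly contextual.

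For the PR-box I would show that $\pmodel(X) = \emptyset$ by the standard parity argument. From the first three rows, any global section satisfies $a \oplus b = 0$, $a \oplus b' = 0$, $a' \oplus b = 0$, while the last row forces $a' \oplus b' = 1$. Adding the four equations modulo $2$ gives $0 = 1$, so there is no global section and the PR-box is strongly contextual. (Since strong contextuality implies logical contextuality by the hierarchy, nothing more is needed.)

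The only mild subtlety — hardly an obstacle — is bookkeeping: one must be careful that in the Hardy case the argument really uses an \emph{allowed} local section as its starting point (so that the failure is genuine logical contextuality rather than the section simply being disallowed), and that the case analysis is exhaustive, i.e. the forced values of $a', b'$ are uniquely determined by the zero-entries in the table rather than merely consistent with them. I expect the write-up to be short and entirely elementary, with the parity computation for the PR-box and the four-step forcing chain for Hardy being the substance.
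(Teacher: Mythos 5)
Your proof is correct and is essentially the paper's argument made explicit: the paper simply says ``this can be seen by inspecting the bundle diagrams,'' and your global section $(a,a',b,b')=(1,1,0,0)$, the forcing chain from $(a,b)=(0,0)$, and the parity obstruction for the PR-box are exactly the case analysis that inspection encodes (indeed the figure caption cites the same section and the same witnessing local section). Nothing is missing.
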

\begin{proof}
    This can be seen by inspecting the bundle diagrams.
\end{proof}

\subsection{State dependent contextuality}
Let $\pmb{M}$ be a set of projective measurements, $C \subset \pmb{M}$
a context of commuting measurements, and $\psi$ a state.
An outcome assignment $s$ for $C$
is \emph{consistent with $\psi$}
if $s$ has non-zero probability according to the measurement postulate
\begin{align}
    \big[ \prod_{M \in \pmb{M}} M_{s(M)} \big] \psi \neq 0
\end{align}
\begin{definition}
    Let $\pmb{M}$ be a set of projective measurements
    and $\psi$ a state. The \emph{state dependent model} $\mathcal{S}_{\pmb{M},\psi}$
    is the possibilistic empirical model
    \begin{equation}
        \pmodel_{\pmb{M},\psi}(V) := \{
            s \in \mathcal{E}_{\pmb{M}, \mathcal{M}, O}(V) \mid
                \text{$s$ is consistent with $\psi$} \}
    \end{equation}
\end{definition}

The set of measurements $\pmb{M}$ is state dependently contextual if $\pmodel_{\pmb{M}, \psi}$
is contextual for some state $\psi$. An example of a state-dependent
contextuality proof is the GHZ-example.

\subsection{State independent contextuality}
For some sets of quantum measurements, the state $\psi$ is not needed for the proof of contextuality.
\begin{definition}
    Let $\pmb{M}$ be a set of projective measurements.
    The \emph{state independent} model
    $\pmodel_X$ is defined at any \emph{below the cover}
    by 
    \begin{equation}
    \pmodel_X(V) := 
        \{s \in \mathcal{E}_{\pmb{M}, \mathcal{M}, O}(V) \mid \text{$s$ is consistent with \emph{some} 
            state} \}
    \end{equation}
\end{definition}

The set of measurements $\pmb{M}$ is said to be state-independently contextual
if $\pmodel_X$ is contextual. 
\begin{exmp}[Mermin's square \cite{mermin_extreme_1990}]
Let $\pmodel_X:(X, \mcvx, \zn{2})$ be the state independent model induced by the set of
measurements displayed in
\emph{Mermin's square}
\begin{center}
    \begin{tabular}{c  c  c  c  c }
    $\pauli{X}{1}$ & $\pauli{X}{2}$ & $\pauli{X}{1}\pauli{X}{2}$ & $I$ \\ &&&& \\
    $\pauli{Z}{2}$ & $\pauli{Z}{1}$ & $\pauli{Z}{1}\pauli{Z}{2}$ & $I$ \\&&&& \\
    $\pauli{X}{1}\pauli{Z}{2}$ & 
    $\pauli{Z}{1}\pauli{X}{2}$ & 
    $\pauli{Y}{1}\pauli{Y}{2}$ & $I$ \\&&&& \\
    
    $I$ & $I$ & $-I$ & \\
    \end{tabular}
\end{center}
Observe that the measurements displayed in any row or column
$M_1, M_2, M_3, M_4$ defines a context and furthermore 
satisfies $M_1 M_2 M_3 = M_4$, where $M_4 = \pm I$.
By Lemma 2.1
any local section $s \in \pmodel(C)$ therefore satisfies one of the following equations
\begin{align}
    \pauli{X}{1} \oplus \pauli{X}{2} \oplus \pauli{X}{1}\pauli{X}{2} &=  0\\ 
    \pauli{Z}{1} \oplus \pauli{Z}{2} \oplus \pauli{Z}{1}\pauli{Z}{2} &=  0\\ 
    \pauli{X}{1} \oplus \pauli{Z}{2} \oplus \pauli{X}{1}\pauli{Z}{2} &=  0\\ 
    \pauli{Z}{1} \oplus \pauli{X}{2} \oplus \pauli{Z}{1}\pauli{X}{2} &=  0\\ 
    \pauli{X}{1}\pauli{Z}{2} \oplus 
        \pauli{Z}{1}\pauli{X}{2} \oplus 
         \pauli{Y}{1}\pauli{Y}{2} &= 
            0\\ 
    \pauli{X}{1}\pauli{X}{2} \oplus 
        \pauli{Z}{1}\pauli{Z}{2} \oplus \pauli{Y}{1}\pauli{Y}{2} &= 
            1 
\end{align}
Any global section $g \in \pmodel_X(C)$ therefore simultaneously satisfies all equations.
However, these equations are mutually inconsistent. Summing together
all of the equations gives $0 = 1$, because each measurement appears in exactly
two equations.
$\pmodel_X$ is therefore strongly contextual.
\end{exmp}

\section{Simulations} \label{section:simulations}
\begin{figure}
    \centering
    \includegraphics[width=0.8\textwidth]{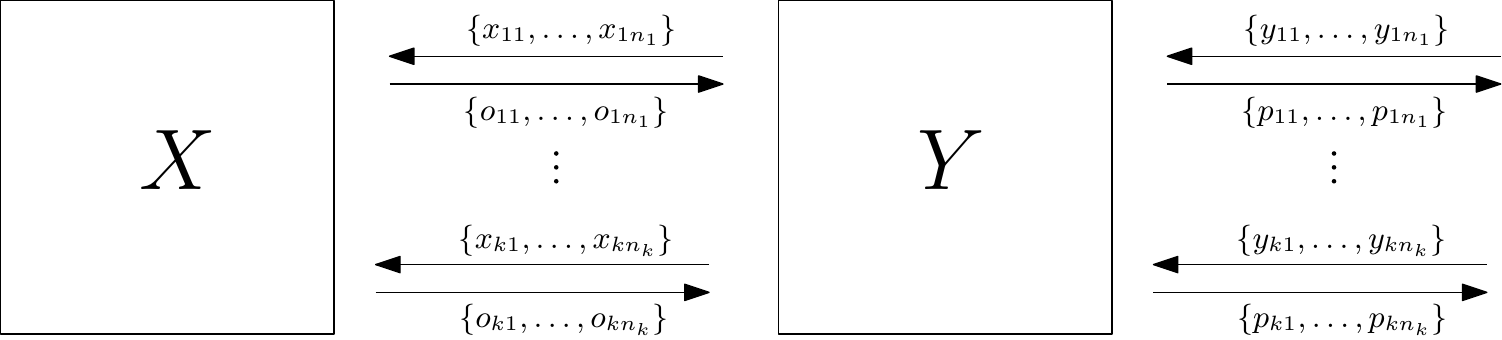}
    \caption{ Consider a setup involving two measurement scenarios $S = (X, \mcvx, O),
    T = (Y, \mathcal{N}, P)$. A simulation from $S$ to $T$
    maps each measurement $y \in Y$ to a measurement protocol on $S$,
    and each possible outcome of this protocol to an outcome $p \in P_y$.
    This induces a map on empirical models of $S$ to empirical models
    of $T$.}
    \label{fig:system-agent-simulation}
\end{figure}

The motivation behind introducing simulations is to equip
the sheaf-theoretic framework with a class of structure-preserving
transformations. The problem
of what the right notion of structure-preserving
transformation is for empirical models
was considered by Karvonen \cite{karvonen_categories_2019}.
The work of Karvonen later formed the basis
for the more developed idea of simulation
laid out by Abramsky, Barbosa, Karvonen, and Mansfield \cite{abramsky_comonadic_2019}. See also
For further work on simulations see the work
of Barbosa, Karvonen, and Mansfield \cite{barbosa_closing_2021}
and Abramsky, Barbosa, Karvonen, and Mansfield \cite{abramsky_simulations_2019}.

The notion of simulation that we present
here was defined by Abramsky et al.\ \cite{abramsky_comonadic_2019},
based on earlier work by 
The idea of studying examples of contextuality up to a class of 
structure-preserving transformations have also
been considered by others, 
for example Amaral et al.\  \cite{amaral_noncontextual_2018}.

Informally, a simulation $s$ from a measurement
scenario $S$ to another scenario $T$
describes how we can translate measurements on $T$ into measurements
on $S$, and outcomes of these measurements on $S$ into outcomes in $T$
(Figure \ref{fig:system-agent-simulation}).
This defines a map at the level of empirical models
called the \emph{pushforward}.

This section is structured as follows.
In Section \ref{section:simulations-1r} we introduce
the most simple example of a simulation,
deterministic single-round simulations.
We then introduce measurement protocols, describing adaptive
sequence of measurements.
We finally present the general notion of simulation.

\subsection{Single-round simulations} \label{section:simulations-1r}

We will now present the notion of simulation that
Karvonen \cite{karvonen_categories_2019} considered.

\begin{definition}
    A deterministic single-round simulation
    from a measurement
    scenario $S = (X_S, \mcvx_S, O_S)$ to
    another measurement scenario $T = (X_T, \mathcal{M}_T, O_T)$
    is a pair
    \begin{align}
        f &: X_T \to \mathcal{M}_S\\
        g &= \{g_y:\mathcal{E}_S(f(x)) \to (O_T)_x\}_{x \in X_T}
    \end{align}
    such that $\bigcup_{x \in C} f(x) \in \mathcal{M}_S$
    for every $C \in \mathcal{M}_T$.
\end{definition}

Let $e$ be an empirical model for the scenario $S$.
The pushforward $(f,g)_*(e)$ is then the empirical model
for the scenario $T$, defined by
\begin{align}
    (f,g)_*(e)(C) :=
        \sum_{s \in \mathcal{E}_S(\cup_{x \in C} f(x))} e(\cup_{x \in C} f(x))(s) \cdot (x \mapsto g_x(\res{s}{f(x)})
\end{align}
for all contexts $C$ in $T$. Suppose now
that $e$ is non-contextual, and therefore
a convex combination of global sections
\begin{align}
    e(C) = \sum_{\lambda \in \mathcal{E}(X_S)} p_x \cdot \res{\lambda}{C}
\end{align}
$(f,g)_*(e)$ is then a convex combination
\begin{align}
    (f,g)_*(e)(C) = 
        \sum_{\lambda \in \mathcal{E}(X_S)} p_\lambda \cdot 
        g_U(\res{\lambda}{f(U)})
\end{align}
If we define the global section 
$\lambda' \in \mathcal{E}_T(Y)$ by
\begin{align}
    \lambda'(y) = g_y(\res{\lambda}{C_y})
\end{align}
for each global section $\lambda \in \mathcal{E}(X_S)$,
then 
\begin{align}
    g_U(\res{\lambda}{f(U)}) = \res{\lambda'}{U}
\end{align}
hence $e$ is a convex combination of global sections, and hence non-contextual.

We, therefore, observe that if translate each measurement
$y \in Y$ into a fixed measurement $f(y) \subset \mathcal{M}$
that is independent of the measurement context that $y$
is performed in, then the induced map on empirical models
preserve non-contextuality.
A \emph{simulation} extends this in two ways, by allowing
for randomness and several rounds of measurements.

\subsection{Measurement protocols}

While Karvonen initially only considered single-round simulations
it is natural to consider simulations with more than
one round of measurements. To capture
this Abramsky, Mansfield, Barbosa, and Karvonen introduced
what they called \emph{measurement protocols}.

A \emph{measurement protocol} of length $n$ on a measurement
scenario $S$
\begin{align}
    C = \{C_1, \dots, C_i(s_1, \dots, s_{i-1}), \dots, 
    C_n(s_1, \dots, s_{n-1}) \in \mathcal{M}_X\}_{s_1 \in C_1, \dots, (s_1, \dots, s_{n-1}) \in C_{n-2}(s_1, \dots, s_{n-2})}
\end{align}
represents a deterministic strategy that someone can follow
to perform measurements on $S$, in an adaptive way. The measurement
setting $C_i$ is a function of the previous $i-1$ measurement outcomes.
We require that for all outcomes $s_1, \dots, s_{n-1}$
the sequence of contexts is valid, that is satisfying
Eq.\ \ref{eq:contexts}.
We write $\text{MP}_n(S)$ for the set of measurement protocols
of length $n$.
A \emph{run} of an adaptive measurement sequence
$\{C_i(s_1, \dots, s_{i-1})\}_{n \in \nats}$
is a sequence of contexts
and local sections $\{(U_i, s_i)\}_{n \in \nats}$
such that $s_i \in \mathcal{E}_S(U_i)$
and $U_i = C_i(s_1, \dots, s_{i-1})$ for all $i \in \nats$.
We write $\mathcal{E}_S(C)$ for the set of runs of a measurement
protocol $C$.

A set of measurement protocols
$\{C^j\}_{j \in J}$
that can be performed in parallel is said to be \emph{compatible}.
For any compatible set of measurement protocols
$\{C^j\}_{j \in J}$ their parallel product
is denoted by $\otimes_{i \in I} C^j$.

When a measurement protocol $C$ is performed the outcome
is a run. By the no-disturbance assumption
the probability of a given run can be defined by
\begin{align}
    e(C)(r) := e(U_1 \cup \dots \cup U_n)(s_1 \cup \dots \cup s_n)
\end{align}
where $r = \{(U_i, s_i)\}_{i=1}^N$ is a run.

\subsection{General simulations}

The idea of describing probabilistic simulations as probability
distributions over deterministic simulations is how
Karvonen described simulations. Although only for single-round
simulations. Later this was also how
Abramsky, Barbosa, Mansfield, and Karvonen formalised
probabilistic simulations with more than one round.

We now define deterministic $n$-round simulations,
and general simulations as probability distributions
over deterministic simulations.
\begin{definition}
    A deterministic simulation from a measurement scenario $S$
    to another measurement scenario $T$ of depth $n$ is
    a pair $(f,g)$ where
    \begin{itemize}
        \item $f : X_T \to \text{MP}_n(S)$ is a function
            such that $\{f(x)\}_{x \in C}$ is compatible for 
            all $C \in \mathcal{M}_T$.
        \item $g = \{g_x:\mathcal{E}(f(x)) \to (O_T)_x\}_{x \in X_T}$ is a family of functions.
    \end{itemize}
\end{definition}

Let $S = (X, \mathcal{M}, O)$, $T = (Y, \mathcal{N}, P)$ be two measurement
scenarios, and $t = (f,g) : S \to T$ a deterministic $n$-round
simulation. For each context $C \in \mathcal{N}$
write $f_C$ for the parallel product of the measurement
protocols $\{f(y)\}_{y \in C}$
\begin{align}
    f_C := \otimes_{y \in C} f(y)
\end{align}
The family of functions $\{g_y\}_{y \in C}$
defines a function
\begin{align}
    g_C : \mathcal{E}_S(f_C) \to \mathcal{E}_T(C)
\end{align}
defined at each $y \in C$ by the function $g_y$.
For any empirical model $e$ of $S$ we define the pushforward
$t_*(e)$ to be the empirical model for the scenario
$T$ given by the convex combination
\begin{align}
    t_*(e)_C := \sum_{r \in \mathcal{E}_S(f(C))} e(f(C))(r) \cdot g_C(r)
\end{align}
for each context $C$ of $T$.

\begin{definition}
    Let $S$ and $T$ be measurement scenarios. An \emph{$n$-round simulation}
    from $S$ to $T$, denoted $s:S \to T$,
    is a probability distribution over the set of deterministic $n$-round 
    simulations  from $S$ to $T$.
\end{definition}

We generalise the definition of the pushforward model
by taking the convex combination of empirical
models:
\begin{align}
    s_*(e) = \sum_{t:S \to T} s(t) \cdot t_*(e)
\end{align}
where $s = \sum_{t:S \to T} s(t) \cdot t$ is a simulation,
and $e$ is an empirical model.

\section{The cohomology of contextuality} \label{section:cech-of-contextuality}
A cohomology theory assigns an algebraic invariant to each element of some
class of objects. Cohomology theories are useful when
one can find invariants that can be computed easily, yet
characterise an important property of the objects we are studying.
An example is the \emph{simplicial cohomology} of a topological space.
Using for example triangulation we can compute the simplicial cohomology
of a large class of spaces. In topology this is an invaluable tool
for resolving many questions in a simple way.

In the sheaf-theoretic framework, a possibilistic empirical model
is a sheaf of sets $\mathcal{S} : X^\text{op} \to \textbf{Set}$. 
Contextuality is seen as the failure of a local section $s \in \mathcal{S}(C)$
to extend to a global section $g \in \mathcal{S}(X)$.
For presheafs of \emph{abelian groups} $\mathcal{F}:X^\text{op} \to \textbf{AbGrp}$ 
this transition from local to global is characterised by a 
\emph{cohomological obstruction}.
It is therefore natural to consider if this obstruction can detect contextuality.
Abramsky et al.\ showed that this is the case in a range of 
examples \cite{abramsky_cohomology_2012}, but also that it is not complete.
A more precise characterisation of the class of models where it is complete
was later given \cite{abramsky_contextuality_2015}.

In this section we present the \emph{\v{C}ech cohomology obstruciton}
of Abramsky et al.\ \cite{abramsky_cohomology_2012}. We first
define the cohomology groups of a cochain complex in Section \ref{section:c-groups}.
In Section \ref{section:cech-cohomology} we define the \emph{\v{C}ech cohomology}
groups of a presheaf of abelian groups.
In Section \ref{section:cech-of-contextuality} we define the obstruction
for contextuality.

\subsection{Cohomology groups of a cochain complex} \label{section:c-groups}

To define the cohomology groups of an object we use a family
of abelian groups connected by homomorphisms. This is called
a \emph{cochain complex}.
\begin{definition}
A \emph{cochain complex} is a sequence
\begin{equation}
    \vcenter{
        \hbox{
            \begin{tikzcd}
                0 \arrow[r, "d^{-1} := 0" above] &
                C^0 \arrow[r, "d^0" above] &
                C^1 \arrow[r, "d^1" above] &
                C^2 \arrow[r, "d^2" above] &
                \cdots
            \end{tikzcd}
        }
    }
\end{equation}
where $C^0, C^1, \dots $ are abelian groups, and
$d^0, d^1, \dots$ are homomorphisms
such that $d^{n+1} \circ d^n = 0$.
The elements of $C^n$ are known as the \emph{$n$-cochains} and
$d^n$ is the \emph{$n$'th coboundary map}.
$\text{im}(d^n)$ are the \emph{$n$-coboundaries} and $\text{ker}(d^{n+1})$
the \emph{$n$-cocycles}.
\end{definition}

The requirement that $d^{n+1} \circ d^n = 0$ equivalently
says that every $n$-coboundary is an $n$-cocycle,
$\text{im}(d^n) \subset \text{ker}(d^{n+1})$.
A sequence such that $\text{im}(d^n) = \text{ker}(d^{n+1})$
is said to be \emph{exact} at $n$. Cohomology measures the failure
of a sequence to be exact.
\begin{definition}
    The $n$-th cohomology group
    is the quotient of the coboundaries to the cocycles.
    $H^n := \text{im}(d^n) / \text{ker}(d^{n-1})$.
\end{definition}

The cohomology class $[x] \in H^n$ of a cocycle
$x$ can be thought of as an obstruction for $x$ to be a coboundary,
because $[x] = 0$ if and only if $x$ is a coboundary.

When we assign a cochain complex to some mathematical object
it is common to use a free construction. This free construction
loses some of the structure of the original object.
However, it can also be the case that the cohomology groups
capture some interesting feature of the object.
The classic example is simplicial cohomology, which
relates to the number of ``holes'' in a topological space.

\subsection{\v{C}ech cohomology} \label{section:cech-cohomology}
Let $X$ be a topological space, and 
$\mathcal{F}:X^\text{op} \to \mathbf{AbGrp}$ a presheaf of abelian
groups. In this section we define the \emph{\v{C}ech cohomology}
groups of $\mathcal{F}$. To do this we assign to $\mathcal{F}$
a cochain complex. This complex is defined using
an open cover $\mathcal{U}$ of $X$. First, we define an 
object encoding the combinatorial structure of the open cover.

\begin{definition}
    Let $\mathcal{U}$ be an open cover of a topological space
    $X$. The \emph{$n$-simplices} of the \emph{nerve}
    of $\mathcal{U}$, denoted by $\mathcal{N}_n(\mathcal{U})$, 
    are $n+1$-tuples of intersecting open sets.
    \begin{align}
        \mathcal{N}_n(\mathcal{U})
            &:= \{(U_0, \dots, U_n) \in \mathcal{U}^{n+1}
                \mid
                U_0 \cap \dots \cap U_n \neq \emptyset
                \}
    \end{align}
    The boundary maps $\partial_i:\mathcal{N}_{n+1}(\mathcal{U})
        \to \mathcal{N}_{n}(\mathcal{U})$
    remove the $i$'th open set:
    \begin{equation}
        \partial_i :: (U_0, \dots, U_{n+1}) \mapsto (U_0, \dots, U_{i-1},
            U_{i+1}, \dots, U_{n+1}) 
    \end{equation}
\end{definition}

Let $\supp{(U_0, \dots, U_n)} := \cap_{i} U_i$.

\begin{definition}
    Let $X$ be a topological space,
    $\mathcal{U}$ an open cover of
    $X$ and $\mathcal{F}$ a presheaf of abelian groups
    on $X$. The \v{C}ech cohomology
    group $H^n(\mathcal{F})$ is the $n$'th cohomology
    group of the cochain complex
    \begin{equation}
        \vcenter{
            \hbox{
                \begin{tikzcd}
                    0 \arrow[r, "d^{-1} := 0" above] &
                    C^0(\mathcal{U}, \sheaf{F}) \arrow[r, "d^0" above] &
                    C^1(\mathcal{U}, \sheaf{F}) \arrow[r, "d^1" above] &
                    C^2(\mathcal{U}, \sheaf{F}) \arrow[r, "d^2" above] &
                    \cdots
                \end{tikzcd}
            }
        }
    \end{equation}
    where 
    \begin{itemize}
        \item The $n$-cochains 
            $C^n(\mathcal{U}, \sheaf{F}) := \bigoplus_{U \in \mathcal{N}_n(\mathcal{U})} \sheaf{F}(\supp{U})$.
        \item The coboundary map $d^n(\omega)(U) := 
                    \sum_{i=0}^q {(-1)}^{i} 
                    \mathcal{F}(\supp{\partial_i U} \subset U)(\omega(\partial_i U))$
    \end{itemize}
\end{definition}
It can be verified that $d^{q+1} \circ d^q = 0$.

\subsection{The obstruction to the extension of a local section}

Let $S = (X, \mathcal{M}, O)$ be a measurement scenario, $\mathcal{S}$
a possibilistic empirical model, and $s_0 \in \mathcal{S}(C_0)$
a local section. We define the \v{C}ech cohomology
obstruction for $s_0$ to extend to a global section.

We can give $\mathcal{S}$ the structure of an abelian presheaf
by composing with the 
the functor
$F_\mathbb{Z}:\textbf{Set} \to \textbf{AbGrp}$
assigning to each set $X$ the \emph{free abelian group} on $X$,
that is, the group of formal linear combinations of $X$.
\begin{align}
    F_\mathbb{Z}(X) &:= \{\sum_{x \in X} k_x \cdot x \mid k_x \neq 0 \text{ for finitely many $x \in X$} \}\\
    F_\mathbb{Z}(f:X \to Y) &:= \sum_{x \in X} k_x \mapsto \sum_{x \in X} k_x \cdot f(x)
\end{align}

Let $\mathcal{F} := F_\mathbb{Z} \circ \mathcal{S}$. Note that
even though $\mathcal{S}$ is a sheaf, $\mathcal{F}$ is generally only
a presheaf.

The construction employs two auxiliary presheaves. For any
subset $U \subset X$ we define $\res{\mathcal{F}}{C_0}$ to be the
restriction of each $U$ to $U \cap C_0$,
and $\mathcal{F}_{\tilde{C_0}}$ assigns to each $U$
the subset of elements whose restriction to $U \cap C_0$ vanishes.
\begin{definition}
    Let $\mathcal{F}$ be an abelian presheaf and $C_0$ an open set.
    \begin{equation}
     \sheaf{F}_{\tilde{C_0}} :: U \mapsto \ker{\sheaf{F}(U \cap C_0 \subset U)}
     \quad
     \res{\sheaf{F}}{C_0} :: U \mapsto \sheaf{F}(C_0 \cap U)
     \end{equation}
\end{definition}

At any $U \subset X$ these presheaves are related to $\sheaf{F}$
by a sequence
\begin{center}
    \begin{tikzcd}
    0 \arrow[r] &
    \sheaf{F}_{\tilde{C_0}}(U) \arrow[r, hookrightarrow] &
    \sheaf{F}(U) \arrow[r, "\resmap{U}{U \cap C_0}"] &
    \res{\sheaf{F}}{C_0}(U) \arrow[r] &
    0
    \end{tikzcd}
\end{center}
which in fact is exact, because $\sheaf{F}$ is flasque beneath the cover.
When lifted to the level of cochain complexes it, therefore, gives rise to
a short exact sequence
\begin{center}
    \begin{tikzcd}
    0 \arrow[r] &
    C^*(\mcvx, \sheaf{F}_{\tilde{C_0}}) \arrow[r] &
    C^*(\mcvx, \sheaf{F}) 
    \arrow[r] &
    C^*(\mcvx, \res{\sheaf{F}}{C_0}) \arrow[r] &
    0
    \end{tikzcd}
\end{center}
Using standard techniques from homological algebra this short exact sequence
of cochain complexes induces
a \emph{long exact sequence} of cohomology groups
\begin{center}
    \begin{tikzcd}[column sep=small]
    0 \arrow[r] &
    H^0(\mcvx, \sheaf{F}_{\tilde{C_0}}) \arrow[d, phantom, ""{coordinate, name=Z}] \arrow[r] &
    H^0(\mcvx, \sheaf{F}) \arrow[r] &
    H^0(\mcvx, \res{\sheaf{F}}{C_0}) \arrow[dll,
        "\gamma",
            rounded corners,
            to path={ -- ([xshift=2ex]\tikztostart.east)
            |- (Z) [near end]\tikztonodes
            -| ([xshift=-2ex]\tikztotarget.west)
        -- (\tikztotarget)}] \\
    &
    H^1(\mcvx, \sheaf{F}_{\tilde{C_0}}) \arrow[r] &
    H^1(\mcvx, \sheaf{F}) \arrow[r] &
    H^1(\mcvx, \res{\pmodel}{C_0}) \arrow[r] &
    \cdots
    \end{tikzcd}
\end{center}
where $\gamma$ is the connecting homomorphism. For details about
this see for example \cite{weibel_introduction_1994}.
Using the identification
$\sheaf{F}(C_0) \cong H^0(\mcvx, \res{\sheaf{F}}{C_0})$ we define
the \emph{obstruction for $s_0$} to extend to a global section to be
$\gamma(1 \cdot s_0) \in H^1(\mcvx, \sheaf{F}_{\tilde{C_0}})$.

\begin{lemma}[\cite{abramsky_cohomology_2012}]
If the cover $\mcvx$ is \emph{connected}\footnote{
i.e. All pairs $C,C' \in \mcvx$ are connected by a sequence
$C_0 = C, C_1, C_2, \cdots, C_{n-1}, C_n = C'$
with $C_i \cap C_{i+1} \neq \emptyset$.
This assumption is harmless because non-connected components
are completely independent in terms of contextuality.
Incidentally, all of the scenarios we will consider are connected.
}
then $\gamma(1 \cdot s_0) = 0$
if and only if $1 \cdot s_0$ extends to a compatible family of
$F_\ints\pmodel$.
\end{lemma}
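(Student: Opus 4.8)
The plan is to derive the equivalence from exactness of the long exact sequence at the term $H^0(\mcvx, \res{\sheaf{F}}{C_0})$, combined with the two identifications of the relevant zeroth cohomology groups. Exactness at that term says $\ker(\gamma)$ equals the image of the restriction map $\rho : H^0(\mcvx, \sheaf{F}) \to H^0(\mcvx, \res{\sheaf{F}}{C_0})$ induced by the restriction maps $\resmap{U}{U \cap C_0}$. So, using the identification $\sheaf{F}(C_0) \cong H^0(\mcvx, \res{\sheaf{F}}{C_0})$ to regard $1 \cdot s_0$ as an element of the target of $\rho$, we get $\gamma(1 \cdot s_0) = 0$ if and only if $1 \cdot s_0 \in \im \rho$. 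Everything then reduces to unwinding what membership in $\im \rho$ says at the level of cochains.

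First I would spell out the two identifications and the form of $\rho$. A $0$-cochain for $\sheaf{F}$ is a tuple $\{r_C \in \sheaf{F}(C)\}_{C \in \mcvx}$, and $d^0 r = 0$ says exactly that $\res{r_C}{C \cap C'} = \res{r_{C'}}{C \cap C'}$ whenever $C \cap C' \neq \emptyset$; thus $H^0(\mcvx, \sheaf{F})$ is precisely the group of compatible families of $F_\ints \pmodel$ for the cover $\mcvx$. Under this identification and $\sheaf{F}(C_0) \cong H^0(\mcvx, \res{\sheaf{F}}{C_0})$, the map $\rho$ sends a compatible family $\{r_C\}$ to the family of restrictions $\{\res{r_C}{C \cap C_0}\}$; compatibility on the pair $(C_0, C)$ gives $\res{r_C}{C \cap C_0} = \res{r_{C_0}}{C \cap C_0}$, so this is exactly the image of $r_{C_0} \in \sheaf{F}(C_0)$. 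Hence $\rho$ is just the ``evaluate at $C_0$'' map on compatible families, and $1 \cdot s_0 \in \im \rho$ if and only if there is a compatible family of $F_\ints \pmodel$ whose $C_0$-component equals $1 \cdot s_0$ --- which is precisely the statement that $1 \cdot s_0$ extends to a compatible family. Together with the first paragraph this proves the lemma.

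The step that genuinely uses the connectedness of $\mcvx$, and which I expect to be the main thing to get right, is the identification $\sheaf{F}(C_0) \cong H^0(\mcvx, \res{\sheaf{F}}{C_0})$. The assignment $t \mapsto \{\res{t}{C \cap C_0}\}$ is always a well-defined injection (its $C_0$-component returns $t$, since $C_0 \in \mcvx$). For surjectivity, take a compatible family $\{t_C \in \sheaf{F}(C \cap C_0)\}$ and put $t := t_{C_0}$; compatibility on $(C_0, C)$ forces $t_C = \res{t}{C \cap C_0}$ whenever $C \cap C_0 \neq \emptyset$. When $C \cap C_0 = \emptyset$ one has $\sheaf{F}(C \cap C_0) = \sheaf{F}(\emptyset) = \ints$ (because $\pmodel(\emptyset)$ is a singleton and $F_\ints$ of a singleton is $\ints$), restriction to an empty intersection is the augmentation $\sum_s k_s \cdot s \mapsto \sum_s k_s$, and the augmentation is natural; so walking along a connecting chain $C_0, C_1, \dots, C_n = C$ and applying the cocycle condition at each step forces $t_C$ to equal the augmentation of $t$, i.e.\ $\res{t}{\emptyset}$. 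A disconnected cover would here admit extra compatible families not coming from $\sheaf{F}(C_0)$ (the components over a component disjoint from $C_0$ are unconstrained), which is exactly why connectedness is assumed; the remainder of the argument is formal diagram chasing in the long exact sequence that is already in place.
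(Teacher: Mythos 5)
Your proof is correct and matches the intended argument: the thesis states this lemma without proof as a citation of the \v{C}ech cohomology paper, and your derivation --- exactness of the long exact sequence at $H^0(\mcvx, \res{\sheaf{F}}{C_0})$, the identification of $H^0(\mcvx,\sheaf{F})$ with compatible families, and the connectedness-dependent isomorphism $\sheaf{F}(C_0)\cong H^0(\mcvx,\res{\sheaf{F}}{C_0})$ handled via the augmentation on empty intersections --- is precisely the standard argument that the surrounding setup (the short exact sequence of presheaves and its long exact sequence) is designed to support.
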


\begin{definition}
    Let $\pmodel:(X, \mcvx, O)$ be a possibilistic empirical model and $s_0 \in \pmodel(C_0)$
    a local section. The \emph{cohomological obstruction}
    to $s_0$ lifting to a global section is the cohomological obstruction
    to $1 \cdot s_0$ extending to a compatible family in $F_{\mathbb{Z}} \circ \mathcal{S}$.
\end{definition}

Observe that if $s_0$ extends to a global section $s$
in $\mathcal{S}$, then $1 \cdot s_0$ extends to a global
section $1 \cdot s$ in $\mathcal{F}$, hence the obstruction
is \emph{sound}.
\begin{lemma}
    The \v{C}ech cohomology obstruction for contextuality
    is sound:
    If $\gamma(g) \neq 0$
    then $\mathcal{S}$ is logically contextual
    at $s$.
\end{lemma}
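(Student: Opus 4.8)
The plan is to prove the contrapositive-style statement directly by unwinding the definition of the obstruction via the connecting homomorphism $\gamma$. Suppose $\mathcal{S}$ is \emph{not} logically contextual at $s_0$, so there is a global section $g \in \mathcal{S}(X)$ with $\res{g}{C_0} = s_0$. I want to conclude $\gamma(1 \cdot s_0) = 0$. By the preceding Lemma (the one attributed to \cite{abramsky_cohomology_2012}, assuming the cover $\mcvx$ is connected, which it is for all scenarios under consideration), $\gamma(1 \cdot s_0) = 0$ is equivalent to $1 \cdot s_0$ extending to a \emph{compatible family} in $\mathcal{F} = F_{\mathbb{Z}} \circ \mathcal{S}$. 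So the whole statement reduces to: a genuine global section of $\mathcal{S}$ extending $s_0$ yields a compatible family of $\mathcal{F}$ restricting to $1 \cdot s_0$ on $C_0$.

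That last reduction is almost immediate, and it is the heart of the argument. Given $g \in \mathcal{S}(X)$ with $\res{g}{C_0} = s_0$, define the family $\{1 \cdot \res{g}{C}\}_{C \in \mcvx}$ of elements of $\mathcal{F}(C) = F_{\mathbb{Z}}(\mathcal{S}(C))$. Functoriality of $\mathcal{S}$ gives $\res{(\res{g}{C})}{C \cap C'} = \res{g}{C \cap C'} = \res{(\res{g}{C'})}{C \cap C'}$, and since $F_{\mathbb{Z}}$ is a functor the same equality lifts to $\mathcal{F}$: the restrictions of $1 \cdot \res{g}{C}$ and $1 \cdot \res{g}{C'}$ to the overlap agree. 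Hence this is a compatible family, and on $C_0$ it equals $1 \cdot \res{g}{C_0} = 1 \cdot s_0$. Applying the cited Lemma in the direction ``compatible family exists $\Rightarrow \gamma(1 \cdot s_0) = 0$'' finishes the proof: if $\gamma(1 \cdot s_0) \neq 0$ then no such compatible family exists, so no such $g$ exists, so $\mathcal{S}$ is logically contextual at $s_0$.

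In fact this is essentially the observation already flagged in the paragraph immediately before the statement — ``if $s_0$ extends to a global section $s$ in $\mathcal{S}$, then $1 \cdot s_0$ extends to a global section $1 \cdot s$ in $\mathcal{F}$'' — so the proof is a one-line elaboration: a global \emph{section} of $\mathcal{F}$ is in particular a compatible family of $\mathcal{F}$, and compatible families are exactly what $\gamma$ detects. I would phrase the proof as: assume $\gamma(g) \neq 0$; if $\mathcal{S}$ were non-contextual at $s_0$ there would be $\tilde{s} \in \mathcal{S}(X)$ with $\res{\tilde{s}}{C_0} = s_0$; then $1 \cdot \tilde{s} \in \mathcal{F}(X)$ restricts to $1 \cdot s_0$, so $1 \cdot s_0$ extends to a compatible family, so by the Lemma $\gamma(g) = 0$, a contradiction.

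The only thing that needs care — and the one place I'd expect a referee to push — is matching conventions: confirming that the identification $\mathcal{F}(C_0) \cong H^0(\mcvx, \res{\mathcal{F}}{C_0})$ used to define the obstruction sends $1 \cdot s_0$ to the class whose image under $\gamma$ is being called $\gamma(1 \cdot s_0)$, and that ``extends to a compatible family of $F_{\mathbb{Z}}\mathcal{S}$'' in the Lemma means exactly what I used (a family indexed by $\mcvx$, agreeing on pairwise overlaps, restricting correctly to $C_0$). Both are bookkeeping rather than substance; no genuine obstacle arises, since all the real work — producing the long exact sequence and characterising $\ker \gamma$ — is already done in the Lemma I am invoking.
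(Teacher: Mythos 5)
Your argument is correct and is exactly the paper's own: the paper proves this lemma via the remark immediately preceding it, that a global section $g \in \mathcal{S}(X)$ extending $s_0$ yields the compatible family $\{1 \cdot \res{g}{C}\}_{C \in \mcvx}$ in $F_{\mathbb{Z}} \circ \mathcal{S}$, whence $\gamma(1 \cdot s_0) = 0$ by the cited lemma on connected covers. Your elaboration of the functoriality step and the contrapositive phrasing adds nothing beyond what the paper intends, so no further comment is needed.
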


\subsection{Generalised AvN arguments} \label{section:generalised-avn}

The \v{C}ech cohomology obstruction is not complete.
There are so-called \emph{false negatives},
contextual empirical models where the obstruction vanishes.
The approach detects contextuality in many cases, but an example
where the approach is not complete is Hardy's paradox.
Work has been carried out by Caru on understanding false negatives and
refining the approach \cite{caru_towards_2018}.

The \v{C}ech cohomology obstruction is complete for a large fragment
of models that can be described by generalised AvN models. Abramsky
et al.\ \cite{abramsky_contextuality_2015} take this terminology
from Mermin \cite{mermin_extreme_1990} who
used the term ``all versus nothing'' to describe his proof of contextuality.
These proofs can be understood as exhibiting an inconsistent set of equations
over $\zn{2}$ that is locally satisfied by the model.
The all versus nothing terminology was also used by for example
Cabello \cite{cabello_all_2001}.
The \v{C}ech cohomology obstruction is complete for the \emph{generalised AvN models},
the class of models that locally satisfies a system of inconsistent 
equations over any ring $R$ \cite{abramsky_contextuality_2015}.

\begin{definition}
    Let $(X, \mcvx, R)$ be a measurement scenario where $R$ is a ring.
    An \emph{$R$-linear equation} is a triple $(C, r, a)$
    where $C \in \mcvx$ is a context,
    $r:C \to R$ assigns a coefficient in $R$ to each $x \in C$,
    and $a \in R$ is a constant.
    A local section $s:C \to R$ \emph{satisfies} $(C, r, a)$
    if
    \begin{align}
        \sum_{x \in C} r(x) \cdot s(x) = a
    \end{align}
    where $\cdot$ denotes multiplication in $R$.
\end{definition}

Let $\mathcal{S}$ be an empirical model.
The \emph{$R$-linear theory of $\pmodel$} is the set of all $R$-linear equations 
that are consistent with $\pmodel$.
\begin{align}
    \text{Th}_R(\pmodel) := 
        \bigcup_{C \in \mcvx} 
        \{(C, r, a) \mid s \text{ satisfies $(C, r, a)$ for all $s \in \mathcal{S}(C)$}
        \}
\end{align}

\begin{definition}
$\pmodel$ is $\text{AvN}_R$ if its $R$-linear theory is \emph{inconsistent}.
i.e. there is no $s:X \to R$
such that $\res{s}{C} \models \phi$, for every context $C \in \mcvx$
and formula $\phi \in \text{Th}_R(\pmodel)$ at $C$.
\end{definition}
\begin{theorem}[\cite{abramsky_contextuality_2015}]
If $\pmodel$ is $\text{AvN}_R$ then $\gamma(1 \cdot s) \neq 0$ for all
$C \in \mcvx$ and $s \in \pmodel(C)$.
\end{theorem}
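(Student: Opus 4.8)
The plan is to argue the contrapositive: assuming $\gamma(1\cdot s)=0$ for \emph{some} context $C\in\mcvx$ and section $s\in\pmodel(C)$, I would construct an assignment $g\colon X\to R$ satisfying every equation of $\text{Th}_R(\pmodel)$, contradicting $\text{AvN}_R$. As in the footnote to the lemma of Abramsky et al.\ above, there is no loss in taking the cover $\mcvx$ connected, since disjoint components contribute independent equations.

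First I would feed $\gamma(1\cdot s)=0$ into that lemma: for a connected cover it gives a compatible family $\{z_{C'}\}_{C'\in\mcvx}$ of the free abelian presheaf $F_{\ints}\pmodel$ extending $1\cdot s$, i.e.\ with $z_C=1\cdot s$. Write each component as a formal integer combination of local sections, $z_{C'}=\sum_{t\in\pmodel(C')}n^{C'}_t\cdot t$. The restriction maps of $F_{\ints}\pmodel$ commute with the augmentation $\sum_t n_t\cdot t\mapsto\sum_t n_t$ (it is natural in the underlying set), so the augmentation of $z_{C'}$ agrees on overlapping contexts, hence ($\mcvx$ being connected) is the same for all $C'$; since it is $1$ at $z_C=1\cdot s$, every coefficient sum $\sum_t n^{C'}_t$ equals $1$.

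Next, for each measurement $x\in X$ the singleton $\{x\}$ lies in $\mcvx$ by downward closure, and $\pmodel(\{x\})\subseteq\mathcal{E}(\{x\})=R$. I would define $g(x)\in R$ to be the image of $z_{\{x\}}\in F_{\ints}(\pmodel(\{x\}))$ under the map $F_{\ints}(R)\to R$, $\sum_o n_o\cdot o\mapsto\sum_o n_o\,o$, the integers acting through $\ints\to R$; equivalently $g(x)=\sum_{t\in\pmodel(C')}n^{C'}_t\,t(x)$ for any context $C'\ni x$. This is well defined precisely because the family is compatible: restricting $z_{C'}$ and $z_{C''}$ to $\{x\}\subseteq C'\cap C''$ yields the same element. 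Finally, for any $(C',r,a)\in\text{Th}_R(\pmodel)$ every $t\in\pmodel(C')$ satisfies $\sum_{x\in C'}r(x)t(x)=a$, so
\[
\sum_{x\in C'}r(x)\,g(x)=\sum_{x\in C'}r(x)\sum_{t}n^{C'}_t\,t(x)=\sum_{t}n^{C'}_t\sum_{x\in C'}r(x)\,t(x)=\Big(\sum_t n^{C'}_t\Big)a=a,
\]
using the previous step and that $\ints\to R$ sends $1$ to $1_R$. Hence $g$ satisfies the entire $R$-linear theory, contradicting $\text{AvN}_R$; therefore $\gamma(1\cdot s)\neq0$ for every $C$ and every $s$, as claimed.

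The single real idea is the ``weighted average'' in the last display: an $R$-linear equation valid at every outcome assignment of a context is preserved by any affine integer combination of those assignments whose weights sum to $1$, and $\gamma$-triviality of $1\cdot s$ is exactly the existence of such a combination, consistently, across all contexts. The only point needing care is the well-definedness of $g$ from the abstract compatible family --- reconciling it with the concrete restrictions to singletons, and keeping the coercion $\ints\to R$ straight; I do not expect this to be a serious obstacle, and the rest is a routine chase. One should also check the hypotheses of the invoked lemma (connectedness, and that $F_{\ints}\pmodel$ is flasque beneath the cover, which it inherits from $\pmodel$), but these are already in place.
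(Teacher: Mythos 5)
Your proof is correct, and it is essentially the argument of the cited source (the paper itself states this theorem without proof): vanishing of $\gamma(1\cdot s)$ yields a compatible family in $F_\ints\pmodel$ whose coefficients sum to $1$, and an $R$-linear equation satisfied by every section of a context is preserved under such affine integer combinations, producing a global assignment that contradicts $\text{AvN}_R$. This is also the same "collapse a compatible affine family" device the paper itself deploys in its comparison theorem in Section 3.5, so nothing further is needed.
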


\section{Witnessing contextuality through cooperative games} \label{section:bell-inequalities}
\begin{figure}
    \centering
    \begin{subfigure}{0.45\textwidth}
        \centering
        \includegraphics[width=1\textwidth]{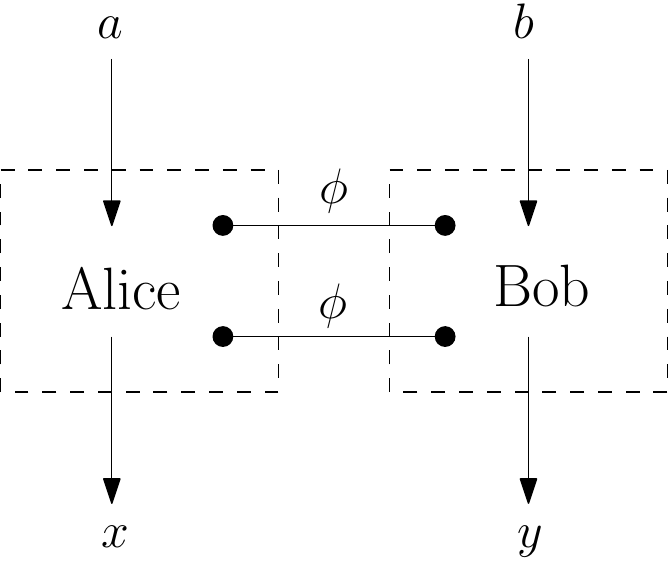}
        \caption{}\label{fig:msg-setup}
    \end{subfigure}
    \hfill
    \begin{subfigure}{0.45\textwidth}
        \centering
        \begin{tabular}{c | c | c | c}
                  & $a = 1$ & $a = 2$ & $a = 3$\\
             \hline
             $b = 1$ &  $X_1$ & $X_2$ & $X_1X_2$ \\
             \hline
             $b = 2$ & $Z_2$ & $Z_1$ & $Z_1Z_2$ \\
             \hline
             $b = 3$ & $-X_1Z_2$ & $-Z_1X_2$ & $Y_1Y_2$
        \end{tabular}
        \caption{}
    \end{subfigure}
    \caption{
        The Magic Square game. 
        Alice and Bob each hold one of the two qubits of two maximally entangled states $\phi$.
        Verifier sends Alice and Bob $a,b \in \{1,2,3\}$. Alice performs the three observables
        $(M_1,M_2,M_3)$ in column $b$ of (b) and Bob performs the observables
        in row $a$. Given outcome $x = (x_1,x_2,x_3), y = (y_1,y_2,y_3)$ they win
        if $x_1 \oplus x_2 \oplus x_3 = 1$ and $y_1 \oplus y_2 \oplus y_3 = 0$,
        and $x_b = y_a$.
    }
    \label{fig:magic_square_game}
\end{figure}
There are different ways of proving that an empirical model is contextual.
For example, using inequalities \cite{clauser_proposed_1969,bell_einstein_1964}
or using systems of logical formulas \cite{mermin_extreme_1990}.
A systematic treatment of contextuality proofs is given by
Abramsky and Hardy \cite{abramsky_logical_2012}.
It is well known that certain contextuality
proofs can be recast as cooperative games known as \emph{non-local games}.
For example, the Magic Square game (Figure \ref{fig:magic_square_game})
\cite{cleve_consequences_2010}.

In this section, we first define cooperative games and non-local games.
We then explain that simulations can be used to translate
a cooperative game from one scenario to another.

\subsection{Cooperative games} \label{section:co-games}
Let $S = (I, X, Y)$ be a multipartite scenario.

A \emph{game} is played by $I$, thought of as players, against \emph{Verifier}.
A game is played over one or more rounds of the following form.
Verifier sends each player $i \in I'$, in a subset $I' \subset$,
a value $x_i \in X_i$, and each player responds with a value $y_i \in Y_x$.
We assume that the players are not allowed to communicate
and that each player is sent at most one value.
A strategy for Verifier is therefore an $n$-round
measurement protocol $C$, and a strategy for the players
is an empirical model $e$.

At the beginning of each game Verifier randomly selects a strategy $C$ 
and an \emph{accepting condition}
$A \subset \mathcal{E}_S(m)$. The goal of the players
is to maximize the probability that their responses
$s_1, \dots, s_n$ satisfies the accepting condition.
\begin{definition}
    Let $S = (I, X, Y)$ be a multipartite measurement scenario.
    An $n$-round \emph{game} is a convex combination 
    $\Phi = \sum_{C \in \text{MP}_n(S), A \subset \mathcal{E}_S(C)} \Phi_{C,A} \cdot (m, A)$.
    The \emph{success probability} of an empirical model $e$ is
    \begin{align}
        p_S(e, \Phi) := \sum_{C \in \text{MP}_n(S), A \subset \mathcal{E}_S(C)}     
            \Phi_{C,A} e(C)(A)
    \end{align}
\end{definition}

A \emph{non-local game} is a single-round cooperative game along
with a quantum strategy exceeding that of any non-contextual strategy.
\begin{definition}
    Let $S = (I, X, Y)$ be a multipartite scenario.
    A \emph{non-local game} is a pair $(e, \Phi)$ 
    where $e$ is a quantum realised empirical model,
    and $\Phi$ is a single-round game, such that
    there exists a $\gamma$ such that
    for all non-contextual empirical models $e_\text{NC}$
    \begin{align}
        p_S(e_\text{NC}, \Phi) \leq \gamma < p_S(e, \Phi)
    \end{align}
    the least such $\gamma^*$, is called the classical upper bound.
\end{definition}

A well-known example is the Greenberger-Horne-Zeillinger (GHZ) game \cite{greenberger_bells_1990}.
\begin{exmp}
    The GHZ game is played by three players $A,B,C$. Verifier
    selects inputs $x_A,x_B,x_C \in \mathbb{Z}_2$ with uniform probability.
    The players win if their outputs $y_A,y_B,y_C \in \mathbb{Z}_2$
    satisfies
    \begin{equation}
        A_\text{GHZ}(x_A, x_B, x_C)(y_A, y_B, y_C) \iff x_A \lor x_B \lor x_B = y_A \oplus y_B \oplus y_C
    \end{equation}
    A winning quantum strategy is given where each player performs
    a Pauli $X$ measurement if the input is $0$ and a Pauli $Y$
    measurement if the input is $1$. However, any non-contextual
    strategy solves the game with at most $3/4$.
\end{exmp}

\subsection{The pullback of a game}
\begin{figure}
    \centering
    \includegraphics[width=\textwidth]{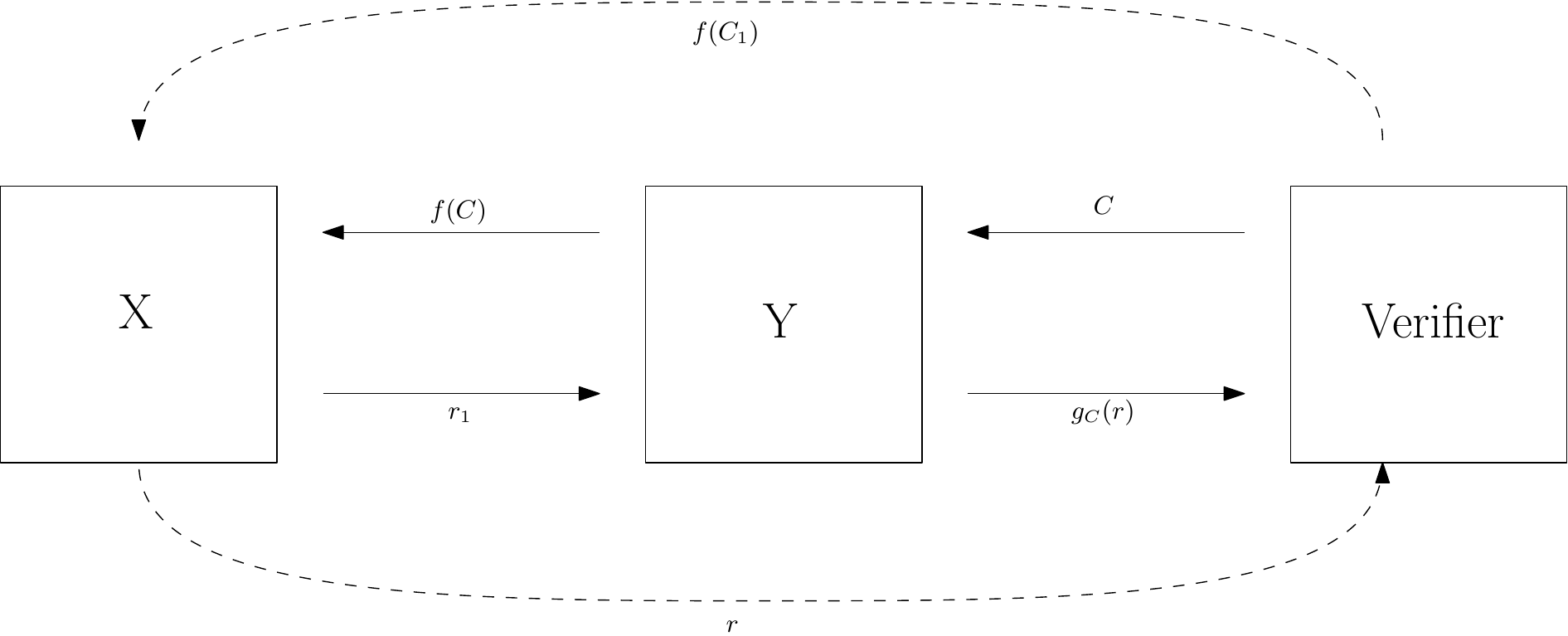}
    \caption{The pullback of a game $\Phi$.
    Consider a game where Verifier plays against a set of players $Y$. Verifier sends $Y$ a context $C$,
    the players then interact with another set of players $X$
    through a measurement protocol $f(C)$.
    If the result of $f(C)$ is a run $r$ then they respond with
    $g_C(r)$ to Verifier. Verifier accepts
    if $g_C(r) \in A$ satisfies the accepting condition.
    This game is equivalent to the game
    where Verifier interacts directly with $X$ by performing
    the measurement protocol $f(C)$
    and accepts a run $r$ if $r \in g_C^{-1}(A)$.
    }
    \label{fig:pb-diagram}
\end{figure}

Let $S$ and $T$ be measurement scenarios, and $s:S \to T$ an $n$-round
simulation.
We have explained that $s$ induces a map on empirical models
going from $S$ to $T$, called the pushforward.
Simulations also have a natural action on games
(Figure \ref{fig:pb-diagram}).
The \emph{pullback} $s^*$ maps $k$-round games of $T$ to $kn$-round games on $S$.

The defining property of the pullback is that
for any empirical model $e$ of $S$ and game $\Phi$ of $T$, the success probability
of $e$ on $s^*(\Phi)$ is the success probability of $s_*(e)$
on $\Phi$:
\begin{align}
    p_S(e, s^*(\Phi)) = p_S(s_*(e), \Phi)
\end{align}

We can define the pullback directly as follows.
\begin{definition}
    Let $S$ and $T = (Y, \mathcal{N}, P)$ be measurement scenarios,
    $s:S \to T$ an $n$-round simulation,
    and $\Phi$ a single-round game.
    The \emph{pullback} $s^*(\Phi)$
    is the $n$-round game for the scenario
    $S$, defined as
    \begin{align}
        s^*(\Phi) := \sum_{(f,g):S \to T, C \in \mathcal{N}, A \subset     
            \mathcal{E}_T(C)} s(f,g) \Phi_{C,A} \cdot (f_C, g_C^{-1}(A))
    \end{align}
    where $\Phi_{C,A}$ is the probability of Verifier
    selecting the context $C$ and accepting condition $A$,
    $s(f,g)$ is the probability of the deterministic
    simulation $(f,g)$
    given by $s$,
    and $f_C \in \text{MP}_n(S)$,
    $g_C:\mathcal{E}_S(f_C) \to \mathcal{E}_T(C)$
    are the maps defined by the deterministic simulation.
\end{definition}

\section{The contextual fraction} \label{section:resource-inequalities}
\begin{figure}
    \centering
    \includegraphics{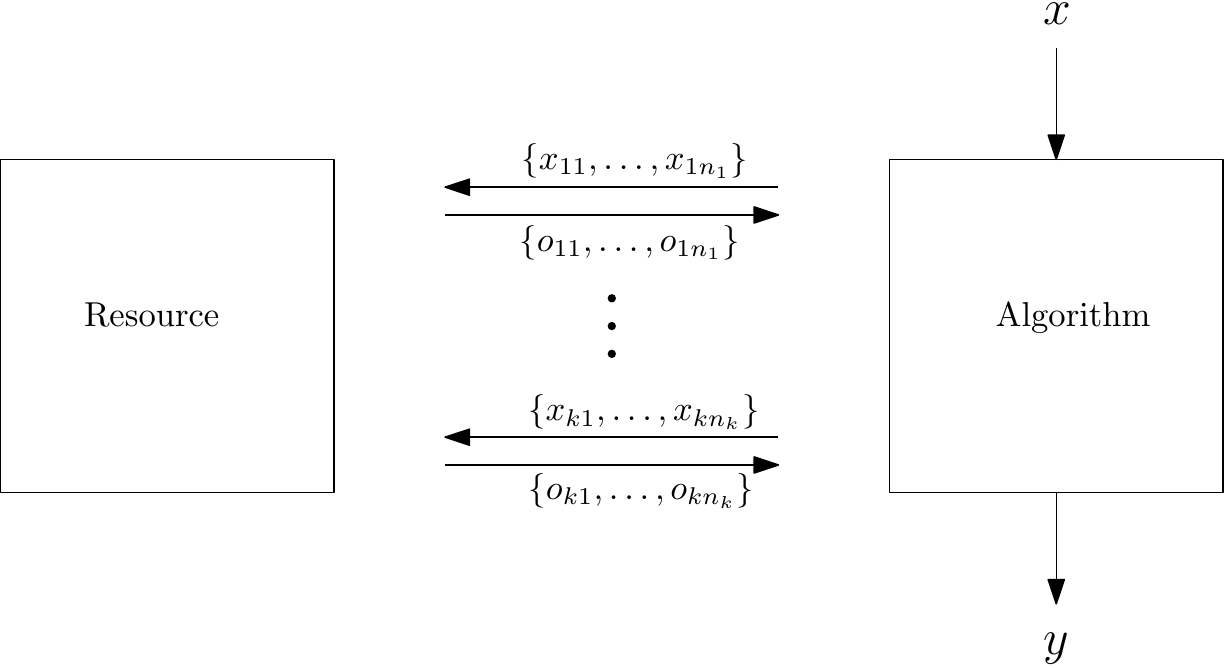}
    \caption{In the resource view we think about an empirical model
    as a resource that can be consumed by a classical algorithm
    solving a computational problem.}
    \label{fig:resource-situation}
\end{figure}

The \emph{contextual fraction} is a measure of contextuality
introduced by Abramsky, Barbosa, and Mansfield \cite{abramsky_contextual_2017}.
See Barbosa, Douce, Emeriau, Kashefi, and Mansfield
for a generalisation of the contextual fraction for continuous
variables \cite{barbosa_continuous-variable_2022}.

The contextual fraction was motivated by
the consideration of situations where a source of contextuality
is consumed to solve a computational problem 
(Figure \ref{fig:resource-situation}).
Abramsky, Barbosa, and Mansfield observed that several results of this type
can be refined to give \emph{resource inequalities} on the form
\begin{equation}
    p_F \geq (1 - \cf{e}) v(f)
    \label{eq:resource-ineq}
\end{equation}
relating the degree of failure $p_F$
in a situation where an empirical model $e$ is 
consumed to solve a problem $f$,
to the contextual fraction $\cf{e}$
and some intrinsic measure $v(f)$ of the hardness of $f$.

An example of such a resource inequality arises from measurement-based quantum
computing (MBQC). In MBQC a classical control computer
that can only perform mod-2 linear computations interacts with
an empirical model. Raussendorf \cite{raussendorf_contextuality_2013}
building on Anders and Browne \cite{anders_computational_2009}
showed that any MBQC that can compute a non mod-2 linear function
requires a strongly contextual empirical model.
This was later refined into a resource inequality
relating the contextual fraction to the likelihood
of an MBQC computing a non-mod 2 linear function.

In this section, we first define the contextual fraction
and then show that non-local games give another example of a resource inequality.
The contextual fraction is a measure of contextuality
that can be seen as the fraction
of an empirical model that cannot be explained by a non-contextual
model.
\begin{definition} \label{def:contextual_fraction}
    Let $e$ be an empirical model. 
    The \emph{non-contextual fraction}
    of $e$, denoted by $\text{NCF}(e)$, is the greatest $\epsilon$
    such that $e$ is a convex combination of a non-contextual
    empirical model $e'$ and another empirical model $e''$.
    \begin{equation}
        e = \epsilon \cdot e' + (1 - \epsilon) \cdot e''
    \end{equation}
    The \emph{contextual fraction}, denoted by $\text{CF}(e)$,
    is defined as $1 - \text{NCF}(e)$.
\end{definition}

Let $S$ be a measurement scenario
and $\Phi$ a game such that the success
probability of any non-contextual empirical model
is at most $\gamma$.
The violation of $\gamma$
by any empirical model $e:S$
is at most $\cf{e}$.
\begin{lemma}
    Let $(\Phi,e)$ be a non-local game with bound $\gamma$.
    For any empirical model $e'$
    the violation of $\gamma$ by $e'$ is bounded by
    the classical limit and the contextual fraction.
    \begin{equation}
        p_S(e, \Phi) \leq \gamma + \cf{e}
    \end{equation}
\end{lemma}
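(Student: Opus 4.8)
The plan is to reduce the statement to the defining property of the pullback together with the elementary fact that the contextual fraction is non-increasing under simulations (more precisely, that it is monotone in the appropriate sense for the trivial simulation, i.e. when we simply read off the empirical model itself). Concretely, fix a non-local game $(\Phi, e)$ for a scenario $S$ with classical bound $\gamma$, and let $e'$ be an arbitrary empirical model for $S$. Write $\text{NCF}(e') = \epsilon$ and use Definition \ref{def:contextual_fraction} to decompose $e' = \epsilon \cdot e'_{\text{NC}} + (1-\epsilon) \cdot e''$, where $e'_{\text{NC}}$ is non-contextual and $e''$ is some empirical model for $S$. The key point is that success probability $p_S(\,\cdot\,, \Phi)$ is an affine functional of the empirical model: from the definition $p_S(e'', \Phi) = \sum_{C, A} \Phi_{C,A}\, e''(C)(A)$, and since a convex combination of empirical models has components that are the corresponding convex combinations, we get $p_S(e', \Phi) = \epsilon \cdot p_S(e'_{\text{NC}}, \Phi) + (1-\epsilon) \cdot p_S(e'', \Phi)$.

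From here the estimate is immediate. Since $e'_{\text{NC}}$ is non-contextual, the hypothesis on $\Phi$ gives $p_S(e'_{\text{NC}}, \Phi) \leq \gamma$. Since $p_S(e'', \Phi)$ is a probability it is at most $1$. Therefore
\begin{align}
    p_S(e', \Phi) \leq \epsilon \gamma + (1-\epsilon) = \gamma + (1-\epsilon)(1 - \gamma) \leq \gamma + (1-\epsilon) = \gamma + \text{CF}(e').
\end{align}
This is exactly the claimed bound (with $e'$ in place of the $e$ appearing in the lemma's display; I would phrase the statement for a general empirical model since that is what the proof gives and what the resource-inequality discussion needs). The only slightly delicate point is ensuring that $e''$ is genuinely an empirical model so that $p_S(e'', \Phi) \le 1$ makes sense — but this is guaranteed by Definition \ref{def:contextual_fraction}, which requires $e''$ to be an empirical model, and affineness of marginalisation ensures the decomposition is valid contextwise.

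The main obstacle, such as it is, is purely bookkeeping: one must check that all the operations in sight — marginalisation $\res{(\cdot)}{C}$, evaluation $e(C)(A)$, and forming success probabilities — commute with convex combinations, so that the decomposition of $e'$ as an empirical model passes through to a decomposition of the number $p_S(e', \Phi)$. This is routine from the formulas in Section \ref{section:empirical-models} and Section \ref{section:bell-inequalities}, since every relevant map is a finite sum with fixed nonnegative coefficients. No cohomology, no simulations, and no topology are needed; the lemma is really just affineness of $p_S$ plus the definition of the non-contextual fraction. I would present it in three short lines: decompose, apply affineness, bound the two terms.
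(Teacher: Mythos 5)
Your proof is correct and is essentially the paper's own argument: decompose the model via the (non-)contextual fraction, use affineness of $p_S$ in the empirical model, and bound the non-contextual part by $\gamma$ and the remainder by $1$. The only difference is presentational (you also note the sharper intermediate bound $\gamma + \cf{e'}(1-\gamma)$ and are more careful about stating the lemma for a general $e'$, which matches how the inequality is actually used later).
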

\begin{proof}
    Let $e$ be an empirical model.
    We can write $e$ as a convex combination
    \begin{align}
        e = \cf{e} \cdot e' + (1-\cf{e}) \cdot e_\text{NC}
    \end{align}
    where $e_\text{NC}$ is non-contextual.
    The success probability of $e$ is then
    \begin{align}
        p_s(e, \Phi) = \cf{e} p_S(e', \Phi) + (1-\cf{e}) p_S(e_\text{NC}, \Phi)
    \end{align}
    The success probability of $e'$ is at most one,
    and the success probability of $e_\text{NC}$ at most $\gamma$.
    Therefore
    \begin{align}
        p_S(e, \Phi) &\leq \cf{e} + (1-\cf{e}) \gamma\\
                     &\leq \gamma + \cf{e}
    \end{align}
\end{proof}
\chapter{Comparing two obstructions for contextuality}
Cohomological invariants can be a powerful mathematical tool. Abramsky
et al.\ \cite{abramsky_cohomology_2012, abramsky_contextuality_2015} 
showed that a cohomological invariant based on \v{C}ech cohomology 
can detect contextuality in a range of examples. However,
the \v{C}ech cohomology approach is generally not complete.
There are instances of contextuality, called ``false negatives'',
where the cohomological obstruction vanishes.
In this chapter, we compare the \v{C}ech cohomology approach
to a different cohomological approach for detecting contextuality.

The \emph{topological approach} of Okay, Bartlett, Roberts, 
and Raussendorf \cite{okay_topological_2017}
studies certain sets of quantum measurement operators.
Recall that for any dimension $d \geq 2$ the single-qudit
Weyl operators are a set of $d^2$ unitary operators
generalising the Pauli operators.
The generalised $n$-qudit Pauli group is the group
of operators generated by $n$-fold tensor products
of single-qudit Weyl operators.
\begin{definition}
    For any dimension $d \geq 2$, and $p_1, p_2 \in \mathbb{Z}_d^2$ 
    the \emph{single-qudit Weyl operator} $W(p_1, p_2)$
    is defined by
    \begin{align}
        W(p_1, p_2) := \ket{j} \mapsto \omega^{j p_2}\ket{j + p_1}
    \end{align}
    where $\omega = e^{2\pi i / d}$. The \emph{$n$-qudit generalised Pauli group
    $P_{n,d}$} is the group of operators
    on the form
    \begin{align}
        \omega^q W(p_{11}, p_{12}) \otimes \dots \otimes W(p_{n1}, p_{n2})
    \end{align}
    where $p_{11}, p_{12}, \dots, p_{n1}, p_{n2} \in \mathbb{Z}_d$.
\end{definition}

The topological approach studies sets of $n$-qudit Weyl operators
that contain the identity operator, is closed under
commuting products and $\omega^q$-phases.
\begin{definition}
    A set of $n$-qudit generalised Pauli operators
    $\mathcal{O} \subset P_{n,d}$ is \emph{closed} if
    \begin{enumerate}
        \item $\mathcal{O}$ contains the identity operator: $I \in \mathcal{O}$.
        \item $\mathcal{O}$ is closed under commuting products: 
            If $O_1, O_2 \in \mathcal{O}$ and 
                $O_1O_2 = O_2O_1$ then $O_1O_2 \in \mathcal{O}$.
        \item $\mathcal{O}$ is closed under $\{\omega^k\}$-phases:
            If $O \in \mathcal{O}$ and $k \in \mathbb{Z}_d$ then
            $\omega^k O \in \mathcal{O}$.
    \end{enumerate}
\end{definition}

\begin{figure}
    \centering
    \hfill
    \begin{subfigure}{0.3\textwidth}
        \includegraphics[width=1\textwidth]{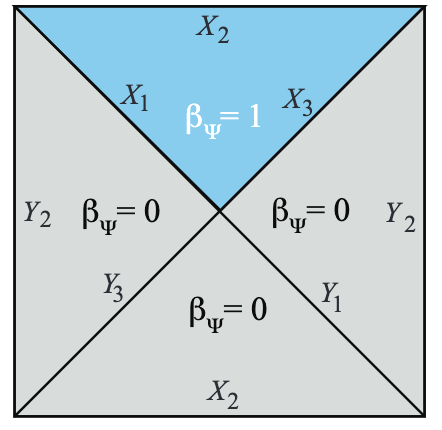}
        \caption{}
        \label{fig:state-dep-mermin-star-topo}
    \end{subfigure}
    \hfill
    \begin{subfigure}{0.3\textwidth}
        \includegraphics[width=1\textwidth]{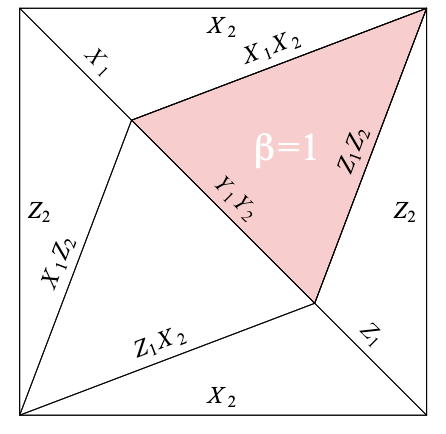}
        \caption{}
        \label{fig:mermin-square-topo}
    \end{subfigure}
    \caption{Examples of classifying spaces taken from Okay et al. \cite{okay_topological_2017}. (a) is the GHZ proof,
    (b) Mermin's square.
    }
    \label{fig:top_realisation_examples}
\end{figure}

For any closed set of Weyl operators Okay et al.\ defines a topological
space (Figure \ref{fig:top_realisation_examples}). 
They show that key properties of the set of operators
are reflected in the topology of this space. One of their
results is that both state-dependent and state-independent
contextuality can be detected by the non-vanishing of a cohomology
class. Recall that each $n$-qudit Weyl operator $W(p_1, p_2) \neq I$
has $d$ distinct eigenvalues $\omega^0, \dots, \omega^{d-1}$.
Under the identification $\omega^i \mapsto i$
each Weyl operator defines a projective measurement
with outcomes $\mathbb{Z}_d$.
A state-dependent or state-independent contextuality
proof is a proof that either the state-dependent
or state-independent empirical models
\begin{equation}
    \mathcal{S}_{\mathcal{O}}:(\mathcal{O}, \mathcal{M}, \mathbb{Z}_d),
    \quad 
    \mathcal{S}_{\mathcal{O}, \psi}:(\mathcal{O}, \mathcal{M}, \mathbb{Z}_d)
\end{equation}
are contextual.

In this chapter, we consider the following problem. What is the minimal structure
required to define the topological obstruction at the level of empirical models.
Secondly, assuming that the topological obstruction can be defined,
are there instances where the \v{C}ech cohomology obstruction
vanishes, but the topological obstruction does not?

\subsection{Structure of chapter} 
In Section \ref{section:bundles} we introduce bundles over commutative partial 
monoids and we prove the splitting lemma, relating left splittings, 
right splittings, and trivialisations.
In Section \ref{section:relative-cohomology} we define the cohomology
of a commutative partial monoid. We show that the problem
extending a local right splitting of a bundle is characterised by
a cohomological obstruction.
In Section \ref{section:bundle-scenarios} we introduce a class of measurement
scenarios and empirical models generalising closed sets of Weyl operators.
We show that for any such empirical model
a cohomological obstruction can be defined.
Finally, in Section \ref{section:comparison} we show that this
obstruction is not stronger than the \v{C}ech cohomology obstruction.

\newcommand{\gpauli}[2]{P(#1,#2)}
\section{Bundles over commutative partial monoids} \label{section:bundles}
In this chapter, we are working with commutative groups, monoids, and partial monoids.
We will therefore ommit the word commutative to avoid unnecessarily complicating
terminology.

Recall that if $G$ and $H$ are groups then a group extension of $H$ by $G$
is a sequence of groups and homomorphisms
\begin{equation}
    \begin{tikzcd}
        G \arrow[r, "i"] & H \arrow[r, "j"] & K
    \end{tikzcd} 
\end{equation}
such that $i$ is injective, $j$ is surjective, and $\text{im}(i) = \text{ker}(j)$.
The simplest example of a group extension of $H$ by $G$ is the product $G \times H$
along with the inclusion $\text{in}_1:G \to G \times H$ and the projection
$\pi_2:G \times H \to H$.
\begin{equation}
    \begin{tikzcd}
        G \arrow[r, "\text{in}_1"] & G \times H \arrow[r, "\pi_2"] & H
    \end{tikzcd} 
\end{equation}
As a group extension, the direct product is not interesting because its structure
is determined completely by $G$ and $H$. It is therefore called the \emph{trivial}
extension.
A homomorphism $h:G \to G \times K$ that is compatible
with both the inclusion and projection maps, that is the diagram
\begin{equation}
    \begin{tikzcd}
        G \arrow[dr, "\text{in}_1"] \arrow[r, "i"] & H \arrow[d, "h"] \arrow[r, "j"] & K \\
         & G \times K \arrow[ur, "\pi_2"] &
    \end{tikzcd}
\end{equation}
commutes is called a trivialisation.
It can be shown that any trivialisation is an isomorphism. 
A bundle that has a splitting
is said to split, and its structure is therefore also determined completely
by $G$ and $K$.
The \emph{splitting lemma} for groups gives a necessary and sufficient characterisation
of when a group extension has a splitting.

Partial monoids generalise groups by omitting the requirement that elements
have inverses, and the requirement that all products are defined.
\begin{definition}
    A \emph{(commutative) partial monoid} is a tuple $(M, +, 0)$
    where $M$ is a set, the \emph{product} $+:M^2 \to M$ is a partial function, and $0 \in M$ is the \emph{identity},
    such that the following conditions hold:
    \begin{itemize}
        \item Commutativity: $m + m'$ is defined if and only if $m' + m$ is defined and $m + m' = m' + m$, for all
        $m,m' \in M$.
        \item Identity: $0 + m$ is defined and $0 + m = m$ for all $m \in M$.
        \item Associativity: For all $m,m',m'' \in M$
            \begin{itemize}
                \item  If $(m + m') + m''$ and $m + (m' + m'')$ are both
                    defined then they are equal.
                \item  If $m + m', m + m'', m' + m''$ are all defined
                    then $(m + m') + m''$ and $m + (m' + m'')$ are both defined.
            \end{itemize}
    \end{itemize}
    If the product $+$ is a total function then $(M, +, 0)$ is a \emph{commutative monoid}.
\end{definition}

In this section, we introduce a generalisation of group extensions to partial
monoids and we show that the splitting lemma generalises, and the problem
of extending a local splitting to a global splitting is equivalent.

\subsection{Bundles}
To generalise the definition of a group extension to partial monoids
we first recast the definition to emphasise the role of a group action.

\begin{definition}
    Let $\theta:G \times X \to X$ be a group action.
    $\theta$ is \emph{free} if $\theta(\_, x): G \to X$ is injective for all $x \in X$.
    The \emph{orbit} of $x \in X$, is the set of elements
    that are equivalent to $x$ up to the action of $G$:
    \begin{equation}
        [x]_{\theta} := \{\theta(g,x) \mid g \in G\}
    \end{equation}
    We write $X / \theta$ for the set of orbits. When a particular group action
    is assumed we will simplify notation by defining $g \cdot x := \theta(g,x)$.
\end{definition}

Observe that for any group extension
\begin{tikzcd}
        G \arrow[r, "i"] & H \arrow[r, "j"] & K
    \end{tikzcd} 
there is an action of $G$ on $H$ defined by
\begin{align}
    \theta : G \times H \to H :: (g,h) \mapsto i(g) +_H h
\end{align}

This action is \emph{free} because $i$ is injective and $H$ has inverses.
It is also \emph{compatible} with the group structures of $G$ and $H$
in the sense that it is a homomorphism from $G \times H$ to $H$.
The requirement that $\text{im}(i) = \text{ker}(j)$
is equivalent to saying that the orbits of $\theta$
and the fibers of $j$ are the same:
\begin{equation}
    H / \theta = \{j^{-1}(k) \mid k \in K\}
\end{equation}

We can recast the definition of a group extension in terms of this action.
A group extension can be defined as a surjective homomorphism
$j:H \to K$ and a free, compatible group action $\theta$
such that the orbits of $\theta$ are the fibers of $j$.
We will use this view of group extensions to generalise them to partial monoids.

For a partial monoid the natural notion of homomorphism
is a function on the underlying set that preserves the identity and products whenever
they are defined.
\begin{definition}
    A \emph{homomorphism} of partial monoids $h:M \to M'$ is
    a function between the underlying sets, such that:
    \begin{itemize}
        \item $h$ preserves the identity element: $h(0_M) = 0_{M'}$.
        \item $h$ preserves products: $h(m_1) +_{M'} (m_2)$ is defined and 
            $h(m_1 +_M m_2) = h(m_1) +_{M'} h(m_2)$, for all $m_1,m_2$
            such that $m_1 +_M m_2$ is defined.
    \end{itemize}
\end{definition}

If $G$ is a group and $M$ is a partial monoid then the set product $G \times M$
is a partial monoid with identity and product defined component-wise.
\begin{align}
    0_{G \times M} &= (0_G, 0_M)\\
    (g,m) +_{G \times M} (g', m') &= (g +_A g', m +_M m')
\end{align}
for all $g,g' \in G$ and $m,m' \in M$ such that $m +_M m'$ is defined.
We define an action of a group $G$ on $M$ to be a group action,
in the usual sense, that is furthermore a homomorphism from $G \times M$ to $M$.
\begin{definition} \label{def:action}
    Let $G$ be a group and $M$ a partial monoid.
    An \emph{action of $G$ on $M$} is a \emph{homomorphism}
    $\theta:G \times M \to M$ such that the following
    conditions hold:
    \begin{align}
        \theta(0, \_) &= \text{id}_M\\
        \theta(g, \_) \circ \theta(g', \_) &= \theta(g + g', \_),
        \quad \text{for all $g,g' \in G$}
    \end{align}
\end{definition}

We define a bundle over a partial monoid to be
a partial monoid equipped with a compatible group action
and a surjective homomorphism such that the fibers of the homomorphism
and the orbits of the action are the same.
\begin{definition}
Let $G$ be a group and $M$ a partial monoid.
A \emph{$G$-bundle} over $M$ is a tuple $(N, j, \theta)$, where
\begin{itemize}
    \item $N$ is a partial monoid,
    \item $\theta:G \times N \to N$ is a free action,
    \item $j:N \to M$ is a surjective homomorphism,
    \end{itemize}
        such that the orbits of $\theta$ are the fibers of $j$:
    \begin{equation}
        N / \theta = \{j^{-1}(m) \mid m \in M\}
    \end{equation}
    $\theta$ is called the \emph{bundle action}
    and $j$ the \emph{bundle map}.
\end{definition}

The simplest example of a $G$-bundle over $M$ is given by the product $G \times M$.
Write $\theta_{G \times M}$ for the action of $G$ on $G \times M$ 
applying the group operation of $G$ on the first component,
and $\pi_2:G \times M \to M$ for the projection onto the second component.
\begin{align}
    \theta(g, (g', m)) := (g + g', m)
\end{align}
The triple $(G \times M, \theta_{G \times M}, \pi_2)$ is called the 
\emph{trivial bundle}. As a bundle it has no interesting structure
because it is completely determined by $G$ and $M$ alone.

\subsection{The splitting lemma}
Let
\begin{tikzcd}
        G \arrow[r, "i"] & H \arrow[r, "j"] & K
\end{tikzcd} 
be a group extension. The splitting lemma for groups
gives the following characterisation of trivialisations,
that is homomorphisms $h:H \to G \times K$
such that the following diagram commutes:
\begin{equation}
    \begin{tikzcd}
        G \arrow[dr, "\text{in}_1"] \arrow[r, "i"] & H \arrow[d, "h"] \arrow[r, "j"] & K \\
         & G \times K \arrow[ur, "\pi_2"] &
    \end{tikzcd}
\end{equation}
in other words, $\text{in}_1 = h \circ i$ and $j = \pi_2 \circ h$.
Trivialisations are necessarily isomorphisms. Any group extension
that has a trivialisation is therefore isomorphic to the product group extension.

A \emph{left splitting} is a homomorphism $s:H \to G$
such that $i \circ l = \text{id}_H$.
A \emph{right splitting} a homomorphism $r:K \to H$
such that $r \circ j = \text{id}_K$.
The \emph{splitting lemma} for groups states that the three are equivalent: A group
extension has a left splitting if and only if it has a right splitting,
if and only if it has a trivialisation.

To generalise left splittings and trivialisations
we observe that their definitions can be recast in terms of the group
action of $G$ on $H$.
\begin{definition} \label{def:action-homomorphism}
    Let $G$ be a group, $M,M'$ partial monoids, and $\theta:G \times M \to M$, $\theta':G \times M' \to M'$
    group actions. An \emph{action homomorphism} $h:\theta \to \theta'$
    is a partial monoid homomorphism $h:M \to M'$
    such that $\theta'(g, f(x)) = f(\theta(g,x))$ for all $g \in G, x \in X$.
\end{definition}

For any group $G$ write 
write $\theta_G$ for the group action of $G$ on itself: $\theta_G(g,g') := g +_G g'$.
A left splitting of a group extension is then equivalently
an action homomorphism from the bundle action $\theta$ to $\theta_G$.
The requirement that a trivialisation is compatible
with the inclusion maps, that is $\text{in}_1 = h \circ i$
is equivalent to $h$ being an action homomorphism
from the bundle action $\theta$ to the bundle action
on the product bundle $\theta_{G \times M}$.
\begin{definition}
    Let $G$ be a group, $M$ a partial monoid, and $B = (N, j, \theta)$ a $G$-bundle over $M$.
    \begin{enumerate}
        \item  A \emph{left splitting} is an action homomorphism $l:\theta \to \theta_G$.
        \item  A \emph{right splitting} is a partial monoid homomorphism $r:M \to N$ such that $j \circ r = \text{id}_M$.
        \item A \emph{trivialisation} is an action homomorphism $h:\theta \to \theta_{G \times M}$
        such that $j = \pi_2 \circ h$.
    \end{enumerate}
\end{definition}

For example, the trivial bundle $(G \times M, \theta_{G \times M}, \pi_2)$
has a left splitting $\pi_1$ and a right splitting
$\text{in}_2$:
\begin{align}
    \pi_1:G \times M \to G &:: (g,m) \mapsto b\\
    \text{in}_2 : M \to G \times M &::= m \mapsto (0_B, m)
\end{align}

Let $B = (N, j ,\theta)$ be a $G$-bundle over a partial monoid $M$
and let $l:N \to G$ be a left splitting.
There is then a natural map from $N$ to $G \times M$
given by
\begin{align}
    \langle l, j \rangle : N \to G \times M ::= n \mapsto (l(n), j(n))
\end{align}
Because $l$ is a left splitting and therefore an action
homomorphism from $\theta$ to $\theta_G$ we have that
$\langle l, j \rangle$ is an action homomorphism from $\theta$
to the bundle action $\theta_{G \times M}$ of the trivial bundle.
$\langle l, j \rangle$ is a trivialisation.

We then clearly have $\pi_2 \circ \langle l, j \rangle = j$.
Because $l$ is an action homomorphism from $\theta$ to $\theta_G$
we have that $\langle l, j \rangle$ is an action homomorphism
from $\theta$ to $\theta_{G \times M}$. Conversely if $h:N \to G \times M$
is a trivialisation then we can define a left splitting
by projecting onto the first component: $\pi_1 \circ h$.

Because the bundle action $\theta$ is free something similar is true
for right splittings.
For any left splitting $l$ let $\mathcal{R}(l)$
be the function
\begin{align}
    \mathcal{R}(l) : M \to N ::= m \mapsto -l(\eta(m)) \cdot \eta(m)
\end{align}
where $\eta:M \to N$ is any function such that $j \circ \eta = \text{id}_M$.
Observe that the definition is independent of the choice of $\eta$
because
\begin{align}
    -l(g \cdot \eta(m)) \cdot (g \cdot \eta(m)) &= (-l(\eta(m)) -g + g) \cdot \eta(m) \\
    &= -l(\eta(m)) \cdot \eta(m)
\end{align}
for any $m \in M$ and $g \in G$.

\begin{lemma}[Splitting lemma]
    Let $B = (N, \theta, j)$ be a $G$-bundle over a partial monoid $M$.
    \begin{enumerate}
        \item The map $l \mapsto <l,j>$ is a bijection between left splittings and trivialisations.
        \item The map $l \mapsto \mathcal{R}(l)$ is a bijection between left and right splittings.
    \end{enumerate}
    \end{lemma}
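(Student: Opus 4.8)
The plan is to prove each bijection by writing down the inverse map explicitly and checking that both constructions land in the claimed class; the real content is verifying the homomorphism and action-homomorphism conditions, and I expect the delicate point to be that a right splitting is required to preserve \emph{definedness} of products, not merely their values.

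\textbf{Part 1 (left splittings versus trivialisations).} First I would verify that $\langle l, j\rangle : N \to G \times M$ really is a trivialisation. It is a partial monoid homomorphism because its two components $l$ and $j$ are, and the product on $G \times M$ is taken component-wise; it is an action homomorphism from $\theta$ to $\theta_{G \times M}$ because $l$ is an action homomorphism to $\theta_G$ (handling the first component) and $j$ is constant on $\theta$-orbits (handling the second); and $\pi_2 \circ \langle l, j\rangle = j$ holds by construction. For the inverse, send a trivialisation $h$ to $\pi_1 \circ h$; this is a left splitting since $\pi_1$ is an action homomorphism $\theta_{G \times M} \to \theta_G$ and a composite of partial monoid action homomorphisms is again one. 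The two maps are mutually inverse: $\pi_1 \circ \langle l, j\rangle = l$ is immediate, and $\langle \pi_1 \circ h, j\rangle = h$ because the defining equation $\pi_2 \circ h = j$ of a trivialisation forces the second component of $h$ to be $j$. This part is routine bookkeeping once the objects are unwound.

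\textbf{Part 2 (left versus right splittings).} The organising observation is a uniqueness characterisation of $\mathcal{R}(l)$: for each $m \in M$ the element $\mathcal{R}(l)(m) = -l(\eta(m)) \cdot \eta(m)$ lies in the fibre $j^{-1}(m)$, since $j$ is constant on orbits, and $l(\mathcal{R}(l)(m)) = -l(\eta(m)) + l(\eta(m)) = 0$ because $l$ is an action homomorphism to $\theta_G$; moreover it is the \emph{unique} point of $j^{-1}(m)$ annihilated by $l$, because any two points of a fibre differ by a unique group element (the orbits of $\theta$ are the fibres of $j$ and $\theta$ is free) and $l$ reads off that element. In the opposite direction, given a right splitting $r$, define $l_r : N \to G$ by letting $l_r(n)$ be the unique $g \in G$ with $n = g \cdot r(j(n))$; this is well defined since $j(r(j(n))) = j(n)$ places $n$ and $r(j(n))$ in the same fibre. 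One checks directly that $l_r$ is an action homomorphism to $\theta_G$, and that it is a partial monoid homomorphism using that $r$ preserves products: if $n_1 + n_2$ is defined then so is $j(n_1) + j(n_2)$ in $M$, hence $r(j(n_1)) + r(j(n_2))$ is defined and equals $r(j(n_1 + n_2))$; applying that $\theta$ is a homomorphism $G \times N \to N$ then gives $n_1 + n_2 = (l_r(n_1) + l_r(n_2)) \cdot r(j(n_1 + n_2))$, so $l_r(n_1 + n_2) = l_r(n_1) + l_r(n_2)$. The uniqueness characterisation makes $l \mapsto \mathcal{R}(l)$ and $r \mapsto l_r$ mutually inverse: $l_{\mathcal{R}(l)} = l$ because $l(n) = l(g \cdot \mathcal{R}(l)(j(n))) = g$ whenever $n = g\cdot\mathcal{R}(l)(j(n))$, and $\mathcal{R}(l_r) = r$ because $r(m)$ is visibly the unique $l_r$-zero over $m$.

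\textbf{The main obstacle.} The genuinely delicate step is checking that $\mathcal{R}(l)$ is itself a partial monoid homomorphism, which it must be in order to count as a right splitting at all. The equation $j \circ \mathcal{R}(l) = \mathrm{id}_M$ and preservation of the identity are easy, taking the section with $\eta(0_M) = 0_N$ so that $\mathcal{R}(l)(0_M) = -l(0_N) \cdot 0_N = 0_N$. For products the uniqueness characterisation again does most of the work: if $\mathcal{R}(l)(m_1) + \mathcal{R}(l)(m_2)$ is defined, its image under $j$ is $m_1 + m_2$ and $l$ vanishes on it, so it must equal $\mathcal{R}(l)(m_1 + m_2)$. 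Everything therefore reduces to showing that whenever $m_1 + m_2$ is defined in $M$, some pair of representatives of the fibres $j^{-1}(m_1)$ and $j^{-1}(m_2)$ has a defined sum in $N$ --- equivalently, that the bundle map $j$ reflects definedness of products. This is the point where one must invoke the full strength of the $G$-bundle structure (and it is exactly what holds for closed sets of Weyl operators, where a product is defined precisely when the operators commute, a condition insensitive to phases); without it $\mathcal{R}(l)$, though a well-defined function $M \to N$, need not be a right splitting, so one should expect this to be the crux of the argument.
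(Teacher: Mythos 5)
Your construction of the two bijections is the same as the paper's: $h \mapsto \pi_1 \circ h$ inverts $l \mapsto \langle l, j\rangle$, and $\mathcal{R}$ is inverted by reading off the unique group element separating $n$ from the chosen fibre representative. Your part 2 is actually cleaner than the paper's: characterising $\mathcal{R}(l)(m)$ as the unique point of $j^{-1}(m)$ annihilated by $l$ (using freeness and the fact that orbits are fibres) lets you dispense with the auxiliary section $\eta$ entirely in the inverse direction and makes the mutual-inverse checks immediate, whereas the paper verifies the same identities by direct computation with $\eta$ (and, incidentally, with some sign slips in its verification that $\mathcal{R} \circ \mathcal{R}^{-1} = \mathrm{id}$).

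Your ``main obstacle'' is a genuine one, and you are right to refuse to wave it away. For $\mathcal{R}(l)$ to be a partial monoid homomorphism, $\mathcal{R}(l)(m_1) + \mathcal{R}(l)(m_2)$ must be \emph{defined} in $N$ whenever $m_1 + m_2$ is defined in $M$; equivalently, since definedness in $N$ is constant on pairs of orbits (because $\theta$ is a homomorphism $G \times N \to N$), the bundle map $j$ must reflect definedness. The paper's proof assumes this silently at the step ``there is a unique $g$ such that $\eta(m+m') = g \cdot (\eta(m) + \eta(m'))$'', which presupposes that $\eta(m) + \eta(m')$ is defined. Be aware, though, that your hope that ``the full strength of the $G$-bundle structure'' supplies this is not borne out: the axioms as stated do not imply it. For instance, take $M = G = \mathbb{Z}_2$, let $N = G \times M$ carry the first-coordinate action and $j = \pi_2$, but declare $n + n'$ defined only when one of $n, n'$ lies in the orbit of $0_N$; one checks this is a partial monoid with a free compatible action whose orbits are the fibres of $j$, it admits the left splitting $\pi_1$, yet no right splitting exists because $1 + 1$ is defined in $M$ while no two elements of $j^{-1}(1)$ have a defined sum. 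So the lemma needs the reflection of definedness as an explicit hypothesis. It does hold automatically for every bundle the paper actually uses, since those arise from the quotient construction $M = N/\theta$, where $[n_1] + [n_2]$ is declared defined exactly when $n_1 + n_2$ is; and for closed sets of Weyl operators it is the statement that commutativity is insensitive to phases. With that hypothesis added, your argument is complete and correct.
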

\begin{proof}
    1.\ $h \mapsto \pi_1 \circ h$ is an inverse to $l \mapsto <l,j>$.
    $\pi_1 \circ h$ is an action homomorphism from $\theta$ to $\theta_G$
    if and only if $h$ is an action homomorphism from $\theta$ to $\theta_{G \times M}$.
    
    For 2.\ we first check that $\mathcal{R}(l)$ is a homomorphism.
    It preserves the identity.
    We have $\eta(0) = a \cdot 0$ for some unique $a$.
    Hence $\mathcal{R}(l)(0) = (-l(\eta(0))) \cdot \eta(0) =
        -l(a \cdot 0) \cdot (a \cdot 0) = (0 - a + a) \cdot 0$ as required.
    To see that it preserves products, take $m, m'$ such that $m +_M m'$ is defined.
    There is a unique $g$ such that $\eta(m + m') = g \cdot (\eta(m) + \eta(m'))$.
    Therefore
    \begin{align}
        \mathcal{R}(l)(m + m') &= (-l(\eta(m+m'))) \cdot \eta(m + m')\\
                               &= (-l(g \cdot (\eta(m) + \eta(m'))) \cdot (g \cdot (\eta(m) + \eta(m')))\\
                               &= (-l(\eta(m)) -l(\eta(m')) - g + g) \cdot (\eta(m) + \eta(m'))\\
                               &= \mathcal{R})(l)(m) + \mathcal{R}(l)(m')
    \end{align}
    
    To see that $\mathcal{R}$ is a bijection we can define the inverse
    directly as follows. For any right splitting $r:M \to N$
    there is a unique function $\mathcal{R}^{-1}(r):M \to N$ such that
    \begin{align}
        \mathcal{R}^{-1}(r)(n) \cdot n = r(j(n))
    \end{align}
    for all $n \in N$.
    
    We first check that it is a homomorphism.
    For the identity we have $h(j(0)) = 0$ and $h(j(0)) = s(0) \cdot 0$,
    hence $s(0) = 0$ as required.
    That it preserves products we have both
    \begin{align}
        h(j(n + n')) = s(n + n') \cdot (n + n')
    \end{align}
    and
    \begin{align}
        h(j(n + n')) &= h(j(n)) + h(j(n')) \\
                    &= s(n) \cdot n + s(n') \cdot n' \\
                    &= (s(n) + s(n')) \cdot (n + n')
    \end{align}
    Therefore, by uniqueness of $s$ we have $s(n + n') = s(n) + s(n')$.
    Finally to see that it preserves the action, we have
    $r(j(g \cdot n)) = r(j(n))$. Hence
    \begin{align}
        \mathcal{R}^{-1}(r)(g \cdot n) \cdot (g \cdot n) = 
        \mathcal{R}^{-1}(r)(n) \cdot n
    \end{align}
    
    $\mathcal{R}^{-1}(r)(g \cdot n) = g + \mathcal{R}^{-1}(r)(n)$
    
    Finally we check that $\mathcal{R}^{-1}$ in fact is an inverse
    to $\mathcal{R}$. 
    \begin{align}
        R_B(R_B^{-1}(r)) &= m \mapsto (- s_r(\eta(m))) \cdot \eta(n)\\
                        &= m \mapsto r(j(\eta(m))) = m \mapsto r(m)
    \end{align}
    and 
    \begin{align}
        (h^{-1} \circ \text{in}_M)(j(n)) &= h^{-1}(0, j(n))\\
        &= (0 - h_1(n)) \cdot n
    \end{align}
    Hence by uniqueness $(h^{-1} \circ \text{in}_M) \mapsto h$
    and so $\mathcal{R}$ is a left inverse to $\mathcal{R}^{-1}$.
\end{proof}

Similarly, as for groups, trivialisations of bundles are necessarily isomorphisms.
The splitting lemma, therefore, gives a characterisation of when
a bundle is isomorphic to the trivial bundle.
A natural candidate for the inverse of a trivialisation
$h$ is the map given by first
taking the left splitting $\pi_1 \circ h:N \to G$, then composing the right splitting
associated with $\pi_1 \circ h$ with the projection $\pi_2$:
$\mathcal{R}(\pi_1 \circ h) \circ \pi_2$.
It can be verified that this map in fact is an inverse.

\begin{lemma}
    Trivialisations are isomorphisms. Let $B = (N, j, \theta)$ be an $A$-bundle
    over a commutative partial monoid $M$ and $h:N \to A \times M$ a trivialisation.
    The inverse of $h$ is
    \begin{equation}
        h^{-1}:A \times M \to N ::= (a,m) \mapsto (a - h_1(\eta(m))) \cdot \eta(m)
    \end{equation}
    where $\eta : \prod_{m \in M} j^{-1}(m)$ is an arbitrary section
    and $h_1 = \text{proj}_1 \circ h: N \to A$ is the first component of $h$.
\end{lemma}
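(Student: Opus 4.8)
The plan is to take the displayed map, which I will call $k : A \times M \to N$, $k(a,m) := (a - h_1(\eta(m))) \cdot \eta(m)$, verify that it is a two-sided inverse of $h$, and then observe that it is itself a homomorphism, so that $h$ is an isomorphism of partial monoids (and, with a little extra bookkeeping, of bundles). Throughout I would use two structural facts about a trivialisation $h = \langle h_1, h_2\rangle$: the condition $j = \pi_2 \circ h$ says that $h_2 = j$, and the condition that $h$ is an action homomorphism $\theta \to \theta_{A \times M}$ says that $h(g \cdot n) = (g + h_1(n),\, h_2(n))$, so in particular $h_1(g \cdot n) = g + h_1(n)$ for all $g \in A$ and $n \in N$. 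Before the inversion computations I would record that $k$ is a homomorphism and, in particular, well defined (independent of the arbitrary section $\eta$). Indeed, $\pi_1 \circ h = h_1$ is a left splitting, being a composite of the action homomorphisms $h : \theta \to \theta_{A \times M}$ and $\pi_1 : \theta_{A \times M} \to \theta_A$; so by the Splitting Lemma the associated right splitting $\mathcal{R}(h_1) : M \to N$, $m \mapsto -h_1(\eta(m)) \cdot \eta(m)$, is a partial monoid homomorphism, and its value is already known (from the remark following the definition of $\mathcal{R}$) to be independent of $\eta$. Since $\theta : A \times N \to N$ is a partial monoid homomorphism by Definition \ref{def:action}, the composite $k = \theta \circ (\text{id}_A \times \mathcal{R}(h_1))$ is a partial monoid homomorphism; one checks directly that $j \circ k = \pi_2$ and that $k$ intertwines $\theta_{A \times M}$ with $\theta$.

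The two inversions are then short. For $h \circ k = \text{id}_{A \times M}$, apply $h(g\cdot n) = (g + h_1(n), h_2(n))$ with $g = a - h_1(\eta(m))$ and $n = \eta(m)$ and use $h_2 = j$, $j(\eta(m)) = m$, to get $h(k(a,m)) = (a - h_1(\eta(m)) + h_1(\eta(m)),\, j(\eta(m))) = (a,m)$. For $k \circ h = \text{id}_N$, fix $n \in N$; then $\eta(j(n))$ lies in the fibre $j^{-1}(j(n))$, which by the defining condition of a bundle is the $\theta$-orbit of $\eta(j(n))$, so freeness of $\theta$ yields a unique $g \in A$ with $n = g \cdot \eta(j(n))$, whence $h_1(n) = g + h_1(\eta(j(n)))$; plugging $h(n) = (h_1(n), j(n))$ into the formula gives $k(h(n)) = (h_1(n) - h_1(\eta(j(n)))) \cdot \eta(j(n)) = g \cdot \eta(j(n)) = n$.

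The step I expect to be the main obstacle is not any of these calculations but the point that a bijective partial monoid homomorphism need not have a homomorphic inverse, so one cannot simply argue "$h$ is a bijective homomorphism, hence an isomorphism"; this is exactly why I would present the inverse as the composite $\theta \circ (\text{id}_A \times \mathcal{R}(h_1))$ and lean on the Splitting Lemma for the homomorphy of $\mathcal{R}(h_1)$, rather than checking homomorphy of $k$ by hand. The secondary point to be careful about, independence of the chosen section $\eta$, reduces to the same cancellation $(-\,l(g \cdot x)) \cdot (g \cdot x) = (-\,l(x)) \cdot x$ already used in the definition of $\mathcal{R}$.
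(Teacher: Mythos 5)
Your proposal is correct, and the two inversion computations are in substance the ones the paper performs: for $h\circ h^{-1}=\mathrm{id}$ you use $h_1(g\cdot n)=g+h_1(n)$ and $h_2=j$ to cancel the $h_1(\eta(m))$ terms, and for $h^{-1}\circ h=\mathrm{id}$ you write $n=g\cdot\eta(j(n))$ uniquely via freeness and the fibre--orbit identification, exactly as in the paper. Where you genuinely diverge is in how you handle well-definedness and homomorphy. The paper checks independence of the section $\eta$ by a direct cancellation against an arbitrary $\gamma:M\to A$, and then stops after verifying the two-sided set-theoretic inverse; it never verifies that $h^{-1}$ is itself a (partial monoid, action) homomorphism, which is strictly needed for the headline claim that trivialisations are \emph{isomorphisms}, since a bijective homomorphism of partial monoids need not have a homomorphic inverse (the inverse could fail to preserve definedness of sums going backwards). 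You close this gap by factoring $h^{-1}=\theta\circ(\mathrm{id}_A\times\mathcal{R}(h_1))$, observing that $h_1=\pi_1\circ h$ is a left splitting and importing homomorphy of $\mathcal{R}(h_1)$ (and its independence of $\eta$) from the Splitting Lemma, with homomorphy of $\theta$ coming from the definition of an action. This is precisely the decomposition the paper gestures at in the remark preceding the lemma but does not carry out in the proof, so your version is both consistent with the paper's intent and more complete; the only cost is the dependence on the Splitting Lemma, which the paper's purely computational proof of the inverse identities avoids.
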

\begin{proof}
    We first show that $h^{-1}$ is independent of the choice of the section $\eta$.
    Any other section $\eta' = m \mapsto \gamma(m) \cdot \eta(m)$ differs from 
    $\eta$ by  some $\gamma:M \to A$.
    If we expand the definition of $h^{-1}$ using the section $\eta'$
    in terms of $\gamma$ and $\eta$ we see that the terms involving $\gamma$ cancels
    out. For any $m \in M$ we have
    \begin{align}
        (a - h_1(\eta'(m))) \cdot \eta'(m) &=
            (a - h_1(\gamma(m) \cdot \eta(m))) \cdot (\gamma(m) \cdot \eta(m))\\
            &= (a - \gamma(m) - h_1(\eta(m)) + \gamma(m)) \cdot \eta(m)\\
            &= (a - h_1(\eta(m))) \cdot \eta(m)
    \end{align}
    Therefore $h^{-1}$ is independent of the choice of $\eta$.
    
    Next, we check that $h^{-1}$ is a left inverse to $h$. Let $n \in N$.
    To see that $h^{-1}(h(n)) = n$ we first expand 
    $h(n) = (h_1(n), h_2(n))$
    into the two components of the product.
    Using the section $\eta$ we can write $n$ uniquely on the form
    \begin{equation}
        n = a \cdot \eta(m)
    \end{equation}
    where $a \in A$ and $m = j(n)$. Because $h$ is a trivialisation
    $j(n) = h_2(n)$.
    \begin{align}
        h^{-1}(h(n)) &= h^{-1}(h_1(n), h_2(n))\\
            &= h^{-1}(h_1(n), m)
    \end{align}
    Because both $h$ and $\text{proj}_1$ preserve the action of
    $A$ we have $h_1(a \cdot \eta(m)) = a + h_1(\eta(m))$
    therefore if we plug $(h_1(n), m)$ into $h^{-1}$
    the terms involving $h_1(\eta(m))$ cancels out
    \begin{align}
        h^{-1}(h_1(n), m) &= (h_1(n) - h_1(\eta(m))) \cdot \eta(m)\\
            &= (h_1(a \cdot \eta(m)) - h_1(\eta(m))) \cdot \eta(m)\\
            &= (a + h_1(\eta(m)) - h_1(\eta(m))) \cdot \eta(m) \\
            &= a \cdot \eta(m) = n
    \end{align}
    as required. 
    
    Finally, we check that $h^{-1}$ is a right inverse to $h$. Let $(a,m) \in A \times M$.
    We first expand the definition of $h^{-1}(a,m)$
    \begin{align}
        h(h^{-1}(a,m)) &= h((a - h_1(\eta(m))) \cdot \eta(m))
    \end{align}
    and then separately check that $h_1(h^{-1}(a, m)) = a$ and 
    $h_2(h^{-1}(a, m))_2 = m$. For the
    first part we use the fact that $h_1$ is an $A$-action homomorphism
    \begin{align}
        h_1((a - h_1(\eta(m))) \cdot \eta(m)) = (a - h_1(\eta(m))) + h_1(\eta(m)) = a
    \end{align}
    The second part follows because $h$  maps the fiber $j^{-1}(m)$
    to $A \times \{m\}$
    \begin{align}
        h_2((a - h_1(\eta(m))) \cdot \eta(m)) = m
    \end{align}
    as required.
\end{proof}

\subsection{Extending local splittings} \label{section:extending-splittings}

The splitting lemma gives a correspondence between left splittings,
right splittings, and trivialisations. We now show that this
correspondence is compatible with restrictions. This means that the problem
of extending a right splitting, left splitting, or trivialisations
defined on a sub-bundle are all equivalent.

Suppose that $B = (N, j, \theta)$ is a $G$-bundle over a partial monoid $M$,
and that $M' \subset M$ is a sub partial monoid.
We first explain that $B$ restricts to a sub bundle over $M'$.

The pre-image $j^{-1}(M') \subset N$ is a sub-partial
monoid of $N$, and it is closed under the action $\theta$.
We can therefore restrict $B$ to a $G$-bundle over $M'$
by restricting both the bundle map $j$ and action $\theta$.
\begin{definition}
    Let $B = (N, j, \theta)$ be a $G$-bundle over a partial monoid
    $M$ and $M' \subset M$ a sub partial monoid. The \emph{restriction
    of $B$ to $M'$}, denoted by $\res{B}{M'}$, is the $G$-bundle over $M'$
    \begin{equation}
        \res{B}{M'}  :=
            \big(N',
            j',
            \theta'
            \big)
    \end{equation}
    where  $N' := j^{-1}(M')$ and
    $j':N' \to M', \theta':G \times N' \to N'$
    are the restrictions of $j$ and $\theta$ to $N'$.
\end{definition}

Because the maps in the splitting lemma are defined
pointwise they are natural with respect to restrictions.
The problems of extending a left splitting, 
right splitting, or trivialisation of the restricted
bundle $\res{B}{M'}$ to $B$ are therefore equivalent.

\begin{lemma}
    Let $B = (N, j, \theta)$ be a $G$-bundle over a partial
    monoid $M$, $M' \subset M$ a sub partial monoid,
    and $l':M' \to G$ a left splitting
    of the restricted bundle $\res{B}{M'}$.
    The following conditions are equivalent:
    \begin{itemize}
        \item There exists a left splitting $l:N \to G$
            such that $\res{l}{M'} = l'$.
        \item There exists a trivialisation $h:N \to G \times M$
            such that $\res{h}{N'} = <l', j>$.
        \item There exists a right splitting $r:M \to N$ such
        that $\res{r}{M'} = \mathcal{R}(l')$.
    \end{itemize}
\end{lemma}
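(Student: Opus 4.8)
The plan is to reduce everything to the Splitting lemma together with the observation that all three correspondences it sets up are defined pointwise, and hence commute with passage to the sub-bundle $\res{B}{M'}$. Concretely, write $N' := j^{-1}(M')$ and $j', \theta'$ for the restricted maps, so that $\res{B}{M'} = (N', j', \theta')$. Applying the Splitting lemma to $\res{B}{M'}$ gives bijections $l' \mapsto \langle l', j'\rangle$ between its left splittings and its trivialisations, and $l' \mapsto \mathcal{R}_{\res{B}{M'}}(l')$ between its left and right splittings, with explicit inverses $\pi_1 \circ (-)$ and $\mathcal{R}^{-1}_{\res{B}{M'}}$. The whole statement then follows once we know these bijections are \emph{natural in restriction}, i.e.\ that restricting a global splitting and then applying the correspondence agrees with first applying the (global) correspondence and then restricting.

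First I would check the two naturality squares. For trivialisations: if $l : N \to G$ is a left splitting of $B$ and $\res{l}{N'} = l'$, then for every $n \in N'$ both $\res{\langle l, j\rangle}{N'}(n)$ and $\langle l', j'\rangle(n)$ equal $(l(n), j(n))$, so the two maps coincide; and since $\pi_1 \circ \langle l', j'\rangle = l'$, restriction also commutes with the inverse correspondence $h \mapsto \pi_1 \circ h$. For right splittings the only subtlety is the auxiliary section: recall $\mathcal{R}(l)(m) = -l(\eta(m))\cdot \eta(m)$ for an arbitrary $\eta : M \to N$ with $j \circ \eta = \mathrm{id}_M$, and that $\mathcal{R}(l)$ does not depend on the choice of $\eta$. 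Because $j^{-1}(M') = N'$, I can choose $\eta$ so that $\res{\eta}{M'}$ lands in $N'$ and is a section $\eta'$ of $j'$; then for $m \in M'$ the formula only involves $\eta(m) \in N'$ and the values of $l$ on $N'$, whence $\res{\mathcal{R}(l)}{M'}(m) = -l'(\eta'(m))\cdot\eta'(m) = \mathcal{R}_{\res{B}{M'}}(l')(m)$. Section-independence makes this identity hold for every choice of $\eta$, and taking inverses gives naturality of $\mathcal{R}^{-1}$ as well.

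With the two squares in hand the equivalence is purely formal. If $l : N \to G$ is a left splitting with $\res{l}{N'} = l'$, then $h := \langle l, j\rangle$ is a trivialisation with $\res{h}{N'} = \langle l', j'\rangle$ and $r := \mathcal{R}(l)$ is a right splitting with $\res{r}{M'} = \mathcal{R}(l')$. Conversely, given such a trivialisation $h$, the map $\pi_1 \circ h$ is a left splitting of $B$ restricting to $\pi_1 \circ \langle l', j'\rangle = l'$; and given such a right splitting $r$, the map $\mathcal{R}^{-1}(r)$ is a left splitting of $B$ restricting to $\mathcal{R}^{-1}_{\res{B}{M'}}(\mathcal{R}_{\res{B}{M'}}(l')) = l'$. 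Chaining these implications closes the loop. I expect the only real work to be the bookkeeping in the right-splitting square — making sure the witnessing section can be taken compatible with the sub-bundle and invoking section-independence so that this choice is harmless; everything else is a pointwise verification already done, in the global case, inside the Splitting lemma.
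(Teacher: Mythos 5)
Your proposal is correct and is exactly the argument the paper intends: the paper states this lemma without a formal proof, justifying it only by the remark that the splitting-lemma correspondences are defined pointwise and hence natural with respect to restriction, which is precisely what you verify (your only extra care, choosing $\eta$ to land in $N'$ on $M'$, is automatic since any section satisfies $\eta(m)\in j^{-1}(m)\subset j^{-1}(M')=N'$). No gaps.
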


\section{Cohomology of commutative partial monoids} \label{section:relative-cohomology}

We concluded the previous section by explaining that for
a $G$-bundle $B$ over a partial monoid $M$
the problems of extending either a left splitting, right splitting,
or trivialisation, defined on a sub bundle are equivalent.
In this section we show that this problem can be given a cohomological
characterisation. The construction can be seen as a generalisation
of \emph{group cohomology}.

Let $G$ and $K$ be groups. A well-known problem in group
theory is to classify the possible group extensions
of $K$ by $G$. An elegant solution to this problem
is given by \emph{group cohomology} \cite{brown_cohomology_2012}.
Two group extensions are \emph{equivalent}
if they are related by an isomorphism:
\begin{equation}
    \begin{tikzcd}
        G \arrow[dr, "i'"] \arrow[r, "i"] & H \arrow[d, "h"] \arrow[r, "j"] & K \\
         & H' \arrow[ur, "j'"] &
    \end{tikzcd}
\end{equation}
So in particular an extension splits if it is equivalent to the trivial extension.

For any group $K$ there is a topological
space $X_K$ called the \emph{classifying space} of $K$.
There is a bijection between the second cohomology group $H^2(X_K, G)$
of $X_K$ with coefficients in $G$ ,
and equivalence classes of group extensions.
In particular, the equivalence class of the trivial extension
correspond to the zero class $0 \in H^2(X_K;G)$.

In this section, we first generalise group cohomology. 
In Section \ref{section:grp-coho}
we define the \emph{relative cohomology groups} $H^n(M,M';G)$
of a partial monoid $M$ with respect to a sub partial monoid $M \subset M'$
with coefficients in a group $G$.
In \ref{section:obstruction-splitting} we define for any
local right splitting $r$ of a sub-bundle
a cohomological obstruction $\mu(r) \in H^2(M,M';G)$
and we show that $\mu(r) = 0$ if and only
if $r$ can be extended to a global splitting.

\subsection{The cohomology groups of a partial monoid} \label{section:grp-coho}
Let $G$ be a group, $M$ a partial monoid, and $M' \subset M$ a sub partial monoid.
In this section, we define the relative cohomology groups $H^n(M,M';G)$.

We begin by defining a family of sets $\{M_n\}_{n \in \nats}$
and \emph{boundary maps} $\delta_{n,i}:M_n \to M_{n-1}$, where $i=1,\dots,n$ encoding
the structure of a partial monoid $M$.
\begin{definition}
    Let $K$ be a commutative partial monoid.
    $\{K_n\}_{n \geq 0}$
    and $\delta_{n,i}:K_n \to K_{n-1}$,
    where $n \geq 1, i=0,\dots,n$
    are defined by $K_0 := \{()\}$ and when $n \geq 1$
    \begin{align}
    	K_n &:= \{(k_1, k_2, \cdots, k_n) \in K^{n} \mid k_1 + k_2 + \cdots + k_n \text{ is defined} \}\\
        \delta_{n,i} &::= (k_1, \dots, k_n) \mapsto
            (k_1, \dots, k_{i-1}, k_i + k_{i+1}, k_{i+2}, \dots, k_n)
    \end{align}
\end{definition}

By considering the set of functions
$f:M_n \to G$ that vanish on $M'_n \subset M_n$
we define the relative co-chain complex
\begin{equation}
    \vcenter{
        \hbox{
            \begin{tikzcd}
                0 \arrow[r, "d^{-1} := 0" above] &
                C^0(M,M';G) \arrow[r, "d^0" above] &
                C^1(M,M';G) \arrow[r, "d^1" above] &
                C^2(M,M';G) \arrow[r, "d^2" above] &
                \cdots
            \end{tikzcd}
        }
    }
\end{equation}
\begin{definition}
    Let $G$ be a group, $M$ a partial monoid, and $M' \subset M$ a sub partial monoid.
    \begin{enumerate}
        \item The \emph{relative $n$-cochains}, denoted by $C^n(M,N;G)$,
            is the commutative group of assignments $f:M_n \to G$ that vanish 
            on $M'_n$.
            \begin{align}
                    C^n(M',M;G) &:= \{f:M_n \to G \mid \res{f}{{M'}_n} = 0\}
            \end{align}
        \item The $n$'th coboundary map, denoted by $d^n$
            is the following homomorphism from the relative $n$-cochains
            to relative $(n-1)$-cochains
            \begin{align}
                d^n &:C^n(M', M;G) \to C^{n+1}(M', M;G)\\
                d^n(f) &:= 
                    (m_1, \dots, m_n) \mapsto \sum_{i=0}^{n} (-1)^i f(\delta_{n,i}(m_1, \dots, m_n))
            \end{align}
    \end{enumerate}
\end{definition}

The relative cohomology groups $H^n(M,M';G)$ are the cohomology groups
of this co-chain complex. To verify that this in fact
defines a co-chain complex we need to verify
that $d^{n+1} \circ d^n = 0$. This can be done by a straightforward computation.
In our case, it is only necessary to verify this for the maps
\begin{align}
    \label{defd2}
	d^2(f)(m_1,m_2,m_3) &= 
	    f(m_2,m_3) - f(m_1 + m_2,m_3) + f(m_1, m_2 + m_3) - f(m_1,m_2)\\
	d^1(f)(m_1,m_2) &= f(m_2) - f(m_1+m_2) + f(m_1)\\
	d^0 &= 0
\end{align}
which is easily done.
\begin{definition}
    Let $G$ be a commutative group $M$ a commutative partial monoid
    and $M' \subset M$ a sub partial monoid.
    \begin{enumerate}
          \item  The \emph{relative $n$-cocycles} 
                    $Z^n(M,N;G) := \ker{d^n}$ is the kernel
                    of $d^n:C^n(K,M;G) \to C^{n-1}(K,M;G)$.
            \item The \emph{relative $n$-coboundaries}
                is the image $B^n(M,N;G) := \im{d^{n-1}}$
                of $d^n:C^n(K,M;G) \to C^{n-1}(K,M;G)$.
        \item The \emph{relative cohomology group}
            $H^n(K,M;G) := Z^n(K,M;G) / B^n(K,M;G)$ is the quotient of the
                relative $n$-cocycles over the relative $n$-coboundaries.
    \end{enumerate}
\end{definition}
    
\subsection{The obstruction to extending a local splitting}
\label{section:obstruction-splitting}

Let $B = (N, \theta, j)$ be a $G$-bundle over a partial monoid $M$,
$M' \subset M$ a sub partial monoid, and $r':M' \to N'$
a right splitting of the restriction $\res{B}{M'}$ of $B$ to $M'$.

To define the cohomological obstruction $\mu(r') \in H^2(M,M';G)$
we first choose a function $\eta:M \to N$, not necessarily a homomorphism,
such that $j \circ \eta = \text{id}_M$ and $\res{\eta}{M'} = r'$.
$\eta$ is not necessarily a homomorphism but because
$j \circ \eta = \text{id}_M$ there is for every $m_1, m_2 \in M$
some $g \in G$ such that 
$\eta(m_1 +_M m_2) = g \cdot (\eta(m_1) +_M \eta(m_2))$. 
Because $\theta$ is free this $g$ is unique.

\begin{definition}
    Let $B = (N, j, \theta)$ be an $G$-bundle over a 
    partial monoid $M$, $M' \subset M$ a sub partial monoid,
    and $\eta:M \to N$ a function
    such that $\eta \circ j = \text{id}_M$.
    Write  $\sExtFailNoArgs :M_2 \to G$ for the unique
    function satisfying
    \begin{align}
        \eta(m_1 + m_2) = \sExtFail{m_1}{m_2} \cdot (\eta(m_1) + \eta(m_2))
    \end{align}
    for all $(m_1, m_2) \in M_2$.
\end{definition}

$\sExtFailNoArgs$ can be thought of as measuring the failure
of $\sExtNoArgs$ to be a splitting
because $\sExtNoArgs$ is a homomorphism 
if and only if $\sExtFailNoArgs = 0$.
We define $\mu(r')$ to be the cohomology class of $\Delta \eta$.
\begin{definition}
    Let $B = (N, j, \theta)$ be an $G$-bundle over $M$,
    $M' \subset M$ a sub-partial monoid,
    and $r':M' \to N'$ a right splitting of $\res{B}{M'}$.
    The \emph{obstruction to $r'$},
    is the cohomology class
    \begin{align}
        \mu(r') := [\Delta \eta] \in H^2(M',M';A)
    \end{align}
    where $\eta:M \to N$ is any function such that $\res{\eta}{M'} = r'$
    and $j \circ \eta = \text{id}_M$.
\end{definition}

For this to be well defined we need to check that $\Delta \eta$
is a relative co-cycle and that the cohomology class
$[\sExtFailNoArgs]$ is independent of the choice of $\eta$.
\begin{lemma}
    Let $B = (N, j, \theta)$ be an $A$-bundle over a commutative partial monoid
    $M$, $M' \subset M$ a sub partial monoid, and
    $r$ a right splitting of the restricted bundle $\res{B}{M'}$.
    \begin{enumerate}
        \item $\Delta \eta \in Z^2(M',M;A)$ for any $\eta:M \to N$ such
        that $\res{\eta}{M'} = r'$ and $j \circ \eta = \text{id}_M$.
        \item $(\Delta \eta - \Delta \eta') \in B^2(M',M;A)$ for any
            two $\eta, \eta':M' \to N'$ such that $\res{\eta}{M'} = \res{\eta'}{M'} = r'$ and $j \circ \eta = j \circ \eta' = \text{id}_M$.
    \end{enumerate}
\end{lemma}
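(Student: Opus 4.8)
The plan is to verify the two claims by direct computation with the coboundary maps $d^2$ and $d^1$ written out explicitly in Eq.\ (\ref{defd2}), using the defining relation for $\Delta\eta$ and the freeness of the action $\theta$ to pin down the relevant group elements uniquely. The key algebraic fact I will use repeatedly is that since $\theta$ is free, whenever two expressions of the form $g \cdot n$ and $g' \cdot n$ are equal we may conclude $g = g'$; this is what converts identities in $N$ into identities in $G$.

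For part (1), I would take a triple $(m_1, m_2, m_3) \in M_3$ and compute $\eta(m_1 + m_2 + m_3)$ in two ways: first as $\Delta\eta(m_1+m_2, m_3) \cdot (\eta(m_1+m_2) + \eta(m_3))$ and then expanding $\eta(m_1 + m_2)$ via $\Delta\eta(m_1,m_2)$; and symmetrically grouping as $m_1 + (m_2 + m_3)$. Both expansions yield $\eta(m_1+m_2+m_3)$ written as (some element of $G$) acting on $\eta(m_1) + \eta(m_2) + \eta(m_3)$, and by freeness the two group elements coincide. Reading off and rearranging gives exactly $d^2(\Delta\eta)(m_1,m_2,m_3) = 0$, so $\Delta\eta \in Z^2$. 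I also need $\Delta\eta$ to be a \emph{relative} cochain, i.e.\ to vanish on $M'_2$: this holds because $\res{\eta}{M'} = r'$ is a homomorphism, so for $(m_1,m_2) \in M'_2$ we have $\eta(m_1 + m_2) = \eta(m_1) + \eta(m_2)$, forcing $\Delta\eta(m_1,m_2) = 0$ by freeness. So $\Delta\eta \in C^2(M',M;A)$ and then $\in Z^2(M',M;A)$.

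For part (2), let $\eta, \eta'$ be two such lifts. Since $j \circ \eta = j \circ \eta' = \mathrm{id}_M$ and the orbits of $\theta$ are the fibers of $j$, there is for each $m$ a unique $\beta(m) \in G$ with $\eta'(m) = \beta(m) \cdot \eta(m)$; freeness gives uniqueness, and $\res{\eta}{M'} = \res{\eta'}{M'} = r'$ forces $\res{\beta}{M'} = 0$, so $\beta \in C^1(M',M;A)$. Now expand $\eta'(m_1 + m_2) = \Delta\eta'(m_1,m_2) \cdot (\eta'(m_1) + \eta'(m_2))$ using $\beta$ on each side and compare with the analogous expansion of $\eta(m_1+m_2)$; since $\theta$ is a homomorphism $G \times N \to N$, the group elements accumulate additively, and freeness yields $\Delta\eta'(m_1,m_2) = \Delta\eta(m_1,m_2) + \beta(m_1) + \beta(m_2) - \beta(m_1 + m_2)$, i.e.\ $\Delta\eta' - \Delta\eta = -\,d^1(\beta) \in B^2(M',M;A)$ (up to the sign convention in Eq.\ (\ref{defd2}), which I would match exactly when writing this out).

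The main obstacle is bookkeeping rather than conceptual: keeping the additive group elements and the action bookkeeping straight through the nested expansions, and making sure every intermediate expression is legitimate — in particular that the relevant partial sums in $M$ and $N$ are actually defined so that associativity of the partial monoid applies. The axioms of a commutative partial monoid (specifically, the clause that pairwise-definedness of $m+m', m+m'', m'+m''$ forces both triple associations to be defined) are exactly what is needed to justify regrouping $\eta(m_1) + \eta(m_2) + \eta(m_3)$, so I would invoke them explicitly at that point. Everything else is a routine application of freeness of $\theta$.
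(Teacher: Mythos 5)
Your proposal is correct and follows essentially the same route as the paper: part (1) by expanding $\eta(m_1+m_2+m_3)$ via both associations and using freeness of $\theta$ to equate the resulting group elements, and part (2) by introducing the unique $1$-cochain relating $\eta$ and $\eta'$ and comparing the two expansions of $\eta(m_1+m_2)$. The sign ambiguity you flag resolves to $\Delta\eta'-\Delta\eta = d^1(\beta)$ with the paper's convention, which changes nothing since a coboundary's negative is a coboundary; your explicit attention to definedness of the partial sums is a point the paper leaves implicit.
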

\begin{proof}
    For 1.\ we first have to show that $\sExtFailNoArgs$ is a \emph{relative} cochain, that is, that $\sExtFailNoArgs$ vanishes
    on $(M_2')$, and secondly that
    \begin{align}
            \sExtFail{m_2}{m_3}
            - \sExtFail{m_1 + m_2}{m_3}
            + \sExtFail{m_1}{m_2 + m_3}
    	    - \sExtFail{m_1}{m_2}
    	    = 0
	\end{align}
	for all $(m_1, m_2, m_3) \in M_3$. That $\sExtFailNoArgs$
	vanishes on $M_2'$ is clear because its restriction
	is a homomorphism. For the second part we use that
	$m_1 + m_2 + m_3$ can be written as
	both $m_1 + (m_2 + m_3)$ and $(m_1 + m_2) + m_3$.
	\begin{align*}
	    \sExt{m_1 + (m_2 + m_3)}
	        &= 
	            \sExtFail{m_1}{m_2 + m_3} 
	                \cdot (\sExt{m_1} + \sExt{m_2 + m_3})\\
            &= (\sExtFail{m_1}{m_2 + m_3} + \sExtFail{m_2}{m_3})
	            \cdot (\sExt{m_1} + \sExt{m_2} + \sExt{m_3})
	\end{align*}
	and similarly
	\begin{align*}
	    \sExt{(m_1 + m_2) + m_2} &=
	        (\sExtFail{m_1 + m_2}{m_3} + \sExtFail{m_1}{m_2})
	            \cdot (\sExt{m_1} + \sExt{m_2} + \sExt{m_3})
	\end{align*}
	Because the two terms are equal and the action is free
	\begin{align}
	\sExtFail{m_1}{m_2 + m_3} + \sExtFail{m_2}{m_3}
	 =
	 \sExtFail{m_1 + m_2}{m_3} + \sExtFail{m_1}{m_2}
	\end{align}
	as required.
    
    For 2.\ suppose that $\sExtNoArgs, \sExtNoArgs'$ are two sections
    that extend $r$. We have to show that there is
    some $\gamma : C^1(M', M ; A)$ such that
    \begin{align}
        \Delta \eta\, (m_1, m_2) - \Delta \eta'\, (m_1, m_2) =
            \gamma(m_1) - \gamma(m_1 + m_2) + \gamma(m_2)
    \end{align}
    for all $(m_1, m_2) \in M_2$. Let $\gamma : M \to A$
    be the unique function such that 
    \begin{equation}
        \eta = m \mapsto s(m) \cdot \eta'(m)
    \end{equation}
    Because $\eta,\eta'$ both extend $r$ we have $\res{s}{M'} = 0$
    and so $\gamma$ is a relative cochain, 
    $\gamma \in C^1(M',M;A)$.
    Expanding $\eta(m_1 + m_2), \eta(m_1), \eta(m_2)$ in terms 
    of $\gamma$ and $\eta'$
    gives
    \begin{align}
	    \eta(m_1 + m_2) &= 
	            \Delta \eta\, (m_1,m_2) \cdot (\eta(m_1) + \eta(m_2))\\
	                &= (\Delta \eta\, (m_1, m_2) + s(m_1) + s(m_2)) \cdot (\eta'(m_1) + \eta'(m_2))
    \end{align}
    and
    \begin{align}
        \eta(m_1 + m_2) &= s(m_1 + m_2) \eta'(m_1 + m_2)\\
            &= (s(m_1 + m_2) + \Delta \eta'\, (m_1, m_2)) \cdot (\eta'(m_1) + \eta'(m_2))
    \end{align}
    hence
    \begin{align}
        \Delta \eta\, (m_1, m_2) + s(m_1) + s(m_2)
        =
        s(m_1 + m_2) + \Delta \eta'\, (m_1, m_2)
    \end{align}
    as required.
\end{proof}

Observe that the obstruction is \emph{sound}
in the sense that if $r'$ can be extended to a right
splitting $r:M \to N$ then $[\Delta \eta] = 0$.
This is true because if such an $r$ exists
then the cohomology class $[\Delta \eta]$ is equal
to the cohomology class $[\Delta r]$ had we instead
chosen $r$. Because $r$ is a homomormphism $\Delta r = 0$
and so $[\Delta r] = 0$ as required.

We now show that the obstruction in fact is complete
in the sense that $\mu(r') = 0$ if and only if $r'$ can be extended globally.
\begin{theorem}
    Let $B = (N, j, \theta)$ be a $G$-bundle over
    a partial monoid $M$, $M' \subset M$ a sub partial monoid,
    and $r$ a right splitting of $\res{B}{M'}$.
    There exists a right splitting $r:M \to N$
    such that $\res{r}{M'} = r'$
    if and only if $\mu(r') = 0$.
\end{theorem}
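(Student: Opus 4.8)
The strategy is to prove the two directions separately. The forward direction (existence of a global splitting $r$ implies $\mu(r')=0$) is already observed to be sound in the paragraph preceding the theorem: if $r$ extends $r'$ then we may take $\eta := r$, so that $\Delta\eta = \Delta r = 0$ since $r$ is a homomorphism, and hence $\mu(r') = [\Delta r] = 0$. The real content is the converse: assuming $\mu(r') = 0$, construct a global right splitting $r : M \to N$ extending $r'$.

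For the converse, I would start by unwinding what $\mu(r') = 0$ says. Choose any section $\eta : M \to N$ with $j\circ\eta = \mathrm{id}_M$ and $\res{\eta}{M'} = r'$; such a section exists because $r'$ is a section over $M'$ and $j$ is surjective. Then $\Delta\eta \in Z^2(M',M;G)$ by the previous lemma, and $\mu(r') = [\Delta\eta] = 0$ means $\Delta\eta \in B^2(M',M;G)$, i.e.\ there is a relative $1$-cochain $\gamma \in C^1(M',M;G)$ — so $\res{\gamma}{M'_1} = 0$ — with
\begin{align}
    \Delta\eta(m_1,m_2) = \gamma(m_1) - \gamma(m_1+m_2) + \gamma(m_2)
\end{align}
for all $(m_1,m_2)\in M_2$. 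Now I define a corrected section $r := m \mapsto (-\gamma(m))\cdot \eta(m)$. The plan is to check that $r$ is the desired right splitting: first, $j\circ r = j\circ\eta = \mathrm{id}_M$ because the bundle map $j$ is constant on orbits of $\theta$; second, $\res{r}{M'} = r'$ because $\res{\gamma}{M'_1} = 0$ forces $r(m') = 0\cdot\eta(m') = \eta(m') = r'(m')$ for $m'\in M'$; third, $r$ preserves the identity, using that $\gamma$ and $\Delta\eta$ vanish appropriately at $0$ (here one uses $\Delta\eta(0,0)$ and the defining identity to pin down $\gamma(0)$, analogous to the identity-preservation checks in the splitting lemma).

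The main step — and the one I expect to be the crux of the calculation — is verifying that $r$ is a homomorphism. Take $(m_1,m_2)\in M_2$. On one hand,
\begin{align}
    r(m_1+m_2) = (-\gamma(m_1+m_2))\cdot\eta(m_1+m_2) = (-\gamma(m_1+m_2) + \Delta\eta(m_1,m_2))\cdot(\eta(m_1)+\eta(m_2)),
\end{align}
using the definition of $\Delta\eta$. On the other hand,
\begin{align}
    r(m_1) + r(m_2) = (-\gamma(m_1))\cdot\eta(m_1) + (-\gamma(m_2))\cdot\eta(m_2) = (-\gamma(m_1)-\gamma(m_2))\cdot(\eta(m_1)+\eta(m_2)),
\end{align}
since $\theta$ is a homomorphism $G\times N \to N$ so the action distributes over the partial monoid sum. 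Comparing the two $G$-factors, equality of $r(m_1+m_2)$ and $r(m_1)+r(m_2)$ is exactly the coboundary identity $\Delta\eta(m_1,m_2) = \gamma(m_1)-\gamma(m_1+m_2)+\gamma(m_2)$, which holds by choice of $\gamma$. Since the action is free, matching $G$-factors suffices to conclude the two elements of $N$ are equal. This establishes that $r$ is a homomorphism, hence a right splitting extending $r'$, completing the converse. The only subtleties to handle carefully are the well-definedness of the action-twisting manipulations when partial sums are involved (every sum appearing is defined because $(m_1,m_2)\in M_2$, and $\theta$ being a total action poses no problem) and the identity-element bookkeeping, both of which are routine given the lemmas already proved.
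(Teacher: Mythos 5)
Your proof is correct and follows essentially the same route as the paper: choose a section $\eta$ extending $r'$, use $[\Delta\eta]=0$ to write $\Delta\eta$ as a coboundary $d^1(\gamma)$ with $\gamma$ vanishing on $M'_1$, and correct $\eta$ by $\gamma$ to obtain the global right splitting. One small remark: with the stated convention $d^1(\gamma)(m_1,m_2)=\gamma(m_1)-\gamma(m_1+m_2)+\gamma(m_2)$, matching the $G$-factors in your final computation actually yields $\Delta\eta(m_1,m_2)=\gamma(m_1+m_2)-\gamma(m_1)-\gamma(m_2)=-d^1(\gamma)(m_1,m_2)$, so the correction should be $r(m)=\gamma(m)\cdot\eta(m)$ (or equivalently replace $\gamma$ by $-\gamma$) --- a harmless sign slip that the paper's own write-up shares.
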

\begin{proof}
    We have already explained that $\mu(r') = 0$ if $r'$ can
    be extended globally. For the converse
    suppose that $\mu(r') = 0$.
    
    We extend $r'$ to a global right splitting $r:M \to N$
    by first choosing a function $\eta:M \to N$ such that
    $\eta \circ j = \text{id}_M$
    and $\res{\eta}{M'} = r'$.
    Because $[\Delta \eta] = 0$ there is a unique $\gamma \in C^1(M',M;A)$
    such that $\Delta \eta = d^1(\gamma)$. We define $r$ by
    \begin{align}
        r(m) ::= -\gamma(m) \cdot \eta(m)
    \end{align}
    for all $m \in M$.
    To see that $r$ is a homomorphism take $m_1, m_2 \in M$
    such that $m_1 +_M m_2$ is defined.
    We have
    \begin{align}
        r(m_1 + m_2) &= -\gamma(m_1 + m_2) \cdot \eta(m_1 + m_2)\\
        \eta(m_1 + m_2) &= \Delta \eta\, (m_1, m_2) \cdot (\eta(m_1) + \eta(m_2))\\
        \Delta \eta\, (m_1, m_2) &= d^1(\gamma)(m_1,m_2) = \gamma(m_1) - \gamma(m_1 + m_2) + \gamma(m_2)
    \end{align}
    Hence
    \begin{align}
        r(m_1 + m_2) &= -\gamma(m_1 + m_2) \cdot \eta(m_1 + m_2)\\
                    &= (-\gamma(m_1 + m_2) + \Delta \eta\, (m_1,m_2)) \cdot (\eta(m_1) + \eta(m_2))\\
                    &= (-\gamma(m_1 + m_2) + \gamma(m_1) - \gamma(m_1 + m_2) + \gamma(m_2)) \cdot (\eta(m_1) + \eta(m_2))\\
                    &= r(m_1) + r(m_2)
    \end{align}
    as required.
\end{proof}

\section{Measurement scenarios with bundle structure}
\label{section:bundle-scenarios}

We now introduce a class of measurement scenarios and empirical models
generalising the state-dependent and state independent
empirical models $\pmodel_{\mathcal{O}}, \pmodel_{\mathcal{O}, \psi} : (\mathcal{O}, \mathcal{M}, \mathbb{Z}_d)$ associated with a closed set of Weyl operators
$\mathcal{O} \subset P_{n,d}$ and a state $\psi$.
We first give an abstract description of the structure of these
models. In Section \ref{section:detecting-homomorphisms} we explain
that for models of this type a test for (non) contextuality
is to extend a local \emph{homomorphism} to a global homomorphism.
We give two examples, based on GHZ and Mermin's square.
In Section \ref{section:topological-obstruction} we show
that these scenarios can be given a bundle structure,
and that the cohomological obstruction to extending
local splittings can be used to detect contextuality.

Let $\mathcal{O} \subset P_{n,d}$ be a closed set of Weyl operators.
$\mathcal{O}$ contains the identity
and is closed under products of commuting operators.
Restricting the group product of $P_{n,d}$ to pairs of commuting operators
, therefore, gives $\mathcal{O}$ the structure of a partial monoid.
The set of operators $\mathcal{O}$ is also closed under 
the $\mathbb{Z}_d$-action
\begin{align}
    \Omega : \mathbb{Z}_d \times \mathcal{O} \to \mathcal{O} := (p, O) \mapsto \omega^p O
\end{align}
where $\omega := e^{2 \pi i / d}$. Note that $\Omega$
is compatible (Definition \ref{def:action}) 
with the partial monoid product because
\begin{align}
    \omega^p O \omega^{p'} O' = \omega^{p+p'}OO'
\end{align}
for all $p \in \mathbb{Z}_d$ and commuting $O,O' \in \mathcal{O}$.

Let $C \subset \mathcal{O}$ be a maximal context of pairwise
commuting operators. $C$ is then a \emph{submonoid}
of $\mathcal{O}$, i.e. all products are defined.
Because two operators $\omega^q O$ and $O$ that differ
by some $\omega^q$ commute we also have that $C$ is closed
under the action $\Omega$. Write $\Omega_C$ for the restriction
of $\Omega$ to $C$. We now observe that
any value assignment $s:C \to \mathbb{Z}_d$ that is consistent
with quantum mechanics preserves the action $\Omega_C$.
\begin{lemma}
    Let $\mathcal{O} \subset P_{n,d}$ be a closed set of Weyl operators
    and $C \subset \mathcal{O}$ a maximal context.
    A joint outcome assignment $s:C \to \mathbb{Z}_d$
    that is consistent with quantum mechanics
    is an \emph{action homomorphism} (Definition \ref{def:action-homomorphism}) 
    from $\Omega_C$ to 
    the action $\theta_{\mathbb{Z}_d}$ of $\mathbb{Z}_d$ on itself.
\end{lemma}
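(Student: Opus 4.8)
The plan is to exploit the fact that a context $C$ of pairwise commuting operators is simultaneously diagonalisable, so that a consistent outcome assignment is realised by an honest common eigenvector, and then to read off both the homomorphism property and the intertwining property of Definition~\ref{def:action-homomorphism} directly from the eigenvalue equations. First I would unwind the hypothesis: $s : C \to \mathbb{Z}_d$ being consistent with quantum mechanics means there is a state $\psi$ with $\bigl[\prod_{M \in C} M_{s(M)}\bigr]\psi \neq 0$, where $M_o$ is the projector onto the $\omega^o$-eigenspace of $M$ under the identification $\omega^i \mapsto i$. Since the operators of $C$ commute their eigenprojectors commute, so $\prod_{M \in C} M_{s(M)}$ is the orthogonal projector onto the joint eigenspace $E_s := \{\phi : M\phi = \omega^{s(M)}\phi \text{ for all } M \in C\}$ (the product is finite since $P_{n,d}$ is). Hence consistency says exactly that $E_s \neq \{0\}$, and I fix a nonzero $\phi \in E_s$; everything below is bookkeeping with $\phi$.

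Next I would check that $s$ is a homomorphism of partial monoids $C \to \mathbb{Z}_d$, noting that $C$ is a genuine submonoid of $\mathcal{O}$ because all its products are defined. For the unit: $I \in C$ (the identity lies in every maximal context), $I$ has only the eigenvalue $\omega^0$, and $\phi \neq 0$, so $s(I) = 0$. For products: given $O_1, O_2 \in C$ — hence commuting, with $O_1 O_2 \in \mathcal{O}$ by closure under commuting products and $O_1 O_2 \in C$ by maximality — one computes $O_1 O_2 \phi = \omega^{s(O_1)}\omega^{s(O_2)}\phi = \omega^{s(O_1)+s(O_2)}\phi$, while also $O_1 O_2 \phi = \omega^{s(O_1 O_2)}\phi$; since $\phi \neq 0$ this forces $s(O_1 O_2) = s(O_1) + s(O_2)$ in $\mathbb{Z}_d$.

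Finally I would verify the intertwining condition. For $q \in \mathbb{Z}_d$ and $O \in C$, closure of $\mathcal{O}$ under $\{\omega^k\}$-phases gives $\omega^q O \in \mathcal{O}$, and $\omega^q O$ commutes with every element of $C$, so $\omega^q O \in C$; then $\omega^q O\,\phi = \omega^q\,\omega^{s(O)}\phi = \omega^{q+s(O)}\phi$ and also $\omega^q O\,\phi = \omega^{s(\omega^q O)}\phi$, whence $s(\omega^q O) = q + s(O) = \theta_{\mathbb{Z}_d}(q, s(O))$, which together with the previous paragraph shows $s$ is an action homomorphism $\Omega_C \to \theta_{\mathbb{Z}_d}$. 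I do not expect a real obstacle here: the only mildly delicate point is the first step — translating ``consistent with quantum mechanics'' into the existence of a genuine simultaneous eigenvector $\phi$ via commuting projectors — after which the argument is the one-line eigenvalue computation repeated twice. (If one wished to avoid a common eigenvector for all of $C$ at once, one could instead work with the joint eigenspace of the finite subset $\{O_1, O_2, O_1 O_2, \omega^q O\}$ for each pair, but using the full context is cleaner.)
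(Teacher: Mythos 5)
Your proposal is correct and follows essentially the same route as the paper: fix a common eigenvector realising $s$, read off $s(I)=0$ and $s(O_1O_2)=s(O_1)+s(O_2)$ from the eigenvalue equations, and observe that multiplying by $\omega^q$ shifts the eigenvalue by $q$. Your opening paragraph justifying the existence of the joint eigenvector via commuting eigenprojectors is slightly more careful than the paper, which simply asserts that a suitable state $\psi$ exists, but this is added rigour rather than a different argument.
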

\begin{proof}
    A measurement of a Weyl operator $M$
    with outcome $q$ correspond to the $\omega^q$
    eigenvalue.
    Let $s:C \to \mathbb{Z}_d$ be an outcome assignment and suppose that
    $s$ is consistent with quantum mechanics.
    Let $M_1, M_2 \in C$. There is then some state
    $\psi$ such that $\psi$ is an eigenvector
    of $M_1,M_2,M_1M_2$ with
    eigenvalues $s(M_1), s(M_2), s(M_1M_2)$.
    \begin{align}
        M_1 \ket{\psi} &= \omega^{s(M_1)} \ket{\psi}\\
        M_2 \ket{\psi} &= \omega^{s(M_2)} \ket{\psi}\\
        M_1M_2 \ket{\psi} &= \omega^{s(M_1M_2)} \ket{\psi}\\
    \end{align}
    Using the first two equations
    we have
    \begin{align}
        M_1M_2\ket{\psi} = \omega^{s(M_1)}\omega^{s(M_2)}\ket{\psi}
    \end{align}
    Comparing this to the third equations we have
    $s(M_1M_2) = s(M_1) + s(M_2)$.
    That $s(I) = 0$ is clear because
    $I$ only has one eigenvalue $1$ which
    is identified with $0$.
    Finally,
    if we multiply a Weyl operator $M$ with
    a scalar $\omega^q$, then the effect is to
    permute the eigenvalues, hence
    $s(\omega^q M) = q + s(M)$.
\end{proof}

From this it follows that both the state-dependent and state independent
empirical models $\pmodel_{\cWeylOps}, 
\pmodel_{\cWeylOps, \psi}:\weylScenario$
are instances of the following definition.
\begin{definition} \label{def:bundle-scenario}
    Let $(X, \mathcal{M}, G)$ be a measurement scenario
    equipped with the following additional structure:
    \begin{enumerate}
        \item The outcomes $G$ is a commutative group.
        \item Each maximal context $C \in \mathcal{M}$
            is a commutative monoid
            with a compatible action $\theta_C:G \times C \to C$,
            such that for all maximal contexts $C,C' \in \mathcal{M}$,
            $g \in G$, $x,x' \in C \cap C'$:
            \begin{align}
                0_C &= 0_{C'}\\
                x +_C x' &= x +_{C'} x'\\
                \theta_C(g,x) &= \theta_{C'}(g,x)
            \end{align}
    \end{enumerate}
    An \emph{empirical model}
    $\mathcal{S}:(X, \mathcal{M}, G)$
    is an empirical model (in the usual sense),
    such that every local section $s \in \mathcal{S}(C)$
    is an action homomorphism from $\theta_C$
    to the action $\theta_G$ of $G$ on itself.
\end{definition}

\subsection{Detecting contextuality with homomorphisms}
\label{section:detecting-homomorphisms}

Suppose that $\mathcal{S} : (X, \mathcal{M}, G)$ is an empirical model
and measurement scenario with the additional structure of 
Definition \ref{def:bundle-scenario}.
We first observe that the monoid structures
on the contexts $C \in \mathcal{M}$ and the actions
$\theta_C$ ``glue together'' to define a \emph{partial} monoid
and compatible action on $X$.
\begin{align}
    \theta(g, x) &:= \theta_C(g,x)\\
    0 &:= 0_C\\
    x + x' &:= x +_C x'
\end{align}
for any $C \in \mathcal{M}$, $x,x' \in C$, and $g \in G$. That this is well
defined follows from the compatibility conditions in 
Definition \ref{def:bundle-scenario}).

Note that in the case of a closed set of Weyl operators
$\mathcal{O} \subset P_{n,d}$ the partial monoid structure
on $\mathcal{O}$ is the structure given by gluing together
the monoid structure on each maximal context.
This is the case because two operators $O,O' \in \mathcal{O}$ commute if and only
if they are both contained in a maximal context $C \subset \mathcal{O}$.
As a partial monoid $\mathcal{O}$ is therefore completely defined
by its restriction to the maximal submonoids $C \subset \mathcal{O}$.

Because the action $\theta$ and partial monoid structure
on $X$ are completely determined by the monoids and actions
on the maximal contexts, it follows that $s:X \to G$
is an action homomorphism from $\theta$ to $\theta_G$
if and only if $\res{s}{C}$ is an action homomorphism
from $\theta_C$ to $\theta_G$ for all maximal contexts $C \in \mathcal{M}$.
We can therefore consider the problem of extending a local
action homomorphism $s' \in \mathcal{S}(C)$ defined on a maximal
context $C \in \mathcal{M}$ as a test for contextuality.

This test is sound, but not necessarily complete. There
can be action homomorphisms $s:X \to G$ that are not global
sections of $\mathcal{S}$.

We now give two examples, showing that in the case of GHZ and
Mermin's square the homomorphism condition detects contextuality.

\begin{exmp}[Mermin's square]
Let $X \subset P_2$ be any set of Pauli measurements that is closed under products
of commuting measurements
and contains the measurements displayed in Mermin's square.
We consider the state independent model 
$\pmodel_X:(X,\mcvx, \zn{2})$ which in this case satisfies Definition \ref{def:bundle-scenario}.

Observe that equations (1)-(6) induced by Mermin's square all can be rearranged
to be on the form
\[
    M_1 \oplus M_2 =  M_1 M_2 \]
for $M_1,M_2 \in X$ with $M_1M_2=M_2M_1$. That the equations
are mutually inconsistent therefore literally says that there is no
homomorphism from $X$ to $\mathbb{Z}_2$. 
\end{exmp}

GHZ is an example of state-dependent contextuality. We first show that the set of operators whose outcome when measuring a given state is deterministic
is a submonoid of the total partial monoid.
\begin{lemma}
    Let $\cWeylOps \subset P_{n,d}$ be a closed set of $n$-qudit
    Weyl operators. Let $\psi$ be a state and
    $\cWeylOps_\psi \subset \cWeylOps$ the subset of operators
    whose outcome on $\psi$ is deterministic
    \begin{align}
        \cWeylOps_\psi := \{M \in \cWeylOps \mid M\ket{\psi} = \omega^q \ket{\psi},
        \text{ for some $q \in \wPhaseGrp$}\}
    \end{align}
    $\cWeylOps_\psi$ is a sub monoid of $\cWeylOps$.
\end{lemma}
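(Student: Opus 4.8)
The plan is to show that $\cWeylOps_\psi$ contains the identity and is closed under the commuting products that are defined in $\cWeylOps$. That $I \in \cWeylOps_\psi$ is immediate, since $I\ket{\psi} = \omega^0\ket{\psi}$ and $I \in \cWeylOps$ because $\cWeylOps$ is closed. For closure under products, suppose $M_1, M_2 \in \cWeylOps_\psi$ with $M_1 M_2 = M_2 M_1$. Since $\cWeylOps$ is closed under commuting products, $M_1 M_2 \in \cWeylOps$, so it remains only to check that $M_1 M_2$ acts on $\ket{\psi}$ by a scalar of the form $\omega^q$.

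The key computation is short: by hypothesis $M_1\ket{\psi} = \omega^{q_1}\ket{\psi}$ and $M_2\ket{\psi} = \omega^{q_2}\ket{\psi}$ for some $q_1, q_2 \in \wPhaseGrp$, hence
\begin{align}
    (M_1 M_2)\ket{\psi} = M_1(M_2\ket{\psi}) = \omega^{q_2} M_1\ket{\psi} = \omega^{q_1 + q_2}\ket{\psi},
\end{align}
so $M_1 M_2 \in \cWeylOps_\psi$ with eigenvalue $\omega^{q_1+q_2}$. Note that commutativity of $M_1$ and $M_2$ is not actually needed for this eigenvalue computation, but it is needed to know that the product $M_1 M_2$ lies in $\cWeylOps$ in the first place and that it is one of the defined products of the partial monoid structure on $\cWeylOps$. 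Since the partial monoid operation on $\cWeylOps$ is exactly the restriction of the group multiplication in $P_{n,d}$ to commuting pairs, this shows $\cWeylOps_\psi$ is closed under the partial monoid operation inherited from $\cWeylOps$, i.e.\ it is a sub(partial-)monoid.

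There is essentially no obstacle here; the only point requiring a moment's care is making sure the claim is about the correct notion of "submonoid" — namely a subset of $\cWeylOps$ containing $0$ and closed under all products of its elements that are defined in $\cWeylOps$ — rather than a monoid in which every product of pairs from $\cWeylOps_\psi$ is defined. In fact one can observe slightly more: if $M_1, M_2 \in \cWeylOps_\psi$ then $M_1$ and $M_2$ automatically commute, since each is diagonalised by $\ket{\psi}$ together with an orthonormal completion compatible with both (or, at the level we need, since their product with $\ket{\psi}$ is a scalar either way the commutator annihilates $\ket{\psi}$ — though for the submonoid statement we only need closure within the already-defined products). I would state the lemma's proof with the eigenvalue computation above and the remark that $I \in \cWeylOps_\psi$, and conclude by invoking that $\cWeylOps$'s partial monoid structure is the commuting-products restriction of $P_{n,d}$.
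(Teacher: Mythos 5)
Your proof is correct and matches the paper's (equally terse) argument: $I$ has deterministic outcome $0$, and the product of two operators with deterministic outcomes $q_1,q_2$ has deterministic outcome $q_1+q_2$. One caveat on your closing aside: the claim that elements of $\cWeylOps_\psi$ automatically pairwise commute is true, but not for the reason you sketch (sharing a single eigenvector does not force commutation in general); it follows because any two Weyl operators satisfy $M_1M_2=\omega^{c}M_2M_1$ for a scalar phase $\omega^c$, and applying both sides to $\ket{\psi}$ forces $\omega^{c}=1$ --- and this is precisely what is needed when the paper later treats $\cWeylOps_\psi$ as a genuine (total) monoid rather than merely a sub partial monoid.
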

\begin{proof}
    That the identity operator $I$ is contained in $\cWeylOps_\psi$
    is clear. If $M,M' \in \cWeylOps$ with
    outcomes $q,q'$ then $MM'$
    has outcome $q+q'$.
\end{proof}

In the case of state-dependent contextuality, we try to extend the unique
value assignment consistent with $\psi$.
\begin{exmp}[GHZ]
\label{example2}
Let $X := \bigotimes_{i=1}^3 \pm \{\pauli{X}{}, \pauli{Y}{}, \pauli{Z}{}, I\}$. First note
that the state-dependent model
$\pmodel_{X,\text{GHZ}}:(X, \mcvx, \zn{2})$ is an instance of Definition \ref{def:bundle-scenario} because $X$
is closed under commuting products and contains $\pm I$.
Next, consider the set $X_\text{GHZ}$ of measurements whose outcome is uniquely determined
by $\ghz$ and observe that
the equations in the GHZ example \ref{example:contextuality-ghz}
are all of the form
\[
    M_1 \oplus M_2 \oplus M_3 = s_{\text{GHZ}}(M_1M_2M_3) \]
where 
$M_1,M_2,M_3 \in X$ are compatible, $M_1M_2M_3 \in X_{\text{GHZ}}$,
and $s_{\text{GHZ}}(M_1M_2M_3)$ is the unique outcome that is consistent with $\ghz$.
That the equations are mutually inconsistent therefore ensures that there is no
global action homomorphism $g:X \to \zn{2}$ whose restriction to $X_\text{GHZ}$ is
$s_\text{GHZ}$.
It follows that if $C \in \mcvx$ is any context 
that contains $X_{\text{GHZ}}$ then there is no
$s \in \pmodel_{X, \text{GHZ}}(C)$
that can be extended to a global action homomorphism.
Note that such a context exists because the maximal submonoids of $X$ are the contexts
and by Lemma 4.1 $X_\text{GHZ}$ is a monoid.
\end{exmp}

\subsection{The cohomological obstruction}
\label{section:topological-obstruction}
We now generalise the cohomological obstruction of Okay et al.\ to any
empirical model with the structure of Definition \ref{def:bundle-scenario}.
Let $\mathcal{S} : (X, \mathcal{M}, G)$ be an empirical model
and measurement scenario according to Definition \ref{def:bundle-scenario}.
We first show that the scenario comes with the structure
of a bundle over a partial monoid.

At $X$, and at each maximal context $C \in \mathcal{M}$
we take the quotient partial monoid with respect to the actions
$\theta$, $\theta_C$.
\begin{definition}
    Let $G$ be a group, $M$ a partial monoid, and $\theta:G \times M \to M$ an action of 
    $G$ on $M$. The \emph{quotient partial monoid} is the set of orbits 
    $M / \theta$ with identity and product defined by
    \begin{align}
        0_{M / G} &:= [0_M]_{\theta}\\
        [m_1]_{\theta} +_{M / G} [m_2] &:= [m_1 +_M m_2]_{\theta}
    \end{align}
    for all $m_1, m_2$ such that $m_1 +_M m_2$ are defined.
\end{definition}

Observe that the product operation of $M / \theta$ is well defined because $\theta:G \times M \to M$ is a homomorphism. For all $a_1, a_2 \in A$ and
$m_1,m_2 \in M$ such that $m_1 +_M m_2$ is defined
we have $(a_1 \cdot m_1 +_M a_2 \cdot m_2 = (a_1 +_A a_2) \cdot (m_1 +_M m_2)$.
Therefore
\begin{align}
    [a_1 \cdot m_1] +_{M / G} [a_2 \cdot m_2] = [(a_1 +_A a_2) \cdot (m_1 +_M m_2)]
\end{align}
as required.

$X$ is then a bundle over the quotient partial monoid
$X / \theta$, with action $\theta$ and projection map $[\_]_\theta:X \to X / \theta$,
and for each maximal context $C \in \mathcal{M}$
the monoid $C$ is a bundle over $C / \theta_C$
with action $\theta_C$ and projection map $[\_]_{\theta_C} : C \to C / \theta_C$.
That each local section $s \in \mathcal{S}(C)$
is an action homomorphism from $\theta_C$ to $\theta_G$
means that it is a left splitting of the bundle.
The cohomological obstruction for $s$
is the obstruction to extending the right splitting
$\mathcal{R}(s)$ to a global right splitting
of the bundle $X$.

\begin{definition} \label{def:generalised-obstruction}
    Let $\mathcal{S}:(X, \mathcal{M}, G)$ be a measurement scenario
    and empirical model satisfying Definition \ref{def:bundle-scenario}.
    For any maximal context $C \in \mathcal{M}$
    and $s \in \mathcal{S}(C)$
    the cohomological obstruction
    is the obstruction $\mu(\mathcal{R}) \in H^2(X/\theta,C/\theta_C;G)$
    for the right splitting $\mathcal{R}$
    to extend to a global splitting.
\end{definition}

Because the homomorphism test for contextuality is not necessarily
complete the cohomological obstruction is also not necessarily complete.
However, from the examples in the previous section
we have that it detects contextuality in the case of GHZ
and Mermin's square.

\section{Comparing two obstructions} \label{section:comparison}

We conclude this chapter by comparing the topological obstruction
of Okay et al.\ \ref{def:bundle-scenario} to the \v{C}ech cohomology obstruction.
The \v{C}ech cohomology obstruction and the topological obstruction
are different in the type of algebraic structure they are defined with.
The topological approach relies upon a pre-existing
structure in the measurement scenario and empirical model.
The \v{C}ech cohomology approach does not require any pre-existing structure,
instead, it uses a free construction to give the required structure to any 
empirical model. 

At first glance, it might be surprising that
any interesting structure is left behind by this free construction.
An explanation for why the \v{C}ech cohomology obstruction detects contextuality is that many 
examples of contextuality in quantum
mechanics are of the AvN type. For the examples, GHZ and magic square, where we have shown
that the topological approach also detects contextuality. 
The question is then if there are instances of contextuality
that are detected by the topological approach, but not the \v{C}ech cohomology
approach. We now show that this is not the case.

Recall that a false negative of the \v{C}ech cohomology approach
occurs when a local section $s$ can be extended to a compatible
family of the pre-sheaf $F_\mathbb{Z} \mathcal{S}$.
If an empirical model $\mathcal{S}$ satisfies Definition \ref{def:bundle-scenario}
then each local section in this formal linear combination
is an action homomorphism. 
Because homomorphisms are closed under affine combinations we can collapse
the formal affine combination to a global splitting of the bundle.
\begin{theorem}
    Let $\mathcal{S}:(X, \mathcal{M}, G)$ be an empirical
    model of Definition X.
    Let $l \in \mathcal{S}(C)$ be a local section.
    If the \v{C}ech cohomology obstruction
    $\gamma(l)$ vanishes, then
    the topological obstruction $\mu(l)$ vanishes:
    \begin{align}
        \gamma(s_0) = 0 \implies \mu(s_0) = 0
    \end{align}
\end{theorem}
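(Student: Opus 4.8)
The plan is to translate $\gamma(l)=0$ into the existence of a compatible family of the free abelian presheaf $F_\mathbb{Z}\mathcal{S}$ extending $1\cdot l$, then to \emph{collapse} this formal $\mathbb{Z}$-linear family to a genuine global left splitting of the bundle $(X,[\_]_\theta,\theta)$ over $X/\theta$ that restricts to $l$, and finally to feed this into the splitting lemma and the obstruction theorem of Section~\ref{section:relative-cohomology} to conclude $\mu(l)=0$.

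First I would invoke the lemma of Abramsky et al.\ (using the harmless connectedness assumption on $\mcvx$): $\gamma(l)=0$ means exactly that $1\cdot l$ extends to a compatible family $\{z_C\in F_\mathbb{Z}(\mathcal{S}(C))\}_C$ in $F_\mathbb{Z}\mathcal{S}$, i.e.\ restrictions agree on overlaps and $z_{C_0}=1\cdot l$ at the maximal context $C_0$ of $l$. Applying the augmentation $\epsilon:F_\mathbb{Z}(\mathcal{S}(C))\to\mathbb{Z}$, $\sum_j k_j s_j\mapsto\sum_j k_j$, which commutes with every restriction map, the integer $\epsilon(z_C)$ is constant over the family and equals $\epsilon(1\cdot l)=1$; hence writing $z_C=\sum_j k^C_j\cdot s^C_j$ we have $\sum_j k^C_j=1$ for every maximal $C$.

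Next, for each maximal context $C$ define $\tilde l_C:C\to G$ by collapsing the formal sum using the group operation of $G$: $\tilde l_C(x):=\sum_j k^C_j\cdot s^C_j(x)$. By Definition~\ref{def:bundle-scenario} each $s^C_j\in\mathcal{S}(C)$ is an action homomorphism $\theta_C\to\theta_G$, and a one-line computation shows $\tilde l_C$ is one too: $\tilde l_C(0)=0$ and additivity follow from commutativity of $G$, while equivariance, $\tilde l_C(\theta_C(g,x))=\sum_j k^C_j(g+s^C_j(x))=(\sum_j k^C_j)g+\tilde l_C(x)=g+\tilde l_C(x)$, is exactly where $\sum_j k^C_j=1$ is used. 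To glue the $\tilde l_C$: for maximal $C,C'$ the equality $\res{z_C}{C\cap C'}=\res{z_{C'}}{C\cap C'}$ in $F_\mathbb{Z}(\mathcal{S}(C\cap C'))$ says that, after collecting terms over each $t\in\mathcal{S}(C\cap C')$, the total coefficient of $t$ from $z_C$ equals that from $z_{C'}$; evaluating at any $x\in C\cap C'$ gives $\tilde l_C(x)=\tilde l_{C'}(x)$. So there is a well-defined $\tilde l:X\to G$ with $\res{\tilde l}{C}=\tilde l_C$, and since a function on $X$ is an action homomorphism $\theta\to\theta_G$ iff its restriction to each maximal context is (the observation of Section~\ref{section:detecting-homomorphisms}), $\tilde l$ is a global left splitting of $B=(X,[\_]_\theta,\theta)$. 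Moreover $\res{\tilde l}{C_0}=l$ since $z_{C_0}=1\cdot l$.

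Finally I would apply the extension form of the splitting lemma (Section~\ref{section:extending-splittings}): because $l$ extends to the global left splitting $\tilde l$, the associated right splitting $\mathcal{R}(l)$ of $\res{B}{C_0/\theta_{C_0}}$ extends to a global right splitting of $B$, and by the obstruction theorem of Section~\ref{section:obstruction-splitting} together with Definition~\ref{def:generalised-obstruction} this is precisely the statement $\mu(l)=0$. The only genuinely delicate points are the two bookkeeping steps — extracting $\sum_j k^C_j=1$ from the augmentation (which is what makes the collapsed combination equivariant) and checking that compatibility of the $z_C$ in the free abelian presheaf forces pointwise agreement of the collapsed maps — and neither is hard; everything else is a direct appeal to results already established in this chapter.
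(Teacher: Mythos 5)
Your proposal is correct and follows essentially the same route as the paper's own proof: extract the compatible family in $F_\mathbb{Z}\mathcal{S}$ from the vanishing of the \v{C}ech obstruction, observe that the coefficient sums equal $1$ (the paper gets this by restricting to $C\cap C_0$, you via the augmentation --- the same computation), collapse the formal affine combination to a global action homomorphism using closure of splittings under affine combinations, and conclude via the splitting lemma and the obstruction theorem. If anything, you make explicit two steps the paper leaves implicit (the gluing of the collapsed maps across overlapping contexts, and the final passage from the extended left splitting to $\mu(l)=0$ through $\mathcal{R}$), which is a welcome tightening rather than a deviation.
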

\begin{proof}
    Suppose that $s_0 \in \pmodel(C_0)$ is a local section
    such that the \v{C}ech cohomology obstruction vanishes, $\gamma(s_0) = 0$. We need to show that $s_0$ extends to a global right splitting.
    
    Recall that if the measurement cover is connected
    and $\gamma(s_0) = 0$ then there is some
    compatible family
    $\{r_C \in F_\ints \pmodel(C)\}_{C \in \mcvx}$ such that
    $r_{C_0} = 1 \cdot s_0$. 
    Now, any measurement cover by commutative monoids is connected
    because the identity element is contained in all contexts.
    We can therefore take such a family $\{r_C\}_{C \in \mcvx}$.
    Observe now that any such family in fact is a
    compatible family of
    formal affine combinations: For any $C \in \mcvx$
    \[
    \sum_{s \in \pmodel(C)} 
            r_C(s) \cdot \res{s}{C \cap C_0}
            =
        \res{r_C}{C \cap C_0} = 
        \res{r_{C_0}}{C \cap C_0} = 1 \cdot \res{s_0}{C \cap C_0}
        \]
    hence $\sum_{s \in \pmodel(C)} r_C(s) = 1$.
    
    We now use the unique module
    action\footnote{
    i.e. $0 \cdot a = 0$, and for $n \geq 1$: 
    $n \cdot a := a + a + \cdots + a$ ($n$ times) and $-n \cdot a = -(n \cdot a)$.
    }of $\ints$ on $A$ to collapse this
    compatible family to a function $g:X \to A$.
    \[
        g(x) := \sum_{s \in \pmodel(C)} r_C(s) \cdot s(x), \quad \text{where
        $C \in \mcvx$ is any context with $x \in C$}
        \]
    Because the set of splittings is closed under affine combinations
    this function is in fact a splitting that furthermore extends $s_0$.
\end{proof}

Another point is that the topological approach uses the equivalence between
right and left splittings. A question we can ask is what the cohomology classes actually
mean.

\chapter{From contextuality to shallow circuits: a general construction of quantum advantage}

In this chapter, we present a generalised version of Bravyi, Gosset,
and K\"{o}nig's quantum advantage result with shallow circuits.
The quantum circuit $\{Q_n\}_{n \in \nats}$
and the computational problem $\{\text{2D-GHZ(n)}\}_{n \in \nats}$
introduced by BGK are relatively simple to define.
Their main technical contribution is the technique used
to prove the classical bound. We start by explaining
how this technique can be recast in the sheaf theoretic framework.

The quantum circuit $Q_n$ defines a mapping from inputs
to distributions over outputs 
$\tilde{Q}_n :: x \mapsto \sum_{y} \tilde{Q}(x,y) \cdot y$. 
We can think of this mapping as an empirical model
for a multipartite scenario.
$\tilde{Q}_n$ is related to the strategy
$e_\text{GHZ}$ for the GHZ game by a simulation
$s_n$.
\begin{align}
    (s_n)_*(\tilde{Q}_n) = e_\text{GHZ}
\end{align}
The 2D-GHZ problem is the pullback $(s_n)^*(\text{GHZ})$
of the GHZ-game across $s_n$.
Because the strategy $e_\text{GHZ}$ solves the GHZ-game perfectly
it, therefore, follows that $Q_n$ also solves the GHZ game perfectly.
\begin{align}
    p_S(Q_n, \text{2D-GHZ}) = p_S((s_n)_*(\tilde{Q}_n), \text{GHZ}) = 1
\end{align}

We can similarly think of a classical shallow circuit $\{C_n\}_{n \in \nats}$
as defining a family of empirical models $\{\tilde{C}_n\}_{n \in \nats}$.
This is strictly speaking not true because
$\tilde{C}_n :: x \mapsto \sum_{y} \tilde{C}_n(x,y) \cdot y$
does not necessarily satisfy the no-signaling assumption.

Recall the resource inequality
\begin{align}
    p_S(e, \Phi) \leq \gamma + \cf{e}
\end{align}
relating the success probability of an empirical model $e$
on a non-local game $\Phi$ to the classical bound $\gamma$
and the contextual fraction $\cf{e}$.

Using this inequality we can bound the success probability
of $C_n$ on $\text{2D-GHZ}(n)$ in terms of the contextual fraction
of the pushforward $(s_n)_*(\tilde{C}_n)$.
\begin{align}
    p_S(C_n, \text{2D-GHZ}(n)) = p_S((s_n)_*(\tilde{C}_n), \text{GHZ}) \leq 
    3/4 + \cf{(s_n)_*(\tilde{C}_n)}
\end{align}

The technically most involved part of their result
is to establish a bound on $\cf{(s_n)_*(\tilde{C}_n)}$.
To do this they combine two results.
The simulation $s_n$ is a probability distribution
over a deterministic simulation
$t$ for each choice of players 
$v_A, v_B, v_C$ and paths $u_\text{AB}, u_\text{BC}, u_\text{CA}$.
BGK first gives a combinatorial condition involving the paths
and the circuit $C$ ensuring that the pushforward
$t_*(\tilde{C}_n)$ is non-contextual.
They then prove that when the paths are chosen sufficiently
uniformly then the probability of this condition being satisfied
is high.

\subsection{Structure of chapter}
In Section \ref{section:circuits-as-ontological-models}
we give some elementary background on circuits,
and we make the idea that circuits can implement
empirical models and strategies for non-local games precise.
In Section \ref{section:distributing-line} we present a protocol
based on teleportation that allows a number of agents
to implement measurements on a single-qudit state at arbitrarily long distances
along a line.
In Section \ref{section:distributing-graph} we generalise
the construction from Section \ref{section:distributing-line}
to a protocol that allows us to simulate measurements
in a distributed way.
In Section \ref{section:distributing-games} we show
that the distributed simulation protocol can be used
to construct non-local games that are solved by quantum
circuit of small depth and fan-in.
In Section \ref{section:bounding-cf} we restate BGK's technique
for proving their classical bound in the sheaf theoretic framework,
and we use it to derive a bound for the games introduced in Section
\ref{section:distributing-games}.
In Section \ref{section:qa-result} we put everything
together and show that the construction can be used
to derive unconditional quantum advantage results
with shallow circuits from any non-local game.

\section{Circuits} \label{section:circuits-as-ontological-models}
\begin{figure}
    \centering
    \begin{subfigure}{0.25\textwidth}
        \centering
        \includegraphics[width=\textwidth]{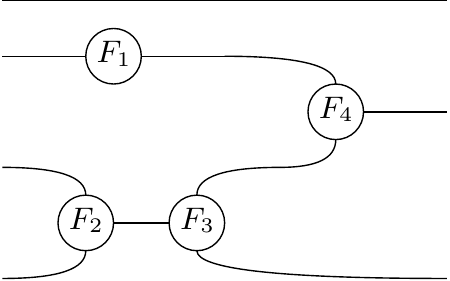}
        \caption{}
    \end{subfigure}
    \hfill
    \begin{subfigure}{0.25\textwidth}
        \centering
        \includegraphics[width=\textwidth]{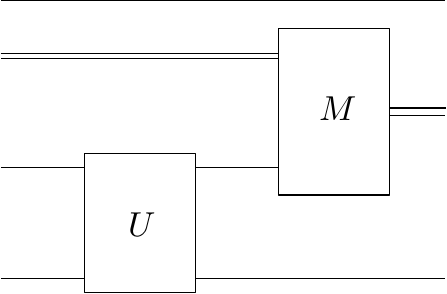}
        \caption{}
    \end{subfigure}
    \hfill
    \begin{subfigure}{0.25\textwidth}
        \centering
        \includegraphics[width=\textwidth]{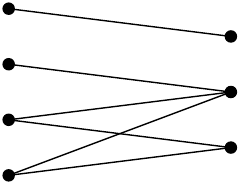}
        \caption{}
    \end{subfigure}
    \caption{(a) A classical circuit with gates $F_1, F_2, F_3, F_4$. (b) A quantum circuit with a unitary gate $U$ and a classically controlled measurement gate $M$. Classical wires are drawn as double lines and quantum wires as single lines. (c) The lightcone relationship between inputs and outputs in both circuit (a) and(b).}
    \label{fig:my_label}
\end{figure}

A \emph{circuit} can formally be defined as a directed acyclic graph whose 
nodes are either input wires, output wires, or gates. The graph structure
defines the order of evaluation for the gates. To evaluate
a circuit we first fix an input value for each of the input wires,
then evaluate the gates in the order given by the graph,
and finally return the values of the output wires.

Both the quantum and classical circuits that we work with
have only classical input and output wires. A quantum
circuit additionally uses some number of qudits
initially prepared in the computational 
basis state. In a classical circuit,
gates are probabilistic. In a quantum circuit gates
are classically controlled unitaries and measurement gates.

The \emph{depth} of a circuit is the length of the longest path
from an input to an output.
The fan-in of a gate is its number of inputs,
and the \emph{maximal fan-in} of a circuit is the maximal fan-in over all of the gates.
We say that a family of circuits is shallow if it has both bounded depth
and maximal fan-in.
\begin{definition}
    A \emph{shallow circuit} is a family
    of circuits $F = \{F_n\}_{n \in \nats}$
    for which there exists $K,D \in \nats$ such
    that $F_n$ has depth at most $D$ and maximal fan-in
    at most $K$ for all $n \in \nats$.
\end{definition}

The graph structure of a circuit restricts the possible dependencies
between input wires and output wires. These dependencies are captured
by the \emph{lightcones} of the circuit.
\begin{definition}
    Let $F$ be a circuit with input wires 
    $\{\text{in}_i\}_{i \in I}$ labelled by a set $I$
    and output wires  $\{\text{out}_j\}_{j \in J}$ labelled by a set $J$.
    \begin{itemize}
        \item The \emph{forward lightcone} of $i \in I$,
        denoted by $\lcf{F}(i)$,
        is the set of output wires $j \in J$
            such that there is a path in $F$ from $\text{in}_i$
            to $\text{out}_j$.
        \item The \emph{backward lightcone} of $j \in J$,
        denoted by $\lcb{F}(j)$,
        is the set of all input wires $i \in I$
        such that there is path in $F$ from $\text{in}_i$
        to $\text{out}_j$.
    \end{itemize}
\end{definition}

In a circuit $F$ of depth $D$ and maximal fan-in $K$
each output wire is reachable from at most $K^D$
input wires, $\supp{\lcb{F}(\text{out}_i)} \leq K^D$
for all output wires $\text{out}_i$.
If a shallow circuit $\{F_n\}_{n \in \nats}$
has maximal depth $D$ and fan-in $K$ and $F_n$
has $n$ inputs, then the fraction of inputs
that can reach a given output tends to zero as $n$
increases.

\subsection{Circuit strategies}
BGK observed that the quantum strategy for the GHZ game can be seen 
as a simple circuit. 
Figure \ref{fig:ms-circuit}
shows a similar example from \cite{bravyi_quantum_2020}
based on the Magic Square game.
In this section, we define \emph{circuit strategies}
for cooperative games with one or more rounds.
We then define the \emph{behaviour} of a circuit,
and we show that ideas like simulations and the contextual
fraction can be applied to these objects.

\begin{figure}
    \centering
    \begin{subfigure}{0.45\textwidth}
        \includegraphics[width=1\textwidth]{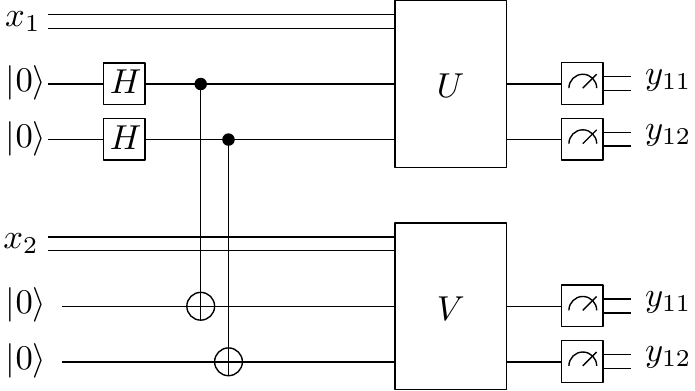}
        \caption{}
    \end{subfigure}
    \hfill
    \begin{subfigure}{0.45\textwidth}
        \includegraphics[width=1\textwidth]{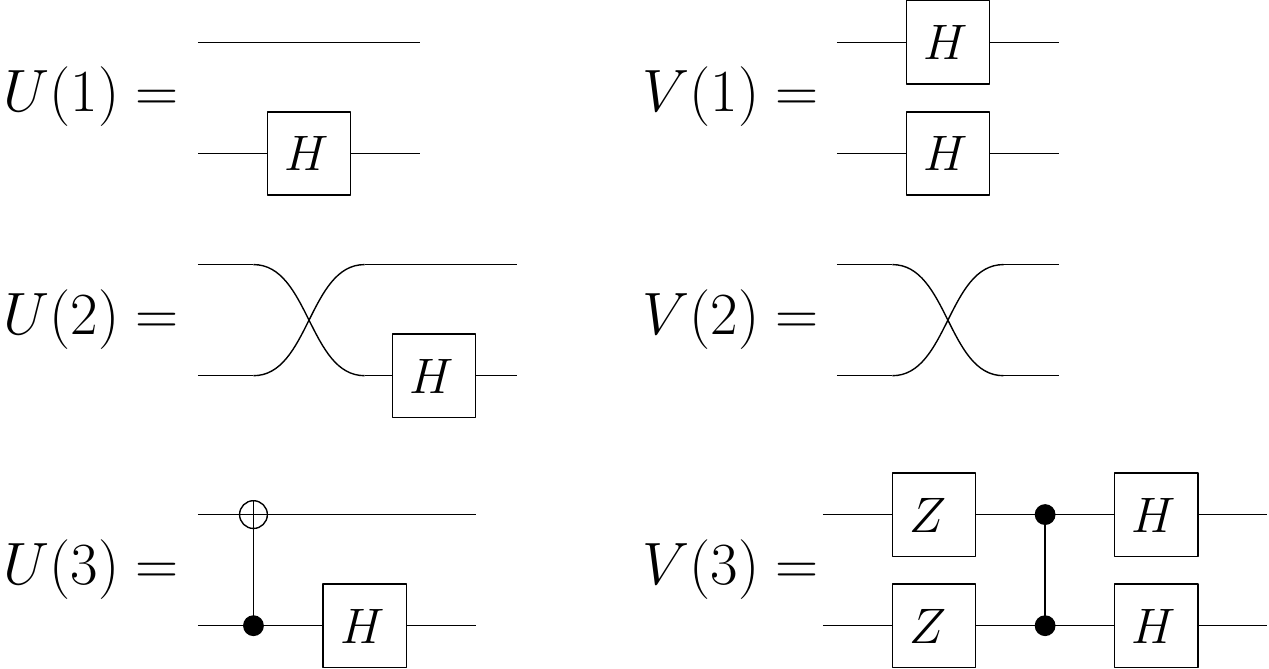}
        \caption{}
    \end{subfigure}
    \caption{Circuit version of the quantum strategy for the Magic Square game.
        First, two maximally entangled states are prepared using
        Hadamard gates and controlled not gates.
        The classical inputs $x_1, x_2 \in \{1,2,3\}$ are used to perform
        a basis change that is followed by computational basis measurements.
        We observe that under the controlled basis changes
        the computational basis
        measurements are equivalent to performing the measurements
        used in the Magic Square game.
    }
    \label{fig:ms-circuit}
\end{figure}

We first consider the case of a single-round game.
Let $S$ be a multipartite scenario.
A circuit strategy is a classical or quantum
circuit with a classical input wire and a classical output wire
for each measurement site. In a multipartite
scenario, joint measurements specify at most one measurement
setting for each measurement site. The input
to a circuit strategy consists of a measurement setting
for each wire, or a symbol ``$\bullet$'' denoting no measurement.
The output of the circuit consists of an output
for each measurement site that has been measured, or ``$\bullet$''
for the measurement sites that have not been measured.
\begin{definition}
    Let $S = (I, X, Y)$ be a multipartite scenario. A (single-round) \emph{circuit
    model} is a classical or quantum circuit $F$
    with an input wire $\text{in}_i$ and an output wire $\text{out}_i$
    for each measurement site $i \in I$, such that:
    \begin{itemize}
        \item $\text{in}_i$ has type $X_i \sqcup \{\bullet\}$
        \item When $F$ is evaluated and the input to $\text{in}_i$ is $\bullet$
        then the output of $\text{out}_i$ is also $\bullet$,
        otherwise if the input to $\text{in}_i$ is $x \in X_i$
        then the output of $\text{out}_i$ has type $Y_{i,x}$.
    \end{itemize}
\end{definition}

We additionally allow circuit strategies to sample a random seed.
Let $F$ be a circuit strategy for a multipartite scenario
$S = (I, X, Y)$. Given a joint measurement $C$
we evaluate $F$ by setting each input wire $\text{in}_i$
where $C$ specifies a measurement setting to this value,
and otherwise to $\bullet$.
We then read off the values of $\text{out}_i$
and return the joint outcome $s \in \mathcal{E}_S(C)$
by setting $s(i,x)$ to the value of output wire $i$.
This defines a family of probability distributions
$\tilde{F} = \{\tilde{F}(C) \in \mathcal{E}_S(C)\}$
which we call the \emph{behaviour} of $F$.

\begin{definition}
    Let $S$ be a multipartite scenario and $F$ a circuit strategy.
    For any context $C$ of $S$ and local section
    $s \in \mathcal{E}_S(C)$
    the probability
    $\tilde{F}(C)(s)$ is the probability
    that the output of $\text{out}_i$
    is $y_i$ when the input to $\text{in}_i$
    is $x_i$ for all $i \in I$, where
    \begin{figure}[h]
        \centering
        \begin{subfigure}{0.45\textwidth}
            \centering
            \includegraphics{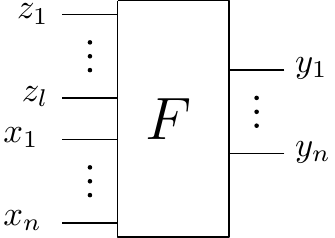}
        \end{subfigure}
        \begin{subfigure}{0.45\textwidth}
            \centering
            \begin{align}
                x_i &:= \begin{cases}
                    x, &\text{if $(i,x) \in C$}\\
                    \bullet, &\text{otherwise}
                \end{cases}\\
                y_i &:= \begin{cases}
                    s(i,x), &\text{if $(i,x) \in C$}\\
                    \bullet, &\text{otherwise}
                \end{cases}
            \end{align}
        \end{subfigure}
        \caption{}
        \label{fig:1r-strategy}
    \end{figure}
    and the random seed $z$ is sampled randomly.
\end{definition}

Any quantum strategy for a non-local game
can be seen as a circuit strategy. Let $S = (I, X, Y)$
be a multipartite scenario, $e$ a quantum strategy for a non-local
game, using a multi-qudit state $\psi$
and a single-qudit measurement $M_{i,x}$
for each measurement $(i,x)$ of $S$.
We define the quantum circuit strategy
$Q_{\psi,M}$ to be the circuit
with a single unitary gate $U_\psi$ that prepares
the state $\psi$, and performs a classically controlled
$M_{i,x}$ measurement for each qudit
and returns the outcomes.
By definition the behaviour of $Q_{\psi,M}$ is equivalent
to the empirical model $e$
\begin{align}
    \tilde{Q}_{\psi,M}(U) = e_{\psi,M}(U)
\end{align}
for each measurement context $U$ of $S$.

We will now define circuit strategies for games with more than one round.
An $n$-round circuit strategy is a classical or quantum circuit $F$
with a classical input wire and a classical output wire
for each measurement site and each of the $n$ rounds (Figure \ref{fig:generator-two-round}).
We require that the output wires for round $j$ are not reachable
from the input wires in round $j' > j$.
\begin{definition}
    Let $S = (I, X, Y)$ be a multipartite scenario and $n \geq 1$. 
    An $n$-round \emph{circuit strategy}
    is a circuit $F$ with input wires
    $\{\text{in}_{i,j}\}_{i \in I, j = 1,\dots, n}$
    and output wires $\{\text{out}_{i,j}\}_{i \in I, j = 1,\dots,n}$, such that:
    \begin{itemize}
        \item Input wire $\text{in}_{i,j}$ has type $X_i$.
        \item When $F$ is evaluated and the input to $\text{in}_{ij}$ is $\bullet$
            then the output of $\text{out}_{ij}$ is also $\bullet$,
            otherwise if the input to $\text{in}_{ij}$ is $x \in X_i$
            then the output of $\text{out}_{ij}$ has type $Y_{i,x}$.
        \item $\text{in}_{ij}$ is not reachable from $\text{out}_{i'j'}$
            when $j' \leq j$: $\text{in}_{i,j} \notin \lcf{F}(\text{out}_{i', j'})$.
        \end{itemize}
\end{definition}

Interactive circuits can be thought of in two ways.
If $i < i'$ then $\text{out}_{i,j}$ is not reachable
from $\text{in}_{i',j'}$ for any $j,j'$.
It is therefore possible to partially evaluate the circuit
on the inputs in round $i$ without
fixing the values for the inputs in round $i'$.
Alternatively, we can think of the evaluation
as being performed by ``plugging in'' a classical circuit
$C$ (Figure \ref{fig:two-round-evaluation}).

\begin{figure}
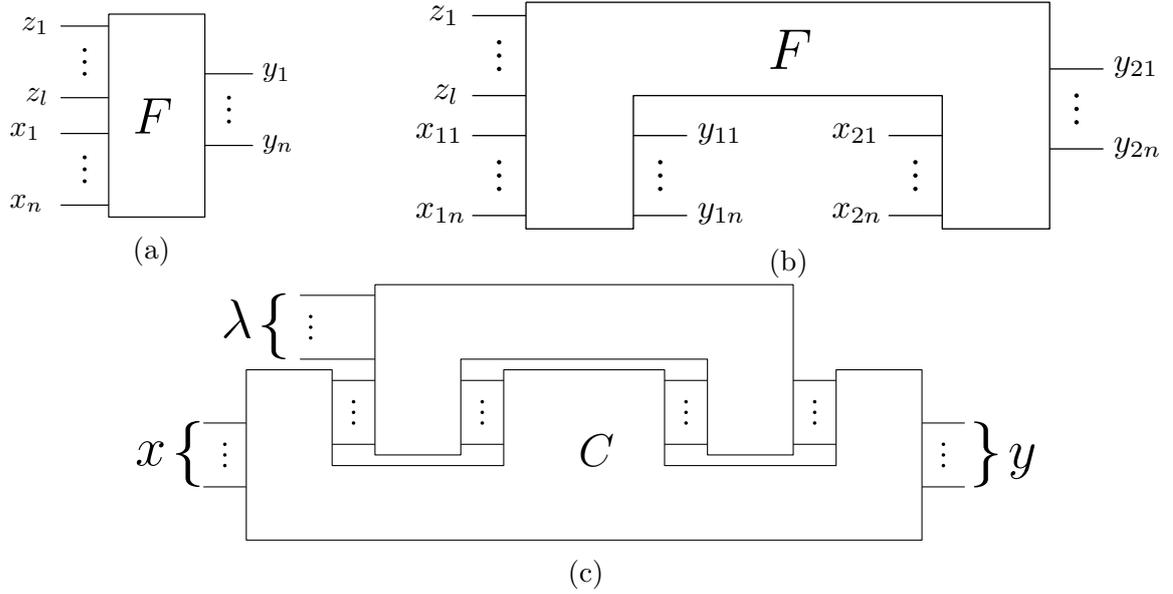

    \centering
    \begin{subfigure}{1\textwidth}
        \centering
        \begin{subfigure}{0.25\textwidth}
            \centering
            \includegraphics[width=\textwidth]{single-round-strategy.pdf}
            \caption{}
        \end{subfigure}
        \hfill
        \begin{subfigure}{0.65\textwidth}
            \centering
            \includegraphics[width=\textwidth]{classical-generator-two-round.pdf}
            \caption{} \label{fig:generator-two-round}
        \end{subfigure}
    \end{subfigure}
    \begin{subfigure}{1\textwidth}
        \centering
        \includegraphics{two-round-evaluation.pdf}
        \caption{} \label{fig:two-round-evaluation}
    \end{subfigure}
    \caption{(a) and (b) shows the inputs and outputs from respectively a 
    single and two-round circuit strategy. 
    Here $x_1, \dots, x_n$ and $y_1, \dots, y_n$
    are the measurement settings and outcomes, while $z_1, \dots, z_l$
    is a random seed. Note that $F$ can be either classical or quantum.
    (c) We can think of the evaluation of (b) as being performed by
    ``plugging in'' a classical circuit $C$.
   }
\end{figure}

The behaviour of an $n$-round circuit strategy
$F$ is a family of probability distributions
$\tilde{F} = \{\tilde{F}(m) \in \dist(\mathcal{E}_S(m))\}_{m \in \text{MP}_n(S)}$
over the runs of each measurement protocol.

\begin{definition}
    Let $S$ be a multipartite scenario, $n \geq 1$,
    and $F$ an $n$-round circuit strategy.
    For each $n$-round measurement protocol $m$ and run
    $r = (U_1, s_1), \dots, (U_n,s_n)$ of $m$
    write $\tilde{F}(m)(r)$ for the probability that
    when $\{\text{in}_{ij}\}$ are set to $\{x_{ij}\}$
    and $F$ is evaluated then
    the return value of output wires $\{\text{out}_{ij}\}$ are $\{y_{ij}\}$,
    where
    \begin{align}
        x_{ij} &:= \begin{cases}
            x, &\text{if $(i,x) \in U_j$}\\
            \bullet, &\text{otherwise}
        \end{cases}\\
        y_{ij} &:= \begin{cases}
            s(i,x), &\text{if $(i,x) \in U_j$}\\
            \bullet, &\text{otherwise}
        \end{cases}
    \end{align}
\end{definition}

The behaviour of a circuit strategy $F$ is not an empirical model.
$\tilde{F}$ does not generally satisfy the no-signaling principle,
and only specifies what happens for a given number of rounds of measurements.
However, the definition of the contextual fraction,
the success probability on a game, and the pushforward can
be generalised directly.

\begin{definition}
    Let $S = (X, \mcvx, O)$ be a measurement scenario and 
    $n \geq 1$.
    An $n$-round behaviour is a family
    of probability distributions $B = \{B(m) \in \dist(\mathcal{E}_S(m)\}_{m \in \text{MP}_n(S)}$
    \begin{enumerate}
        \item  The success probability of $B$ on an $n$-round game
            $\Phi = \sum_{C, A} \Phi_{C,A} \cdot (C,A)$
            is 
            \begin{align}
                p_S(B, \Phi) := \sum_{C,A} \Phi_{C,A} \tilde{B}(C)(A)
            \end{align}
            where $C$ is an $n$-round measurement protocol
            and $A$ is a constraint on the runs of $C$.
        \item  The contextual fraction of $B$ is the least $\epsilon$ such that
        there exists a non-contextual empirical model $e$
        and another behaviour $B'$
        such that for all $n$-round measurement protocols $C$
            \begin{align}
                B(C) = \epsilon \cdot B'(C) + (1 - \epsilon) \cdot e_C
            \end{align}
        \item Let $T$ be another measurement scenario,
        $t:S \to T$ a deterministic $n$-round simulation,
        and $s:S \to T$ a probabilistic $n$-round simulation.
        The pushforward $t_*(B)$
        is the single-round behaviour for $T$
        defined by
            \begin{align}
                t_*(B)(U) = \sum_{r \in \mathcal{E}_S(f(U))} B(f(U))(r) \cdot g_U(r)
            \end{align}
        for each context $C$ of $T$.
        The pushforward $s_*(B)$ is a convex combination
        of the pushforward $t_*(B)$ for each $t$,
        with weight the probability $s(t)$ of $t$ occuring.
    \end{enumerate}
\end{definition}

\section{Performing measurements far away with teleportation} \label{section:distributing-line}
\begin{figure}
    \centering
    \includegraphics{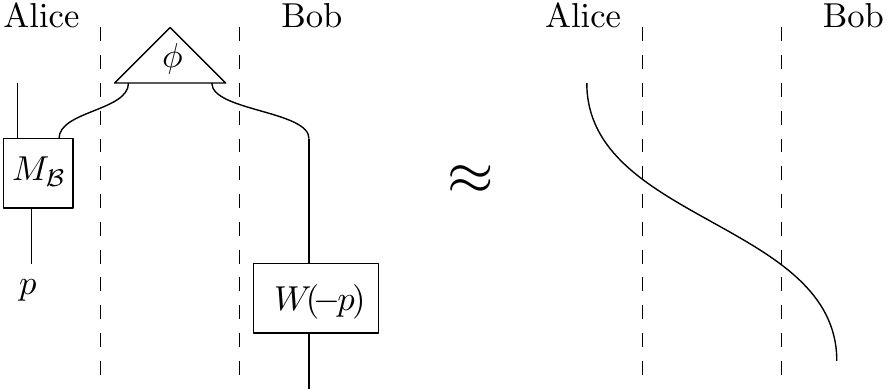}
    \caption{
        In the teleportation protocol for qudits, Alice and Bob each hold one qudit
        of a pair in the maximally entangled state $\phi$.
        Alice first measures her maximally entangled qudit and
        another qudit in the Bell basis $\mathcal{B}$ giving an outcome $p \in \mathbb{Z}_d^2$.
        If Alice's qudit is initially in the state $\psi$ then
        the post-measurement state of Bob's qudit
        is $W(p)\ket{\psi}$. It follows that if
        Bob performs the correction $W(-p)$ then
        up to an unobservable phase his qudit is in the state $\ket{\psi}$.
        In diagrammatic notation, the protocol is equivalent to the identity wire
        from Alice to Bob.
    }
    \label{fig:tp_simple}
\end{figure}

Quantum teleportation (Figure \ref{fig:tp_simple}) 
is a way of transferring a single-qudit state between two agents that share a \emph{maximally entangled state} $\phi$.
\begin{align}
    \ket{\phi} &:= \frac{1}{\sqrt{d}} \sum_{j} \ket{j} \ket{j}
\end{align}
The protocol involves a measurement in the \emph{Bell basis} $\mathcal{B}$
\begin{align}
    \mathcal{B} &:= 
        \{\ket{\phi_p} := (I \otimes W(p))\ket{\phi} \}_{p \in \mathbb{Z}_d^2}
\end{align}
performed by one of the agents, classical communication
of the measurement outcome $p \in \mathbb{Z}_d^2$
to the other agent, and finally a Weyl operator correction $W(-p)$.

\begin{figure}
    \centering
    \includegraphics[width=0.7\textwidth]{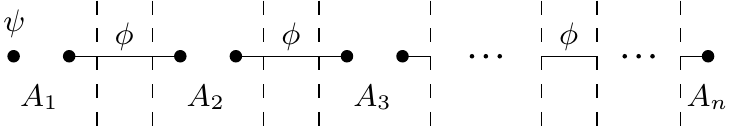}
    \caption{
        Any number of agents $A_1, \dots, A_n$ are arranged on a line, such
        that $A_1$ has a qudit $\psi$, and $A_i, A_{i+1}$, where $i=1,\dots,n-1$
        share a maximally entangled pair of qudits. We send $A_n$ a measurement setting.
        The goal is for the agents to implement a measurement on $\psi$, without
        communicating the measurement setting to any of the other agents $A_1, \dots, A_{n-1}$.
    }
    \label{fig:agents-line}
\end{figure}

In this section we extend the usual teleportation protocol
to any number of agents $A_1, \dots, A_n$ arranged on a line.
The first agent has a state $\psi$ and each consecutive pair $A_{i}, A_{i+1}$
have a maximally entangled pair of qudits. We show that $A_n$ can implement
a measurement on $\psi$ without communicating the measurement setting
to any of the other agents, such that only constantly many rounds of quantum
measurements are performed. We present two versions of the protocol.
In Section \ref{section:2r-line} we present a version for an arbitrary
quantum measurement that uses two rounds of parallel measurements.
In Section \ref{section:1r-line} we restrict to Weyl operator
measurements and we present a protocol using only a single round of parallel measurements.

\subsection{Teleportation on a line followed by a measurement} \label{section:2r-line}
\begin{figure}
    \centering
    \includegraphics{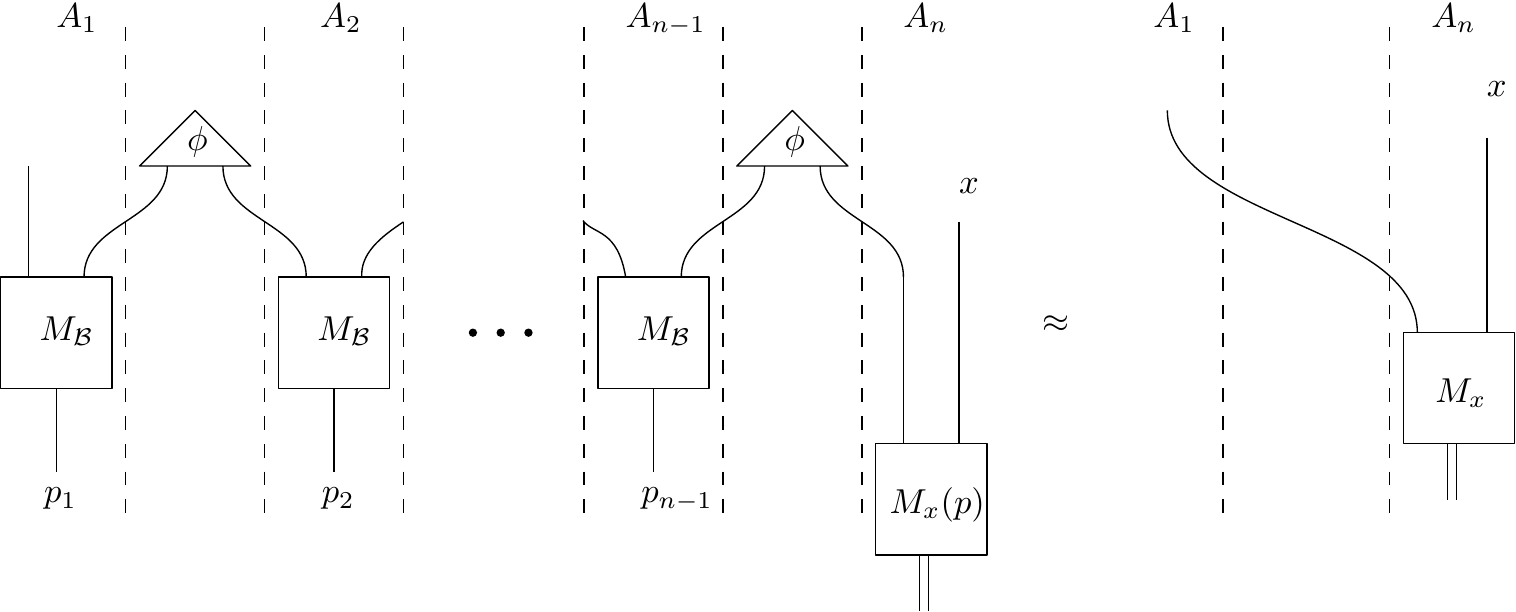}
    \caption{
        Teleportation along a line followed by a measurement,
        using two rounds of measurements.
        Let $\{M_x\}_{x \in X}$ be a family of quantum measurements.
        We send a measurement setting $x$ to $A_n$
        and $A_1, \dots, A_{n-1}$ perform Bell basis measurements.
        The respective outcomes $p_1, \dots, p_{n-1} \in \mathbb{Z}_d^2$
        are sent to $A_n$.
        In the second round $A_n$ performs the conjugated measurement
        $A_x(p) := W(p) M_x W(p)^\dagger$
        of $M_x$ with the Weyl operator $W(p)$.
        The effect of the protocol is up to an unobervable phase equivalent
        to $A_n$ performing the measurement $M_x$ on $A_1$'s qudit and returning the outcome.
    }
    \label{fig:two_round_teleportation_on_line}
\end{figure}

Let $\{M_x\}_{x \in X}$ be a family of single-qudit measurements
and $\psi$ a single-qudit state.
Consider the setup in Figure (\ref{fig:agents-line}).
Suppose that we select a measurement setting $x \in X$
and send this to $A_n$. 

A simple way for the agents to implement
the measurement $M_x$ is to first teleport $\psi$ from $A_1$ to $A_n$
then perform $M_x$. The naive way of doing this uses $n-1$ rounds.
In round $i=1,\dots,n-1$ 
a Bell basis measurement teleports $\psi$ from $A_i$ to $A_{i+1}$
up to a Weyl operator phase $W(p_i)$
which is corrected with the operator $W(-p_i)$.

Weyl operators compose up to an unobservable phase:
\begin{equation}
    W(p)W(p') \approx W(p+p')
\end{equation}

Using the composition law we can reduce the number of rounds
from $n-1$ to two (Figure \ref{fig:two_round_teleportation_on_line}). 
In the first round
agents $1,\dots,n-1$ perform Bell basis measurements
in parallel. After this the state of $A_n$'s qudit is
$W(p_1 + \dots p_{n-1})\ket{\psi}$. If $A_n$ performs the correction
$W(-p)$, where $p := p_1 + \dots + p_{n-1}$, then the effect is to teleport
$\psi$ to $A_n$, using only a single round of parallel measurements.

Instead of performing the correction $W(-p)$ and then the measurement $M_x$
the agent $A_n$ can equivalently perform a single measurement
$W(p) M_x W(p)^\dagger$.

Overall we have a protocol for implementing $M_x$ using only a single round
of parallel Bell basis measurements, followed by a single measurement
performed by $A_n$.

\subsection{Teleportation on a line followed by a Weyl measurement} \label{section:1r-line}
\begin{figure}
    \centering
    \includegraphics{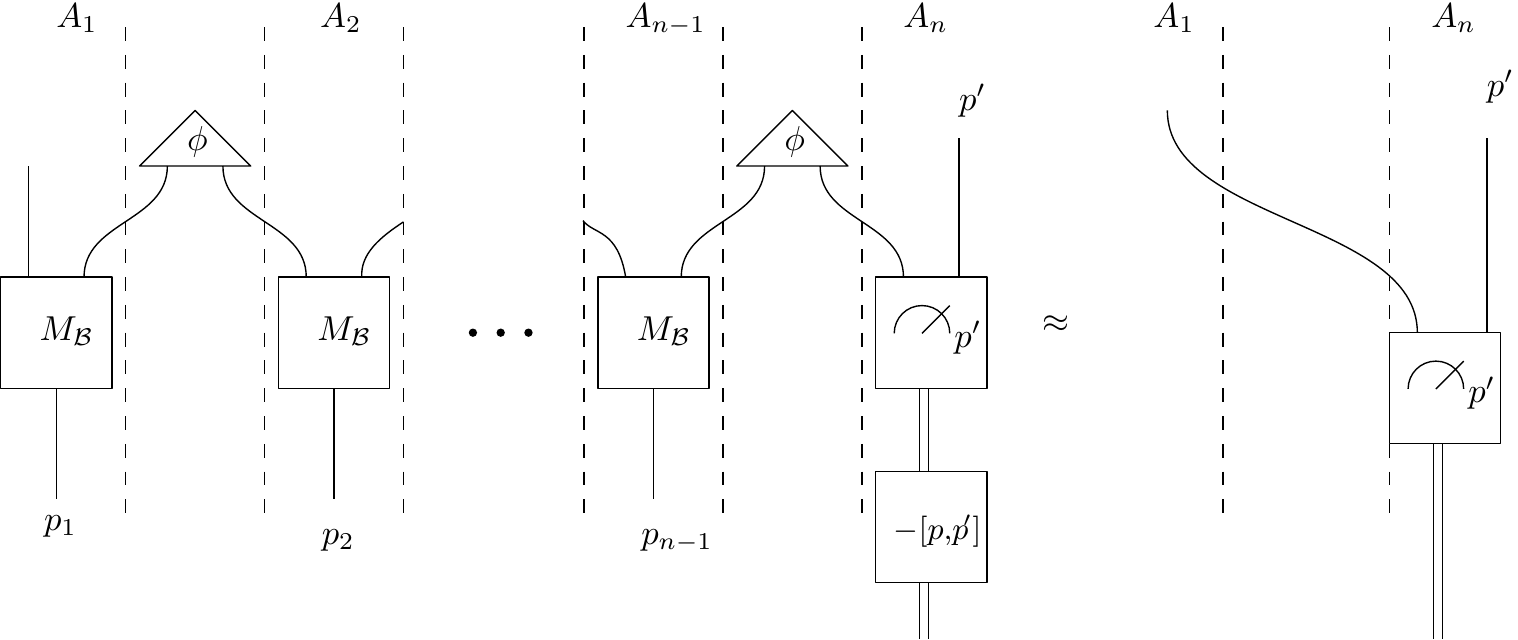}
    \caption{Teleportation on a line with a simultaneous Weyl operator measurement.
    We send a measurement setting $p' \in \mathbb{Z}_d^2$ to $A_n$.
    $A_1, \dots, A_{n-1}$ perform Bell basis measurements with
    outcomes $p_1,\dots,p_{n-1} \in \mathbb{Z}_d^2$, and $A_n$ performs a Weyl measurement
    $W(p')$ with outcome $q \in \mathbb{Z}_d$. The final outcome is $q - [p,p']$,
    where $p := p_1 + \dots + p_{n-1}$.
    The protocol is equivalent to $A_n$ measuring $W(p')$ and returning the outcome.}
    \label{fig:iterated-tp}
\end{figure}

Recall that two Weyl operators $W(p), W(p')$, where $p,p' \in \mathbb{Z}_d^2$ commute according to
\begin{align}
    W(p)W(p') = \omega^{[p,p']} W(p')W(p)
\end{align}
where $\omega = e^{2 \pi i / d}$ and
\begin{align}
    [p,p'] := p_1p_1' + p_2p_2'
\end{align}

We now consider the case of the teleportation protocol on a line
when the set of measurements we want to perform are given by Weyl operators.
Suppose that we send some Weyl measurement setting
$p \in \mathbb{Z}_d^2$ to $A_n$.

If $A_1, \dots, A_{n-1}$ perform Bell basis measurements with outcomes
$p_1', \dots, p_{n-1}'$ then $\psi$ is teleported to $A_n$ up to a phase
given by the Weyl operator $W(p')$, where $p' := p_1' + \dots + p_{n-1}'$.
We now want to perform the measurement given by $W(p)$.
Because the operators $W(p)$ and $W(p')$ commute up
to a factor $\omega^{[p,p]}$ it can be shown
that the adaptive measurement $W(p) W(p') W(p)^\dagger$
can be replaced by a measurement of $W(p')$
followed by a classical correction $-[p,p']$.

\begin{lemma} \label{lemma:permutation-lemma}
    For any $p,p' \in \mathbb{Z}_d^2$ the following
    are equivalent, up to an unobservable phase:
    A Weyl operator measurement $W(p)$
    followed by a classical correction $-[p,p']$,
    a Weyl operator correction $W(-p')$ followed
    by a Weyl operator measurement $W(p)$.
    \begin{equation}
        \vcenter{\hbox{\includegraphics[scale=1.4]{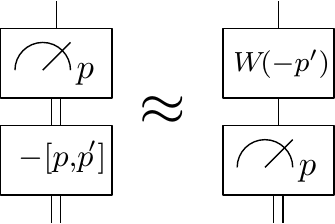}}}
    \end{equation}
\end{lemma}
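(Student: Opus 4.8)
The plan is to verify the equivalence by direct computation on post-measurement states, working modulo global phase throughout. First I would fix notation: a Weyl measurement $W(p')$ projects onto the eigenspaces of the unitary $W(p')$, whose eigenvalues are $\omega^0,\dots,\omega^{d-1}$ (assuming $p'\neq 0$; the case $p'=0$ is trivial), and under the identification $\omega^i\mapsto i$ the outcome is an element of $\mathbb{Z}_d$. Let $\Pi_q^{p'}$ denote the projector onto the $\omega^q$-eigenspace of $W(p')$. The key algebraic input is already recorded in the excerpt: the commutation relation $W(p)W(p') = \omega^{[p,p']}W(p')W(p)$, equivalently $W(p)\,W(p')\,W(p)^\dagger = \omega^{[p,p']}W(p')$, and the composition law $W(p)W(p')\approx W(p+p')$ up to phase.

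The main step is to compute how conjugation by $W(p)$ moves the eigenprojectors of $W(p')$. From $W(p)W(p')W(p)^\dagger = \omega^{[p,p']}W(p')$ one gets that $W(p)$ intertwines the $\omega^q$-eigenspace of $W(p')$ with the $\omega^{q}$-eigenspace of $\omega^{[p,p']}W(p')$, i.e.\ with the $\omega^{q-[p,p']}$-eigenspace of $W(p')$; concretely
\begin{align}
    W(p)\,\Pi^{p'}_{q}\,W(p)^\dagger = \Pi^{p'}_{q+[p,p']}.
\end{align}
I would then unwind both sides of the claimed equivalence acting on an arbitrary input state and read off that they produce the same outcome distribution and the same post-measurement state up to phase. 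On one side: apply the Weyl correction $W(-p')$ — wait, the roles of $p$ and $p'$ in the statement must be matched carefully to the protocol of Figure \ref{fig:iterated-tp} (there $A_n$ receives setting $p'$ and the accumulated teleportation phase is $W(p)$, $p:=p_1+\dots+p_{n-1}$); so on the "correction-then-measure" side we first apply $W(-p)$ then measure $W(p')$, obtaining outcome $q$ with the state collapsed by $\Pi^{p'}_q W(-p)$. On the other side we measure $W(p')$ directly (without correction) and then subtract $[p,p']$ classically: the raw outcome $q'$ occurs with the state collapsed by $\Pi^{p'}_{q'}$, and using the intertwining identity $\Pi^{p'}_q W(-p) = W(-p)\,\Pi^{p'}_{q - [p,p']}\cdot(\text{phase})$ one identifies $q' = q - [p,p']$, so reporting $q' + [p,p']$ — or $q'$ with the bookkeeping shift — reproduces the first side exactly, modulo an unobservable Weyl phase $W(-p)$ that commutes out front and a scalar $\omega^{\ast}$. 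The three-way equivalence (measurement $W(p)$ with correction $-[p,p']$; correction $W(-p')$ then measurement $W(p)$) then follows by the same manipulation with the symbols swapped, since $[p,p']$ is antisymmetric-free (it is the symmetric form $p_1p_1'+p_2p_2'$, so $[p,p']=[p',p]$) — I would double check this symmetry, as it is what makes the two listed forms interchangeable.

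The main obstacle I anticipate is purely bookkeeping: keeping the global phases, the direction of the shift ($+[p,p']$ versus $-[p,p']$), and the identification $\omega^q\leftrightarrow q$ consistent between the two presentations, and making sure the statement as phrased matches how the protocol in Figure \ref{fig:iterated-tp} actually consumes the outcomes $p_1',\dots,p_{n-1}'$. There is no deep difficulty — everything reduces to the single conjugation identity above plus the composition law — so the proof will be short once the conventions are pinned down. I would present it as: reduce to the single-qudit statement, recall the commutation relation, compute $W(p)\Pi^{p'}_q W(p)^\dagger$, and then expand both circuits and observe they agree up to phase.
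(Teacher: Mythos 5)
Your proposal is correct and rests on exactly the same key computation as the paper's proof: the commutation relation $W(p)W(p') = \omega^{[p,p']}W(p')W(p)$ implies that one Weyl operator permutes the eigenspaces of the other by shifting the eigenvalue label by $[p,p']$ (the paper phrases this as $W(p')\ket{p,q}\approx\ket{p,q+[p,p']}$ on eigenvectors, you phrase it as the intertwining identity on eigenprojectors, which is the same fact). Your additional care with matching the protocol's conventions and the symmetry of $[p,p']$ is reasonable bookkeeping but does not change the argument.
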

\begin{proof}
    Let $\ket{p,q}$ be an $\omega^q$-eigenvector of the Weyl operator
    $W(p)$, where $p \in \mathbb{Z}_d^2$ and $q \in \mathbb{Z}_d$. 
    Let $W(p), W(p')$ be Weyl operators, where $p,p' \in \mathbb{Z}_d^2$.
    The claim is equivalent to saying that
    $W(p')$ permutes the eigenvectors of $W(p)$ by sending
    $\ket{p, q}$ to $\ket{p, q + [p,p']}$, up to an unobservable phase:
    \begin{equation}
        W(p')\ket{p,q} \approx \ket{p, q + [p,p']}
    \end{equation}
    where $q \in \mathbb{Z}_d$.
    By the commutation law of Weyl operators we have
    \begin{align}
        W(p)(W(p')\ket{p,q}) &= \omega^{[p,p']}W(p')W(p)\ket{p,q}\\
        &= \omega^{[p,p']}W(p')\omega^q\ket{p,q}\\
        &= \omega^{[p,p'] + q}(W(p')\ket{p,q})
    \end{align}
    Hence $W(p')\ket{p,q}$ is an eigenvector of $W(p)$
    with eigenvalue $q + [p,p']$, as required.
\end{proof}

From the Lemma, it is clear that the protocol in Figure \ref{fig:iterated-tp}
is equivalent to performing a Weyl measurement and returning the outcome.

\section{Distributing measurements on graphs}
\label{section:distributing-graph}
\begin{figure}
    \centering
    \begin{subfigure}{0.48\textwidth}
    \centering
        \includegraphics[width=1\textwidth]{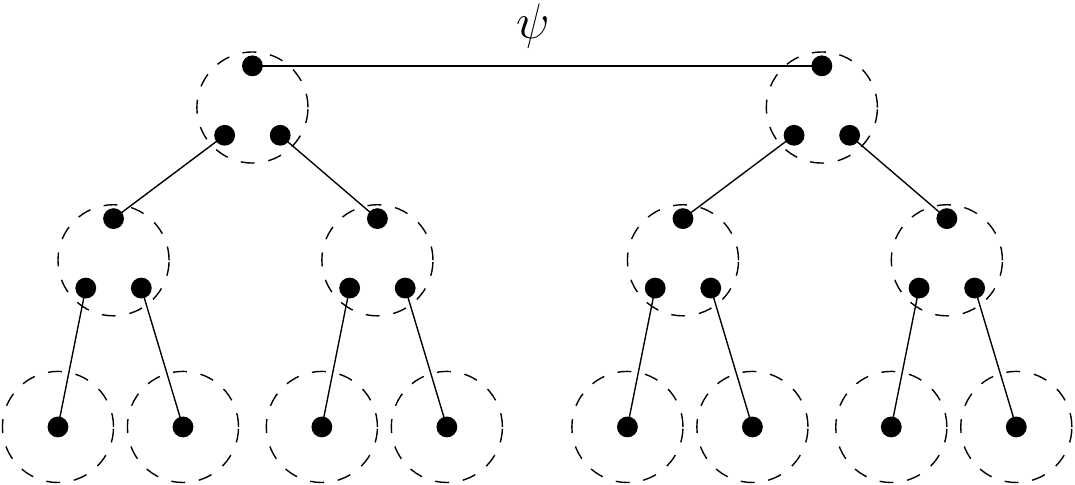}
        \caption{
        }
        \label{fig:gate-structure}
    \end{subfigure}
    \hfill
    \begin{subfigure}{0.48\textwidth}
    \centering
        \includegraphics[width=1\textwidth]{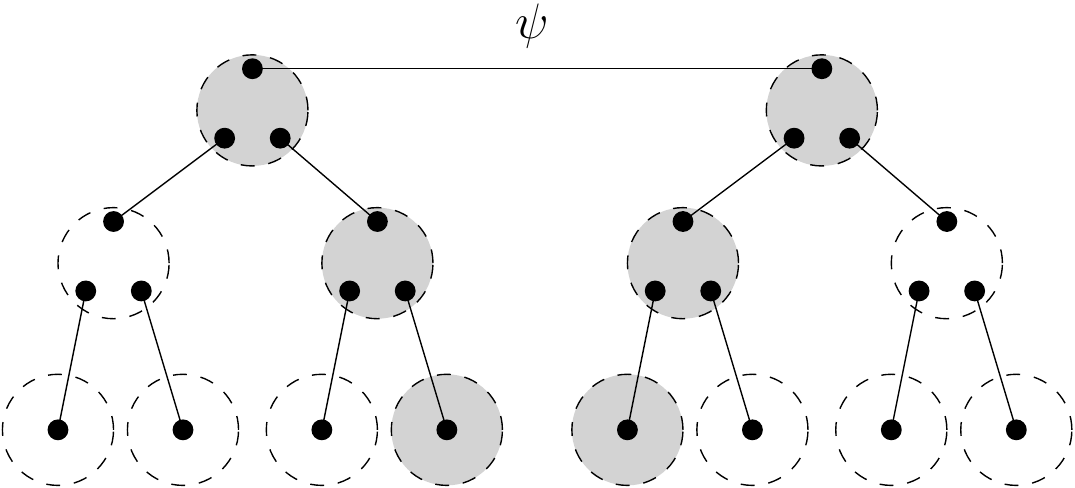}
        \caption{}
        \label{fig:step-one}
    \end{subfigure}
    \caption{
        Given a rooted graph $G$ and a multi-qudit state $\psi$ we consider a scenario
        with an agent for each node of $G$ and qudit of $\psi$. The agents corresponding
        to the roots have each qudit of $\psi$, and agents given by adjacent nodes
        have shared entanglement. (a) $\bullet$ denotes a qudit, and each line connecting
        two dots either the state $\psi$ or a maximally entangled state $\phi$.
        The set of qudits held by each agent is circled. (b) A path in $G$
        for each qudit of $\psi$ defines a sequence of qudits that can be used
        to implement measurements on $\psi$ by agents that are far away.
    }
    \label{fig:graph_scenario_example}
\end{figure}

In Section \ref{section:distributing-line} we showed that measurements on a single-qudit
state can be performed at long distances along a line, using only local entanglement
and constantly many rounds of local measurements. 

We now extend the scenario
from a line to a graph and from a single-qudit state to multiple qudits.
The purpose of this construction is to show
that measurements on a quantum state can be performed
in a distributed way, using a simulation with
only a constant number of rounds.
In Section \ref{section:distribute-2r}
we define a simulation for arbitrary measurements
using two rounds,
in Section \ref{section:distribute-1r}
we restrict to Weyl measurements and present a simulation
with a single round.

The information specifying the setup is conveniently represented as a \emph{rooted graph}.
We first define rooted graphs and some of their basic properties.
\begin{definition}
    A \emph{rooted graph} $G = (V, E, r)$ is an undirected
    and connected graph with a distinguished node $r$ called the \emph{root}.
    \begin{enumerate}
        \item  A \emph{path} is a non-repetitive list of nodes
                $v_1 = r, v_2, \dots, v_l$, starting with the root,
                such that $\{v_i, v_{i+1}\} \in E$ for all $i=1,\dots,l-1$.
               Write $\text{Paths}(G)$ for the set of paths in $G$.
        \item The \emph{neighbourhood} of a node $v \in v$
            is the set of nodes $N_G(v) = \{w \in V \mid \{v,w\} \in E\}$ 
            that are adjacent to $v$.
        \item The \emph{degree} of $G$ is the size of the largest neighbourhood:
                $\text{deg}(G) = \text{max}_{v \in V} \supp{N_G(v)}$.
        \item The \emph{radius} of $G$, denoted by $\text{rad}(G)$,
                is the least $K \geq 1$ such that every $v \in V$
                is reachable by a path of length at most $K$.
    \end{enumerate}
\end{definition}

Let $\psi$ be a multi-qudit state with qudits labelled by a set $I$
and $G = (V, E, r)$ a rooted graph. We consider
a scenario with agents labelled by $I \times V$ (Figure \ref{fig:graph_scenario_example}).
The agents share a single instance of the state $\psi$ and a number of maximally
entangled two-qudit states $\phi$. 
Each of the agents $(i,r) \in I \times V$
has qudit $i$ of $\psi$. Additionally, each pair of agents $(i,v), (i,w) \in I \times V$ such that $\{v,w\} \in E$,
has one qudit each out of a maximally entangled state.
Denote the total state by
\begin{align}
    \ket{\psi,G} &:= 
        \ket{\psi}  \otimes 
        \bigotimes_{i \in I, \{v,w\} \in E} \ket{\phi}_{(i,v,w), (i,w,v)}
\end{align}
The qudit held by agent $(i,v)$ of the maximally entangled 
state $\ket{\phi}_{(i,v,w), (i,w,v)}$ is labelled by $(i,v,w)$.
Agent $(i,r)$ therefore has the following set of qudits
\begin{align}
    \text{Qudits}(i,r) &:= \{i\} \cup \{(i,v,w) \mid w \in N_G(r)\}
\end{align}
where $N_G(r)$ is the neighbourhood of the root, and when $v \neq r$
agent $(i,v)$ has qudits
\begin{align}
    \text{Qudits}(i,v) &:= \{(i,v,w) \mid w \in N_G(v)\}
\end{align}

We now consider the following problem. Suppose that we want to perform
a measurement on each qudit of $\psi$. How can this be done
in such a way that 1) we minimise the probability
that any single agent knows the measurement setting,
2) we minimise the number of agents involved in the protocol.
The solution is to use the teleportation protocols from the previous section.
We first randomly select a path for each
qudit $i \in I$.
Let $r = v_1, \dots, v_j, \dots, v_l = v_i$ be a path in $G$.
We teleport qudit $i$ to agent $(i,v_i)$
using the sequence of qudits (Figure \ref{fig:two_round_examples})
\begin{align}
    i, (i, v_1, v_2), \dots, (i, v_j, v_{j-1}), (i, v_j, v_{j+1}),
    \dots, (i, v_l, v_{l-1})
\end{align}
Here $(i, v_j, v_{j+1}), (i, v_{j+1}, v_j)$ are maximally entangled,
$i, (i,v_1, v_2) \in \text{Qudits}(i,r)$,
and $(i, v_j, v_{j-1}), (i,v_j, v_{j+1}) \in \text{Qudits}(i,v_j)$
for each $j=2,\dots,l-1$. Hence the protocol uses only local measurements at each agent.

\newcommand{\pathdist}{u_\text{paths}}
\begin{definition}
    Given a rooted graph $G$ let 
    $\pathdist \in \mathcal{D}(\text{Paths}(G))$ be any
    distribution such that
    for any $v \in V$
    \begin{align}
        \pathdist(v_1, \dots, v_l) > 0 &\Rightarrow l \leq \text{rad}(G)\\
        \pathdist(v_1, \dots, v_l \text{ such that $v_l = v$})
        &= 1 / \supp{V}
    \end{align}
\end{definition}

If the paths are chosen independently for each qudit $i$
from the distribution $\pathdist$ then the probability
that any given player knows the measurement setting is
at most $1 / \supp{G}$, and the number of agents
involved in simulating the measurement on $i$
is at most $\text{rad}(G)$.

\subsection{Distributing measurements in two rounds} \label{section:distribute-2r}
\begin{figure}
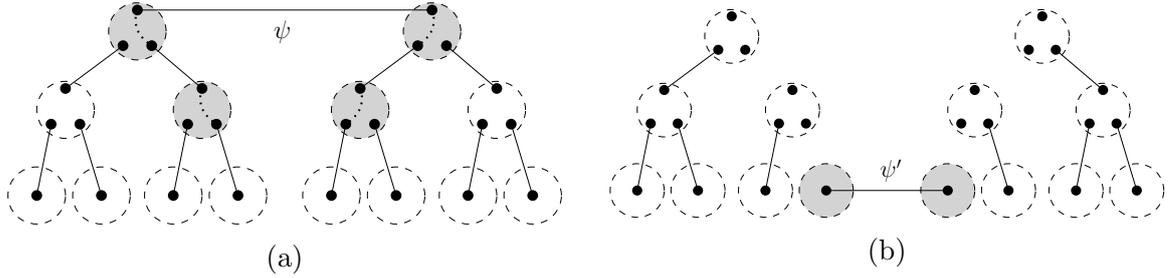

    \centering
    \begin{subfigure}{0.48\textwidth}
    \centering
        \includegraphics[width=1\textwidth]{resource-state-step-1_2.pdf}
        \caption{
        }
        \label{fig:measurement-setting}
    \end{subfigure}
    \hfill
    \begin{subfigure}{0.48\textwidth}
    \centering
        \includegraphics[width=1\textwidth]{resource-state-step-2_2.pdf}
        \caption{}
        \label{fig:step-two}
    \end{subfigure}
    \caption{Simulation protocol. (a) In the first round we perform Bell 
        basis measurements along each path. (b) The first step teleports
        each qudit up to a Weyl operator phase. In the second round we measure
        with the corrected measurement.
    }
    \label{fig:two_round_examples}
\end{figure}
We now suppose that $S = (I, X, Y)$
is a multipartite scenario,
$e$ is a quantum realised empirical model
with quantum realisation $(\psi,M)$ in qudit dimension $d$,
and $G = (V, E, r)$ is a rooted graph.

We first define a multipartite scenario $T(S, G, d)$.
The measurement sites of $T(S, G, d)$ are $I \times V$,
and at each measurement site $(i,v)  \in I \times V$
the measurement settings and outcomes are as follows.
When $v = r$ the measurement settings indicate a measurement
setting at measurement site $i$ in the scenario $S$,
or one of the neighbors of $r$ in the graph $G$.
Otherwise, when $v \neq r$, the
measurement settings indicate either one of the measurement settings
at measurement site $i$ as well as a neighbour of $v$
in $G$ and one of the Weyl measurement settings 
$\mathbb{Z}_d^2$, or two distinct neighbours of $v$ in $G$.
For the measurements involving one of the measurements from $S$
the outcomes are the outcomes given by $S$,
otherwise the outcomes are $\mathbb{Z}_d^2$.
\begin{definition}
    Let $S = (I, X, Y)$ be a multipartite scenario, $d \geq 2$ a dimension,
    and $G = (V, E, r)$ a rooted graph.
    $T(S,G,d)$ is the multipartite scenario with measurement sites $I \times V$
    and the following measurement settings and outcomes:
    \begin{align}
        \begin{tabular}{c | c | c}
             Measurement site & Measurement settings & Outcomes \\
             \hline
             $(i,r)$ & $x \in X_i$ & $Y_{i,x}$\\
                     & $w \in N_G(r)$ & $\mathbb{Z}_d^2$\\
            \hline
            $(i,v)$ & $(x,w,p) \in X_i \times N_G(v) \times \mathbb{Z}_d^2$ 
                & $Y_{i,x}$\\
                    & $w,w' \in N_G(v).\ w \neq w'$ & $\mathbb{Z}_d^2$\\
        \end{tabular}
    \end{align}
    for all $i \in I$ and $v \neq r$.
\end{definition}

Next, we define a quantum realised empirical model $e'$
for the scenario $T(S,G,d)$.
Recall that the quantum realisation $(\psi,M)$
consists of a state $\psi$ with qudits labelled by the measurement sites $I$,
and a single-qudit measurement $M_{i,x}$
on qudit $i$, for each measurement $(i,x)$ of the scenario $S$.

To define the empirical model $e'$ we interpret the measurement settings 
of the scenario $T(S,G,d)$
as quantum measurements on the state $\ket{\psi,G}$ in the following way.
At each measurement site $(i,r) \in I \times V$
the measurement setting $x$
is the measurement $M_{i,x}$
on qudit $i$, and the measurement setting
$w$ is a Bell basis measurement on qudits $i,(i,r,w)$.
Otherwise, when $v \neq r$,
the measurement setting
$(x,w,p)$ is the conjugated measurement $W(p) M_{i,x} W(p)^\dagger$
on qudit $(i,v,w)$,
and $(w,w')$ is a Bell basis measurement on qudits
$(i,v,w), (i,v,w')$.
\begin{definition}
    Let $G = (V, E, r)$ be a rooted graph,
    $S = (I, X, Y)$ a multipartite scenario,
    $\psi$ an $I$-qudit state,
    and $\pi(i,x)$ a single-qudit measurement with
    outcomes $Y_{i,x}$ for each $i \in I, x \in X_i$.
    $e_{G,\psi, \pi} : S_{G,d}$ is the empirical model
    realised by the following measurements on $\ket{G, \psi}$.
    \begin{align}
        \begin{tabular}{c | c | c}
             Measurement site & Measurement setting & Quantum measurement\\
             \hline
             $(i,r)$ & $x \in X_i$ & $\pi(i,x)$ on qudit $i$\\
                     & $w \in N_G(r)$ & Bell basis on qudits $i, (i,v,w)$\\
            \hline
            $(i,v)$ & $(x,w, p) \in X_i \times N_G(v) \times \mathbb{Z}_d^2$ 
                & $W(p) \pi(i,x) W(p)^\dagger$ on qudit $(i,v,w)$\\
                    & $w,w' \in N_G(v).\ w \neq w'$ & 
                        Bell basis on qudits $(i,v,w), (i,v,w')$\\
        \end{tabular}
    \end{align}
\end{definition}

Using the empirical model $e'$ we can simulate the empirical model $e$
in the following way. Suppose that $v$
is a path in $G$. Given any measurement $(i,x)$
on the scenario $S$ we subject the empirical model $e'$
to the following measurements.
In the case that $v = r$ is the path of length one
we perform measurement $x$ on measurement site $(i,r)$.
Otherwise, if $v = v_1, \dots, v_l$ then
we perform measurement $v_2$ on $(i,r)$,
and  measurement $(v_{j-1}, v_{j+1})$
on $(i,v_j)$ for each $j = 2,\dots,l-1$.
If the outcomes of these measurements are
$p_1, \dots, p_{l-1}$ then we perform
measurement $(v_{l-1}, x, \sum_{j=1}^{l-1} q_j)$
on $(i, v_l)$.
This defines a measurement protocol, which we denote by $C_{v, i, x}$.
\begin{align}
    C_{r, i, x} &:= (i,r) \mapsto x
\end{align}
and $C_{(v_1, \dots, v_l, i, x} = C_{(v_1, \dots, v_l, i, x}^1,
C_{(v_1, \dots, v_l, i, x}^2(s_1)$, where
\begin{align}
    C_{(v_1, \dots, v_l), i, x}^1 &:=
    \begin{cases}
        (i,r) &\mapsto v_2\\
        (i, v_j) &\mapsto (v_{j-1}, v_{j+1})
    \end{cases}\\
    C_{(v_1, \dots, v_l), i, x}^2(s_1) &:= 
        (i, v_l) \mapsto (v_{l-1}, x, \sum_{j=1}^{l-1} s_1(i, v_j))
\end{align}

After performing these measurements we return the outcome of the measurement
performed at measurement site $(i,v_l)$.
Write $g_{v,i,x}:\mathcal{E}_S(C_{v,i,x}) \to Y_{i,x}$
for the function from runs of $C_{v,i,x}$ to outcomes of $(i,x)$.
\begin{align}
    g_{r, i, x} &:= s \mapsto s(i,r)\\
    g_{(v_1, \dots, v_l), i, x} &= (s_1, s_2) \mapsto s_2(i, v_l)
\end{align}

\begin{definition}
    Let $G = (V, E, r)$ be a rooted graph,
    $S = (I, X, Y)$ a multipartite scenario, and $d \geq 2$.
    $s(S,G,d) : T(S,G,d) \to S$ is the simulation 
    \begin{align}
        \sum_{v \in \text{Paths}(G)^I} \big[ \prod_{i \in I} \pathdist(v_i) \big] \cdot t_v
    \end{align}
    where for each $v = \{v_i \in \text{Paths}(G)\}_{i \in I}$
    the deterministic simulation
    $t_v$ is defined by 
    $t_v := (\{C_{v_i, i, x} \}_{i \in I, x \in X_i}, \{g_{v_i, i, x}\}_{i \in I, x \in X_i})$.
\end{definition}

When one of the deterministic simulations
$t_v$ is applied to the empirical model $e'$
the effect is to perform the two-round teleportation
protocol along a choice of path $v_i$ for each measurement site $i \in I$.
It is therefore clear that $e'$ in fact simulates
the empirical model $e$.
\begin{lemma}
    Let $e_{\psi, M} : S$ be a quantum realised empirical mode,
    $G = (V, E ,r)$ a rooted graph.
    $s(S,G,d)$ simulates $e_{\psi,M}$ using $e_{G, \psi, M}$ as a resource.
    \begin{align}
        s(S,G,d)_*(e_{G, \psi, M}) = e_{\psi, M}
    \end{align}
\end{lemma}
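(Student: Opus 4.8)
# Proof Proposal

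The plan is to verify that for each deterministic simulation $t_v$ appearing in the convex sum defining $s(S,G,d)$, the pushforward $(t_v)_*(e_{G,\psi,M})$ equals $e_{\psi,M}$; since the pushforward of a convex combination is the corresponding convex combination of pushforwards, and the weights $\prod_i \pathdist(v_i)$ sum to one, this immediately gives $s(S,G,d)_*(e_{G,\psi,M}) = e_{\psi,M}$. So the whole argument reduces to a single per-path calculation, which is really just an iterated application of the teleportation results of Section \ref{section:distributing-line}.

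First I would fix a tuple of paths $v = \{v_i\}_{i \in I}$ and unwind what happens when we apply the measurement protocol $C_{v_i,i,x}$ (for a chosen context's worth of settings) to the resource state $\ket{\psi, G}$. For a single qudit $i$ with path $v_i = v_1, \dots, v_l$, the first-round Bell-basis measurements on the consecutive entangled pairs along the path teleport qudit $i$ of $\psi$ from $(i,r)$ to $(i,v_l)$, producing, by the composition law $W(p)W(p') \approx W(p+p')$ used in Section \ref{section:2r-line}, a post-measurement state $W(p)\ket{\psi_i}$ on qudit $(i,v_l)$ where $p = \sum_{j=1}^{l-1} s_1(i,v_j)$ is exactly the sum recorded in the second-round setting. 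The second-round conjugated measurement $W(p)\,M_{i,x}\,W(p)^\dagger$ on that qudit then, by the standard identity that $W(p)$ intertwines $M_{i,x}$ with its conjugate (the same fact underlying Lemma \ref{lemma:permutation-lemma} and the two-round protocol in Figure \ref{fig:two_round_teleportation_on_line}), produces the same outcome distribution as measuring $M_{i,x}$ directly on qudit $i$ of $\psi$. The read-out map $g_{v_i,i,x}$ returns precisely this second-round outcome, so the composite $g \circ (\text{run of } C_{v_i,i,x})$ reproduces the measurement $M_{i,x}$ on qudit $i$.

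Since the entangled pairs for distinct $i \in I$ and distinct edges are on disjoint qudits, and the protocols $\{C_{v_i,i,x_i}\}_{i}$ act on disjoint qudit sets, the joint behaviour over a context $C \subset \coprod_i X_i$ factors exactly as the tensor product $\bigotimes_{(i,x_i) \in C} \pi(i,x_i)_{s(i,x_i)}$ acting on $\psi$ — which is the definition of $e_{\psi,M}(C)$. Writing this out: for any context $C$ of $S$,
\begin{align}
    (t_v)_*(e_{G,\psi,M})(C)
        = \sum_{r \in \mathcal{E}_T(f_C)} e_{G,\psi,M}(f_C)(r) \cdot g_C(r)
        = e_{\psi,M}(C),
\end{align}
where the middle equality is the per-qudit teleportation computation above applied in parallel, and the unobservable global phases drop out because only measurement outcome probabilities matter. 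Then
\begin{align}
    s(S,G,d)_*(e_{G,\psi,M})
        = \sum_{v \in \mathrm{Paths}(G)^I} \Big[\prod_{i \in I} \pathdist(v_i)\Big] (t_v)_*(e_{G,\psi,M})
        = \sum_{v} \Big[\prod_{i \in I} \pathdist(v_i)\Big] e_{\psi,M}
        = e_{\psi,M}.
\end{align}

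The main obstacle I expect is purely bookkeeping rather than conceptual: carefully matching the qudit labels $(i,v,w)$ assigned to the entangled pairs against the sequence of qudits the teleportation chain is supposed to traverse, and checking that at each intermediate agent $(i,v_j)$ the two qudits $(i,v_j,v_{j-1})$ and $(i,v_j,v_{j+1})$ being jointly Bell-measured are indeed the ones held by that agent (i.e. that $v_{j-1},v_{j+1} \in N_G(v_j)$, which holds because $v_i$ is a path). One also has to be slightly careful that the Weyl phases accumulated along the chain are correctly summed — the recorded setting $\sum_{j=1}^{l-1} s_1(i,v_j)$ must be the total phase $p$ appearing in $W(p)\ket{\psi_i}$ — but this is exactly the content of the composition law already invoked in Section \ref{section:distributing-line}, so no new work is needed. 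A clean way to present it is to first state and use a small lemma isolating the single-qudit claim (teleportation along one path followed by a conjugated measurement $=$ direct measurement), then observe the parallel/tensor factorization, then assemble the convex sum.
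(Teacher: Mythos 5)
Your proposal is correct and follows essentially the same route as the paper, which justifies the lemma only by the informal observation that applying each deterministic simulation $t_v$ amounts to running the two-round teleportation protocol along the chosen paths, so that $e_{G,\psi,M}$ reproduces $e_{\psi,M}$. Your write-up simply fills in the details the paper leaves implicit — the per-path teleportation computation, the tensor factorisation across disjoint qudits, and the convexity step — without introducing any new idea or different decomposition.
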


\subsection{Distributing Weyl measurements in a single round} \label{section:distribute-1r}

Consider a multipartite scenario $(I, \mathbb{Z}_d^2, \mathbb{Z}_d)$,
a quantum realised empirical model $e$
along with a quantum realisation $(\psi, W)$
where $W(i,p)$ is the Weyl measurement $W(p)$
on qudit $i$, and $G = (V, E, r)$ a rooted graph.

\begin{figure}
    \centering
    \begin{subfigure}{0.48\textwidth}
    \centering
        \includegraphics[width=1\textwidth]{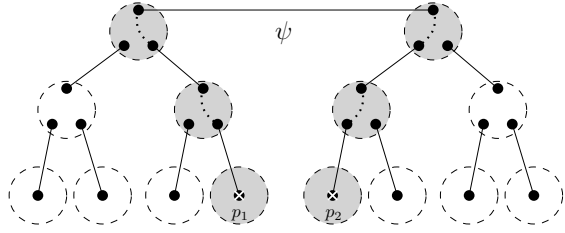}
    \end{subfigure}
    \caption{
    The state $\ket{\psi,G}$ for a two-qudit state $\psi$
    and a depth two binary tree. ``$\bullet$'' denotes a qudit, two qudits connected 
    by a line a maximally entangled state, and each group of qudits is surrounded by a dotted circle.
    }
    \label{fig:teleportation_examples_one_round}
\end{figure}

We first define a multipartite scenario $T(I,G,d)$
with measurement sites $I \times V$.
At each measurement site $(i,r) \in I \times V$
the measurement settings are either 
one of the Weyl measurement settings $\mathbb{Z}_d^2$
or a neighbour of $r$.
Otherwise, when $v \neq r$,
the measurement settings at $(i,v)$ are either a distinct pair of neighbours of 
$v$ or a Weyl measurement setting as well as a neighbour of $v$.
The outcomes are either $\mathbb{Z}_d^2$ or $\mathbb{Z}_d$.
\begin{definition}
    Let $I$ be a set, $d \geq 2$, and $G = (V, E, r)$ a rooted graph.
    $T(I,G,d)$ is the multipartite scenario with measurement sites $I \times V$
    and the following measurement settings and outcomes:
    \begin{align}
        \begin{tabular}{c| c | c}
             Measurement site &  Measurement settings & Outcomes \\
             \hline
             $(i,r)$ & $w \in N_G(r)$ & $\mathbb{Z}_d^2 $\\
                     & $p \in \mathbb{Z}_d^2$ & $\mathbb{Z}_d$\\
             \hline
             $(i,v)$ & $w \neq w' \in N_G(v)$ & $\mathbb{Z}_d^2$\\
                     & $(w,p) \in N_G(v) \times \mathbb{Z}_d^2$ & $\mathbb{Z}_d$
        \end{tabular}
        \label{table:measurement-settings}
    \end{align}
    for all $i \in I$ and $v \neq r$.
\end{definition}

We interpret the measurement settings of the scenario
$T(I,G,d)$ as quantum measurements on the state
$\ket{\psi,G}$.
At each measurement site $(i,r) \in I \times V$
measurement setting $p$ is a Weyl measurement $W(p)$
on qudit $i$, and measurement setting $w$
is a Bell basis measurement on qudits $i, (i,r,w)$.
Otherwise, when $v \neq r$, measurement
setting $(p,w)$ is a Weyl measurement on qudit
$(i,v,w)$ and $(w,w')$ a Bell basis measurement
on qudits $(i,v,w), (i,v,w')$.
\begin{definition}
    Let $I$ be a set, $d \geq 2$, $\psi$ an $I$-qudit
    state, and $G = (V, E, r)$ a rooted graph.
    The empirical model $e_{G,\psi}:S_{G,I,d}$
    is the empirical model realised by the state $\ket{G,\psi}$
    and measurements:
    \begin{align}
        \begin{tabular}{c|c|c}
             Measurement site & Measurement setting &  Quantum measurement \\
             \hline
             $(i,r)$ & $p \in \mathbb{Z}_d^2$ & $W(p)$ on qudit $i$\\
                     & $w \in N_G(r)$ & Bell basis on qudits $i, (i,r,w)$\\
            \hline
             $(i,v)$ & $(w,p) \in N_G \times \mathbb{Z}_d^2$ 
                        & $W(p)$ on qudit $(i,v,w)$\\
                     & $w \neq w' \in N_G(v)$ & 
                        Bell basis on qudits $(i,v,w), (i,v,w')$
        \end{tabular}
    \end{align}
    for all $i \in I$ and $v \neq r$.
\end{definition}

We define a simulation from $T(I, G,d)$ to $(I, \mathbb{Z}_d^2, \mathbb{Z}_2)$
in the following way (Figure \ref{fig:teleportation_examples_one_round}). 
Given a path $v$ in $G$ an a measurement
setting $(i,p)$ for the scenario $(I, \mathbb{Z}_d^2, \mathbb{Z}_2)$
perform the measurement $p$ on measurement site $(i,r)$
in the case that $v = r$,
otherwise if $v = v_1, \dots, v_l$
perform measurement $v_2$ on $(i,r)$,
$(v_{j-1}, v_{j+1})$ on $(i, v_j)$ where $j=2,\dots,l-1$, 
and measurement setting $(v_{l-1}, p)$ on $(i, v_l)$. 
If the outcomes of these measurements are $p_1, \dots, p_{l-1}, q$
then we return the outcome $q - [p, p_1 + \dots p_{l-1}]$.

Write $f_{v, i, p}$ for the measurement context,
and $g_{v,i,p}:\mathcal{E}_{T(I,G,d)}(f_{v,i,p}) \to \mathbb{Z}_d$
for the outcome map:
\begin{align}
    f_{r,i,p} &:= (i,r) \mapsto p\\
    g_{r,i,p} &:= s \mapsto s(i,r)\\
    f_{(v_1, \dots, v_l, i,p} &:= \begin{cases}
        (i,v_1) &\mapsto v_2\\
        (i, v_j) &\mapsto (v_{j-1}, v_{j+1})\\
        (i, v_l) &\mapsto (v_{l-1}, p)
    \end{cases}\\
    g_{v_1, \dots, v_l, i,p} &:= s \mapsto s(i, v_l) + [p, \sum_{j=1}^{l-1}s(i, v_j)]
\end{align}

\begin{definition}
    Let $I$ be a set, $d \geq 2$, and $G = (V, E, r)$ 
    a rooted graph.
    $s(I,G,d): T(I, G, d) \to (I, \mathbb{Z}_d^2, \mathbb{Z}_d)$
    is the simulation 
    \begin{align}
        \sum_{v \in \text{Paths}(G)^I} \big[ \prod_{i \in I} \pathdist(v_i) \big] \cdot t_v
    \end{align}
    where $t_v$ is the deterministic simulation
    $(\{f(v_i)_{i,p}\}_{i \in I, p \in \mathbb{Z}_d^2}, \{g(v_i)_{i, p}\}_{i \in I, p \in \mathbb{Z}_d^2})$.
\end{definition}

When the simulation is applied to the empirical model $e$
the effect is to perform the single-round teleportation protocol on a line.
It is therefore clear that the pushforward of
$e_{G,\psi,M}$ is $e$.
\begin{lemma}
    Let $e_\psi:(I, \{\mathbb{Z}_d^2\}, \{\mathbb{Z}_d\})$
    be a Pauli measurement model
    and $G$ a rooted graph.
    $s_G: S_{G,I,d} \to (I, \mathbb{Z}_d^2, \mathbb{Z}_d)$
    simulates $e_\psi$ using $e_{\psi,G}$ as a resource.
    \begin{equation}
        (s_G)_*(e_{\psi,G}) = e_\psi
    \end{equation}
\end{lemma}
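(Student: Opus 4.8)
The plan is to unwind the definitions and check that applying the simulation $s_G = s(I,G,d)$ to the resource model $e_{\psi,G}$ reproduces exactly the Weyl measurement model $e_\psi$ on $(I, \mathbb{Z}_d^2, \mathbb{Z}_d)$. Since both $e_\psi$ and $e_{\psi,G}$ are quantum realised, and the simulation is a probability distribution over deterministic simulations $t_v$ indexed by tuples of paths $v = \{v_i\}_{i\in I}$, the pushforward $(s_G)_*(e_{\psi,G})$ is the convex combination $\sum_v \big[\prod_i \pathdist(v_i)\big]\cdot (t_v)_*(e_{\psi,G})$. So it suffices to show that for \emph{every} choice of paths $v$, the deterministic pushforward $(t_v)_*(e_{\psi,G})$ already equals $e_\psi$; the convex combination of copies of $e_\psi$ is then $e_\psi$ since $\sum_v \prod_i \pathdist(v_i) = 1$.

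For a fixed $v$, the deterministic simulation $t_v$ acts measurement-site-wise: a measurement setting $(i,p)$ in the target scenario is translated into the context $f_{v_i, i, p}$ on the resource scenario, with outcomes post-processed by $g_{v_i, i, p}$. The key step is to compute, for a fixed $i$ and path $v_i = v_1, \dots, v_l$ with $v_1 = r$, the joint distribution that $e_{\psi,G}$ assigns to the context consisting of a Bell measurement on qudits $i, (i,r,v_2)$, Bell measurements $(v_{j-1}, v_{j+1})$ on $(i,v_j)$ for $j = 2, \dots, l-1$, and a Weyl measurement $W(p)$ on qudit $(i,v_l,v_{l-1})$. This is precisely the single-round teleportation-on-a-line protocol of Section~\ref{section:1r-line} (Figure~\ref{fig:iterated-tp}): the Bell measurements teleport qudit $i$ of $\psi$ along the chain of maximally entangled pairs, so that after outcomes $p_1, \dots, p_{l-1}$ the target qudit carries the state $W(p_1 + \dots + p_{l-1})\ket{\psi_i}$ (up to global phase), where $\psi_i$ is the relevant reduced component of $\psi$. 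Measuring $W(p)$ on $W(p')\ket{\cdot}$ with $p' := p_1 + \dots + p_{l-1}$ then, by Lemma~\ref{lemma:permutation-lemma}, yields outcome $q$ distributed exactly as a measurement of $W(p)$ on $\ket{\psi_i}$ shifted by $[p, p']$; hence $g_{v_i,i,p}(s) = q - [p, p'] = s(i,v_l) + [p, \sum_j s(i,v_j)]$ — wait, sign conventions: $g$ is defined with $+[p,\sum s(i,v_j)]$ and the protocol applies correction $-[p,p']$, so one must track that the Bell outcomes $s(i,v_j)$ appearing in $g$ are negatives of the $p_j$ in the informal description, or that the two sign conventions are compatible; this bookkeeping is routine but must be done carefully. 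The upshot is that the marginal on site $(i,v_l)$ after post-processing is $\big[ \bigotimes \text{measure } W(p) \text{ on } \psi_i \big]$, i.e. exactly what $e_\psi$ prescribes at $(i,p)$.

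Since the $|I|$ chains use disjoint sets of qudits of $\ket{\psi,G}$ and the measurements across distinct sites commute, the joint behaviour factors correctly: the joint distribution over all sites of the pushforward is the Born-rule distribution of measuring $\bigotimes_{i} W(p_i)$ on $\psi$, which is $e_\psi$ on the context $\{(i,p_i)\}_i$. Because $e_\psi$ is no-signalling, agreement on all maximal contexts determines it, so $(t_v)_*(e_{\psi,G}) = e_\psi$, completing the argument. I expect the main obstacle to be the careful verification of the teleportation identity — tracking the accumulated Weyl phases $W(p_1 + \cdots + p_{l-1})$ through the chain of Bell measurements using $W(p)W(p') \approx W(p+p')$, and reconciling the sign/ordering conventions between Lemma~\ref{lemma:permutation-lemma}, the definition of $g_{v,i,p}$, and the definition of the Bell basis $\mathcal{B} = \{(I \otimes W(p))\ket{\phi}\}$ — rather than anything conceptually deep; the rest is formal manipulation of the pushforward definition.
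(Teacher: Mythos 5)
Your proposal is correct and follows essentially the same route as the paper, which disposes of this lemma in one sentence by appealing to the single-round teleportation-on-a-line protocol of Section \ref{section:1r-line} and Lemma \ref{lemma:permutation-lemma}; your decomposition into deterministic simulations $t_v$ and the site-wise teleportation-plus-classical-correction computation is exactly the intended argument, spelled out in more detail than the paper gives. The sign/convention bookkeeping you flag (between the $-[p,p']$ correction in the prose and the $+[p,\sum_j s(i,v_j)]$ in the definition of $g_{v,i,p}$) is a real discrepancy in the source, so your caution there is warranted rather than a gap in your argument.
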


\section{Distributing non-local games on graphs} \label{section:distributing-games}
In Section \ref{section:distributing-graph} we showed that
measurements on a multi-qudit state can be simulated in a distributed
way, using a graph as a template. We now use this construction
to define distributed versions of non-local games and show
that they are solved by quantum circuits of low depth and fan-in.
We present two versions of this. In Section 
\ref{section:arbitrary-nlg} we use the two-round teleportation
protocol, and we work with general non-local games.
In Section \ref{section:weyl-nlg} we use the single-round protocol
and we restrict attention to non-local games using Weyl measurements.

\subsection{Two-round distributed non-local games} \label{section:arbitrary-nlg}

Let $S = (I, X, Y)$ be a multipartite scenario,
$(e, \Phi)$ a non-local game along with
a quantum realisation $(\psi,M)$ of $e$
in qudit dimension $d \geq 2$, and $G = (V, E, r)$ 
a rooted graph.

In Section \ref{section:distribute-2r} we defined
a measurement scenario $T(S,G,d)$, a quantum realised empirical model $e'$,
and a simulation $s(S,G,d):T(S,G,d) \to S$ such
that ${s(S,G,d)}_*(e') = e$. We now consider the pullback
of the cooperative game $\Phi$ across the simulation $s(S,G,d)$.
Because $e'$ simulates $e$ we have that the success probability
of $e'$ on the pullback problem is equal to the success
probability of $e$. $e'$ therefore violates
the classical bound for the non-local game $(e,\Phi)$.
\begin{align}
    p_S(e', {s(S,G,d)}^*(\Phi)) = p_S(e, \Phi) > \gamma
\end{align}
where $\gamma$ is the classical bound.

We can implement $e'$ as a two-round quantum circuit strategy (Figure \ref{fig:two_round_circuit_strategy}).
We first have to prepare $\ket{\psi,G}$.
This can be done with a single unitary gate of fan-in $\supp{I}$
and a number of unitary two-qudit gates.
We then have to implement measurements. The measurements
corresponding to each measurement site $(i,v)$
act on $\text{Qudits}(i,v)$. The fan-in of these gates, 
therefore, depends only on the \emph{degree} of $G$.

\begin{figure}
    \centering
    \begin{subfigure}{\textwidth}
        \includegraphics[width=\textwidth]{my-circ-2.pdf}
        \caption{}
        \label{fig:my-circ-2}
    \end{subfigure}
    \begin{subfigure}{\textwidth}
        \centering
        \begin{align*}
            U_{\psi,G} &= 
                U_{\psi} \otimes 
                    \bigotimes_{i \in I, \{v,w\} \in E} U_{(i, v, w), (i,w,v)}\\
                    U_\psi \ket{0 \dots 0} &= \ket{\psi}\\
                    U_{(i,v,w), (i,w,v)}\ket{00} &= \ket{\phi}
        \end{align*}
        \caption{}
        \label{fig:my-circ-state}
    \end{subfigure}
    \begin{subfigure}{\textwidth}
        \centering
        \begin{tabular}{c | c|c}
            Measurement site &  Input value & Measurement setting \\
             \hline
             $(i,r)$ & $w \in N_G(r)$ & Bell basis on qudits $i, (i,r,w)$\\
                     & $x \in X_i$ & $M_{i,x}$ on qudit $i$\\
                     & $\bullet$ & Identity\\
            \hline
            $(i,v)$ & $w, w' \in N_G(v).\ w \neq w'$ & Bell basis on qudits $(i,v,w), (i,v,w')$\\
                    & $(x, w, p) \in X_i \times N_G(v) \times \mathbb{Z}_d^2$ 
                        & $W(p) M_{i,x} W(-p)$ on qudit $(i, v, w)$\\
                        & $\bullet$ & Identity
        \end{tabular}
        \caption{}
    \end{subfigure}
    \caption{
        The two-round quantum circuit strategy $Q_{G,\psi,M}$
        uses a multi-qudit register initially set to the computational basis state. The state $\ket{\psi,G}$ is prepared by 
        a single $\supp{I}$-qudit gate $U_\psi$ and a two-qudit gate
        $U_{(i,v,w), (i,w,v)}$ for each $i \in I, \{v,w\} \in E$.
        The first round of inputs $x_{i,v}$
        is used to control a non-destructive measurement
        gate $M_{i,v}$ with measurement settings
        given by (b),
        and the second round of inputs $x_{(i,v)}'$
        controls a destructive measurement gate $M_{i,v}'$
        also with measurement settings from (b).
    }
    \label{fig:two_round_circuit_strategy}
\end{figure}

\begin{lemma}
    Let $S = (I, X, Y)$ be a multipartite scenario and
    $(e, \Phi)$ a non-local game with classical bound $\gamma$.
    Suppose that $e$ has a quantum realisation in qudit dimension $d \geq 2$.
    There exists a two-round cooperative game $\Phi'$ 
    with classical bound $\gamma$ and quantum circuit strategy $Q$
    such that
    \begin{enumerate}
        \item The success probability of $Q$ exceeds $\gamma$: $p_S(Q, \Phi') > \gamma$.
        \item The depth and maximal fan-in of $Q$ depends only on the size of $I$ and
            the degree of $G$.
    \end{enumerate}
\end{lemma}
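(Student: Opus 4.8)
The plan is to assemble the pieces built in Section~\ref{section:distribute-2r}. Fix the rooted graph $G = (V,E,r)$ and a quantum realisation $(\psi,M)$ of $e$ in dimension $d$, and set $\Phi' := {s(S,G,d)}^{*}(\Phi)$, the pullback of the single-round game $\Phi$ across the two-round simulation $s(S,G,d) : T(S,G,d) \to S$. By the definition of the pullback of a game this is a two-round cooperative game on the scenario $T(S,G,d)$. The circuit will be $Q := Q_{G,\psi,M}$ of Figure~\ref{fig:two_round_circuit_strategy}. It then remains to verify the three properties: that $\Phi'$ has classical bound $\gamma$, that $p_S(Q,\Phi') > \gamma$, and the resource bound.

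First I would establish the classical bound. Let $e_{\mathrm{NC}}$ be any non-contextual empirical model of $T(S,G,d)$, extended to a two-round behaviour by no-disturbance. Since $s(S,G,d)$ is a convex combination of the deterministic simulations $t_v$, and the pushforward of a non-contextual model along a deterministic simulation is again non-contextual (by the argument used for single-round simulations in Section~\ref{section:simulations-1r}, whose formula for the pushforward is structurally the same in the $n$-round case), the pushforward $s(S,G,d)_{*}(e_{\mathrm{NC}})$ is a convex combination of non-contextual models and hence non-contextual. The defining property of the pullback then gives
\[
    p_S(e_{\mathrm{NC}}, \Phi') = p_S\!\big(s(S,G,d)_{*}(e_{\mathrm{NC}}), \Phi\big) \leq \gamma ,
\]
the inequality because $\gamma$ is the classical bound of the non-local game $(e,\Phi)$. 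Thus $\Phi'$ has classical bound $\gamma$.

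Next I would treat the quantum strategy. By the Lemma of Section~\ref{section:distribute-2r} the empirical model $e_{G,\psi,M}$ satisfies $s(S,G,d)_{*}(e_{G,\psi,M}) = e$, and by construction the behaviour of $Q_{G,\psi,M}$ equals $e_{G,\psi,M}$: the circuit prepares $\ket{\psi,G}$ with a single $\support{I}$-qudit gate $U_\psi$ together with the two-qudit gates $U_{(i,v,w),(i,w,v)}$, and then performs the classically controlled Bell-basis and conjugated measurements recorded in the table of Figure~\ref{fig:two_round_circuit_strategy}. Using the pullback property once more,
\[
    p_S(Q, \Phi') = p_S(e_{G,\psi,M}, \Phi') = p_S\!\big(s(S,G,d)_{*}(e_{G,\psi,M}), \Phi\big) = p_S(e,\Phi) > \gamma ,
\]
which is claim~1. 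For claim~2 I would count layers and fan-in: the state-preparation layer is a single layer of fan-in $\max(\support{I},2)$; the sets $\text{Qudits}(i,v)$ carry pairwise disjoint qudit labels, so the first-round measurement gates act on disjoint wires and form one layer, as do the second-round gates, and each such gate touches at most $\supp{N_G(v)}+1 \leq \text{deg}(G)+1$ qudits together with its classical control wire, giving fan-in at most $\text{deg}(G)+2$. Since every round-one output is produced before any round-two input is consumed, the interactivity constraint on a two-round circuit strategy holds. Hence $Q$ has depth a fixed constant and maximal fan-in $\max(\support{I}, \text{deg}(G)+2)$, depending only on $\support{I}$ and $\text{deg}(G)$.

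The one step that needs genuine care is the identification of the behaviour of the two-round circuit $Q$ with the quantum realised model $e_{G,\psi,M}$: here one uses that $e_{G,\psi,M}$ is no-signalling, so that executing the first round of Bell-basis measurements and then, on the post-measurement register, the conjugated measurements $W(p)M_{i,x}W(p)^{\dagger}$ reproduces exactly the joint statistics of $e_{G,\psi,M}$ on every two-round measurement protocol occurring in $\Phi'$; the two-round teleportation identities of Section~\ref{section:distributing-line} are what make these two descriptions agree. Everything else is routine bookkeeping with the definitions of pushforward, pullback, depth and fan-in.
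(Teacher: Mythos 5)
Your proposal is correct and follows essentially the same route as the paper, which proves this lemma via the discussion immediately preceding it: take $\Phi'$ to be the pullback ${s(S,G,d)}^{*}(\Phi)$, use $s(S,G,d)_{*}(e_{G,\psi,M}) = e$ together with the defining property of the pullback to get $p_S(Q,\Phi') = p_S(e,\Phi) > \gamma$, and observe that the circuit of Figure \ref{fig:two_round_circuit_strategy} has one state-preparation layer of fan-in $\supp{I}$ plus measurement layers whose gates act on $\text{Qudits}(i,v)$ and hence have fan-in controlled by $\text{deg}(G)$. The only additions you make are to spell out the verification that $\Phi'$ retains classical bound $\gamma$ (pushforwards of non-contextual models along deterministic simulations are non-contextual, so the pullback property transfers the bound) and the explicit layer/fan-in count, both of which the paper leaves implicit; these are correct and consistent with its framework.
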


\begin{figure}
    \centering
    \begin{subfigure}{0.48\textwidth}
    \centering
        \includegraphics[width=1\textwidth]{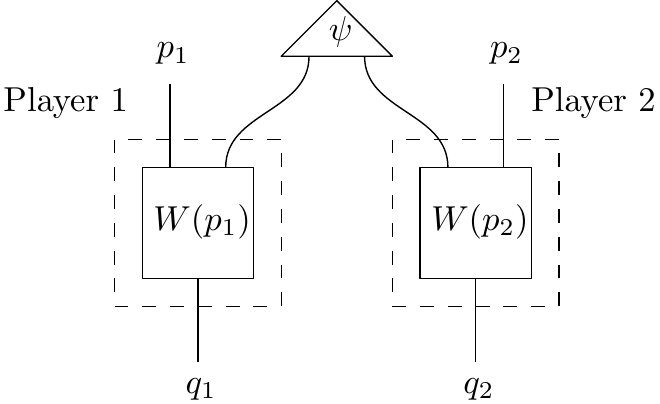}
        \caption{
        }
        \label{fig:measurement-setting}
    \end{subfigure}
    \hfill
    \begin{subfigure}{0.48\textwidth}
    \centering
        \includegraphics[width=1\textwidth]{resource-state-step-1.pdf}
        \caption{}
        \label{fig:step-two}
    \end{subfigure}
    \caption{(a) 
        A generic two-player Weyl measurement game. Alice and Bob
        share a two-qudit state $\psi$. Verifier randomly selects
        $q_1, q_2 \in \mathbb{Z}_2^2$ and an accepting condition
        $A \subset \mathbb{Z}_d^2$ according to a probability distribution
        $d(q_1,q_2,A)$. Alice and Bob measure the Weyl 
        measurements $W(p_1), W(p_2)$ respectively. 
        Their success probability is the likelihood
        that $(q_1,q_2) \in A$.
        (b) Graph version of the game (a) played on a tree.
        randomly selects paths $(r, v_{11}, \dots, v_{1k_1}), (r, v_{21}, \dots, v_{2k_2})$ according to a path distribution $d_\text{paths}$
        and an instance $(p_1, p_2, A)$ of (a) with probability $d(p_1, p_2, A)$.
        Verifier sends players $(1, r), (1, v_{11}), \dots, (1, v_{1k_1})$
        They win if $(q_1' + [p_1, p_1'], q_2' + [p_2, p_2']) \in A$,
        where $p_i' = p_{i1}' + \dots + p_{i2}'$.
    }
    \label{fig:teleportation}
\end{figure}

\subsection{Single-round distributed Weyl-measurement games} \label{section:weyl-nlg}

Let $(e,\Phi)$ be a non-local game for a multipartite scenario
on the form $(I, \mathbb{Z}_d^2, \mathbb{Z}_d)$,
such that $e$ has a quantum realisation on an $I$-qudit
state $\psi$, where the measurement $(i,p)$
is the Weyl measurement $W(p)$ on qudit $i$.
Let $G = (V, E, r)$ be a rooted graph.

In Section \ref{section:distribute-1r} we defined a quantum realised
empirical model $e'$ and a single-round simulation $s(I,G,d)$
such that ${s(I,G,d)}_*(e') = e$.
Taking the pullback $s(I,G,d)^*(\Phi)$ we then
have
\begin{align}
    p_S(e', s(I,G,d)^*(\Phi)) > \gamma
\end{align}
where $\gamma$ is the classical bound of $(e,\Phi$.
We can implement $e'$ as the quantum circuit in Figure (\ref{fig:circuit-definition}).

\begin{lemma}
    Let $(e, \Phi)$ be a Weyl measurement game with
    classical bound $\gamma$ and $G$ a rooted graph.
    Consider the pullback of $\Phi$ across the single-round simulation.
    This game has a quantum circuit strategy $Q$
    such that 
    that
    \begin{enumerate}
        \item The success probability of $Q$ at the pullback of $\Phi$
            is the success probability of $e$ at $\Phi$, which exceeds $\gamma$:
                $p_S(Q, s(I,G,d)^*(\Phi)) > \gamma$.
        \item  The depth and maximal fan-in of $Q$ is only
            dependent on the size of $I$ and the degree of $G$.
    \end{enumerate}
\end{lemma}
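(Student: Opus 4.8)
The plan is to assemble this lemma directly from the pieces already established in Sections~\ref{section:distributing-graph} and~\ref{section:weyl-nlg}, so the work is mostly bookkeeping about depth and fan-in rather than any new conceptual input. For part~(1), I would start from the identity ${s(I,G,d)}_*(e') = e$ proved in the last lemma of Section~\ref{section:distribute-1r}, combine it with the defining property of the pullback,
\begin{align}
    p_S(e', s(I,G,d)^*(\Phi)) = p_S({s(I,G,d)}_*(e'), \Phi) = p_S(e, \Phi),
\end{align}
and then invoke the hypothesis that $(e,\Phi)$ is a non-local game with classical bound $\gamma$, so that $p_S(e,\Phi) > \gamma$. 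The remaining point is to exhibit a quantum \emph{circuit} strategy $Q$ with $\tilde{Q} = e'$; here I would reuse the construction $Q_{\psi,M}$ from Section~\ref{section:circuits-as-ontological-models} that turns any quantum realised empirical model into a circuit strategy with $\tilde{Q}_{\psi,M}(U) = e_{\psi,M}(U)$, applied to the realisation of $e_{G,\psi}$ on the state $\ket{\psi,G}$ described in the definition preceding that lemma. This gives $p_S(Q, s(I,G,d)^*(\Phi)) = p_S(e', s(I,G,d)^*(\Phi)) > \gamma$.

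For part~(2), the key is to unpack the circuit $Q$ into its gate structure and bound depth and fan-in independently of $\abs{V}$. The preparation of $\ket{\psi,G} = \ket{\psi} \otimes \bigotimes_{i \in I, \{v,w\} \in E} \ket{\phi}_{(i,v,w),(i,w,v)}$ uses one $\abs{I}$-qudit gate $U_\psi$ together with two-qudit gates $U_{(i,v,w),(i,w,v)}$, one per $i \in I$ and edge of $G$; crucially all the two-qudit gates act on disjoint qudit pairs and on qudits disjoint from the $\abs{I}$ output qudits of $U_\psi$, so they form a single layer and the preparation stage has depth $2$ and fan-in $\max(\abs{I},2)$. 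The measurement stage consists, for each measurement site $(i,v) \in I \times V$, of a single classically-controlled measurement gate acting on $\text{Qudits}(i,v)$, whose size is $\abs{N_G(v)}$ when $v = r$ (plus one for qudit $i$) and $\abs{N_G(v)}$ otherwise, hence at most $\text{deg}(G)+1$; and these gates again act on pairwise disjoint qudit sets, so they fit in one layer. Therefore the total depth is bounded by a constant (at most $3$) and the maximal fan-in by $\max(\abs{I}, \text{deg}(G)+1)$, depending only on $\abs{I}$ and $\text{deg}(G)$ as claimed.

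I would then wrap this up by noting that $Q$ is the circuit realising $e_{G,\psi}$ and that the only inputs it takes are the measurement-setting wires for each $(i,v)$ whose types are given by the table in the definition of $T(I,G,d)$, so it is indeed a (single-round) circuit strategy in the sense of Section~\ref{section:circuits-as-ontological-models}. The main obstacle I anticipate is not mathematical depth but precision in the fan-in accounting: I need to be careful that ``fan-in'' here is the number of inputs to a gate, that the Bell-basis measurement gates and the Weyl-measurement gates at different sites really do act on disjoint qudit registers (which follows because $\text{Qudits}(i,v)$ and $\text{Qudits}(i',v')$ are disjoint for $(i,v) \neq (i',v')$ by construction), and that the classical control wires do not contribute to the quantum fan-in in a way that reintroduces dependence on $\abs{V}$. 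Once the disjointness is spelled out, the layering argument makes both bounds immediate, and the whole lemma reduces to citing the pushforward identity plus this elementary circuit analysis.
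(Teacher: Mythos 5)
Your proposal is correct and follows essentially the same route as the paper: part (1) is the pushforward/pullback identity $p_S(e', s(I,G,d)^*(\Phi)) = p_S(e,\Phi) > \gamma$ combined with realising $e'$ as a circuit strategy, and part (2) is the observation that $\ket{\psi,G}$ is prepared by one $\supp{I}$-qudit gate plus a single layer of disjoint two-qudit gates, followed by one layer of measurement gates each acting on $\text{Qudits}(i,v)$, of size at most $\text{deg}(G)+1$. Your fan-in accounting is in fact spelled out more carefully than the paper's, which leaves the depth/fan-in bound to the figure and surrounding prose.
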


We can describe the pullback game more directly as follows.
Recall $\Phi$ is defined as a convex combination 
$\sum_{U, A} \Phi_{U,A} \cdot (U,A)$
where $U$ is a joint measurement for the scenario
$(I, \mathbb{Z}_d^2, \mathbb{Z}_d)$ and 
$A \subset \mathcal{E}_{(I, \mathbb{Z}_d^2, \mathbb{Z}_d)}(U)$
is an accepting condition.

In the pullback game Verifier randomly selects $U,A$ with
probability $\Phi_{U,A}$. 
For each joint input $(i,p_i) \in U$ Verifier then selects
a path $v_i = (v_{i1}, \dots, v_{il_i})$ in $G$
with probability $\pathdist(v_i)$.
If $v_i$ is the trivial path then Verifier gives
input $p_i$ to $(i,r)$, otherwise
if $l > 1$, Verifier gives input $v_{i2}$
to $(i,r)$, input $(v_{i(j-1)}, v_{i(j+1)})$ to $(i,v_j)$,
and finally input $(v_{i(l_i-1)}, p_i)$
to and $(i,v_{l_i})$.
The total joint input is then
\begin{align}
    U_v := \begin{cases}
        (i, r) \mapsto p_i, &\text{if $(i,p_i) \in U$ and $v_i=r$}\\
        (i, r) \mapsto v_2, &\text{if $(i,p_i) \in U$ and $l_i > 1$}\\ (i, v_{ij} \mapsto (v_{i(j-1)}, v_{i(j+1)}), &\text{if $(i,p_i) \in U$ and $j=2, \dots, l_i-1$}\\ (i, v_{il_i}) \mapsto (v_{i(l_i-1)}, p_i), &\text{if $(i, p_i) \in U$ and $l_i > 2$} \end{cases}
\end{align}
The players then respond with a joint output
$s \in \mathcal{E}_{S_{G,I,d}}(U_{v,p})$.
The output is accepted if
\begin{align}
    A_{v,U}(s) :\iff \big( (i, p_i) \mapsto s(i, v_{il_i}) + [
    p_i, \sum_{j=1}^{l_i-1} s(i, v_{ij})] \big)_{(i,p_i) \in U} \in A
\end{align}
As a convex combination the pullback is then the game
\begin{align}
    \sum_{v, U, A} \big[ \prod_{i \in I} \pathdist(v_i) \big] \Phi_{U,A} \cdot (U_v, A_{v,U})
\end{align}

\begin{figure}
    \centering
    \begin{subfigure}{1\textwidth}
    \centering
        \includegraphics[width=0.48\textwidth]{my-circ.pdf}
        \caption{}
        \label{fig:step-two}
    \end{subfigure}
    \hfill
    \begin{subfigure}{\textwidth}
        \centering
        \begin{align*}
            U_{\psi,G} &= 
                U_{\psi} \otimes 
                    \bigotimes_{i \in I, \{v,w\} \in E} U_{(i, v, w), (i,w,v)}\\
                    U_\psi \ket{0 \dots 0} &= \ket{\psi}\\
                    U_{(i,v,w), (i,w,v)}\ket{00} &= \ket{\phi}
        \end{align*}
        \caption{}
        \label{fig:my-circ-state}
    \end{subfigure}
    \begin{subfigure}{1\textwidth}
    \centering
        \begin{tabular}{c | c|c}
             Input wire & Value & Measurement setting \\
             \hline
             $(i,r)$ & $p \in \mathbb{Z}_d^2$ & $W(p)$ on qudit $i$\\
                     & $w \in N_G(r)$ & Bell basis on qudits $i$ $(i,r,w)$\\
                     & $\bullet$ & Identity measurement\\
            \hline
             $(i,v)$ & $(w, p) \in N_G(v) \mathbb{Z}_d^2$ & $W(p)$ on qudit $(i, v, w)$\\
                     & $(w,w') \in N_G(v)^2.\ w \neq w'$ 
                     & Bell basis on qudits $(i,v,w)$ $(i,v,w')$\\
                     & $\bullet$ & Identity measurement\\
        \end{tabular}
        \caption{}
    \end{subfigure}
    \caption{(a) Circuit strategy $Q_{\psi,G}$, where
    $\psi$ is an $n$-qudit state and $G = (V, E, r)$ is a rooted
    graph.
    to entangle each qudit of $\psi$
    with a qudit that is far away in the circuit and measures this qudit in the Weyl basis.
    (b) The effect of the Bell basis measurements is to teleport $\psi$
    up to a random phase.
    }
    \label{fig:circuit-definition}
\end{figure}

\section{Separating quantum and classical circuits of low depth and fan-in}
\label{section:separating-quantum-and-classical}

In Section \ref{section:distributing-graph} we defined two simulations
$s(S,G,d)$ and $s(I,G,d)$ with respectively one and two rounds, that perform
measurements on a quantum state in a distributed way.
The simulations where then used in Section \ref{section:distributing-games}
to define distributed versions of non-local games, such that
the quantum strategies can be recast as circuits
of small depth and fan-in.
The purpose of this section is to bound the success probability
of classical circuits on these problems.
Recall the resource inequality
\begin{align}
    p_S(e', \Phi) \leq \gamma + \cf{e'}
\end{align}
relating the success probability of an arbitrary empirical model
$e'$ to the classical bound $\gamma$ for a non-local game and its contextual
fraction $\cf{e'}$.
We observe that the derivation of this bound does not rely on the no-disturbance
condition. In Section \ref{section:circuits-as-ontological-models}
we explained that circuit strategies give rise to empirical models
that don't satisfy the no-disturbance condition, we called these
objects ``behaviours'', and explained that constructions
like simulations and the contextual fraction can be generalised.
Because the bound $\gamma + \cf{e'}$ does not rely on no-disturbance
we have a bound
\begin{align}
    p_S(B, \Phi) \leq \gamma + \cf{B}
    \label{eq:resourcebound-circui}
\end{align}
for any behaviour $B$.

We want to bound the contextual fraction of a classical circuit
$C$ on the pullback $s^*(\Phi)$ of a game $\Phi$ across some simulation
$s$. To do this we bound the contextual fraction of the pushforward
$s_*(\tilde{C})$ and rely on the inequality \ref{eq:resourcebound-circui}.

Although we use a different terminology this is precisely
what Bravyi, Gosset, and K\"{o}nig did. We will work
at a general level, first stating the result
as a general property of simulations, and then
restrict to the two simulations.

The idea is to consider a general simulation $s:S \to T$. Recall that $s$ is defined as a probability
distribution over deterministic simulations $t:S \to T$. In Section \ref{section:lightcones}
we consider the case of a deterministic simulation. We derive a combinatorial condition involving the 
lightcones of the circuit $C$ that ensures that the pushforward $t_*(\tilde{C})$ is non-contextual.
In Section \ref{section:bounding-cf} we first present a lemma due to BGK,
and we restate this as a bound on the probability that this condition
holds, when $t$ is selected randomly from a simulation $s$.
We finally apply this to the simulations $s(G,I,d)$ and $s(S,G,d)$.

\subsection{Lightcones of simulations} \label{section:lightcones}

The simulations $s(S,G,d)$ and $s(I, G, d)$ are defined as probability distributions
over deterministic simulations $t_v$
corresponding to each choice of paths $v \in  \text{Paths}(G)^I$.
Consider these simulations for a fixed choice of paths $v$.
The measurements performed in these simulations
are independent of the input for all but a small number of measurement sites,
and the outcome we return only depends on the outcomes
of the measurements on a small number of measurement sites.
In the two-round case, given a measurement $(i,x)$ 
for the scenario $S$ we first perform a measurement 
$C_{v_i, i, x}^1$ that is independent of $x$
and in the second round we perform a measurement $C_{v_i, i, x}$ that is only defined
on $(i, v_{il_i})$.
The final outcome only depends on the outcome at $(i,v_{il_i})$.
In the single-round case, we perform a single measurement
where only the setting at $(i,v_{il_i})$ depends on $x$.
And the outcome returned depends only on the subset of measurement sites
$\{(i, v_{ij})\}$.

For any deterministic simulation $t:S \to T$ we can identify
the unique minimal subset of measurement sites such that the measurement
setting in round $k$ depends on the input, 
and the outcome uses the measurement outcome.
For each round $k \leq n$ and $j \in J$ we define the input and output
dependencies of the simulation as follows.
\begin{definition}
    Let $S = (I, X, Y), T = (J, Z, W)$
    be two multipartite scenarios and $t = 
    (\{C_{j,z}\}_{j \in J, z \in Z_j}, \{g_{j,z}\}_{j \in J, z \in Z_j}):S \to T$ an $n$-round deterministic 
    simulation, where $C_{j,z}$ is the measurement protocol
    \begin{align}
        C_{i,z} = C_{i,z}^1, \dots, C_{i,z}^k(s_1, \dots, s_{k-1}), \dots, C_{i,z}^n(s_1, \dots, s_{n-1})
    \end{align}
    and $g_{j,z}:\mathcal{E}_S(C_{j,z}) \to W_{j,z}$.
    For any measurement site $j \in J$ and round $k = 1,\dots, n$
    let $\text{In}_k(t)(j), \text{Out}_k(t)(j) \subset I$ as follows
    \begin{itemize}
        \item $\text{In}_k(t)(j)$ is the least $I' \subset I$
            such that for each $z \in Z_j$ and run $(s_1, \dots, s_n) \in \mathcal{E}_S(C_{j,z})$
            the joint measurement $C_{j,z}^k(s_1, \dots, s_{k-1})$
            can be written as a union
            \begin{align}
                C_{j,z}^k(s_1, \dots, s_{k-1}) = U(s_1, \dots, s_{k-1}) 
                    \cup U_z(s_1, \dots, s_{k-1})
            \end{align}
            where $U(s_1, \dots, s_{k-1})$ is independent of $z$
            and $U_z(s_1, \dots, s_{k-1})$ is a joint measurement for the measurement
            sites $I'$.
        \item $\text{Out}_k(t)(j) \subset I$ is the least $I' \subset I$ such
        that for each run $(s_1, \dots, s_n) \in \mathcal{E}_S(C_{j,z})$
        the outcome $g_{j,z}(s_1, \dots, s_n) \in W_{j,z}$
        is independent of the value of $s_k$ on measurement sites $I \setminus I'$:
        \begin{align}
            g_{j,z}(s_1, \dots, s_n) = g_j(s_1, \dots, \res{s_k}{I'}, \dots, s_n)(x)
        \end{align}
        where $g_j$ is some function.
    \end{itemize}
\end{definition}

For example, for the two-round simulation $t := t(S, G, d, v)$ we have
\begin{align}
    \text{In}_1(t)(i) &= \emptyset\\
    \text{In}_2(t)(i) &= \{(i, v_{il_i})\}\\
    \text{Out}_2(t)(i) &= \{(i, v_{il_i})\}
\end{align}
and for the single-round simulation $t := t(I, G, d, v)$
\begin{align}
    \text{In}_1(t)(i) &= \{(i, v_{il_i})\}\\
    \text{Out}_1(t)(i) &= \{(i, v_{ij})\}_{j=1}^{l_i}
\end{align}

If $t:S \to T$ is an $n$-round simulation such that $\text{In}_k(t)(j)$ is empty for all $k < n$
and $C$ is a classical strategy such that $t_*(\tilde{C})$ is contextual.
Then there must be some $j \neq j'$ and $i \in \text{In}_n(t)(j), i' \in \text{Out}_n(t)(j')$
such that there is path through $C$ from $\text{in}_{i n}$ to $\text{out}_{i' n}$.
Let the relation $\text{In}_n(t) ; \lcb{C} ; \text{Out}_n(t) \subset J \times J$
be defined by
\begin{align}
    \exists i,i' \in I:\ i \in \text{In}_n(t)(j) \land \text{in}_{n i} \in \lcb{C}(\text{out}_{n i'})
        \land i' \in \text{Out}_n(j')
\end{align}
for all $(j,j') \in J \times J$ (Figure \ref{fig:lightcone_composition}). 
In other words, if 
$\text{In}_n(t) ; \lcb{C} ; \text{Out}_n(t) \subset \text{id}_J$,
where $\text{id}_J \subset J \times J$ is the identity relation,
then $t_*(\tilde{C})$ is non-contextual.

\begin{figure}
    \centering
    \begin{subfigure}{0.3\textwidth}
        \centering
        \includegraphics[width=0.7\textwidth]{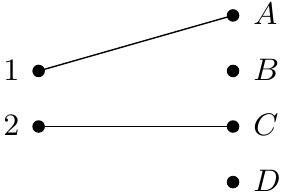}
        \caption{$\text{In}_1(t)$}
    \end{subfigure}
    \hfill
    \begin{subfigure}{0.3\textwidth}
        \centering
        \includegraphics[width=0.9\textwidth]{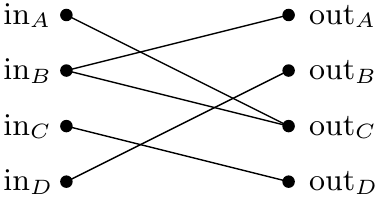}
        \caption{$\lcb{C}$}
    \end{subfigure}
    \hfill
    \begin{subfigure}{0.3\textwidth}
        \centering
        \includegraphics[width=0.7\textwidth]{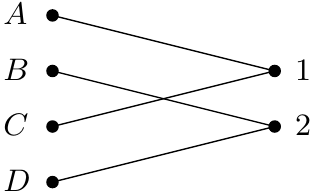}
        \caption{$\text{Out}_1(t)$}
    \end{subfigure}
    \begin{subfigure}{0.3\textwidth}
        \centering
        \includegraphics[width=0.7\textwidth]{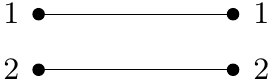}
        \caption{$\text{In}_1(t) ; \lcb{C} ; \text{Out}_1(t)$}
    \end{subfigure}
    \caption{ Consider a simulation $t$ from a scenario with measurement sites $\{1,2\}$
    to a scenario with measurement sites $\{A,B,C,D\}$ with the input
    dependencies in (a) and the output dependencies in (c)
    and a classical circuit $C$ with lightcones in (b).
    }
    \label{fig:lightcone_composition}
\end{figure}

\begin{lemma} \label{lemma:lightcones}
    Let $S$ and $T = (J, Z, W)$ be two multipartite scenarios, $t:S \to T$ an $n$-round deterministic simulation,
    and $C$ an $n$-round classical circuit strategy for $S$. If for all $k < n$ and $j \in J$ we have
    $\text{In}_k(t)(j) = \emptyset$, and $\text{In}_n(t) ; \lcb{C} ; \text{Out}_n(t) \subset \text{id}_J$
    then $t_*(\tilde{C})$ is non-contextual.
\end{lemma}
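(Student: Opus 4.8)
The plan is to show directly that $t_*(\tilde C)$ arises as a convex combination of global sections of $T$, using the hypotheses to reduce the multi-round simulation to something whose dependency structure is "diagonal." First I would unwind what it means for $t_*(\tilde C)$ to be contextual and set up a proof by contrapositive: assuming $t_*(\tilde C)$ is contextual, I must produce a pair $j \neq j'$ together with $i \in \mathrm{In}_n(t)(j)$ and $i' \in \mathrm{Out}_n(t)(j')$ such that there is a path in $C$ from $\mathrm{in}_{n i}$ to $\mathrm{out}_{n i'}$, contradicting $\mathrm{In}_n(t)\,;\,\mathrm{LC}^{\leftarrow}_C\,;\,\mathrm{Out}_n(t) \subset \mathrm{id}_J$.

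The first substantive step is to use the assumption that $\mathrm{In}_k(t)(j) = \emptyset$ for all $k < n$ and all $j$. This says that in rounds $1, \dots, n-1$ the measurement protocols $C_{j,z}^k$ do not depend on $z$ at all, so the circuit $C$, when run on the $T$-inputs routed through $t$, produces in its first $n-1$ rounds a fixed (input-independent) distribution over partial runs; equivalently, after fixing the random seed $z_1, \dots, z_\ell$ of $C$ and the outcomes $s_1, \dots, s_{n-1}$ of the first $n-1$ rounds, we are left with a single-round deterministic simulation in round $n$ whose behaviour is that of the residual (seeded, partially-evaluated) classical circuit $C'$. The key point to check here is that this residual $C'$ still satisfies the no-path condition: since $C$ is an $n$-round circuit strategy, $\mathrm{in}_{n i} \notin \mathrm{LC}^{\rightarrow}_C(\mathrm{out}_{k i'})$ for $k \le n$, so the relevant lightcones are exactly the round-$n$-to-round-$n$ ones, and $\mathrm{LC}^{\leftarrow}_{C'} \subseteq \mathrm{LC}^{\leftarrow}_C$ restricted to round-$n$ wires.

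Next I would invoke the single-round argument essentially verbatim from Section~\ref{section:simulations-1r} / the deterministic single-round case. For the round-$n$ residual simulation, $\mathrm{In}_n(t)(j)$ identifies the minimal set of measurement sites of $S$ on which the $T$-measurement at site $j$ genuinely depends, and $\mathrm{Out}_n(t)(j')$ the minimal set whose outcomes the returned value at $j'$ uses. If $\mathrm{In}_n(t)\,;\,\mathrm{LC}^{\leftarrow}_{C}\,;\,\mathrm{Out}_n(t) \subset \mathrm{id}_J$, then for each $j'$ the output at $j'$ depends only on inputs routed from sites in $\mathrm{Out}_n(t)(j')$, and those inputs are fed only from the $T$-input at $j'$ itself. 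Hence the map from $T$-inputs to $T$-outputs factors site-by-site: the circuit behaves, for each fixed seed and fixed history $s_1,\dots,s_{n-1}$, like a product of local response functions, one per measurement site of $T$, which is exactly a deterministic (non-contextual) model — a global section of $\mathcal{E}_T$. Averaging over the seed and over the histories (with the fixed, input-independent weights coming from rounds $1,\dots,n-1$) exhibits $t_*(\tilde C)$ as a convex combination of such global sections, hence non-contextual.

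The main obstacle I anticipate is bookkeeping rather than conceptual: making precise the "residual circuit after fixing seed and partial history" and verifying that its lightcone relation is dominated by that of $C$, and then carefully matching the minimality clauses in the definitions of $\mathrm{In}_n$ and $\mathrm{Out}_n$ to the site-by-site factorisation — in particular arguing that if the composite relation misses a pair $(j,j')$ with $j \ne j'$ then there is genuinely \emph{no} information flow from the $j$-input to the $j'$-output through the circuit, so the output-site response functions are well-defined as functions of the single corresponding input only. Once that factorisation is in hand, the convex-combination conclusion follows exactly as in the single-round discussion already given in the text, so I would keep that part brief and cite the earlier computation.
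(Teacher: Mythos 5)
Your proposal is correct and follows essentially the same route as the paper's proof: fix the classical circuit's randomness (the paper's hidden variables $\lambda$, your seed-plus-history), use $\text{In}_k(t)(j)=\emptyset$ for $k<n$ to make the first $n-1$ rounds input-independent, and then chase the dependency of each $T$-output site $j$ through $\text{Out}_n(t)(j)$, the backward lightcones of $C$, and $\text{In}_n(t)$ to conclude it depends only on the $T$-input at $j$ itself, exhibiting the pushforward as a convex combination of global sections. The only cosmetic difference is your initial contrapositive framing, which you immediately replace with the same direct factorisation argument the paper gives.
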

\begin{proof}
    Let $t : S \to T$ be an $n$-round deterministic simulation
    and $C$ an $n$-round classical circuit strategy.
    By definition $t_*(\tilde{C})$ is defined for any context $U$
    of $T$ by
    \begin{align}
        t_*(\tilde{C})(U) = \sum_{(s_1, \dots, s_n) \in \mathcal{E}_S(f(U))} \tilde{C}(f(U))(s_1, \dots, s_n) \cdot g_U(s_1, \dots, s_n)
    \end{align}
    where $f(U) \in \text{MP}_n(S)$ and $g_U:\mathcal{E}_S(f(U)) \to \mathcal{E}_T(U)$.
    By assumption the first $n-1$ measurements are independent of $U$, so that
    $f(U)$ is of the form:
    \begin{align}
        f(U) = V_1, V_2(s_1), \dots, V_{n-1}(s_1, \dots, s_{n-2}), V_n(U)(s_1, \dots, s_{n-1})
    \end{align}
    
    Because $C$ is a classical circuit we can write the probability distribution
    $\tilde{C}(F(U))$ as a convex combination of hidden variables
    $\lambda$, such that
    \begin{align}
        \tilde{C}(f(U)) = \sum_{\lambda} 
            p_\lambda \cdot (\lambda_1, \dots, \lambda_{n-1}, \lambda_n(V_n(U)(\lambda_1, \dots, \lambda_{n-1})))
    \end{align}
    and the value of $\lambda_n(U)(\lambda_1, \dots, \lambda_{n-1})$ at any measurement
    site $i \in I$ depends only on the measurement setting of $V_n(U)$ on the subset
    of measurement sites
    \begin{align}
        \overleftarrow{i} = \{i' \in I \mid \text{in}_{i'} \in \lcb{C}(\text{out}_i) \}
    \end{align}
    
    Plugging the hidden variable expression for $\tilde{C}(f(U))$ into the first equation
    we get a hidden variable expression for the pushforward $t_*(\tilde{C})(U)$
    \begin{align}
        t_*(\tilde{C})(U) = \sum_{\lambda} p_\lambda \cdot g_U(\lambda_1, \dots, \lambda_{n-1}, \lambda_n(V_n(U)(\lambda_1, \dots, \lambda_{n-1})))
    \end{align}
    
    To see that $t_*(\tilde{C})$ it is sufficient to show that 
    the value of
    \begin{align}
    g_U(\lambda_1, \dots, \lambda_{n-1}, \lambda_n(V_n(U)(\lambda_1, \dots, \lambda_{n-1})))
    \end{align}
    on a measurement $(j,z)$ is independent of $U$.
    
    Suppose therefore that $(j,z)$ is any measurement for $T$
    and for any $U$ consider the outcome assigned to $(j,z)$. That is, the value
    \begin{align}
        g_U(\lambda_1, \dots, \lambda_{n-1}, \lambda_n(V_n(U)(\lambda_1, \dots, \lambda_{n-1})))(j,z)
    \end{align}
    First we have that the outcome of $(j,z)$ only depends on the value of $\lambda_n(V_n(U)(\lambda_1, \dots, \lambda_{n-1})))$
    on the subset of measurement sites $\text{Out}_n(t)(j) \subset I$.
    Furthermore, the value of $\lambda_n(V_n(U)(\lambda_1, \dots, \lambda_{n-1})))$
    on measurement sites $\text{Out}_n(t)(j)$ depends only on the value
    of $V_n(U)(\lambda_1, \dots, \lambda_{n-1}))$ on
    \begin{align}
        \overleftarrow{\text{Out}_n(t)(j)} := \bigcup_{i \in \text{Out}_n(t)(j)} \overleftarrow{i}
    \end{align}
    
    Hence the outcome of $(j,z)$ depends only on the value of
    $V_n(U)(\lambda_1, \dots, \lambda_{n-1})$ on each $i'$ such that there exists
    an $i$ such that $\text{in}_{i'n} \in \lcb{C}(\text{out}_{in}$ and $i \in \text{Out}_n(t)(j)$.
    Finally, we have that the value of $V_n(U)(\lambda_1, \dots, \lambda_{n-1})$ on $i$
    only depends on the value of $U$ on $j$ such that $i \in \text{In}_n(t)(j)$.
    Therefore, if the condition holds then it is non-contextual.
\end{proof}

\subsection{Classical bound} \label{section:bounding-cf}

Suppose that $s:S \to T$ is a probabilistic simulation from a multipartite scenario $S$ to another
multipartite scenario $T$ and $C$ a classical strategy for $S$.
Suppose that we randomly select a deterministic simulation $t:S \to T$ with probability
given by $s$. Lemma \ref{lemma:lightcones} gives a condition ensuring
that the pushforward $t_*(\tilde{C})$ is non-contextual, involving the relation
$\text{In}_n(t) ; \lcb{C} ; \text{Out}_n(t)$.
We now consider the probability that this condition is satisfied
when $t$ is chosen randomly from the simulation $s$.

The following lemma from BGK shows that $C$ has low depth and fan-in, $\text{In}_n(t)(j)$ is small,
and $\text{Out}_n(t)(j)$ is distributed in a uniform way, then the probability
that the condition holds is high.
\begin{lemma}(BGK) \label{lemma:bgk}
    Let $C$ be a circuit with inputs $\{\text{in}_i\}_{i \in I}$
    and outputs $\{\text{out}_i\}_{i \in I}$
    of depth $D$ and fan-in at most $K$.
    Suppose that we randomly select a family of sets $\{I_\text{in}(j), I_\text{out}(j) \subset I\}_{j \in J}$
    and consider the relation $I_\text{in} ; \lcb{C} ; I_\text{out} \subset J \times J$
    given by
    \begin{align}
        \exists i,i' \in I.\ i \in I_\text{in}(j) \land \text{in}_i \in \lcb{C}(\text{out}_{i'}) \land i' \in I_\text{out}(j')
    \end{align}
    for all $(j,j') \in J \times J$.
    Suppose that the following conditions hold:
    \begin{enumerate}
        \item The size of $I_\text{out}(j)$ is at most $A$, for all $j \in J$.
        \item For all $j \in J$ and $i \in I$. If we randomly select
        $\{I_\text{in}(j), I_\text{out}(j)\}_{j \in J}$ from the marginal distribution
        fixing $I_\text{out}(j)$, then for each $j' \in J$, such that $j \neq j'$,
        the probability that $i \in I_\text{in}(j)$ 
        is at most $\epsilon$:
        \begin{equation}
            \text{Prob}(i \in I_\text{in}(j) \mid I_\text{out}(j')) \leq \epsilon
        \end{equation}
    \end{enumerate}
    The probability that the condition
    $I_\text{in} ; \lcb{C} ; I_\text{out} \subset \text{id}_J$ fails is at most
    $K^D \supp{J}^2 A \epsilon$.
\end{lemma}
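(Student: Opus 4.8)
The plan is to bound the failure probability by a union bound over the "bad" events, one for each ordered pair $(j,j') \in J \times J$ with $j \neq j'$, where a bad event is the existence of witnesses $i \in I_\text{in}(j)$ and $i' \in I_\text{out}(j')$ with $\text{in}_i \in \lcb{C}(\text{out}_{i'})$. First I would fix such a pair $(j,j')$. The relation $\lcb{C}$ is determined by the circuit $C$, which is not random, so the only randomness is in the family $\{I_\text{in}(j), I_\text{out}(j)\}_{j \in J}$. The key structural fact about the circuit is that each output wire $\text{out}_{i'}$ has a backward lightcone of size at most $K^D$; equivalently, summing over all $i' \in I_\text{out}(j')$, the set of input wires that can reach \emph{some} output in $I_\text{out}(j')$ has size at most $|I_\text{out}(j')| \cdot K^D \leq A K^D$ by condition (1).

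Next I would condition on the value of $I_\text{out}(j')$. Once $I_\text{out}(j')$ is fixed, the set $S_{j'} := \bigcup_{i' \in I_\text{out}(j')} \lcb{C}(\text{out}_{i'}) \subseteq I$ of "dangerous" input indices is determined, and $|S_{j'}| \leq A K^D$. The bad event for $(j,j')$ occurs precisely when $I_\text{in}(j) \cap S_{j'} \neq \emptyset$. By condition (2), for each individual $i \in S_{j'}$ we have $\text{Prob}(i \in I_\text{in}(j) \mid I_\text{out}(j')) \leq \epsilon$, so a union bound over the at most $A K^D$ elements of $S_{j'}$ gives
\begin{align}
    \text{Prob}\big(I_\text{in}(j) \cap S_{j'} \neq \emptyset \mid I_\text{out}(j')\big) \leq A K^D \epsilon.
\end{align}
Since this holds for every value of $I_\text{out}(j')$, averaging over $I_\text{out}(j')$ gives the same bound $A K^D \epsilon$ for the unconditional probability of the bad event for $(j,j')$.

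Finally, I would take a union bound over all ordered pairs $(j,j')$ with $j \neq j'$. There are at most $|J|^2$ such pairs, and the condition $I_\text{in} ; \lcb{C} ; I_\text{out} \subseteq \text{id}_J$ fails exactly when at least one bad event occurs, so
\begin{align}
    \text{Prob}\big(I_\text{in} ; \lcb{C} ; I_\text{out} \not\subseteq \text{id}_J\big) \leq |J|^2 \cdot A K^D \epsilon = K^D \supp{J}^2 A \epsilon,
\end{align}
as claimed. I expect the only subtle point to be the correct handling of the conditioning in step two: condition (2) is phrased in terms of the marginal distribution with $I_\text{out}(j)$ fixed (note: $I_\text{out}(j)$, the \emph{same} $j$ as the $I_\text{in}$), so I would need to check that this is the hypothesis actually used — i.e. that the nested union bound over $i \in S_{j'}$ is legitimate given the precise conditional independence structure assumed. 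Everything else is a routine two-level union bound combined with the $K^D$ lightcone bound, so the main "obstacle" is really just bookkeeping the quantifiers carefully rather than any genuine difficulty.
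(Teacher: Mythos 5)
Your proposal is correct and follows essentially the same argument as the paper: condition on $I_\text{out}(j')$, union bound over the at most $A K^D$ inputs in $\bigcup_{i' \in I_\text{out}(j')} \lcb{C}(\text{out}_{i'})$ using hypothesis (2), then union bound over the at most $\supp{J}^2$ ordered pairs $(j,j')$. The subtlety you flag is resolved by the displayed inequality in condition (2), which conditions on $I_\text{out}(j')$ (the prose ``fixing $I_\text{out}(j)$'' is a typo), and that is exactly the form the paper's proof uses.
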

\begin{proof}
    For some $j \neq j' \in J$ suppose that $I_\text{out}(j')$ is
    fixed and that $I_\text{in}(j)$ is chosen randomly.
    \begin{align}
        \text{Prob}(\text{LC}_C(I_\text{in}(j), I_\text{out}(j')))
        &\leq
        \sum_{i \in I_\text{out}(j')} \text{Prob}(\text{LC}_C(I_\text{in}(j), i)\\
        &\leq
        \sum_{i \in \lcb{C}(I_\text{out}(j'))} 
            \text{Prob}(i \in I_\text{in}(j))\\ 
        &\leq \sum_{i \in \lcb{C}(I_\text{out}(j'))} 
            \epsilon\\ 
        &\leq K^D A \epsilon
    \end{align}
    If
    $\{I_\text{in}(j), I_\text{out}(j) \subset I\}_{j \in J}$
    is chosen randomly we therefore
    have by the union bound that
  \begin{align}
    \text{Prob}(\exists j \neq j' \in J.\ \text{LC}_C(I_\text{in}(j), I_\text{out}(j')))
        &\leq
        \sum_{j \neq j' \in J} 
            \text{Prob}(\text{LC}_C(I_\text{in}(j), I_\text{out}(j')))\\
        &\leq \frac{\supp{J}(\supp{J} -1 )}{2} K^D A \epsilon\\
        &\leq \supp{J}^2 K^D A \epsilon
  \end{align}
\end{proof}

We next restate this as a bound on the contextual fraction.
\begin{lemma}  \label{lemma:bounding-cf}
    Let $S$ and $T$ be two multipartite scenarios,
    $s:S \to T$ an $n$-round simulation,
    and $C$ a classical $n$-round circuit strategy
    of depth $D$ and maximal fan-in $K$.
    Suppose that the following conditions hold
    when a deterministic simulation $t$ is chosen randomly with
    probability $s(t)$:
    \begin{itemize}
        \item For each $k < n$ and $j \in J$ we have $\text{In}_k(t)(j) = \emptyset$.
        \item For each $j \in J$ we have $\supp{\text{In}_n(t)(j)} \leq A$.
        \item For all $j,j' \in J$ and $i \in I$ such that $j \neq j'$:
        \begin{equation}
            s(i \in \text{In}_n(t)(j) \mid \text{Out}_n(t)(j')) \leq \epsilon
        \end{equation}
    \end{itemize}
    The contextual fraction of the pushforward
    $s_*(\tilde{C})$ is at most $\supp{J}^2 K^D A \epsilon$
    \begin{align}
        \cf{s_*(\tilde{C})} \leq \supp{J}^2 K^D A \epsilon
    \end{align}
\end{lemma}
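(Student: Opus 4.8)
The plan is to decompose the pushforward $s_*(\tilde{C})$ as a convex combination over deterministic simulations and separate the ``good'' branches (where the pushforward is non-contextual) from the ``bad'' branches. Recall that $s = \sum_{t : S \to T} s(t) \cdot t$ and that the pushforward is linear, so $s_*(\tilde{C}) = \sum_t s(t) \cdot t_*(\tilde{C})$. Call a deterministic simulation $t$ \emph{good} if $\text{In}_k(t)(j) = \emptyset$ for all $k < n$ and $j \in J$, and $\text{In}_n(t) ; \lcb{C} ; \text{Out}_n(t) \subset \text{id}_J$; by Lemma \ref{lemma:lightcones} every good $t$ has $t_*(\tilde{C})$ non-contextual. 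Let $p_\text{bad} := \sum_{t \text{ bad}} s(t)$ be the total weight of the bad branches. Then I would write
\begin{align}
    s_*(\tilde{C}) = (1 - p_\text{bad}) \cdot e_\text{NC} + p_\text{bad} \cdot B'
\end{align}
where $e_\text{NC} := \frac{1}{1 - p_\text{bad}} \sum_{t \text{ good}} s(t) \cdot t_*(\tilde{C})$ is a convex combination of non-contextual behaviours (hence non-contextual, since the non-contextual behaviours form a convex set) and $B' := \frac{1}{p_\text{bad}} \sum_{t \text{ bad}} s(t) \cdot t_*(\tilde{C})$ is some behaviour. By the definition of the contextual fraction for behaviours (Definition of $\cf{B}$ in Section \ref{section:circuits-as-ontological-models}), this exhibits $\cf{s_*(\tilde{C})} \leq p_\text{bad}$, with the degenerate cases $p_\text{bad} \in \{0, 1\}$ handled trivially.

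It then remains to bound $p_\text{bad}$. The first hypothesis of the lemma says that, under the distribution on $t$ induced by $s$, we almost surely have $\text{In}_k(t)(j) = \emptyset$ for all $k < n$; so a branch can only be bad because $\text{In}_n(t) ; \lcb{C} ; \text{Out}_n(t) \not\subset \text{id}_J$. Now I would apply Lemma \ref{lemma:bgk} with $I_\text{in}(j) := \text{In}_n(t)(j)$ and $I_\text{out}(j) := \text{Out}_n(t)(j)$, viewed as the random family of subsets of $I$ determined by the random choice of $t$. The three hypotheses of the present lemma are exactly hypotheses (1) and (2) of Lemma \ref{lemma:bgk} (with $A$ the bound on $\supp{\text{Out}_n(t)(j)}$, wait --- one must be careful: the present lemma bounds $\supp{\text{In}_n(t)(j)} \leq A$ whereas Lemma \ref{lemma:bgk} bounds $\supp{I_\text{out}(j)} \leq A$, so I would instead set $I_\text{in}(j) := \text{Out}_n(t)(j)$ and $I_\text{out}(j) := \text{In}_n(t)(j)$ and use that $\lcb{C}$ reversed plays the symmetric role, or simply observe the argument of Lemma \ref{lemma:bgk} is symmetric in the two families up to reversing lightcones). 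Either way Lemma \ref{lemma:bgk} yields
\begin{align}
    p_\text{bad} \leq \text{Prob}\big(\text{In}_n(t) ; \lcb{C} ; \text{Out}_n(t) \not\subset \text{id}_J\big) \leq \supp{J}^2 K^D A \epsilon
\end{align}
which combined with $\cf{s_*(\tilde{C})} \leq p_\text{bad}$ gives the claim.

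The main obstacle I anticipate is purely bookkeeping rather than conceptual: matching the indexing conventions between Lemma \ref{lemma:bgk} (which is stated abstractly for a random family of set pairs and a fixed circuit) and the present setting (where the randomness lives in the choice of deterministic simulation $t$, and $\text{In}_n, \text{Out}_n$ are extracted from $t$). One must check that the joint distribution of $\big(\text{In}_n(t)(j), \text{Out}_n(t)(j)\big)_{j \in J}$ induced by $s$ is precisely the kind of distribution Lemma \ref{lemma:bgk} quantifies over, and that the conditioning ``$\mid \text{Out}_n(t)(j')$'' in the third hypothesis lines up with the conditioning ``$\mid I_\text{out}(j')$'' in Lemma \ref{lemma:bgk}'s hypothesis (2). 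A secondary subtlety is the handling of the measure-zero-but-not-necessarily-empty set of branches where $\text{In}_k(t)(j) \neq \emptyset$ for some $k < n$: the cleanest route is to simply fold all such branches into the ``bad'' set as well, note that by hypothesis they contribute zero probability, and proceed. Once these alignments are in place the proof is a two-line application of the two cited lemmas plus convexity of the non-contextual behaviours.
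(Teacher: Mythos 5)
Your proof is correct and follows essentially the same route as the paper's: decompose $s_*(\tilde{C})$ over the deterministic simulations, use Lemma \ref{lemma:lightcones} to certify the good branches as non-contextual, and bound the total weight of the bad branches with Lemma \ref{lemma:bgk}; your explicit normalised good/bad convex decomposition is just a more careful spelling-out of the paper's one-line claim that $\text{NCF}(s_*(\tilde{C}))$ is bounded below by the probability that $t_*(\tilde{C})$ is non-contextual. On the $\text{In}$/$\text{Out}$ mismatch you flag: the swap you propose does not quite go through, since it would also flip the conditioning in hypothesis (2) of Lemma \ref{lemma:bgk} to $\text{Prob}(i \in \text{Out}_n(t)(j) \mid \text{In}_n(t)(j'))$, which is not what the present lemma assumes; the intended reading is that the size hypothesis here should bound $\supp{\text{Out}_n(t)(j)} \leq A$ (consistent with the subsequent application, where $A \leq \text{rad}(G)$ bounds the \emph{output} sets in the single-round case), after which Lemma \ref{lemma:bgk} applies directly with $I_\text{in}(j) = \text{In}_n(t)(j)$ and $I_\text{out}(j) = \text{Out}_n(t)(j)$.
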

\begin{proof}
    The pushforward $s_*(\tilde{C})$ is defined for each context $U$ of $T$ as the convex combination 
      \begin{align} 
        s_*(\tilde{C})(U) = \sum_{t} s(t) \cdot t_*(\tilde{C})(U) \label{eq:classical-bound-1}
      \end{align}
      The non-contextual fraction of $s_*(\tilde{C})$ (Definition \ref{def:contextual_fraction}) 
      is the greatest weight assigned 
      to the non-contextual part of any convex decomposition
      of $s_*(\tilde{C})$ into a non-contextual model and another
      empirical model.
      The non-contextual fraction of $s_*(\tilde{C})$
      is therefore bounded from below
      by the probability that $t_*(\tilde{C})$ is non-contextual 
      when $t$ is chosen randomly according to $s$.
        \begin{align}
                s(t_*(\tilde{C}) \text{ is non-contextual})
                \leq \text{NCF}(s_*(\tilde{C}))
        \end{align}
    Or equivalently
    \begin{align}
        \text{CF}(s_*(\tilde{C})) \leq s(t_*(\tilde{C}) \text{ is contextual})
    \end{align}
    By Lemma \ref{lemma:lightcones} if
    $t_*(\tilde{C})$ is contextual then $\text{In}_n(t) ; \lcb{C} ; \text{Out}_n(t) \not\subset \text{id}_I$.
    \begin{align}
        s(t_*(e) \text{ is contextual}) \leq
            s(\text{In}_n(t) ; \lcb{C} ; \text{Out}_n(t) \not\subset \text{id}_I)
    \end{align}
    By Lemma \ref{lemma:bgk} the probability
    of $\text{In}_n(t) ; \lcb{C} ; \text{Out}_n(t) \not\subset \text{id}_J$
    is at most $\supp{J}^2 K^D A \epsilon$, as required.
\end{proof}

We now consider the two simulations $s(S,G,d), s(I,G,d)$ from Section \ref{section:distributing-graph}.

\begin{lemma}
    Let $G$ be a rooted graph, $d \geq 2$ a dimension, $I$ a finite set, and 
    $S = (I, X, Y)$ a multipartite scenario.
    \begin{enumerate}
        \item Let $C$ a two-round classical circuit  strategy for the scenario $T(S,G,d)$ of depth $D$ and maximal fan-in $K$.
            \begin{equation}
                \cf{s(S,G,d)_*(\tilde{C})} \leq \supp{I}^2 K^D \supp{G}^{-1}
            \end{equation}
        \item Let $C$ be a single-round classical circuit strategy for the scenario $T(I, G, d)$
            of depth $D$ and maximal fan-in $K$.
            \begin{equation}
                \cf{s(I, G, d)_*(\tilde{C})} \leq \supp{I}^2 K^D \supp{G}^{-1}\text{rad}(G)
            \end{equation}
    \end{enumerate}
\end{lemma}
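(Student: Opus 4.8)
The two statements are both instances of Lemma~\ref{lemma:bounding-cf} applied to the two concrete simulations $s(S,G,d)$ and $s(I,G,d)$ constructed in Section~\ref{section:distributing-graph}. So the plan is, in each case, simply to verify the three hypotheses of Lemma~\ref{lemma:bounding-cf} and then read off the constants $A$ and $\epsilon$, plugging them into the bound $\supp{J}^2 K^D A \epsilon$ with $J = I \times V$. First I would record what was already computed at the end of Section~\ref{section:lightcones}: for the two-round simulation $t = t(S,G,d,v)$ we have $\text{In}_1(t)(i) = \emptyset$, $\text{In}_2(t)(i) = \{(i,v_{il_i})\}$ and $\text{Out}_2(t)(i) = \{(i,v_{il_i})\}$; for the single-round simulation $t = t(I,G,d,v)$ we have $\text{In}_1(t)(i) = \{(i,v_{il_i})\}$ and $\text{Out}_1(t)(i) = \{(i,v_{ij})\}_{j=1}^{l_i}$. (One subtlety: the measurement sites of $T(S,G,d)$ are $I \times V$, so the ``$J$'' of Lemma~\ref{lemma:bounding-cf} is $I \times V$, and $\supp{J} = \supp{I}\cdot\supp{V} = \supp{I}\cdot\supp{G}$. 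I would be careful to state the bounds in terms of whichever normalisation the lemma statement intends — the target bounds are written with $\supp{I}^2$, so the $\supp{G}^2$ from $\supp{J}^2$ must get partially absorbed; see below.)

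\textbf{The two-round case.} Here $t$ is a two-round simulation with $\text{In}_1(t)(j) = \emptyset$ for every measurement site $j$, so the first hypothesis of Lemma~\ref{lemma:bounding-cf} holds automatically. Each $\text{In}_2(t)(i,v)$ is either empty (if $v$ is not the endpoint of the chosen path $v_i$) or the singleton $\{(i,v_{il_i})\}$, so $A = 1$. For the mixing hypothesis: fixing $\text{Out}_2(t)(j')$ for some $j' = (i',w')$ fixes only the path $v_{i'}$; for $i \neq i'$ the path $v_i$ is still chosen independently from $\pathdist$, and by the defining property of $\pathdist$ the endpoint of $v_i$ is uniform over $V$, so $\text{Prob}((i,v) \in \text{In}_2(t)(j)) = \text{Prob}(v_i\text{ ends at }v) = 1/\supp{V} = 1/\supp{G}$ (for $i = i'$ one has to check the conditioning doesn't break this, but conditioning on $\text{Out}_2(t)(j')$ with $j' = (i',w')$, $w'$ a node, is conditioning on an event that depends only on $v_{i'}$; since $j \neq j'$, if $j = (i',w)$ with $w \neq w'$ the event $(i',v)\in\text{In}_2(t)(j)$ requires $v_{i'}$ to end at a node that is $\neq w'$ as an $\text{Out}$-endpoint, forcing the conditional probability to be $0$, hence still $\leq 1/\supp{G}$). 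So $\epsilon = \supp{G}^{-1}$. Lemma~\ref{lemma:bounding-cf} then gives $\cf{s(S,G,d)_*(\tilde C)} \leq \supp{J}^2 K^D A\epsilon$; and here the claimed bound is $\supp{I}^2 K^D \supp{G}^{-1}$, which I would obtain by noting $\supp{J}^2 A \epsilon = (\supp{I}\supp{G})^2 \cdot 1 \cdot \supp{G}^{-1} = \supp{I}^2 \supp{G}$ — so in fact I suspect the intended statement carries a $\supp{G}$ (or equivalently $J$ ranges over $I$ with a per-$i$ union bound); I would reconcile this by running the union bound of Lemma~\ref{lemma:bgk} over pairs $i \neq i'$ in $I$ rather than over all of $I\times V$, using that the ``bad'' relation lands inside $\text{id}_{I\times V}$ exactly when for each $i$ the only $j'$ reachable is $(i,\cdot)$, which is controlled by an $\supp{I}^2$-sized union. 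This bookkeeping step — getting the $\supp{I}^2$ rather than $\supp{I\times V}^2$ — is the one place to be careful.

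\textbf{The single-round case and the main obstacle.} For $s(I,G,d)$ the simulation is single-round ($n=1$), so the condition ``$\text{In}_k(t)(j) = \emptyset$ for $k < n$'' is vacuous. Now $\text{In}_1(t)(i,v)$ is empty or $\{(i,v_{il_i})\}$, so again $A = 1$ and $\epsilon = \supp{G}^{-1}$ by the same $\pathdist$-uniformity argument. The difference is in $\text{Out}_1(t)(i,v)$, which is the whole path $\{(i,v_{ij})\}_{j=1}^{l_i}$, of size $l_i \leq \text{rad}(G)$. In Lemma~\ref{lemma:bgk} the quantity $A$ is the bound on $\supp{\text{Out}}$ — wait, re-reading: in Lemma~\ref{lemma:bgk} the constant $A$ bounds $\supp{I_\text{out}(j)}$, and the $\epsilon$ bounds $\text{Prob}(i\in I_\text{in}(j)\mid I_\text{out}(j'))$; the final bound is $K^D\supp{J}^2 A\epsilon$. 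So here $A = \text{rad}(G)$ (the $\text{Out}$ size), $\epsilon = \supp{G}^{-1}$ (the $\text{In}$ probability), giving $\supp{I}^2 K^D \supp{G}^{-1}\text{rad}(G)$ after the same $\supp{I}^2$-versus-$\supp{J}^2$ reconciliation. The main obstacle is precisely sorting out this $A$/$\epsilon$ asymmetry together with the $J$-normalisation: I must make sure $A$ is plugged in as the $\text{Out}$-cone bound (hence $\text{rad}(G)$ in the single-round case, $1$ in the two-round case) and $\epsilon$ as the $\text{In}$-hitting probability ($\supp{G}^{-1}$ in both), and that the outer union bound is taken over $I$-pairs so the prefactor is $\supp{I}^2$. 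Once that is pinned down, both inequalities follow by direct substitution into Lemma~\ref{lemma:bounding-cf}, with no further computation.
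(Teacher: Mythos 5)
Your proposal is correct and is essentially the paper's own proof: both parts follow by reading off the $\text{In}_k$/$\text{Out}_k$ data of the deterministic simulations $t_v$ and substituting $A$ and $\epsilon$ into Lemma \ref{lemma:bounding-cf}, with $A=1$, $\epsilon = \supp{G}^{-1}$ in the two-round case and $A \leq \text{rad}(G)$, $\epsilon = \supp{G}^{-1}$ in the single-round case; you also correctly resolve the $A$-versus-$\epsilon$ asymmetry ($A$ bounds the $\text{Out}$ sets as in Lemma \ref{lemma:bgk}, $\epsilon$ the $\text{In}$-hitting probability). The one point to fix is your worry about $\supp{J}^2$ versus $\supp{I}^2$: no reconciliation is needed, because the simulation $s(S,G,d)$ goes \emph{from} the distributed scenario $T(S,G,d)$ \emph{to} the original scenario $S$, so in the notation of Lemma \ref{lemma:bounding-cf} the target index set $J$ is $I$ (the families $\text{In}_k(t)(j)$, $\text{Out}_k(t)(j)$ are indexed by $j \in I$ and take values in subsets of $I \times V$), and the union bound in Lemma \ref{lemma:bgk} already runs over pairs in $I$, giving the prefactor $\supp{I}^2$ directly rather than $(\supp{I}\supp{G})^2$.
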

\begin{proof}
    In each case the simulation is defined as a convex combination
    \begin{align}
        \sum_{v \in \text{Paths}(G)^I} \big[ \prod_{i \in I} \pathdist(v_i) \big] \cdot t_v
    \end{align}
    where $t_v$ is a simulation with either
    \begin{align}
        \text{In}_1(t)(i) &= \emptyset\\
        \text{In}_2(t)(i) &= \{(i, v_{il_i})\}\\
        \text{Out}_2(t)(i) &= \{(i, v_{il_i})\}
    \end{align}
    in the two-round case, or 
    \begin{align}
        \text{In}_1(t)(i) &= \{(i, v_{il_i})\}\\
        \text{Out}_1(t)(i) &= \{(i, v_{ij})\}_{j=1}^{l_i}
    \end{align}
    in the single-round case.
    
    In either case condition 1.\ of Lemma \ref{lemma:bounding-cf} is satisfied.
    For the second condition
    we have $A = 1$ in the first case and $A \leq \text{rad}(G)$
    in the second case.
    
    Note that choice of paths are independent for different $i,i' \in I$.
    For the third condition we therefore have bound
    $\epsilon = 1 / \supp{G}$.
\end{proof}

\subsection{Randomised restrictions} \label{section:restrictions}
A common technique in complexity theory is to look at randomised
restrictions of circuits. In this section we explain
why this is not sufficient to prove a separation between shallow
quantum and classical circuits.

\begin{definition}
    Let $S = (I, X, Y)$ be a multipartite scenario
    and $B$ a single-round behaviour.
    A \emph{restriction} is a pair $(U, I')$
    where $I' \subset I$ is a subset of measurement
    sites and $U$ is a joint measurement
    for the remaining measurement sites $I \setminus I'$.
    Write  $\res{S}{I'}$ for the multipartite
    scenario $(I', (X_i)_{i \in I}, (Y_{i,x})_{i \in I', x \in X_i})$
    and $\res{e}{U,I'}:\res{S}{I'}$
    for the behaviour given by fixing
    the measurement setting at measurement sites
    $I \setminus I'$ according to $x$.
\end{definition}

Let $C$ be a classical circuit strategy of depth $D$
and maximal fan-in $K$. We consider the probability
that $\res{\tilde{C}}{U, I'}$ is contextual
when $(U, I')$ is selected randomly.
Classically we require communication to produce contextuality:
If $\tilde{C}$ is contextual then there must
be some some $i \neq i' \in I$ such that input 
wire $i$ communicates to output wire
$i'$ through $C$, i.e. $\text{in}_i \in \lcb{C}(\text{out}_{i'})$.
Hence if the restriction $\res{\tilde{C}}{U,I'}$
is contextual then there are $i \neq i' \in I'$
such that $\text{in}_i \in \lcb{C}(\text{out}_{i'})$.
Because $I'$ is small and uniformly distributed
it can be shown that the probability that
this occurs is at most $\epsilon K^D$ for some small $\epsilon$.
$\res{\tilde{C}}{U,I'}$ is therefore 
non-contextual with probability
at most $\epsilon K^D$.
\begin{align}
    \text{Prob}(\res{e}{U,I'} \text{ is contextual})
    &\leq \text{Prob}(\exists i \neq i' \in I'.\ i \in \lcb{C}(i'))\\
    &\leq  \epsilon K^D
\end{align}

Suppose next that $Q$ is a quantum circuit model
of depth $D$ and maximal fan-in $K$.
Quantum mechanically we require shared entanglement
to produce contextuality. Because we are interested in unconditional
separations we do not allow quantum circuits to start 
with an entangled state. If $\tilde{Q}$ is contextual
we therefore have some $i \neq i' \in I$
such that $\lcb{Q}(\text{out}_i) \cap \lcb{Q}(\text{out}_{i'}) \neq \emptyset$.
It can be shown that the probability of this occurring
is at most $\epsilon (K^D)^2$,
where $\epsilon$ is the same small parameter.
We, therefore, have that the probability of the restriction
$\res{\tilde{Q}}{U,I'}$ is contextual is bounded by 
$\epsilon (K^D)^2$.
\begin{align}
    \text{Prob}(\res{\tilde{Q}}{U,I'} \text{ is contextual})
     &\leq \text{Prob}(\exists i \neq i' \in I'.\ \lcb{Q}(i) \cap \lcb{Q}(i') \neq \emptyset) \\
    &\leq  \epsilon (K^D)^2
\end{align}

Hence by looking at the restrictions we can only detect
a constant difference between shallow quantum and classical circuits.

\section{Quantum advantage with shallow circuits}
\label{section:qa-result}
\begin{figure}
    \centering
    \hfill
    \begin{subfigure}{0.3\textwidth}
        \centering
        \includegraphics[scale=1]{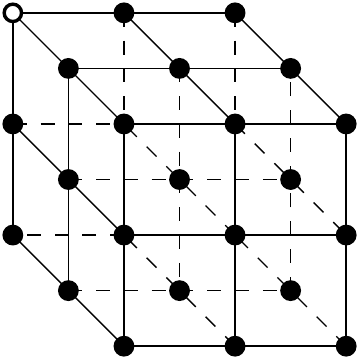}
        \caption{}\label{fig:3x3x3-grid}
   \end{subfigure}
    \hfill
    \vrule
    \hfill
    \begin{subfigure}{0.3\textwidth}
        \centering
        \includegraphics[scale=1]{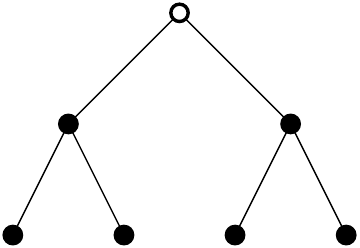}
        \caption{}\label{fig:t22}
    \end{subfigure}
    \hfill{}
    \caption{
        Examples of rooted graphs. The circle denotes the root node.
        (a) A 3-dimensional hypergrid. (b) A binary tree. 
    }
    \label{fig:graph_examples}
\end{figure}

A shallow circuit is a family of circuits, not a single circuit.
We now move from single circuits to families of circuits
by considering families of rooted graphs $\{G_n\}_{n \in \nats}$.
We first present the two results. We then explain the proof of both
results, and in section \ref{section:examples} we give some examples
of families of graphs and shallow circuits.

Suppose first that $(e, \Phi)$ is a non-local game with classical bound $\gamma$.
In Section \ref{section:distribute-2r} we define a quantum circuit $Q_n$
that depends only on the number of qudits of $\psi$ and the degree of $G_n$.
If $\{G_n\}_{n \in \nats}$ has bounded degree then $\{Q_n\}_{n \in \nats}$ is shallow.
We also define a game $\Phi_n$ such that the success probability of $Q_n$ exceeds $\gamma$.
In Section \ref{section:bounding-cf} we showed that for any
classical circuit the success probability is bounded.
We, therefore, have the following result.
\begin{theorem}
    Let $(e_{\psi,\pi}, \Phi, \gamma)$ be a qudit non-local game 
    and $\{G_n\}_{n \in \nats}$ a family of rooted
    graphs of bounded degree and unbounded size.
    There exists two-round interactive
    games $\{\Phi_{G_n}\}_{n \in \nats}$
    and a shallow quantum circuit strategy
    $\{U_{G_n, \psi,\pi}\}_{n \in \nats}$
    such that 
    \begin{enumerate}
        \item The success probability of $Q_{G_n, \psi,\pi}$
        on $\Phi_{G_n}$ violates $\gamma$:
        \begin{align}
            p_S(Q_{G_n,\psi,\pi}, \Phi_{G_n}) > \gamma
        \end{align}
        \item Let $\{C_n\}_{n \in \nats}$ be any
            shallow circuit strategy.
            The success probability of $C_n$
            on $\Phi_{G_n}$ tends to $\gamma$ as $n$ increases.
        \begin{align*}
            p_S(C_n, \Phi_{G_n}) &\leq \gamma + \epsilon_n
        \end{align*}
        \end{enumerate}
        where $\gamma < 1$ is the classical bound
        of $\Phi$ and 
        $\epsilon_n \in O(\supp{G_n}^{-1})$.
\end{theorem}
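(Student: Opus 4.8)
The plan is to assemble the final theorem from the pieces already developed in the chapter, treating it as a straightforward ``bookkeeping'' result once the two main technical ingredients---the circuit construction from Section~\ref{section:distribute-2r} and the contextual fraction bound from Section~\ref{section:bounding-cf}---are in place. For part (1), I would take the non-local game $(e_{\psi,\pi},\Phi)$ with classical bound $\gamma$, and for each $n$ apply the construction of Section~\ref{section:arbitrary-nlg} with the rooted graph $G_n$ to obtain the two-round interactive game $\Phi_{G_n} := s(S,G_n,d)^*(\Phi)$ and the two-round quantum circuit strategy $Q_{G_n,\psi,\pi}$. By the Lemma in Section~\ref{section:arbitrary-nlg}, $s(S,G_n,d)_*(\tilde{Q}_{G_n,\psi,\pi}) = e_{\psi,\pi}$, and the defining property of the pullback gives $p_S(Q_{G_n,\psi,\pi},\Phi_{G_n}) = p_S(e_{\psi,\pi},\Phi) > \gamma$. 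That the circuit family is shallow follows because the circuit decomposes into a state-preparation stage (one $\lvert I\rvert$-qudit gate plus two-qudit gates for each edge, so depth and fan-in bounded in terms of $\lvert I\rvert$) and a measurement stage where each gate acts on $\text{Qudits}(i,v)$, whose size is at most $\deg(G_n)+1$; since $\{G_n\}$ has bounded degree and $\lvert I\rvert$ is fixed, there are uniform bounds $D,K$ on depth and fan-in.

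For part (2), I would invoke the resource inequality for behaviours, $p_S(B,\Phi') \leq \gamma' + \cf{B}$ (equation~\eqref{eq:resourcebound-circui}), applied to $B = \tilde{C}_n$ and $\Phi' = \Phi_{G_n}$. Using the pullback identity at the level of behaviours, $p_S(\tilde{C}_n,\Phi_{G_n}) = p_S(s(S,G_n,d)_*(\tilde{C}_n),\Phi)$, this reduces to bounding $\cf{s(S,G_n,d)_*(\tilde{C}_n)}$. Here I would cite the Lemma at the end of Section~\ref{section:bounding-cf}, which gives
\begin{align}
    \cf{s(S,G_n,d)_*(\tilde{C}_n)} \leq \lvert I\rvert^2 K^D \lvert G_n\rvert^{-1}
\end{align}
for any two-round classical circuit strategy $C_n$ of depth $D$ and maximal fan-in $K$. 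Since $\{C_n\}$ is shallow there exist uniform $D,K$, so setting $\epsilon_n := \lvert I\rvert^2 K^D \lvert G_n\rvert^{-1}$ we get $p_S(C_n,\Phi_{G_n}) \leq \gamma + \epsilon_n$ with $\epsilon_n \in O(\lvert G_n\rvert^{-1})$, using that $\lvert G_n\rvert \to \infty$ (unbounded size) to conclude $\epsilon_n \to 0$. The classical bound of $\Phi_{G_n}$ is $\gamma$ because the pullback of a game preserves the non-contextual success probability, and $\gamma < 1$ since $\Phi$ is a genuine non-local game.

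The main obstacle, and the only point requiring real care, is verifying that the shallowness hypothesis on $\{C_n\}$ feeds correctly into the contextual-fraction lemma: that lemma's bound $\lvert I\rvert^2 K^D \lvert G_n\rvert^{-1}$ has $D,K$ appearing in the exponent, so one must be careful that these are the \emph{uniform} depth and fan-in bounds guaranteed by the definition of a shallow circuit family, held fixed as $n$ varies, rather than quantities that could grow with $n$. Once this is pinned down the $O(\lvert G_n\rvert^{-1})$ rate is immediate. Everything else is a matter of chaining the pullback/pushforward identities, the resource inequality, and the shallowness of the quantum circuit, all of which are already established; so the proof is short, and I would present it as essentially a corollary of the Lemmas in Sections~\ref{section:arbitrary-nlg} and~\ref{section:bounding-cf} together with the resource inequality~\eqref{eq:resourcebound-circui}.
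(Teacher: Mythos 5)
Your proposal is correct and follows essentially the same route as the paper: define $\Phi_{G_n}$ as the pullback of $\Phi$ along the two-round simulation $s(S,G_n,d)$, use the simulation identity to get the quantum circuit's success probability equal to $p_S(e_{\psi,\pi},\Phi)>\gamma$, note shallowness from bounded degree, and bound the classical side by chaining the pushforward identity, the resource inequality for behaviours, and the contextual-fraction bound $\cf{s(S,G_n,d)_*(\tilde{C}_n)}\leq\supp{I}^2K^D\supp{G_n}^{-1}$ from Section~\ref{section:bounding-cf}. Your remark that $D,K$ must be the uniform bounds from the definition of a shallow family is the right point of care, and it matches the paper's (terser) argument.
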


Before giving the proof we state the analogous result for Weyl measurement games.
Suppose that the quantum strategy $e$ is realised by Weyl measurements on a single-qudit state.
We can then instead consider the quantum circuit $Q_n$ and non-local game $\Phi_n$
defined in Section \ref{section:distribute-2r}. This gives the following result.
\begin{theorem}
    Let $(\Phi, \psi, \gamma)$ be a Weyl measurement game,
    and $\{G_n\}_{n \in \nats}$ a family of rooted graphs
    such that $\text{deg}(G_n) \in O(1)$
    and $\text{lim}_{n \to \infty} \text{rad}(G_n) / \supp{G_n} = 0$.
    There exists non-local games $\{\Phi_{G_n}\}_{n \in \nats}$ and
    quantum circuit strategies $\{U_{\psi, G_n}\}_{n \in \nats}$ such that 
    \begin{enumerate}
        \item $\{U_{\psi, G_n}\}_{n \in \nats}$ is shallow, and the success probability
        of $U_{\psi,G_n}$ on $\Phi_{G_n}$ exceeds $\gamma$:
        \begin{align}
            p_S(U_{\psi,G_n}, \Phi_{G_n}) > \gamma
        \end{align}
        \item If $\{C_n\}_{n \in \nats}$ is any shallow classical circuit
            strategy then the violation of $\gamma$ by $C_n$
            tends to $0$ for large $n$:
            \begin{align*}
                p_S(C_n, \Phi_{G_n}) \leq \gamma + \epsilon_n
            \end{align*}
        \end{enumerate}
        where $\epsilon_n \in O(\text{rad}(G_n) / \supp{G_n})$.
\end{theorem}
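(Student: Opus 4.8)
The plan is to combine the three ingredients that have already been assembled: the single-round distributed simulation $s(I,G,d)$ of Section~\ref{section:distribute-1r}, the circuit realisation of the resource empirical model $e_{\psi,G}$ as the shallow quantum circuit $Q_{\psi,G_n}$ from Section~\ref{section:weyl-nlg}, and the classical bound on the contextual fraction of $s(I,G,d)_*(\tilde C)$ proved at the end of Section~\ref{section:bounding-cf}. For the family $\{G_n\}_{n\in\nats}$ I would set $\Phi_{G_n} := s(I,G_n,d)^*(\Phi)$, the pullback of the Weyl measurement game $\Phi$ across the single-round simulation, and let $U_{\psi,G_n} := Q_{\psi,G_n}$ be the quantum circuit strategy implementing $e_{\psi,G_n}$.

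First I would verify the quantum side, item~(1). By the lemma of Section~\ref{section:distribute-1r} we have $s(I,G_n,d)_*(e_{\psi,G_n}) = e_\psi$, and since pushforward along a simulation commutes with pullback of games in the sense that $p_S(B, s^*(\Phi)) = p_S(s_*(B), \Phi)$, the behaviour $\tilde Q_{\psi,G_n}$ (which realises $e_{\psi,G_n}$) satisfies $p_S(Q_{\psi,G_n}, \Phi_{G_n}) = p_S(e_\psi, \Phi) > \gamma$. Shallowness of $\{Q_{\psi,G_n}\}$ is exactly the content of the lemma in Section~\ref{section:weyl-nlg}: the depth and fan-in depend only on $\supp I$ and $\deg(G_n)$, and $\deg(G_n) \in O(1)$ by hypothesis, so there are uniform bounds $D, K$.

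Next, item~(2). Given any shallow classical circuit strategy $\{C_n\}$ of uniform depth $D$ and fan-in $K$, I use the resource inequality for behaviours, $p_S(B,\Phi) \leq \gamma + \cf{B}$ (which, as noted in Section~\ref{section:separating-quantum-and-classical}, does not need no-disturbance), applied to $B = s(I,G_n,d)_*(\tilde C_n)$ together with the identity $p_S(C_n, \Phi_{G_n}) = p_S(s(I,G_n,d)_*(\tilde C_n), \Phi)$. Then the single-round case of the final lemma of Section~\ref{section:bounding-cf} gives $\cf{s(I,G_n,d)_*(\tilde C_n)} \leq \supp{I}^2 K^D \supp{G_n}^{-1}\,\text{rad}(G_n)$. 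Setting $\epsilon_n := \supp{I}^2 K^D\,\text{rad}(G_n)/\supp{G_n}$ we get $p_S(C_n,\Phi_{G_n}) \leq \gamma + \epsilon_n$ with $\epsilon_n \in O(\text{rad}(G_n)/\supp{G_n})$, and the hypothesis $\lim_{n\to\infty}\text{rad}(G_n)/\supp{G_n} = 0$ forces $\epsilon_n \to 0$.

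The genuinely non-trivial work has all been done in the earlier sections — the construction of the single-round teleportation protocol and its packaging as a simulation, the lightcone analysis establishing non-contextuality of the pushforward when the combinatorial condition holds, and BGK's probabilistic lemma bounding the failure probability. So the proof here is essentially bookkeeping: the main thing to be careful about is that $\Phi$ is a genuine Weyl measurement game so that $s(I,G_n,d)$ actually applies (the measurement settings are $\mathbb{Z}_d^2$ and the realisation uses single-qudit Weyl measurements), and that the $O(\cdot)$ constants are independent of $n$, which holds because $D$, $K$, $\supp I$ and $d$ are all fixed while only $G_n$ varies. The one point worth spelling out is that $\gamma < 1$ is needed only to make the separation meaningful (otherwise there is nothing to violate), and this is part of the data of a Weyl measurement game since $\gamma$ is its classical bound.
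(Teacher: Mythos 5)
Your proposal is correct and follows essentially the same route as the paper: define $\Phi_{G_n}$ as the pullback of $\Phi$ across the single-round simulation $s(I,G_n,d)$, get the quantum side from $s(I,G_n,d)_*(e_{\psi,G_n}) = e_\psi$ together with the pushforward/pullback identity and bounded degree, and get the classical side from the resource inequality for behaviours combined with the contextual-fraction bound $\cf{s(I,G_n,d)_*(\tilde C_n)} \leq \supp{I}^2 K^D \,\text{rad}(G_n)/\supp{G_n}$. The paper's own proof is a terser sketch of exactly this bookkeeping, so your version simply makes the references to the earlier lemmas explicit.
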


The proofs take the same form.

\begin{proof}(Theorem I and II)\\
The proof of both statements follow the same pattern.
We take a non-local game $(e, \Phi)$ with a particular quantum realisation
and a family of graphs $\{G_n\}_{n \in \nats}$.
Write $\{e_n\}_n$ for the associated family of quantum
realised empirical models, $\{s_n\}_{n}$ for the family of simulations.
Because $e_n$ simulates $e$ its success probability on the pullback
problem $(s_n)^*(\Phi)$ is equal to the success probability
of $e$, which violates the classical bound of the non-local game.

Under the assumptions on the graphs $\{G_n\}_n$ the model
these empirical models can be recast as a shallow quantum circuit
$\{Q_n\}_n$.

For the classical bound, we can use Section \ref{section:bounding-cf}
to derive a bound on the form
\begin{align}
    \cf{(s_n)_*(\tilde{C}_n)} \leq \epsilon_n
\end{align}
for any classical shallow circuit $\{C_n\}_{n \in \nats}$.
The violation of $(s_n)_*(\tilde{C_n})$ of the classical bound
$\gamma$ for the non-local game $(e,\Phi)$ is therefore at most
$\epsilon_n$.
It follows that the violation of $C_n$ of $\gamma$ on the pullback
problem is also bounded by $\epsilon_n$.

We also comment that the circuits we use are equivalent up to a constant
factor in-depth and fan-in to circuits using only input/output wires
with bits and only quantum wires that are qubits.
An important point is that the number of measurement settings
at each measurement site does not blow up.
\end{proof}

\subsection{Examples} \label{section:examples}
We will now present some concrete examples of circuits
arising from the construction we have presented.
We first define two classes of rooted graphs (Figure \ref{fig:graph_examples}).
\begin{definition}
    The \emph{hypergrid graph} $[n]^k$, where $n,k \in \nats$,
    has nodes $\{(a_1, \dots, a_k) \mid 1 \leq a_1, \dots, a_k \leq n\}$,
    root $(1, \dots, 1)$,
    and an edge $\{(a_1, \dots, a_k), (b_1,\dots, b_k)\}$
    whenever $\supp{a_j - b_j} = 1$ for some $j$
    and $a_{j'} = b_{j'}$ for all $j' \neq j$.
\end{definition}
Note that $[n]^1$ is a line, $[n]^2$ is a square grid, $[n]^3$ is a 3D grid, etc.
\begin{definition}
    Let $n,k \in \nats$.
    The \emph{$k$-ary tree $T_{k,n}$ of depth $n$} is
    the rooted graph
    with nodes 
    $\{(i,j) \mid i \in \{1,\dots,n\}, j \in \{1, \dots, k^{i-1}\}\}$,
    root $(1,1)$,
    and edges $\{(i,j), (i',j')\}$ whenever 
    $i' = i + 1$ and $k(j-1) < j' \leq kj$.
\end{definition}

Let $n,k \in \nats$. The hypergrid $[n]^k$
has degree $2k$ and the tree $T_{n,k}$ has degree $k$.
For a fixed $k \in \nats$ the families
$\{[n]^k\}_{n \in \nats}$ or $\{T_{n,k}\}_{n \in \nats}$
therefore have bounded degree.
If we use either family to define a family of quantum circuits $\{Q_n\}_{n \in \nats}$
then the resulting circuit is shallow.
Furthermore, $[n]^k$ has radius $nk$ and size $n^k$.
while $T_{n,k}$ has radius $k$ and size $k^{n+1}-1$.
In Theorem I the parameters $\epsilon_n$
therefore converge at a rate of
$O(1/n^{k})$ or $O(1 / {k^n})$ respectively.
And in Theorem II the rate of converge is
$O(1/n^{k-1})$ or $O(1 / {k^{n-1}})$ respectively.

For the hypergrid graphs the quantum circuits
have polynomial size $O(n^k)$, while for tree graphs it has exponential size $O(k^n)$.

The game $\Phi_G$ and the quantum circuit strategy
$\tilde{Q}_{\psi,G}$ is a distributed circuit version
of any non-local game. We showed that this game
is solved with high probability by $Q_{\psi,G}$
whose depth and maximal fan-in only depend on the size of $I$
and the maximal degree of $G$.
However, for any classical circuit strategy
the success probability is bounded by a bound involving the
size of $G$ and the radius of $G$.

The quantum circuits arising from the Magic Square game
and 2D graphs and binary trees are shown in the following figures
(Figures \ref{fig:my_label_one} - \ref{fig:my_label_four}).

\begin{figure}
    \centering
    \includegraphics{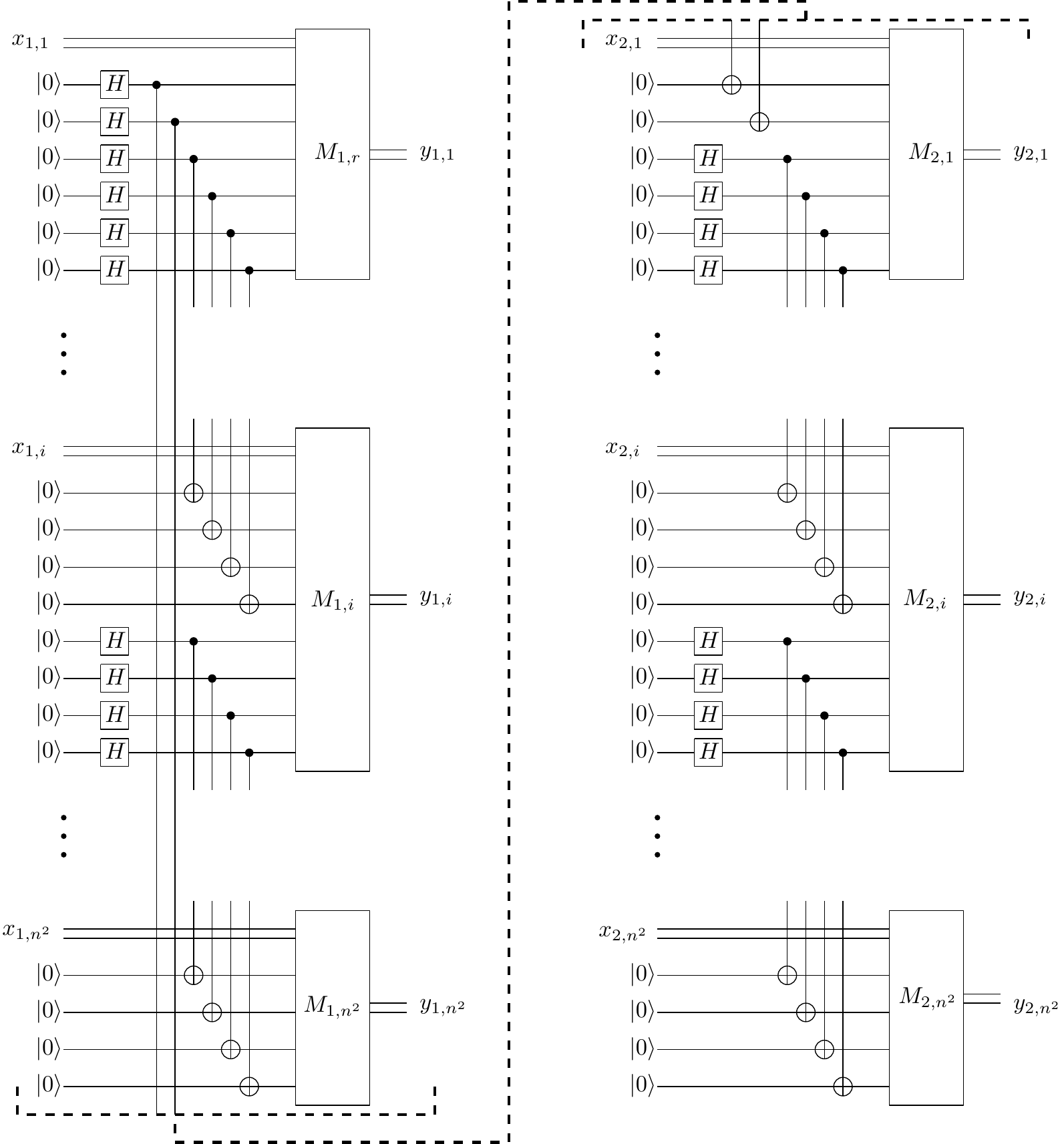}
    \caption{Circuit version of Magic Square game played on a 2D grid. }
    \label{fig:my_label_one}
\end{figure}

\begin{figure}
    \centering
    \includegraphics{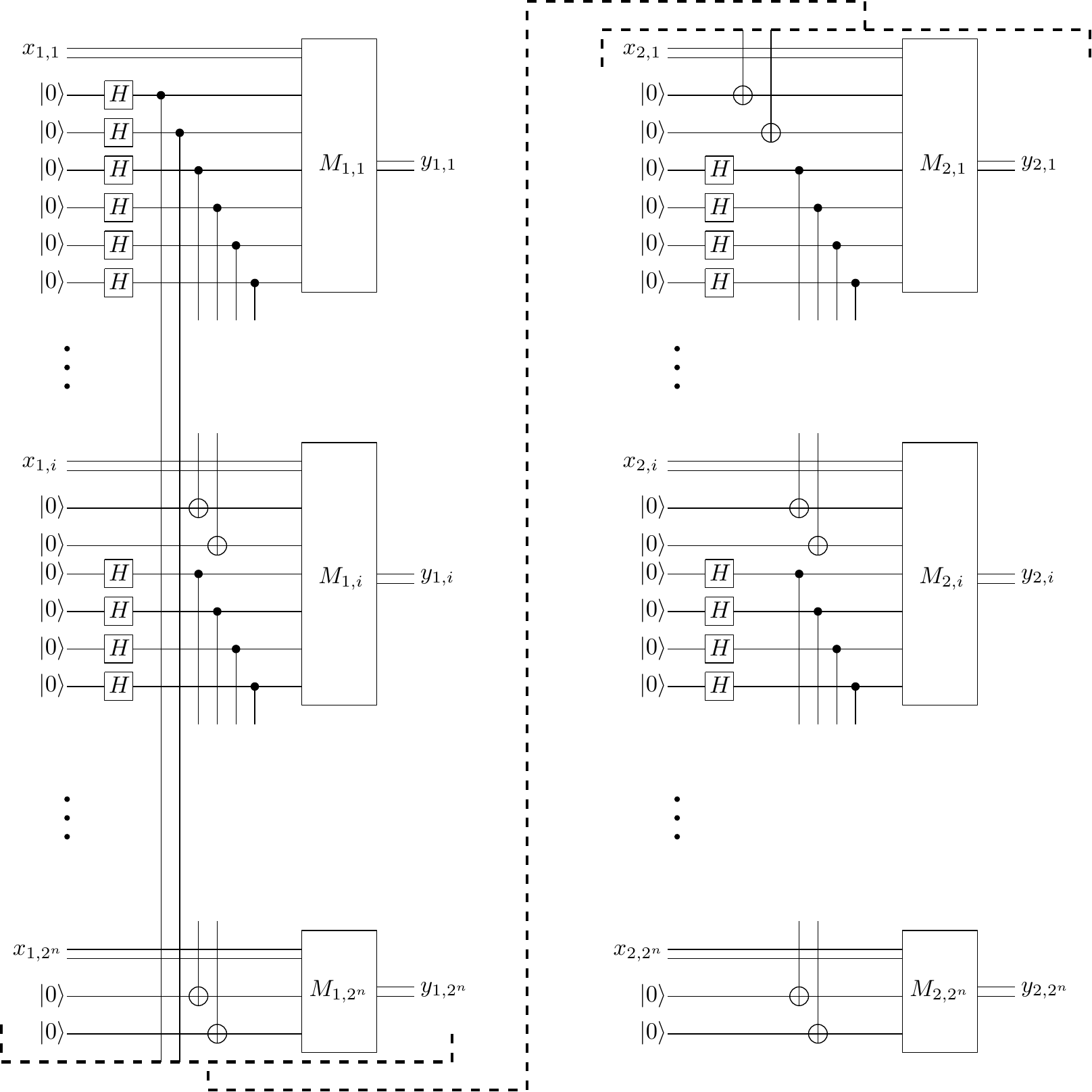}
    \caption{Circuit version of the Magic Square game played on
    binary trees.}
    \label{fig:my_label_two}
\end{figure}

\begin{figure}
    \centering
    \includegraphics[width=1\textwidth]{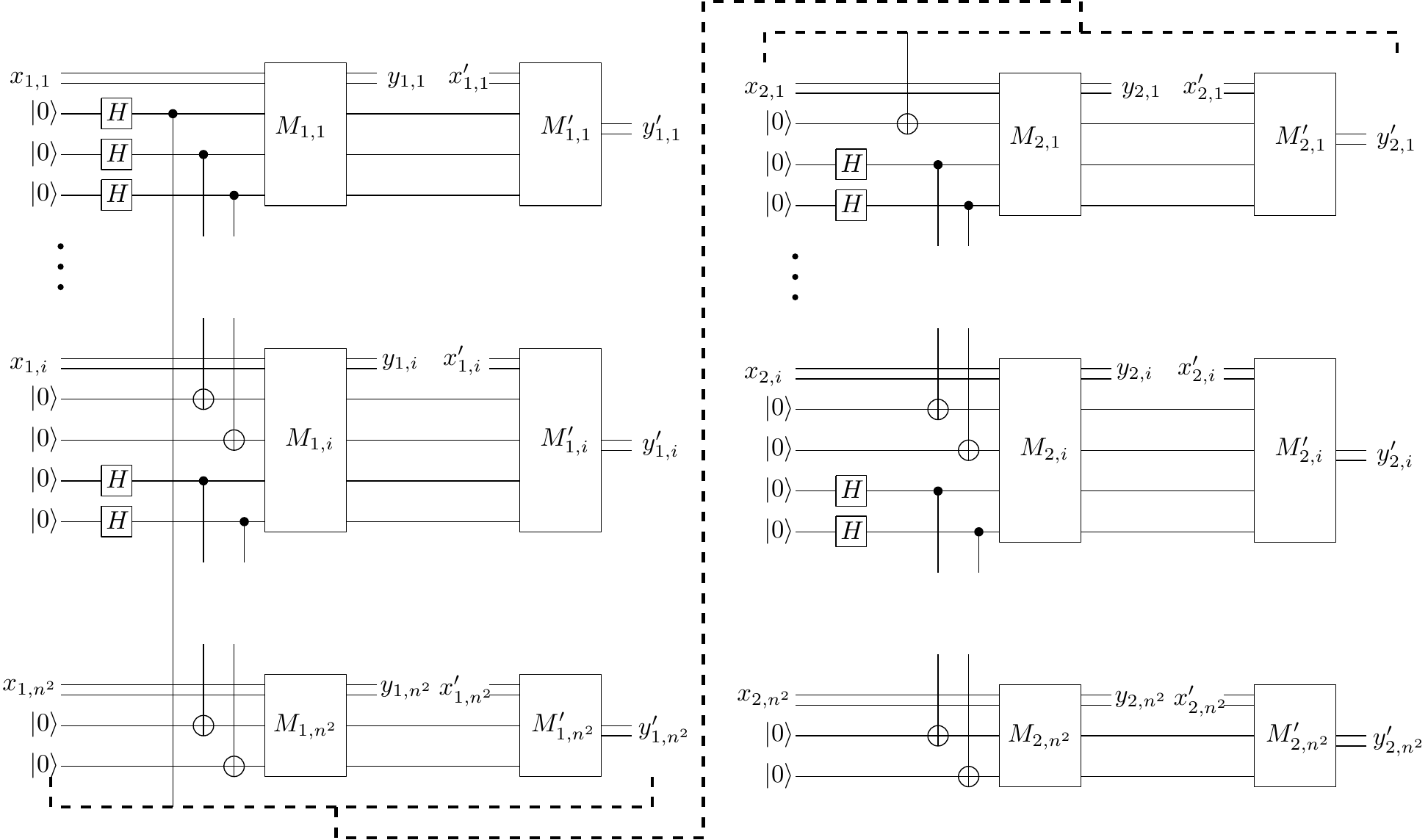}
    \caption{Magic square game played on a grid in two rounds.}
    \label{fig:my_label_three}
\end{figure}

\begin{figure}
    \centering
    \includegraphics[width=1\textwidth]{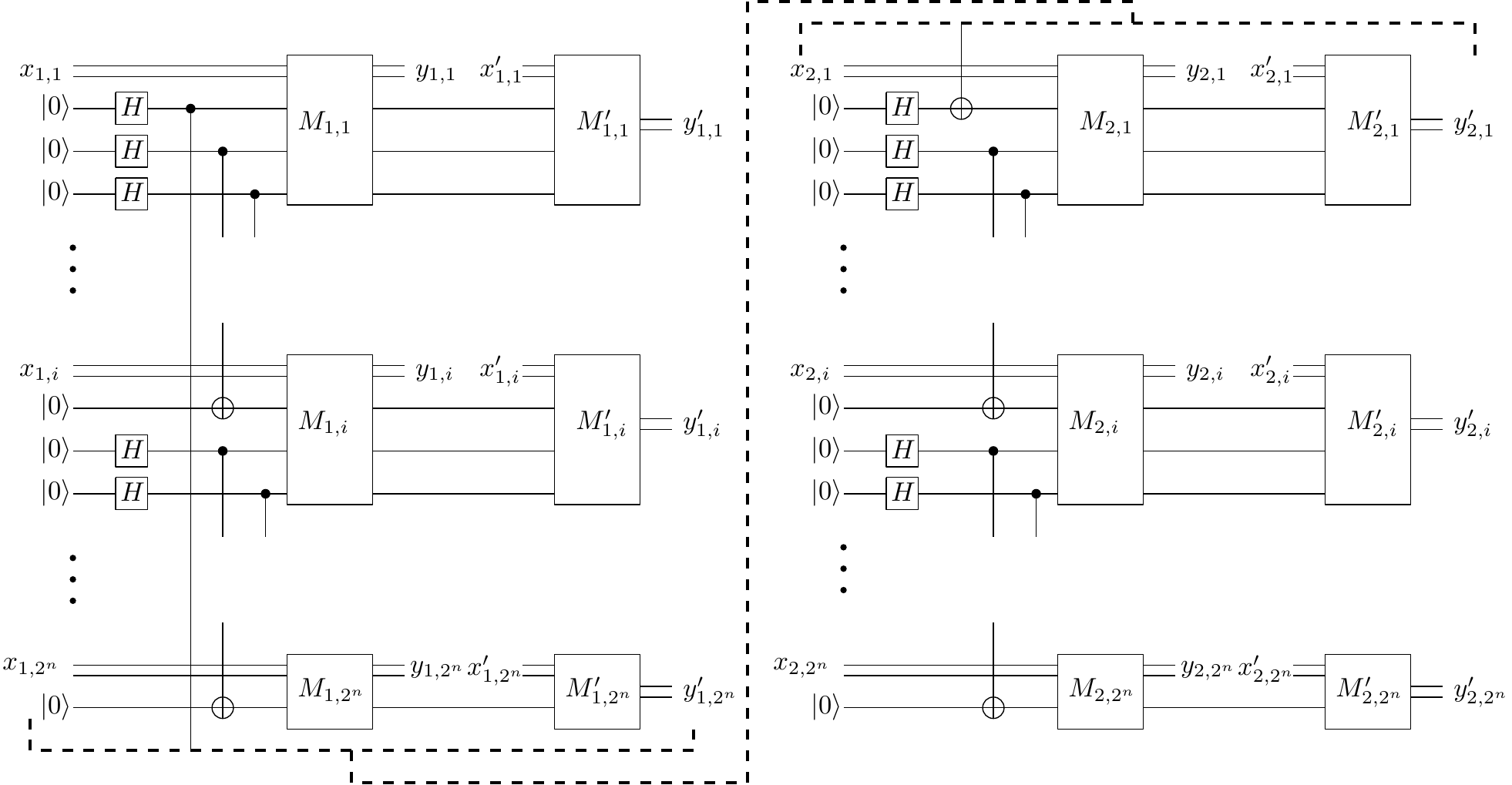}
    \caption{Magic square game played on a tree in two rounds.}
    \label{fig:my_label_four}
\end{figure}
\chapter{Final Remarks}
The motivation behind this thesis is the idea that contextuality
can be a useful phenomena for understanding quantum advantage
in computing. The results that we have presented touches on two different
aspects of this idea. We need more refined ways of identifying
types of contextuality, and we need concrete examples relating
these types of contextuality to quantum advantage.

Cohomology is in many settings a powerful technique for identifying useful structure in data.
For example, the simplicial cohomology of a topological space is related to its number of ``holes''.
In the sheaf theoretic framework it is natural to consider \v{C}ech cohomology
as an invariant of contextuality. \v{C}ech cohomology can detect
contextuality in a range of examples, indicating that it could
be a useful way of identifying types of contextuality.
Another promising technique for studying contextuality
is the topological approach of Okay et al.\
We have shown that any false negative of the \v{C}ech approach
induces a false negative of the topological approach.
As far as detecting contextuality
the topological approach therefore cannot go further than the \v{C}ech cohomology approach.

Bravyi, Gosset, and K\"{o}nig's quantum advantage result with
shallow circuits currently is the strongest example of quantum
advantage using contextuality. We have highlighted the role
of simulations in their result as a way of bounding
the success probability of classical circuits.
We have extended their result by giving a systematic way
of promoting any quantum realised multipartite empirical model to a quantum advantage result
with shallow circuits.
The construction is parametrised by a family of graphs that are used as templates
to spread entanglement using teleportation.
By considering different families of graphs we can achieve a different
tradeoff in size vs strength of separation.

There are measurements making any (pure) entangled $n$-qudit state
contextual \cite{hardy_nonlocality_1993, abramsky_hardy_2016}.
One can therefore use any (pure) entangled state.

\section{Further work}
An unconditional quantum advantage result for a general computational model
appears to be far away. However, BGK's
result has spawned a new interest in trying to prove unconditional
results for models of computation with structural restrictions.
Perhaps by studying such models we can identify the common structures
that are important. A potential candidate for such a model
is the \emph{cell probe model}.

Contextuality has shown itself to be useful for proving unconditional bounds on memory
complexity. Karanjai et al. \cite{karanjai_contextuality_2018} proves
that the memory complexity of the Gottesmann Knill algorithm \cite{aaronson_improved_2004}
is asymptotically optimal. The proof appears to exploit a particular type of contextuality.
It might be interesting to see if this structure can be given
a concise description using methods like cohomology or simulations
and if it can similarly be generalised to give a general connection
between contextuality and quantum advantage.

%next line adds the Bibliography to the contents page
\addcontentsline{toc}{chapter}{Bibliography}
%uncomment next line to change bibliography name to references
%\renewcommand{\bibname}{References}
\bibliography{references}        %use a bibtex bibliography file refs.bib
\bibliographystyle{alpha}  %use the plain bibliography style

\end{document}